\documentclass[11pt]{amsart}
\usepackage{graphicx} % Required for inserting images
\usepackage{amsmath,amsfonts,amsthm,amssymb,amscd}
\usepackage{lipsum}
\usepackage{xcolor}
\usepackage{amsfonts}
\usepackage{graphicx}
\usepackage{epstopdf}
\usepackage{algorithm}
\usepackage{algorithmic}
\usepackage{upgreek}
\usepackage{overpic}
\usepackage{enumerate}
\usepackage{enumitem}
\usepackage{bm}
\usepackage[foot]{amsaddr}
\usepackage{mathtools}
\usepackage{intcalc}
\usepackage{nth}
\usepackage[hidelinks]{hyperref}
\usepackage[letterpaper,top=3.5cm,bottom=3.5cm,left=3.5cm,right=3.5cm,marginparwidth=1.75cm]{geometry}

\newcommand{\bE}{\mathbb{E}}

\newcommand{\bR}{\mathbb{R}}

\newcommand{\sfA}{\mathsf{A}}
\newcommand{\sfN}{\mathsf{N}}

\newcommand{\sfPi}{\mathsf{\Pi}}
\newcommand{\sfU}{\mathsf{U}}
\newcommand{\sfPhi}{\mathsf{\Phi}}

\newcommand{\baoab}{\text{\normalfont\scshape baoab}}
\newcommand{\abao}{\text{\normalfont\scshape abao}}
\newcommand{\bao}{\text{\normalfont\scshape bao}}
\newcommand{\ab}{\text{\normalfont\scshape ab}}
\newcommand{\hoh}{\text{\normalfont\scshape hoh}}
\newcommand{\hmc}{\text{\normalfont\scshape hmc}}
\newcommand{\ho}{\text{\normalfont\scshape ho}}
\newcommand{\sfB}{\mathsf{B}}
\newcommand{\sfO}{\mathsf{O}}
\newcommand{\sfQ}{\mathsf{Q}}
\newcommand{\sfP}{\mathsf{P}}
\newcommand{\sfH}{\mathsf{H}}

\newtheorem{theorem}{Theorem}[section]
\newtheorem{lemma}[theorem]{Lemma}
\newtheorem{assumption}[theorem]{Assumption}
\newtheorem{proposition}[theorem]{Proposition}
\newtheorem{definition}[theorem]{Definition}
\newtheorem{example}[theorem]{Example}

\newtheorem{corollary}[theorem]{Corollary}
\theoremstyle{remark}
\newtheorem{remark}[theorem]{Remark}
\numberwithin{equation}{section}

\newcommand{\xqed}[1]{%
    \leavevmode\unskip\penalty9999 \hbox{}\nobreak\hfill
    \quad\hbox{\ensuremath{#1}}}
\newcommand{\Endofdef}{\xqed{\lozenge}}

\usepackage{tikz}
\usepackage{mathdots}
\usepackage{yhmath}
\usepackage{cancel}
\usepackage{color}
\usepackage{siunitx}
\usepackage{array}
\usepackage{multirow}
\usepackage{tabularx}
\usepackage{extarrows}
\usepackage{booktabs}
\usetikzlibrary{fadings}
\usetikzlibrary{patterns}
\usetikzlibrary{shadows.blur}
\usetikzlibrary{shapes}

\makeatletter
\renewcommand\paragraph{\@startsection{paragraph}{4}{\z@}%
  {3.25ex \@plus1ex \@minus.2ex}% vertical space before
  {-1em}%                        % horizontal space after (negative = run-in)
  {\normalfont\bfseries}}        % The style (Bold)
\makeatother

\definecolor{darkred}{rgb}{.7,0,0}

\definecolor{darkgreen}{rgb}{.15,.55,0}

\definecolor{darkblue}{rgb}{0,0,0.7}

\title[Delocalization of bias in unadjusted HMC and underdamped Langevin]{Delocalization of bias in unadjusted Hamiltonian Monte Carlo and underdamped Langevin }
\author{Yifan~Chen\textsuperscript{\dag}}
\author{Xiaoou~Cheng\textsuperscript{\ddag}}
\author{Jonathan Niles-Weed\textsuperscript{\ddag}}
\author{Jonathan Weare\textsuperscript{\ddag}}
\address{\textsuperscript{\dag}Department of Mathematics, University of California, Los Angeles, CA 90095, USA}
\address{\textsuperscript{\ddag}Courant Institute, New York University, NY 10012, USA}
\email{yifanchen@math.ucla.edu, chengxo@nyu.edu, jnw@cims.nyu.edu, weare@nyu.edu}

\begin{document}
\begin{abstract}
   Unadjusted samplers such as unadjusted Hamiltonian Monte Carlo  and underdamped Langevin  are well-known to be biased. Metropolis--Hastings adjustment has been conventionally incorporated into Hamiltonian Monte Carlo to eliminate the bias. However, this adjustment can significantly increase the iteration complexity due to the small step size required for reasonable Metropolis acceptance rates. In this work, we extend the
\emph{delocalization of bias} phenomenon, previously established for the overdamped
Langevin algorithm, to these two unadjusted algorithms. We show that to control the $W_2$ bias of any $K$-dimensional marginal of a high-dimensional distribution,  $O(\sqrt{K})$ integration steps suffice up to  $\log d$ terms, assuming either weak or sparse interactions among variables.  The discrete-time integrators here introduce technical difficulties beyond those of the overdamped setting, which we address through a broadly applicable matrix-polynomial framework that characterizes their propagators.  Our result for the underdamped Langevin algorithm is valid for all large friction parameters, implying that the Leimkuhler-Matthews integrator for the overdamped Langevin dynamics also exhibits delocalization of bias.
\end{abstract}
\maketitle

\section{Introduction}

Hamiltonian (or Hybrid) Monte Carlo (HMC) and underdamped Langevin (UL)\footnote{Note that the unadjusted Langevin algorithm (ULA) is instead a discretization of the \emph{overdamped} Langevin dynamics.} are widely used to sample from high dimensional probability distributions. For a target distribution $\pi \propto \exp(-V)$, where $V: \mathbb{R}^d \to \mathbb{R}$ denotes the potential energy, both sampling dynamics evolve in terms of a position variable $q$, on which the potential $V$ is defined, and an auxiliary momentum variable $p$. HMC relies on the Hamiltonian ordinary differential equation 
\begin{equation}
\label{eqn-continuous-time-Hamiltonian}
\begin{aligned}
&\frac{{\rm d} q_t}{{\rm d}t} = p_t \,,\\
& \frac{{\rm d}p_t}{{\rm d}t} = -\nabla V(q_t)\,.
\end{aligned}
\end{equation}
corresponding to the
Hamiltonian $\mathcal{H}(q,p) = V(q) + |p|_{\ell^2}^2/2$, where $|\cdot|_{\ell^2}$ is the $\ell^{2}$ norm of a vector. 
The closely related continuous-time UL is described by the stochastic differential equation
\begin{equation}
\label{eqn-continuous-time-underdamped}   
\begin{aligned}
&{\rm d} q_t = p_t {\rm d}t\,,\\
&{\rm d} p_t = -\nabla V(q_t) {\rm d}t-\gamma p_t{\rm d}t+\sqrt{2\gamma}{\rm d}B_t\,.
\end{aligned}
\end{equation}
where $\gamma > 0$ is the friction parameter and $B_t$ is a standard Brownian
motion. Both continuous-time dynamics admit $\pi \otimes \mathcal{N}(0, I)$ as their invariant measure on  phase space $(q,p)$, allowing samplers built upon them to target $\pi$ via the $q$-marginal.

Applying these continuous-time dynamics numerically requires time discretization. For HMC, one must also refresh the momentum regularly to ensure sampling ergodicity. The most commonly used discretization of \eqref{eqn-continuous-time-Hamiltonian} 
%is the leap-frog scheme, also called velocity Verlet, which alternates between position and momentum updates:
is the leap-frog scheme (also called velocity Verlet)~\cite{leimkuhler2015moleculardynamics}, under which the $(k+1)$-th step of HMC is
\begin{enumerate}
\item Sample $p_0  \sim \mathcal{N}(0, I)$ and set $q_0 = X_k$.
\item Inner loop (leap-frog integration): for $j = 0$ to $m-1$,
\begin{equation}
\label{eqn-leap-frog}
\begin{aligned}
q_{j+1} &= q_j + h p_j  - \frac{h^2}{2}\nabla V(q_j)\,,\\
p_{j+1} &= p_j - \frac{h}{2}\nabla V(q_j) - \frac{h}{2}\nabla V(q_{j+1})\,.
\end{aligned}
\end{equation}
Set $X_{k+1} = q_m$.
\end{enumerate}
Here, $h$ is the step size, and we also refer to one HMC iteration as one outer loop. For \eqref{eqn-continuous-time-underdamped}, a commonly used discretization is the BAOAB scheme~\cite{leimkuhler2015moleculardynamics}, which is written as 
\begin{equation}
\label{eqn-baoab}
\begin{aligned}
    q_{k+1} &= q_k + \frac{h}{2}(1+\eta)p_k - \frac{h^2}{4}(1+\eta)\nabla V(q_k) + \frac{h}{2}\sqrt{1-\eta^2}\xi_{k}\,,\\
    p_{k+1} &= \eta(p_k - \frac{h}{2}\nabla V(q_k)) + \sqrt{1-\eta^2}\xi_{k}-\frac{h}{2}\nabla V(q_{k+1})\,.
\end{aligned}
\end{equation}
The scheme \eqref{eqn-baoab} is constructed through an operator splitting. We denote the solution maps for the three components of \eqref{eqn-continuous-time-underdamped} over a step size $h$ as follows:
\begin{equation}
\label{eqn-def-bao}
\begin{aligned}
&{\mathsf{B}_h}(q, p) = (q, p-h\nabla V(q))\,,\\
&{\mathsf{A}_h}(q, p) = (q + hp, p)\,,\\
&{\mathsf{O}_h}(q, p; \xi) = (q, \eta p + \sqrt{1-\eta^2}\xi)\,.
\end{aligned}
\end{equation}
Here, $\xi \sim \mathcal{N}(0, I)$ is an independent Gaussian vector, and $\eta = \exp(-\gamma h)$. Splitting methods for underdamped Langevin dynamics are defined with different strings of A, B, and O. By standard convention, operators employed symmetrically in the string are applied with a half step $h/2$, while those that appear only once are applied with a full step $h$. Specifically, the BAOAB scheme \eqref{eqn-baoab} is derived from the composition $\mathsf{B}_{h/2} \circ \mathsf{A}_{h/2} \circ \mathsf{O}_h (\cdot;\xi) \circ \mathsf{A}_{h/2} \circ \mathsf{B}_{h/2}$, which we abbreviate as $\mathsf{B}\mathsf{A}\mathsf{O}\mathsf{A}\mathsf{B}$. We use sans-serif $\mathsf{A}$, $\mathsf{B}$, $\mathsf{O}$ and their compositions for the solution maps, while the same letters in roman type identify the corresponding scheme. For the frictionless case $\gamma = 0$, we have $\eta = 1$, so the $\mathsf{O}_h$ step becomes the identity, 
and the BAOAB scheme \eqref{eqn-baoab} reduces to the leap-frog scheme \eqref{eqn-leap-frog}.

\subsection{Motivation} The unadjusted HMC and UL algorithms are biased. Because of the finite step size
$h$, when the marginal $\rho_k$ of $q_k$ converges, its limit $\pi_h$ differs from
the target $\pi$. Both HMC and UL can be modified with a Metropolis--Hastings (MH) accept/reject step to eliminate bias. In fact, HMC is most frequently applied with MH adjustment~\cite{neal2011mcmc, carpenter2017stanprobabilistic, salvatier2016probabilisticprogramming}. However, adjusted HMC and UL are well known to degrade as $d$ increases and the step size $h$ must be decreased to maintain a non-zero acceptance probability.

In \cite{chen2024convergence}, we have identified a ``delocalization of bias'' phenomenon for the unadjusted \emph{overdamped} Langevin algorithm: To control the bias of low-dimensional marginals, a step size scaling with the low dimension suffices, with only a logarithmic dependence on the full dimension $d$.
The analysis in \cite{chen2024convergence} relies on a  novel $W_{2,\ell^\infty}$ metric, which captures the accuracy of low-dimensional marginals. For two probability measures $\mu$ and $\nu$ on $\mathbb{R}^d$, their $W_{2,\ell^\infty}$ distance is defined as
\begin{equation}
        W_{2,\ell^{\infty}}(\mu, \nu) = \left( \min_{\gamma \in \Pi(\mu,\nu)} \int |x-y|^2_{\ell^\infty} \gamma({\rm d}x,{\rm d}y)\right)^{1/2}\, ,
    \end{equation}
where 
$|\cdot|_{\ell^\infty}$ is the $\ell^{\infty}$ norm of a vector and 
$\Pi(\mu,\nu)$ represents the set of measures on the joint space $\mathbb{R}^d\times\mathbb{R}^d$ that have marginals $\mu$ and $\nu$. We note that, for any choice of $K$ coordinates, the corresponding marginal distributions $\mu^{(K)}$ and $\nu^{(K)}$ of $\mu$ and $\nu$, satisfy the upper bound $W_2(\mu^{(K)}, \nu^{(K)}) \leq \sqrt{K} W_{2,\ell^\infty}(\mu, \nu)$, since $K |x-y|_{\ell^\infty}^2\geq \sum_{t=1}^K|x^{(j_t)} - y^{(j_t)}|^2$ for any $1\leq j_t \leq d$. In particular, $W_{2,\ell^\infty}(\mu, \nu)$ upper bounds the distance between any one-dimensional marginals. 

Our analysis for unadjusted overdamped Langevin in \cite{chen2024convergence} proceeds via a coupling argument under the $W_{2,\ell^\infty}$ distance. In this paper, we extend that analysis to unadjusted HMC and UL, which are both much more commonly used for complicated, high-dimensional sampling applications than overdamped Langevin. 
This
extension introduces significant technical challenges, as the dynamics now evolve on the full phase space $(q,p)$ with an auxiliary momentum variable. The coupling framework in \cite{chen2024convergence} requires analyzing the discrete dynamics over multiple steps to bound the accumulation of discretization error. 
However, the propagator matrices on the $(q,p)$ space exhibit intricate position-momentum coupling, making their multi-step composition difficult to control.
To this end, we introduce a novel matrix polynomial representation of the propagators. Such a representation allows for precise characterization of the multi-step dynamics and enables sharp error bounds under the $\ell^\infty$ norm.

\subsection{Literature review}
Hamiltonian and underdamped Langevin dynamics have long been foundational to the simulation of physical systems. 
HMC was originally introduced in \cite{duane1987hybrid} for particle physics, and was later adapted and popularized for modern statistics and machine learning by \cite{neal2011mcmc}. Today, it serves as one of the standard samplers in probabilistic programming packages like Stan~\cite{carpenter2017stanprobabilistic} and PyMC~\cite{salvatier2016probabilisticprogramming}. In the realm of molecular dynamics (MD), unadjusted HMC is closely related to the Andersen
thermostat~\cite{andersen1980molecular}.
Similarly, UL is commonly employed as a robust thermostat~\cite{frenkel2023understanding,allen2017computer,leimkuhler2015moleculardynamics}. The BAOAB splitting scheme was introduced in~\cite{leimkuhlerrationalconstruction} and variants have since been implemented widely in MD software packages~\cite{openmm7,abraham2015gromacs,kieninger2022gromacs,case2025recent,rackers2018tinker,phillips2020scalable,thompson2022lammps}.

To maintain stability in integration of Hamilton's ODE~\eqref{eqn-continuous-time-Hamiltonian} or the underdamped Langevin SDE~\eqref{eqn-continuous-time-underdamped},  the step size $h$ must be chosen smaller than the shortest vibrational period among all processes in the system (e.g.\ about a femtosecond in MD). For adjusted HMC, by contrast, \cite{beskos2010optimaltuning} establishes the scaling $h \sim d^{-1/4}$ as the requirement for a non-vanishing acceptance probability as $d \to \infty$, under a product measure with bounded fourth-order derivatives of the potential.
The same scaling was later shown for an adjusted UL scheme in \cite{riou2022metropolis}, and also appears in mixing time guarantees for adjusted HMC under warm starts~\cite{chen2023whendoesa}. Recent work~\cite{zhang2026algorithmicwarm} shows that unadjusted HMC generates a warm start in $O(d^{1/4})$ steps up to logarithmic factors.

Unadjusted algorithms avoid the rejection bottleneck but incur a discretization bias. A rich body of literature has emerged focusing on the non-asymptotic analysis of unadjusted HMC and UL algorithms, establishing bias bounds as the number of iterations goes to infinity. 
These analyses rely on various distance metrics and target measure assumptions. The most fundamental setting, detailed in Assumption \ref{assumption-V-log-concave}, requires the potential to be a $C^2$ function that is strongly convex with Lipschitz gradient. However, because commonly employed integrators possess weak second-order accuracy~\cite{leimkuhler2013robustefficient, bou-rabee2025unadjustedhamiltonian}, fully demonstrating their accuracy  often requires bounding the higher-order derivatives of the potential~\cite{monmarche2021highdimensionalmcmc}. Under these frameworks, the theoretical guarantees show power laws $h \sim d^{-c}$ for some order $c >0$ depending on the specific assumptions. For unadjusted HMC, \cite{mangoubi2019mixinghamiltonian} requires $h \sim d^{-1/2}$ to bound the Wasserstein distance, while \cite{mangoubi2018dimensionallytight} achieves an $h \sim d^{-1/4}$ scaling for a high-probability bound under a third-order regularity condition. Furthermore, bias bounds arising from convergence guarantees have been derived in total variation distance ($h \sim d^{-3/4}$)~\cite{bou-rabee2023mixingtime}, Kullback-Leibler (KL) divergence ($h \sim d^{-3/4}$)~\cite{bou-rabee2026tailsensitivekl}, and R\'enyi divergence ($h \sim d^{-3/2}$)~\cite{bou-rabee2026tailsensitivekl}. For unadjusted UL, \cite{leimkuhler2024contraction} establishes the Wasserstein bias for the BAOAB scheme, requiring $h \sim d^{-1/2}$ under Assumption \ref{assumption-V-log-concave}, with improved dimensional dependence achieved under stricter smoothness assumptions. Wasserstein bias for different integrators of UL was studied in \cite{cheng2018underdampedlangevin, dalalyan2018samplinglogconcave, monmarche2021highdimensionalmcmc}. \cite{monmarche2021highdimensionalmcmc} also establishes a total variation bias bound through a Wasserstein-total variation regularization property. Under KL divergence, \cite{ma2019thereanalog} similarly requires $h \sim d^{-1/2}$. These analyses have also been generalized to non-convex settings using techniques such as reflection coupling~\cite{eberle2018couplingsquantitative, schuh2024convergencekinetic, chak2024reflectioncoupling} and entropic approaches~\cite{monmarche2023entropicapproach, camrud2024secondorder}.

Recent literature has increasingly unified the theoretical analysis of HMC and UL algorithms. For instance, under the umbrella framework of ``generalized HMC'' (gHMC), \cite{gouraud2024hmcunderdamped} determines the optimal scaling of the friction parameter and the HMC integration time to attain accelerated convergence regarding the condition number for Gaussian targets, while \cite{camrud2024secondorder} develops KL convergence guarantees for gHMC. Moving beyond traditional fixed-step integrators, \cite{shen2019randomizedmidpoint} introduces the Randomized Midpoint Method (RMM) for UL dynamics. They establish an improved dimension dependence $h \sim d^{-1/3}$ in the Wasserstein distance. Inspired by this mechanism, \cite{bou-rabee2025unadjustedhamiltonian} achieves the same $d^{-1/3}$ scaling for unadjusted HMC by employing a stratified Monte Carlo integrator. 

Crucially, the aforementioned guarantees quantify the error across all variables of the joint system. In contrast, our work builds upon \cite{chen2024convergence,lacker2025hierarchical} to focus exclusively on controlling the error of low-dimensional marginals, which is a more practical notion of error in high-dimensional applications where the majority of coordinates act as nuisance variables. We previously introduced the term ``delocalization of bias'' in \cite{chen2024convergence} to describe the phenomenon where the $W_2$ bias of low-dimensional marginals remains nearly independent of the full system dimension. Our analysis there focused on the unadjusted (overdamped) Langevin algorithm, employing a coupling framework to establish Wasserstein bias bounds for the marginals. \cite{lacker2025hierarchical} investigates this phenomenon under KL divergence using a hierarchical analysis of marginal relative entropies. In \cite{cui2026steinsmethod}, a related question is studied for graphical models,
where  dimension-independent bounds are established, through Stein's method, on the marginal error between a
target distribution and its localized approximation. 
 Prior to our work in \cite{chen2024convergence}, \cite{bou-rabee203meanfield} and \cite{durmus2024asymptotic} observed that large step sizes remain viable in high dimensions if the observable of interest has a Lipschitz constant that decays with dimension, thereby canceling out the dimensional scaling of the full system's Wasserstein bias (e.g., an averaged observable $f(x) = \frac{1}{d} \sum_{i=1}^d \phi(x^{(i)})$ with an $\ell^2$-Lipschitz $\phi$). In such regimes, a constant step size is sufficient to achieve a bounded error regardless of dimension. Besides these analysis efforts, there are ample numerical evidence demonstrating the viability of dimension-free step size choices~\cite{chada2025unbiasedkinetic, robnik2025blackboxunadjusted}.

\subsection{Main results}
In \cite{chen2024convergence}, we have shown for overdamped Langevin that the delocalization of bias holds for product measures, for Gaussian measures, and for distributions with sparse interactions between variables. We also provided a negative example showing that this effect does not hold universally. In this paper, we establish the delocalization effect for the unadjusted HMC and
UL algorithms under analogous assumptions, namely weak or sparse interactions, or
Gaussian targets. Our analysis is conducted under the following strongly log-concave and log-smooth assumption.
\begin{assumption}
 \label{assumption-V-log-concave}
     Let $\pi \propto \exp(-V)$ with $V \in C^2(\mathbb{R}^d)$. Assume $V$ is $\alpha$-strongly convex and $\beta$-smooth such that $\alpha I \preceq \nabla^2 V(x) \preceq \beta I$ for any $x \in \mathbb{R}^d$, where $0 < \alpha \leq \beta < \infty$. 
 \end{assumption}

 Throughout, unadjusted HMC uses the leap-frog integrator~\eqref{eqn-leap-frog} and
unadjusted UL uses the BAOAB scheme~\eqref{eqn-baoab}. As a warm-up, we present the delocalization of bias result for Gaussian distributions. Since the BAOAB scheme is unbiased for any Gaussian distribution~\cite{leimkuhler2013robustefficient}, only unadjusted HMC is discussed in this example. We employ the explicit formula for the biased Gaussian distribution and obtain a $W_{2,\ell^\infty}$ bias bound directly. The proof appears in Appendix~\ref{appendix-gaussian}.

\begin{example}[Gaussian distributions]
\label{thm-Gaussian-hmc-main-text}    
    Consider $\pi =\mathcal{N}(\mu, \Sigma)$ where $\mu \in \bR^d$ and $\alpha I\preceq \Sigma^{-1} \preceq \beta I$. Then, for $h < 1 / \sqrt{\beta}$, unadjusted HMC, when it converges, has the following bias: 
    \begin{equation}
        W_{2,\ell^\infty}(\pi_h, \pi) = O(h^2 \sqrt{\beta \log (2d)}) = O(h\sqrt{\log (2d)})\,.
    \end{equation}
\end{example}
 As in the case of overdamped Langevin dynamics \cite{chen2024convergence}, the proof of Example~\ref{thm-Gaussian-hmc-main-text} offers no insight into which target distributions beyond the Gaussian family evince the delocalization of bias effect for unadjusted HMC or UL. Our main results describe two different sets of structural assumptions on $V$ under which this effect appears.
 For general distributions satisfying Assumption~\ref{assumption-V-log-concave}, our first results address the ``weak interactions'' setting of~\cite{lacker2025hierarchical}, made precise below.
\begin{assumption}[weak interactions]
\label{assumption-weak-interactions}
The Hessian $\nabla^2V$ admits the decomposition $\nabla^2 V = \nabla^2 V^{(D)}+ \nabla^2 V^{(O)}$ 
where $\nabla^2 V^{(D)} = \mathrm{diag}(\nabla^2 V)$ is the diagonal part and $\nabla^2 V^{(O)}$ is the off-diagonal part. We assume there exists a sufficiently small constant $C^{(O)} > 0$, independent of $\alpha$, $\beta$, $h$, and $d$, such that the the off-diagonal part satisfies 
\[|\nabla^2 V^{(O)}|_{\ell^\infty} \leq C^{(O)} \alpha
\] uniformly over $\mathbb{R}^d$.
\end{assumption}
Assumption \ref{assumption-V-log-concave} ensures that every diagonal entry satisfies $[\nabla^2 V^{(D)}]_{ii} \geq \alpha$. The interactions defined in Assumption \ref{assumption-weak-interactions} are therefore weak, as the total intensity of interactions between any given variable and all other variables, measured by off-diagonal row sums, is small compared to the independent components on the diagonal.

We now state our main result under weak interactions. The proofs for the two
algorithms are given in Appendix~\ref{appendix-sampling-bias-bounds}.
\begin{theorem}[weak interactions]
    \label{thm-weak-formal}
    Let Assumptions \ref{assumption-V-log-concave} and \ref{assumption-weak-interactions} hold. For either unadjusted HMC  with $mh = 1/\sqrt{20\beta}$ and $C^{(O)} = 1/50$, or unadjusted UL  with $h \leq \frac{1-\eta}{2\sqrt{\beta}}$ and $C^{(O)} = 1/20$, the resulting invariant distribution $\pi_h$ satisfies
    \begin{equation}
    \label{eqn-weak-informal}
    W_{2,\ell^\infty}(\pi_h, \pi) = O\left(\frac{\beta}{\alpha}h\sqrt{\log(2d)}\right)\,.
    \end{equation}
 \end{theorem}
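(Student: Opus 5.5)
We will follow the coupling strategy of \cite{chen2024convergence}, adapted to phase space. We run two synchronously coupled chains: the exact dynamics (Hamiltonian flow with momentum refreshment for HMC, the underdamped SDE for UL) started from stationarity $\pi\otimes\mathcal N(0,I)$, and the discrete scheme (leap-frog outer loop, or BAOAB) driven by the same Gaussian momenta/noise. We define an error process $e_k=(q_k-\bar q_k,\,p_k-\bar p_k)$. By stationarity of the exact chain, $W_{2,\ell^\infty}(\pi_h,\pi)\le \limsup_k (\bE|q_k-\bar q_k|_{\ell^\infty}^2)^{1/2}$. The error satisfies a recursion $e_{k+1}=M_k e_k + r_k$. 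Here $M_k$ is the one-step (or one-outer-loop) propagator of the discrete scheme linearized along the segment between the two trajectories, so it is built from averaged Hessians $H=\int_0^1\nabla^2V(\cdot)\,ds$. The term $r_k$ is the local truncation error of the integrator over one step.

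\textbf{Step 1: propagator bounds in $\ell^\infty$.} We write each one-step map as a $2\times2$ block matrix whose blocks are polynomials in $hH_j$ and $h^2H_j$, with $\eta$ entering for BAOAB. Composing $m$ leap-frog steps (or iterating BAOAB) then gives blocks that are matrix polynomials in the $H_j$'s. The key structural input is to split $H_j=D_j+O_j$ via Assumption~\ref{assumption-weak-interactions}. For the diagonal part the blocks act coordinatewise, so their $\ell^\infty$ operator norm equals the worst scalar propagator. In one dimension with curvature $\lambda\in[\alpha,\beta]$, integration time $mh=1/\sqrt{20\beta}$ (respectively $h\le(1-\eta)/(2\sqrt\beta)$) yields $q$-contraction of order $1-c\alpha (mh)^2$ (respectively $1-c\alpha h/(1-\eta)$ per step). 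This is checked via Chebyshev-type expansions of the scalar recursion. The off-diagonal part is then treated perturbatively: expanding the matrix polynomial, every term containing at least one $O_j$ costs a factor $|O_j|_{\ell^\infty}\le C^{(O)}\alpha$ times bounded scalar-type coefficients. Summing these terms gives an $\ell^\infty$ contraction of the form $1-c'\alpha(mh)^2$ provided $C^{(O)}$ is small ($1/50$, resp.\ $1/20$). We will work in a twisted norm $|q|_{\ell^\infty}+a|p|_{\ell^\infty}$, or with a change of variables, to absorb momentum.

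\textbf{Step 2: local error in $\ell^\infty$.} For one step started from the stationary exact chain, the leap-frog/BAOAB local error is of the form $h^2\nabla^2V\,p$-type and $h^3$ terms (for HMC, accumulated over $m$ steps into a single outer loop). We bound each coordinate's second moment using stationarity: $p\sim\mathcal N(0,I)$ and, by strong log-concavity, $\nabla V(q)$ has subgaussian coordinates. A union bound via maximal inequalities then gives $(\bE|r_k|_{\ell^\infty}^2)^{1/2}\lesssim \beta h\cdot(\text{time}) \sqrt{\log(2d)}$. Importantly, we need only a first-order (strong) local error, since the target is $O(h)$. The factor $\beta$ comes from $\nabla^2V p$, using $|\nabla^2V|_{\ell^\infty}\le \beta+C^{(O)}\alpha$ for the $\ell^\infty$-norm of Hessian-vector products. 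Rather than a pure union bound, this uses the structure $\mathrm{diag}+$ small off-diagonal.

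\textbf{Step 3: summation.} We unroll the recursion and use the triangle inequality in $L^2(\ell^\infty)$. Geometric summation with contraction rate $c'\alpha\,\tau$ per unit of time $\tau$ and local error $\beta h\tau\sqrt{\log 2d}$ gives $\limsup\lesssim \frac{\beta}{\alpha}h\sqrt{\log(2d)}$, which is \eqref{eqn-weak-informal}. The main obstacle is Step 1: controlling $\ell^\infty$ norms of products of non-commuting, non-symmetric block propagators uniformly in $m$. We plan to handle it by the matrix-polynomial representation, bounding coefficients of the expanded polynomials by those of the scalar problem with $\lambda$ replaced by $\beta+C^{(O)}\alpha$ (a majorant argument with entrywise absolute values), and comparing to the exact diagonal contraction.
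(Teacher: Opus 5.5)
Your HMC plan is essentially the paper's proof. You couple with the exact refreshed Hamiltonian chain at stationarity. You split $\tilde T_m^1$ into a diagonal part, whose $\ell^\infty$ and $\ell^2$ norms coincide, and an off-diagonal remainder majorized by $T_m(1+\tfrac{h^2}{2}(\beta+\tfrac{\alpha}{50}))-T_m(1+\tfrac{h^2}{2}\beta)$. Inner-step errors are then propagated through $\tilde T^1$ and $h\tilde U^1$. Two points need to be made explicit. First, the majorant relies on every order-$j$ monomial $H_{i_1}\cdots H_{i_j}$ carrying the same sign $(-1)^j$. Second, the diagonal curvatures vary from step to step, so the diagonal contraction is not a single scalar Chebyshev evaluation. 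The paper gets it from an $\ell^2$ contraction valid for arbitrary sequences $\alpha I\preceq H_i\preceq\beta I$, adapted from \cite{bou-rabee203meanfield}.

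The UL half has a genuine gap, and it starts with the reference chain: it cannot be the underdamped SDE. The hypothesis $h\le\frac{1-\eta}{2\sqrt\beta}$ only forces $\gamma\ge 2\sqrt\beta$ and allows $\gamma h\gg 1$.
\begin{itemize}
\item When $\gamma h\gg1$, BAOAB is essentially the Leimkuhler--Matthews scheme with step $h^2/2$ and moves $O(h)$ per step, while the SDE moves only $O(\sqrt{h/\gamma})\ll h$ in time $h$.
\item Even when $\gamma h\ll 1$, BAOAB's position noise $\tfrac h2\sqrt{1-\eta^2}\xi$ is a multiple of its momentum noise. The SDE's integrated Ornstein--Uhlenbeck noise, by contrast, has a component independent of the momentum noise, with variance $\approx\gamma h^3/6$ per coordinate.
\end{itemize}
So $r_k$ is not $O(\beta h^2)$. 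Summing it with the triangle inequality over the $\sim 1/c(h)$ steps needed to contract gives at best $O(\sqrt h)$. The paper instead compares with HOH, $\sfH_{h/2}\sfO\sfH_{h/2}$, which preserves $\pi\otimes\mathcal N(0,I)$ exactly and uses the identical O-step noise, so only Verlet-versus-flow errors remain.

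Your Step 3 accounting also fails for UL. BAOAB contracts by $c(h)=\frac{\alpha h^2}{4(1-\eta)}$ per step, which is only $\sim\alpha h$ per unit time when $\gamma h\gtrsim 1$. A local error of size $\sim\beta h$ per unit time then yields $O(\beta/\alpha)$, not $O(\frac{\beta}{\alpha}h)$. The missing idea is that momentum errors are damped by $\eta$ at each step and reach the position only with total weight $\sim h/(1-\eta)$. The paper exploits this by grouping steps into blocks of $\tilde l=\lceil 1/(2\sqrt\beta h)\rceil$. Within a block it runs a Gr\"onwall argument that tracks $|\Delta_q|_{2,\ell^\infty}$ and $\sqrt a|\Delta_p|_{2,\ell^\infty}$ separately (Proposition~\ref{prop-l-step-error-gronwall}). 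This gives a block error $O(\frac{h^2\sqrt\beta}{1-\eta}\sqrt{\log(2d)})$ against a block contraction $\gtrsim\frac{\alpha h}{\sqrt\beta(1-\eta)}$.

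Your UL contraction step is also underspecified in a way that matters. With a diagonal Hessian, each coordinate contributes a $2\times 2$ block. What must be below $1$ is that block's operator norm in your phase-space norm, not its $q$-contraction. Already for $V=\lambda q^2/2$, the directions $(1,0)$ and $(0,1)$ impose incompatible conditions on $w$. So no norm of the form $|q|+w|p|$ makes a single BAOAB step a strict contraction. The paper uses the mixing weight $W$ with $b=h/(1-\eta)$ and the factorization $(\sfB\sfA\sfO\sfA\sfB)^k=\sfB\sfA\sfO(\sfA\sfB\sfA\sfO)^{k-1}\sfA\sfB$. It checks row by row, under $C^{(O)}=1/20$, that $|M^w_{\abao}|_{\ell^\infty}\le 1-\tfrac12 c(h)$. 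The boundary factors satisfy $|M^w_{\bao}|_{\ell^\infty}\le 3$ and $|M^w_{\ab}|_{\ell^\infty}\le 2$.
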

 Essentially, the distribution with weak interactions constitutes a perturbation of the product measure. The delocalization effect holds for product measures, since the dynamics are separable for each variable, leading to a dimension-free bias in each coordinate.  This effect persists as long as the interactions between variables are sufficiently weak. 

The second setting of interest is the case of ``sparse interactions'' studied in~\cite{chen2024convergence}, where each variable interacts directly with only a few others. Because the interactions are local, the influence of one variable spreads only gradually through the interaction structure over multiple steps, keeping the marginal bias small. This interaction structure is described by a graphical model, with each variable a node, as illustrated in Figure~\ref{fig:sparse-potential}. Let $G$ be an undirected graph with $d$ nodes, labeled by $1\leq i \leq d$, and write $i \sim j$ when nodes $i$
and $j$ are connected by an edge. The neighborhood $\sfN(i)$ is the set of nodes
connected to $i$. Here the neighborhood relationship is symmetric, and without loss of generality, we take $i \sim i$ for all $1\leq i \leq d$. We quantify the locality of interactions through a growth condition on the
neighborhood sizes. Define recursively $\sfN_k(i) = \{1\leq j \leq d: \exists\ l \in \sfN_{k-1}(i), \text{ such that } j \sim l \}$, for $k \geq 2$, with $\sfN_1(i) := \sfN(i)$, and write  $s_k \coloneqq \max_{1 \leq i \leq d} |\sfN_{2k}(i)|$ for the relevant
neighborhood cardinality where $s_k \geq 1$ by definition. We make the following assumption on the potential.
\begin{assumption}

\label{assumption-sparse-interactions}

The potential has the form \begin{equation}
\label{eqn-sparse-potential-function}
    V(x) = \sum_{i=1}^d V_i (\mathcal{X}^i)\,,
\end{equation} 
where $\mathcal{X}^i = \{x^{(j)}: j \in \sfN(i)\}$ and each $V_i$ depends only on the variables in $\mathcal{X}^i$. Moreover, the sparsity parameter $s_k$ satisfies
the polynomial growth bound $s_k \leq C(k+1)^n$ for constants $C, n > 0$.
\end{assumption}
The main result under sparse interactions is as follows. The proofs are given in
Appendix~\ref{appendix-sampling-bias-bounds}.
 \begin{figure}[!htbp]
    \centering
  \includegraphics[width=0.8\linewidth]{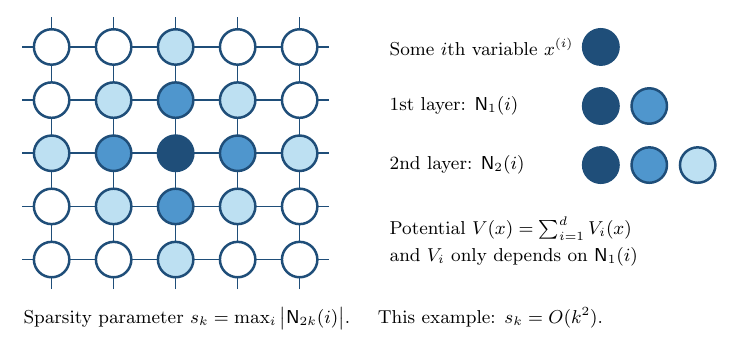}
      \caption{Illustration of a potential $V(x)$ with sparse interactions. Adapted from \cite{chen2024convergence}.}
    \label{fig:sparse-potential}
\end{figure}
 \begin{theorem}[sparse interactions]
     \label{thm-sparse-formal}
     Let Assumptions~\ref{assumption-V-log-concave} and~\ref{assumption-sparse-interactions} hold. For either unadjusted HMC with $mh = 1/\sqrt{20\beta}$,  or unadjusted UL with $h \leq \frac{1-\eta}{2\sqrt{\beta}}$, the resulting invariant distribution $\pi_h$ satisfies
     \begin{equation}
        \label{eqn-sparse-informal}
        W_{2,\ell^\infty}(\pi_h, \pi) = h\sqrt{\log(2d)}\left( O\left( \frac\beta\alpha\log(2d)\right)\right)^{\frac n 2 + 1}\,.
     \end{equation}
 \end{theorem}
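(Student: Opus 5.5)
We follow the coupling strategy of \cite{chen2024convergence} in the $W_{2,\ell^\infty}$ metric, applied to the phase-space dynamics. Let $\Phi$ denote one outer step of the discrete scheme: one HMC iteration, consisting of momentum refresh plus $m$ leap-frog steps, or one BAOAB step. Let $\Psi$ denote the corresponding exact step, which uses the continuous-time flow over time $mh$ (resp. $h$) and preserves $\pi\otimes\mathcal N(0,I)$.

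We couple a chain started at stationarity for $\pi$ with the discrete chain, using synchronous noise. The $q$-difference after $N$ outer steps then splits into two parts: the propagated initial difference, and the accumulated local errors $\Psi-\Phi$ propagated by the linearized discrete dynamics. Concretely, we write
\[
\delta_{k+1}=M_k\delta_k+e_k,
\]
where $M_k=\int_0^1 D\Phi(\text{interpolant})\,{\rm d}s$ is the averaged Jacobian. This is a product of matrices of the form
\[
\begin{pmatrix} I-\tfrac{h^2}{2}H & hI\\ \ast & \ast\end{pmatrix},\qquad H\text{ a Hessian average}.
\]
Letting $N\to\infty$ gives
\[
W_{2,\ell^\infty}(\pi_h,\pi)\le \sum_{k\ge0}\bigl(\bE|M_{N-1}\cdots M_{N-k}e_{N-k-1}|_{\ell^\infty}^2\bigr)^{1/2}.
\]
The local error $e$ is $O(h^2)$ per leap-frog/BAOAB step in each coordinate; more precisely it involves $h^2\nabla^2V\,p$-type terms. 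Gaussian-tail (sub-Gaussian concentration under strong log-concavity) bounds on each coordinate then give $\bE|e|_{\ell^\infty}^2\lesssim h^4\beta\log(2d)$, which is the source of $\sqrt{\log(2d)}$.

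The key new step is controlling products of the propagators in an operator norm suited to sparsity. Using the matrix-polynomial framework announced in the introduction, we write each multi-step propagator block as a polynomial in Hessian averages $H_1,\dots,H_j$ with scalar coefficients. In the $H$-commuting case these coefficients are Chebyshev-like, and their values on $[\alpha,\beta]$ are controlled by the stability conditions $mh=1/\sqrt{20\beta}$ (HMC) or $h\le(1-\eta)/(2\sqrt\beta)$ (UL). This yields $\ell^2$ contraction at rate $1-c\alpha/\sqrt\beta$ per outer time unit. For the noncommuting product we expand into words in the $H_i$.

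Sparsity enters through the support of these words. A word of length $\ell$ has $(i,j)$ entry vanishing unless $j\in\sfN_\ell(i)$, with $\sfN_\ell(i)\subset\sfN_{2\lceil\ell/2\rceil}(i)$. Hence the $\ell^\infty\to\ell^\infty$ norm of any matrix $A$ supported this way satisfies $|A|_{\ell^\infty}\le \sqrt{s_k}\,|A|_{\ell^2}$. We truncate the expansion at degree $L\approx (\beta/\alpha)\log(2d)$. Terms of higher degree are bounded crudely in $\ell^2$ by $\sqrt d\,(1-c\alpha/\beta)^{L}$-type decay, which is negligible once $L\gtrsim(\beta/\alpha)\log d$. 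Terms of degree at most $L$ pick up at most a factor $\sqrt{s_L}\le \sqrt{C}(L+1)^{n/2}$. Summing the geometric series in $k$ contributes one more factor $\beta/\alpha$ (relaxation time) times $h^2\sqrt\beta$ per unit time, which converts to $h$ via $h\sqrt\beta\lesssim1$. Together these produce
\[
h\sqrt{\log(2d)}\,\bigl(O(\tfrac\beta\alpha\log(2d))\bigr)^{n/2+1}.
\]

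The main obstacle is making the degree/support bookkeeping rigorous for the noncommuting random products $M_k$. The Hessians are evaluated at different interpolated points, so the Chebyshev-type stability bounds do not directly apply to individual words. We plan to handle this by bounding the $\ell^2$ norm of the whole product via the contraction estimate proved for strongly convex potentials, rather than word by word. We would then obtain support information separately from a "light-cone" argument: the product restricted to entries $(i,j)$ with $j\notin\sfN_{2k}(i)$ equals the tail of a Taylor-type expansion in the local interaction, which is bounded by $(C h^2\beta\cdot\text{steps})^k/k!$-type estimates. Balancing $k$ against $\log d$ reproduces the chosen truncation $L$. For UL, the large-friction regime requires checking that the O-step damping $\eta$ does not spoil these polynomial coefficient bounds, which is exactly what the condition $h\le(1-\eta)/(2\sqrt\beta)$ is designed to ensure.
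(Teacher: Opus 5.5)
Your final paragraph has the right mechanism for the sparse $\ell^\infty$ bounds: Cauchy--Schwarz on the rows of the \emph{full} propagator over $\sfN_{2r}(i)$, with the far entries coming only from the high-degree tail. This is how the paper proves Propositions~\ref{prop-linf-bounds-sparse-hmc} and~\ref{prop-linf-bounds-sparse-baoab}. However, the way you close the argument does not work. For a propagator over $K$ leap-frog steps, the degree-$j$ part is controlled only through its aggregate coefficient, $h^{2j}\beta^j t_j^{(K)}\le \big(e h^2\beta (K^2/j^2+K/j)\big)^j$, and this is small only once $j\gtrsim Kh\sqrt\beta$. There is no $(1-c\alpha/\beta)^L$ decay in the degree: the contraction of the full propagator comes from cancellations, while $\sum_j h^{2j}\beta^j t_j^{(K)}=T_K(1+\tfrac{h^2\beta}{2})\approx\cosh(Kh\sqrt\beta)$. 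Already for a scalar Gaussian, the degree-$\le L$ part of $T_K(1-\tfrac{h^2\lambda}{2})\approx\cos(Kh\sqrt\lambda)$ is essentially the degree-$2L$ Taylor polynomial of the cosine at $Kh\sqrt\lambda$. That is huge once $Kh\sqrt\lambda\gg L$, so the degree-$>L$ part is huge too. The truncation level must therefore grow with the propagation length, as in $r_i=\lceil ih\sqrt\beta e+\frac{\log\sqrt d}{\log(5/3)}\rceil$. Your sum over all ages $k\ge0$ with a fixed $L\approx\frac\beta\alpha\log(2d)$ is then uncontrolled.

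The paper closes this by coupling over windows of $N=\lceil\frac{200\beta}{\alpha}\log(2\sqrt d)\rceil$ outer loops. Errors created inside a window are propagated at most $Nm$ steps, so every threshold satisfies $r_i\lesssim\frac\beta\alpha\log(2d)$, and $\sum_{i<Nm}\sqrt{s_{r_i}}$ yields $(\frac\beta\alpha\log(2d))^{n/2+1}/(h\sqrt\beta)$. The discrepancy from before the window is contracted by the $\sqrt d$-lossy $\ell^2$ bound, $(1-\frac{\alpha}{200\beta})^N\sqrt d\le\frac12$, which gives the recursion of Proposition~\ref{prop-sketch-sparse-bound-hmc}. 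Two smaller slips: one Hessian already couples nodes two edges apart, so a word of length $\ell$ lives on $\sfN_{2\ell}(i)$, not $\sfN_\ell(i)$. Also, for leap-frog the light-cone tail is quadratic in the number of steps, not of the form $(Ch^2\beta\cdot\mathrm{steps})^k/k!$.

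The UL half has a separate gap. The theorem allows arbitrarily large friction, since only $1-\eta\ge2\sqrt\beta h$ is imposed, and then the exact UL flow over time $h$ is the wrong reference. As $\gamma\to\infty$, BAOAB becomes the Leimkuhler--Matthews scheme and moves the position by $O(h)$ per step, while the exact UL position moves by $O(\sqrt{h/\gamma})$. The conditional means already differ by about $\tfrac h2 p$, so the local error is $O(h)$, not $O(h^2)$. The paper instead compares with the HOH splitting, which shares the O-step noise and preserves $\pi\otimes\mathcal N(0,I)$.

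Even with HOH, your accounting (relaxation time $\beta/\alpha$ times $h^2\sqrt\beta$ per unit time) is calibrated to HMC. BAOAB contracts only at rate $c(h)=\frac{\alpha h^2}{4(1-\eta)}$ per step in the weighted norm. Summing per-step momentum errors of size $h^2\beta\sqrt{\log(2d)}$ (weighted size $h^2\sqrt{\beta\log(2d)}$) against this rate gives a bias of order $(1-\eta)\sqrt{\beta\log(2d)}/\alpha$, with no factor of $h$.

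The missing idea is that friction damps momentum errors before they become position errors. The paper captures this with the Gr\"onwall estimate of Proposition~\ref{prop-l-step-error-gronwall} over blocks of $\tilde l=\lceil\frac{1}{2\sqrt\beta h}\rceil$ steps. That estimate gives a block error of $O(\frac{h^2\sqrt\beta}{1-\eta}\sqrt{\log(2d)})$ rather than $\tilde l h^2\sqrt\beta$. Only then are these block errors propagated: through super-blocks of $K_1\tilde l$ steps via the $\sqrt{2d}$-lossy contraction (factor $2^{-i}$), and through the trailing steps via Proposition~\ref{prop-linf-bounds-sparse-baoab}.
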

 
 Compared to our results for the unadjusted Langevin algorithm in \cite{chen2024convergence}, we have an improvement from $\sqrt{h\log(2d)}$ to $h\sqrt{\log(2d)}$. The new scaling aligns with \cite{leimkuhler2024contraction}, where the Wasserstein-2 bias of the BAOAB scheme is $h\sqrt{d}$. It does not show a second order accuracy due to lack of higher order regularity conditions on $V$~\cite{monmarche2021highdimensionalmcmc, bou-rabee2025unadjustedhamiltonian}. We also note that the step-size conditions for both methods align with existing results. For HMC, the chosen length of the integration time scales with $1/\sqrt{\beta}$ for both the continuous and discrete  dynamics~\cite{chen2019optimalconvergence, bou-rabee203meanfield}, while we choose the specific constant to optimize the contraction rate in our analysis. For UL, we allow for the same step-size condition as in~\cite{leimkuhler2024contraction}. Specifically, the condition indicates that $h \leq \frac{1}{2\sqrt{\beta}}$. 
 Besides, the relation $\eta = \exp(-\gamma h)$ implies the condition $\gamma > 2\sqrt\beta$. In comparison, \cite{monmarche2023almostsure} establishes that $\gamma > \sqrt{\beta}-\sqrt{\alpha}$ is the sharp condition to guarantee Wasserstein contraction for general distributions, whereas for Gaussian targets, contraction holds for any $\gamma$. Therefore, in the highly ill-conditioned regime ($\alpha \ll \beta$), our friction requirement matches the scaling of this tight condition up to a factor of $2$. This scaling also aligns with continuous-time dynamics, where the optimal Wasserstein-2 convergence rate of $O(\alpha/\gamma)$ is attained at $\gamma =\sqrt{\beta+\alpha}$, a value that is asymptotically equivalent to $\sqrt{\beta}$ when $\alpha \ll \beta$.
 Notably, the step-size condition accommodates the entire high friction regime, and ensures that the bias bounds remain valid in the overdamped limit $\gamma \to \infty$, where the BAOAB scheme reduces to the Leimkuhler-Matthews (LM) integrator, a method for the overdamped dynamics that achieves weak second-order accuracy~\cite{leimkuhlerrationalconstruction}. Our results thus also imply the delocalization of bias effect  for the LM scheme.
 
 The key technical challenge of extending our analysis in \cite{chen2024convergence} is that the propagator matrices of HMC and UL are more involved. The coupling argument, a common device for obtaining Wasserstein bounds, relies on the propagator of the dynamics to induce contraction and to control the accumulation of discretization error. The lack of one-step contraction in the $W_{2,\ell^\infty}$ metric necessitated the use of a multi-step coupling argument in \cite{chen2024convergence}, where the sparsity of the potential was leveraged to control the accumulated errors over these steps. 
 In this paper, we employ a novel matrix polynomial framework that applies to both HMC and UL, to characterize the multi-step propagator matrices
 on the joint $(q,p)$ space. The framework takes a global-in-time view, giving the full multi-step propagator
at once as a polynomial of the Hessians along the trajectory, with coefficients we
can characterize and bound. On the joint $(q,p)$ space, position and momentum differences mix as the trajectory evolves, so the position-difference propagator is not a simple product of one-step maps but a structured polynomial pairing the two. The matrix-polynomial framework makes this structure explicit. The polynomial's coefficients control both the contraction and the error accumulation in the $\ell^\infty$ norm. Under sparse interactions, in particular, the
$\ell^\infty$ norm of the multi-step propagator is magnified over its $\ell^2$
norm by only a factor governed by the potential's sparsity.  In the easier weak interaction regime, the propagators are nearly diagonal, so contraction
follows from a more elementary, near-single-step argument. For HMC this
contraction is still read off from the multi-step polynomial, since one outer loop
comprises several integration steps, whereas for UL it follows directly from the
splitting structure of the scheme without invoking the polynomial.

 \subsection{Organization of this paper} In Section~\ref{sec-sketch-techniques}, we present our analysis framework based on the coupling argument and introduce the notation for the dynamics. Section~\ref{sec-matrix-polynomial} introduces a matrix polynomial framework to characterize the propagator matrices, and Section~\ref{sec-propagator-bounds} establishes bounds for these propagators. Discretization error estimates  are provided in Section~\ref{sec-discretization-error-main-text}. We conclude the paper in Section~\ref{sec-conclusion}. All proofs are deferred to the appendices.
 \subsection{Notation} For a random vector $X$, we define $|X|_{2,\ell^\infty} = (\bE[|X|_{\ell^\infty}^2])^{1/2}$. We write $A = O(B)$ or $A\lesssim B$ to denote that there exists a constant $C$
independent of $\alpha, \beta, h, d$ such that $A \leq CB$. 
We use $|\cdot|_{\ell^\infty}$ and $|\cdot|_{\ell^2}$ to represent the $\ell^{\infty}$ and $\ell^2$ norms for vectors and matrices. When applied to matrices, they stand for the corresponding operator norms. We use $\preceq$ for the Loewner order such that if $M\preceq N$ where $M, N$ are symmetric
matrices, then the matrix $N-M$ is positive semi-definite. We use $\lfloor x \rfloor$ and $\lceil x \rceil$ to denote the greatest integer less than or equal to $x$, and the least integer greater than or equal to $x$, respectively.

 \section{Sketch of Techniques}
 \label{sec-sketch-techniques}
 In this section, we outline our framework of the coupling argument and identify the propagator matrices as the key components for analysis. %, and provide bounds for propagators in HMC and underdamped Langevin.
\subsection{Wasserstein bias}
\label{sec-sub-Wasserstein-bias}
Our proof utilizes a coupling argument to establish both the Wasserstein convergence rate and the accumulated discretization error across iterations. These results then yield the final Wasserstein bias bounds. 
The following proposition outlines the Wasserstein convergence for HMC when the potential describes weak interactions.

\begin{proposition}[HMC, weak interactions]
\label{prop-sketch-weak-bound-hmc}
Suppose Assumptions \ref{assumption-V-log-concave} and \ref{assumption-weak-interactions} hold with $C^{(O)} = 1/50$. Let $\rho_k$ be the marginal distribution of $X_k$, the $k$-th step in the HMC chain. For  $mh = 1/\sqrt{20\beta}$, we have
\begin{equation}
\label{eqn-sketch-weak-bound-hmc}
W_{2,\ell^\infty}(\rho_{k+1}, \pi) \leq \left(1 - \frac{\alpha}{400\beta}\right)W_{2,\ell^\infty}(\rho_k, \pi) + O(h\sqrt{\log(2d)})\,.
\end{equation}
\end{proposition}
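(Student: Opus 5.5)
We argue by synchronous coupling between the discrete HMC chain and the exact Hamiltonian flow started from stationarity. Let $(X_k,Y_k)$ be an optimal $W_{2,\ell^\infty}$ coupling of $\rho_k$ and $\pi$. We draw a single momentum $p_0\sim\mathcal N(0,I)$ shared by both. From $(X_k,p_0)$ we run $m$ leap-frog steps \eqref{eqn-leap-frog}, and from $(Y_k,p_0)$ we run the exact flow \eqref{eqn-continuous-time-Hamiltonian} for time $T=mh$. The exact flow preserves $\pi\otimes\mathcal N(0,I)$, so $Y_{k+1}\sim\pi$. Hence
\[
W_{2,\ell^\infty}(\rho_{k+1},\pi)\le |X_{k+1}-Y_{k+1}|_{2,\ell^\infty}.
\]
We insert an intermediate process: the leap-frog trajectory $\tilde Y_{k+1}$ started from $(Y_k,p_0)$. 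The triangle inequality splits the error into
\[
|X_{k+1}-\tilde Y_{k+1}|_{2,\ell^\infty}+|\tilde Y_{k+1}-Y_{k+1}|_{2,\ell^\infty},
\]
namely a contraction term and a discretization term.

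\textbf{Contraction term.} Both trajectories share $p_0$. Along the segment between the two leap-frog trajectories we use the fundamental theorem of calculus to write the position and momentum differences linearly, with averaged Hessians $H_j=\int_0^1\nabla^2V(\cdot)\,dt$ at each step. Each $H_j$ satisfies $\alpha I\preceq H_j\preceq\beta I$ and has off-diagonal $\ell^\infty$ norm at most $C^{(O)}\alpha$. The difference $q$ at step $m$ is then $P_m(H_0,\dots,H_m)(X_k-Y_k)$, where $P_m$ is the matrix polynomial (propagator) produced by composing the leap-frog maps with zero initial momentum difference.

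Expanding this polynomial, $P_m=1-\tfrac{h^2}{2}\sum c_j H_j+\dots$ with nonnegative combinatorial coefficients. In the scalar case these coefficients give $\cos(\sqrt{\lambda}T)$ up to $O(h^2)$ corrections. For $T=1/\sqrt{20\beta}$ the terms of degree $\ge2$ have total coefficient size $O(\beta^2T^4)$, which is small. We split each $H_j=D_j+O_j$ into diagonal and off-diagonal parts. The part of $P_m$ built only from the $D_j$ is diagonal, and entrywise it is the scalar leap-frog propagator with varying curvatures in $[\alpha,\beta]$. This is bounded by $1-cT^2\alpha\le 1-\alpha/(40\beta)$, which can be shown by comparing with $1-T^2\lambda/2+T^4\lambda^2/24$ and controlling positivity of the remaining terms.

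Every term involving at least one $O_j$ has $\ell^\infty$ operator norm bounded through $\|O_j\|_{\ell^\infty}\le C^{(O)}\alpha$. The $\ell^\infty$ norms of the $D_j$ and $O_j$ are both at most about $\beta$, since the row sums are $\le \beta+C^{(O)}\alpha$. Summing, the off-diagonal contribution is at most roughly $T^2 C^{(O)}\alpha\,(1+O(\beta T^2))\le \alpha/(50\cdot 20\beta)\cdot(1+\text{small})$. With $C^{(O)}=1/50$ the net factor is therefore at most $1-\alpha/(400\beta)$.

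\textbf{Discretization term.} Here both processes start from stationarity, with $(Y_k,p_0)\sim\pi\otimes\mathcal N(0,I)$. We bound the local error per step in $|\cdot|_{2,\ell^\infty}$ and propagate it with the same propagator bound, which is $\le 1+O(\beta T^2)$ in $\ell^\infty$, including for the momentum components. Summing the $m$ local errors gives a total of $O(h\sqrt{\log 2d})$.

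Each local error of leap-frog against the exact flow is $O(h^2)$ times quantities such as $\nabla^2V(q_t)p_t$ and $\nabla V(q_t)$, together with integrals of Hessian differences. We bound these via the sub-Gaussian $\ell^\infty$ estimates under stationarity: $|p_t|_{2,\ell^\infty}=O(\sqrt{\log 2d})$ because $p_t\sim\mathcal N(0,I)$, and $|\nabla V(q_t)|_{2,\ell^\infty}=O(\sqrt{\beta\log 2d})$ for $q_t\sim\pi$ by coordinatewise concentration under strong log-concavity. The weak-interaction assumption then gives $|\nabla^2V\,p|_{\ell^\infty}\le(\beta+C^{(O)}\alpha)|p|_{\ell^\infty}$. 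Multiplying $m$ local errors of size $O(h^2\beta^{1/2}\sqrt{\log 2d})$ by $m=T/h$ yields $O(h\sqrt{\log 2d})$.

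\textbf{Main obstacle.} The main obstacle is the contraction step. It requires an explicit handle on the coefficients of $P_m$ so that the diagonal part is contracting uniformly despite the non-commuting $H_j$, and so that the off-diagonal perturbation sums to less than $\alpha/(400\beta)$ in $\ell^\infty$ norm. This is where the matrix-polynomial representation and the exact choice $mh=1/\sqrt{20\beta}$ enter.
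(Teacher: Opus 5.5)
Your outline follows the paper's architecture. It uses the same synchronous coupling and the same split into a leap-frog contraction term and a one-loop discretization term. It also uses the propagator $\tilde T_m^1$ with its alternating-sign expansion, the decomposition $H_i=H_i^{(D)}+H_i^{(O)}$ with the perturbation controlled through row sums $\beta+C^{(O)}\alpha$, and telescoped local errors bounded by sub-Gaussian $\ell^\infty$ estimates at stationarity.

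The genuinely different ingredient is how you make the diagonal block contract. The paper never treats it as a scalar recursion. Since $\tilde T_m^1(\{I-\tfrac{h^2}{2}H_i^{(D)}\})$ is diagonal, its $\ell^\infty$ norm equals its $\ell^2$ norm. The paper then imports the general, non-commutative $\ell^2$ contraction $1-\tfrac{\alpha}{10}(mh)^2$ of Proposition~\ref{prop-l2bounds-hmc}, proved by adapting an existing discrete-time contraction lemma through a continuous-time interpolation. It needs that estimate anyway, for $h\tilde U^1_{k-1}$ and for the sparse case. Your coordinatewise scalar argument is more elementary and gives a sharper constant. However, it relies on the diagonal part commuting and does not extend beyond it.

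As sketched, your diagonal argument skips the key step.
\begin{itemize}
\item An absolute bound $O(\beta^2T^4)$ on the degree-$\geq 2$ terms is not enough. It is not proportional to $\alpha$, so it swamps the gain $\alpha T^2/2$ once $\beta/\alpha$ is large.
\item With curvatures varying from step to step, there is no single $\lambda$ to compare with a cosine.
\end{itemize}
You need relative control, and the explicit coefficients supply it. At $\eta=1$, iterating the recursions in Lemma~\ref{lem-matrix-coefficients-sign} gives $\tilde t^{(m)}_{i_1,\dots,i_j}=(m-i_1)(i_1-i_2)\cdots(i_{j-1}-i_j)\,c_{i_j}$, with $c_0=\tfrac12$ and $c_i=1$ for $i\geq 1$. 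Summing over $i_1>\dots>i_{j-1}$ at fixed $i_j$ gives $c_{i_j}\binom{m-i_j+j-1}{2j-1}\leq \tilde t^{(m)}_{i_j}\,m^{2j-2}/(2j-1)!$. Bounding all curvatures except the rightmost by $\beta$ then yields $S_j\leq \frac{(\beta T^2)^{j-1}}{(2j-1)!}S_1$ for the degree-$j$ part, with $S_1\geq \alpha T^2/2$. Each diagonal entry is then at most $1-0.49\,\alpha T^2$ in absolute value. Your $1-\alpha/(40\beta)$ is marginally too strong, but this is harmless.

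Two smaller fixes are needed.
\begin{itemize}
\item The momentum-to-position propagator $h\tilde U^1_{k-1}$ has $\ell^\infty$ norm $O(T)=O(1/\sqrt\beta)$, not $1+O(\beta T^2)$. This is what converts the raw momentum error $O(h^2\beta\sqrt{\log(2d)})$ into your per-step $O(h^2\sqrt{\beta\log(2d)})$.
\item The bound $|\nabla V|_{2,\ell^\infty}=O(\sqrt{\beta\log(2d)})$ is Lemma~\ref{lem-gradV-expectation}. It does not follow from Lipschitz concentration, which would lose a factor $\sqrt{\beta/\alpha}$ and put a condition number into the discretization term.
\end{itemize}
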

By taking $k \to \infty$, the iterative inequality \eqref{eqn-sketch-weak-bound-hmc} establishes the final bias bound in Theorem \ref{thm-weak-formal}. The inequality consists of two components. The first characterizes the Wasserstein contraction rate of the HMC chain, while the second bounds the discretization error relative to an ideal HMC process that preserves $\pi$ exactly. The detailed proof is in Appendix~\ref{appendix-sampling-bias-bounds}. For the case of sparse interactions, we instead rely on a multi-step recursive inequality.
\begin{proposition}[HMC, sparse interactions]
\label{prop-sketch-sparse-bound-hmc}
    Suppose Assumptions \ref{assumption-V-log-concave} and \ref{assumption-sparse-interactions}  hold,  and set
$r_i \coloneqq \lceil ih\sqrt\beta e + \frac{\log\sqrt d}{\log(5/3)}\rceil$, a
neighborhood threshold chosen to control the $\ell^\infty$ norm of the
propagators as in Section~\ref{sec-bound-sparse}.
When $mh = 1/\sqrt{20\beta}$, 
    we have
    \begin{equation}
    \label{eqn-sketch-sparse-bound-hmc}
        W_{2,\ell^\infty}(\rho_{k+N}, \pi) \leq \left( 1 - \frac{\alpha}{200\beta}\right)^N \sqrt{d}W_{2,\ell^\infty}(\rho_k, \pi) + O\left(\left(\sum_{i=0}^{Nm-1}\sqrt{s_{r_i}}\right)\sqrt{\beta}h^2\sqrt{\log(2d)}\right)\,.
    \end{equation}
\end{proposition}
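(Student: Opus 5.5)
We prove the bound by synchronous coupling over $N$ consecutive outer loops. Let $(X_k, Y_k)$ be an optimal $W_{2,\ell^\infty}$ coupling of $\rho_k$ and $\pi$. We drive $X$ by unadjusted HMC and $Y$ by an \emph{ideal} HMC, which uses the exact Hamiltonian flow over time $mh$ and therefore preserves $\pi$. Both chains share the same momentum refreshments $p_0^{(l)}$ at every outer loop $l=k,\dots,k+N-1$. We split the $N$ outer loops into $Nm$ leap-frog steps indexed by $i=0,\dots,Nm-1$. On the ideal side, we compare the exact flow over each step of length $h$ with one leap-frog step started from the same point. The difference is a local error $e_i\in\bR^{2d}$ in phase space. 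Telescoping then gives
\[
X_{k+N}-Y_{k+N} = \Phi_{0\to Nm}\,(X_k-Y_k,0) + \sum_{i=0}^{Nm-1}\Phi_{i+1\to Nm}\,e_i ,
\]
projected onto the $q$-coordinates. Here $\Phi_{j\to Nm}$ is the multi-step propagator of the leap-frog difference dynamics. It is obtained by the mean value theorem: each $\nabla V$ difference is replaced by $H_j(q_j-\tilde q_j)$, with $H_j=\int_0^1\nabla^2V(\cdot)\,ds$ an averaged Hessian along the segment. Momentum differences are reset to zero at each refreshment, because the momenta are shared.

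\textbf{Initial term.} We bound it crudely by $|\cdot|_{\ell^\infty}\le|\cdot|_{\ell^2}\le\sqrt d\,|\cdot|_{\ell^\infty}$. This costs the factor $\sqrt d$. We then use the standard $\ell^2$ contraction of one leap-frog outer loop with $mh=1/\sqrt{20\beta}$ under Assumption~\ref{assumption-V-log-concave}. This contraction comes from the matrix-polynomial representation of Section~\ref{sec-matrix-polynomial}: the position propagator over one outer loop is a polynomial $P(H_{m-1},\dots,H_0)$ that is close to $\cos(mh\sqrt{H})$, so its $\ell^2$ norm is at most $1-\alpha (mh)^2/10\approx 1-\alpha/(200\beta)$. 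Composing over $N$ loops gives $(1-\alpha/(200\beta))^N$.

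\textbf{Accumulated error terms.} We exploit sparsity. By Assumption~\ref{assumption-sparse-interactions}, each $H_j$ is supported on the neighbor graph. A monomial of degree $r$ in the $H_j$'s therefore has entries $(a,b)$ that vanish unless $b\in\sfN_{r}(a)$. We expand $\Phi_{i+1\to Nm}$ as a matrix polynomial and split it into the part of degree at most $2r_i$ and a tail.
\begin{itemize}
\item The low-degree part has each row supported on at most $s_{r_i}$ entries. Its $\ell^\infty$ operator norm is thus at most $\sqrt{s_{r_i}}$ times its $\ell^2$ norm, which is $O(1)$.
\item The tail coefficients decay like $(h\sqrt\beta e)^r\cdot(\text{steps})^r/r!$. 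Once $r$ exceeds $r_i = \lceil ih\sqrt\beta e + \log\sqrt d/\log(5/3)\rceil$, the tail is at most $(3/5)^{r}\le d^{-1/2}$. It then survives the crude $\sqrt d$ conversion to $\ell^\infty$.
\end{itemize}
This coefficient bound is the main obstacle. The coefficients of the propagator polynomial in noncommuting Hessians have to be controlled uniformly along the trajectory, and the momentum resets must be handled. I would first try dominating the coefficients entrywise by those of the scalar recursion obtained when every $H_j$ is replaced by $\beta$, which gives $\cosh$-type growth bounded by $e^{ih\sqrt\beta}$.

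\textbf{Local error.} The local error $e_i$ of leap-frog against the exact flow, in $|\cdot|_{2,\ell^\infty}$, is $O(\sqrt\beta h^2\sqrt{\log(2d)})$. This comes from Section~\ref{sec-discretization-error-main-text}. The error involves $h^2\,(\nabla^2 V)p$ and $h^2\nabla V$-increments over a time interval $h$. Their coordinates are Gaussian-like with subgaussian tails under $\pi$, by log-concavity of $\pi$ and of the Gaussian momentum. The maximum over $d$ coordinates then contributes $\sqrt{\log(2d)}$.

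Summing the initial term and the error terms with Minkowski's inequality in $|\cdot|_{2,\ell^\infty}$ yields \eqref{eqn-sketch-sparse-bound-hmc}.
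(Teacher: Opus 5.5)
\textbf{Gap: the coefficient/tail bound for the multi-loop propagator is asserted, not proved.}

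Your architecture matches the paper's. You use a synchronous coupling with the exact-flow HMC chain, and pay a $\sqrt d$ on the initial term to use the $\ell^2$ outer-loop contraction $1-\alpha/(200\beta)$. You propagate single-step errors at stationarity by trajectory-dependent matrix polynomials, and split those polynomials by degree into a sparse head and a small tail.

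The gap is exactly the step you call the main obstacle, and your sketched remedy does not work as stated. The tail needs, for each degree $j$, control of the total size of the coefficients of all degree-$j$ monomials in the non-commuting $H$'s. This must hold uniformly along the trajectories and across the momentum resets.

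Taking absolute values in the recursion the propagator obeys, $\Delta q_{k+1}=(2I-h^2H_k)\Delta q_k-\Delta q_{k-1}$, turns $-\Delta q_{k-1}$ into $+\Delta q_{k-1}$. The majorant $y_{k+1}=(2+h^2\beta)y_k+y_{k-1}$ then grows like $(1+\sqrt2)^k$ in the number of steps, not like $e^{kh\sqrt\beta}$. With $k$ up to $Nm\sim(\beta/\alpha)\log d/(h\sqrt\beta)$ this is useless. Even a correct $\cosh$-type bound on the total majorant does not by itself give the tail beyond degree $r_i$. Nor does it justify the per-degree rate you assert.

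\textbf{Missing idea: sign coherence.} Every degree-$j$ monomial $H_{i_1}\cdots H_{i_j}$ in $\tilde T_k^1$ and $h\tilde U_{k-1}^1$, and in their products across loops, carries sign $(-1)^j$ with a nonnegative coefficient (Lemma~\ref{lem-matrix-coefficients-sign}, Proposition~\ref{prop-matrix-polynomial-agg-coeff-outer}). Only with this does the degree-$j$ part have $\ell^2$ norm at most $h^{2j}\beta^j$ times the aggregate coefficient.

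Within one loop that aggregate is the Chebyshev coefficient $t_j^{(k)}=k\frac{(k+j-1)!}{(k-j)!(2j)!}$. This gives $h^{2j}\beta^j t_j^{(i)}\le(3/5)^j$ once $j\ge ih\sqrt\beta e$. Across loops, the paper expands $(\cos m\theta)^N$ binomially to show the outer-loop aggregates are averages of inner-loop ones, hence bounded by $\max_i t_j^{(i)}$.

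A cheaper route to the same sign structure is the first-order form in $(\Delta q, h\Delta p)$. There every coefficient in the variables $-\tfrac{h^2}{2}H_j$ equals $+1$, and each refreshment is a coordinate projection. The majorant is then $T_m(1+\tfrac{h^2\beta}{2})^N T_k(1+\tfrac{h^2\beta}{2})\le e^{Kh\sqrt\beta}$ with $K$ the total number of steps. A Chernoff shift $\sum_{j>r}c_jy^j\le\rho^{-r}\sum_j c_j(\rho y)^j$ with $\rho=5/3$ then recovers exactly the threshold $r_i$, since $e\log(5/3)>\sqrt{5/3}$. One of these arguments has to be carried out; as written, your $\ell^\infty$ propagator bound is a heuristic.

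\textbf{Smaller corrections.}
\begin{itemize}
\item \emph{Sparsity index.} An integrated Hessian has $(a,b)$ entry nonzero only when $a,b\in\sfN(l)$ for some $l$, i.e.\ $b\in\sfN_2(a)$. So a degree-$r$ monomial is supported on $\sfN_{2r}(a)$ (Lemma~\ref{lem-sparse-instantiation}). You must split at degree $r_i$, not $2r_i$; otherwise $s_{2r_i}$ replaces $s_{r_i}$, which the assumption does not let you compare.
\item \emph{Head's $\ell^2$ norm.} It is $O(1)$ only after adding back the tail via the triangle inequality. The paper avoids this by applying Cauchy--Schwarz to the full row restricted to $\sfN_{2r}(a)$, costing $\sqrt{s_r}\,|A|_{\ell^2}$, and charging only entries outside $\sfN_{2r}(a)$ to the tail.
\item \emph{Local error.} In $\ell^\infty$ it carries $|\nabla^2 V|_{\ell^\infty}\le\sqrt{s_1}\beta$, not $\beta$. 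This is a harmless constant under Assumption~\ref{assumption-sparse-interactions}. The paper instead bounds $|AH|_{\ell^\infty}$, $|BH|_{\ell^\infty}$ and $|BHH'|_{\ell^\infty}$ directly.
\end{itemize}
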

See Appendix~\ref{appendix-sampling-bias-bounds} for the detailed proof. When deriving a bias bound, to preserve contraction in the first term, we take $N \sim (\beta/\alpha) \log d$, which necessitates a multi-step coupling. The discretization error remains controlled due to the sparsity of the Hessians, which manifests as the $\sqrt{s_{r_i}}$ terms in the bound. By taking $k \to \infty$, \eqref{eqn-sketch-sparse-bound-hmc} results in the final bias bound in Theorem \ref{thm-sparse-formal}.

We now lay out the coupling framework underlying Propositions \ref{prop-sketch-weak-bound-hmc} and \ref{prop-sketch-sparse-bound-hmc}. The framework is common to HMC and UL. Once a contraction is established, over one step or several, the total bias is a sum of single-step discretization errors, each propagated forward under the discrete-time scheme. The object governing this propagation, which is the \emph{propagator matrix} acting on the difference between two paths, is therefore central to the argument. Working with the discrete-time scheme is itself a deliberate choice, as the existing $\ell^2$ analyses in continuous time~\cite{chen2019optimalconvergence,mangoubi2021mixing} exploit the inner-product structure, but the $\ell^\infty$ norm is less compatible with continuous-time differentiation. For HMC, the propagator matrices contract in the $\ell^\infty$ norm in one outer loop under weak interactions, while the $\ell^\infty$ contraction emerges over multiple outer loops under sparse interactions. Moreover, the same class of matrices controls how the discretization error propagates across inner steps and subsequent outer loops. 

The UL analysis follows a similar template, but its iterative inequalities are more involved due to a weighted norm and the absence of
momentum refreshment. We therefore
present the framework for both algorithms but give the iterative inequalities only
for HMC.

\subsection{HMC} 
\label{sec-sketch-hmc}

We compare two chains driven by the same momentum refreshments
$\xi_k, \ldots, \allowbreak\xi_{k+N-1}$. The exact chain $Y$ has an inner loop following the
continuous Hamiltonian dynamics \eqref{eqn-continuous-time-Hamiltonian}, while the
approximate chain $X$ has an inner loop running the leap-frog scheme
\eqref{eqn-leap-frog}. 
We assume $Y_0 \sim \pi$ is at stationarity. For each chain, one outer loop maps the current position to the next by drawing a fresh momentum $\xi$, integrating the inner dynamics for time $mh$ (over
$m$ leap-frog steps in the approximate case), and reading off the resulting position. We write $\mathsf{\Phi}^{mh}_\xi$ for one \emph{exact} outer loop and
$\mathsf{\Phi}^m_{h,\xi}$ for its \emph{leap-frog} counterpart, the subscript $\xi$
recording the Gaussian refreshment used in that loop. After $N$ outer loops, the
approximate position is the $N$-fold composition along the shared refreshment
sequence,
\begin{equation}
\label{eqn-approximate-solution-X}
    X_{k+N} = \mathsf{\Phi}^m_{h,\xi_{k+N-1}} \mathsf{\Phi}^m_{h,\xi_{k+N-2}} \cdots \mathsf{\Phi}^m_{h,\xi_k}(X_k)\,,
\end{equation}
and the exact position $Y_{k+N}$ is given analogously by composing the
$\mathsf{\Phi}^{mh}_{\xi}$ maps. We omit the composition symbol $\circ$
throughout for brevity.

\noindent\textbf{One-step bound.} Coupling the two chains over a single outer loop, we have
        \begin{equation}
        \label{eqn-sketch-one-step}
\begin{aligned}
     &|X_{k+1}-Y_{k+1}|_{2,\ell^\infty}=|\mathsf{\Phi}^m_{h,\xi_k}(X_k) - \mathsf{\Phi}^{mh}_{\xi_k}(Y_k)|_{2,\ell^\infty} \\
    \leq  &\underbrace{|\mathsf{\Phi}^m_{h,\xi_k}(X_k) - \mathsf{\Phi}_{h, \xi_k}^{m}(Y_k)|_{2,\ell^\infty}}_{(a)} + \underbrace{|\mathsf{\Phi}_{h, \xi_k}^{m}(Y_k) - \mathsf{\Phi}_{\xi_k}^{mh}(Y_k)|_{2,\ell^\infty}}_{(b)}\, ,
\end{aligned}
\end{equation}
where $(a)$ measures how the leap-frog loop changes the distance between two paths and $(b)$ is the discretization error in one outer loop. If $\mathsf{\Phi}^m_{h,\xi}$ is a contraction in the $\ell^\infty$ norm, then it suffices to bound the part (b). Weakly interacting potentials guarantee this contraction.

\noindent\textbf{Multi-step bound.}
In general, $\mathsf{\Phi}^m_{h,\xi}$ contracts in the $\ell^2$ norm but not in the $\ell^\infty$ norm. However, an $\ell^\infty$ contraction can be induced by the $\ell^2$ contraction through coupling over multiple steps, which motivated the multi-step coupling argument for the sparse interactions in \cite{chen2024convergence}. The challenge persists here, where the sparsity of the potential controls the multi-step error accumulation. Iterating over $N$ outer loops yields the same split into a
contraction term $(a)$ and a discretization term $(b)$,
        \begin{equation}
        \label{eqn-sketch-multi-step}
\begin{aligned}
    &|X_{k+N} - Y_{k+N}|_{2,\ell^\infty} 
    = |\mathsf{\Phi}^m_{h,\xi_{k+N-1}}\cdots \mathsf{\Phi}_{h,\xi_k}^m(X_k) - \mathsf{\Phi}^{mh}_{\xi_{k+N-1}}\cdots \mathsf{\Phi}^{mh}_{\xi_k}(Y_k)|_{2,\ell^\infty}\\
    \leq &\underbrace{|\mathsf{\Phi}^m_{h,\xi_{k+N-1}}\cdots \mathsf{\Phi}_{h,\xi_k}^m(X_k) - \mathsf{\Phi}^m_{h,\xi_{k+N-1}}\cdots \mathsf{\Phi}_{h,\xi_k}^m(Y_k)|_{2,\ell^\infty}}_{(a)} \\
    & \quad + \underbrace{|\mathsf{\Phi}^m_{h,\xi_{k+N-1}}\cdots \mathsf{\Phi}_{h,\xi_k}^m(Y_k)- \mathsf{\Phi}^{mh}_{\xi_{k+N-1}}\cdots \mathsf{\Phi}^{mh}_{\xi_k}(Y_k)|_{2,\ell^\infty}}_{(b)}\,,
\end{aligned}
\end{equation}
where now the map in $(a)$ becomes an $\ell^\infty$ contraction once $N \sim (\beta/\alpha) \log d$.

\noindent \textbf{Discretization error.} It remains to quantify the discretization term $(b)$, for both a single outer
loop and for $N$ of them. We begin with $(b)$ of \eqref{eqn-sketch-one-step}. To
expose its structure we need the finer building blocks of an outer loop. Let $\mathsf{U}^{mh}_{\textsc{hmc}}$ and $\mathsf{U}^m_{\textsc{hmc},h}$ denote the
exact and leap-frog solution maps on the full phase space $\mathbb{R}^{2d}$ for time $mh$, and
$\mathsf{\Pi}_1$ the projection onto position, so that
$\mathsf{\Phi}^{mh}_\xi(q) = \mathsf{\Pi}_1\mathsf{U}^{mh}_{\textsc{hmc}}(q,\xi)$
and $\mathsf{\Phi}^m_{h,\xi}(q) = \mathsf{\Pi}_1\mathsf{U}^m_{\textsc{hmc},h}(q,\xi)$. Telescoping the difference $\mathsf{\Pi}_1\mathsf{U}^{m}_{\textsc{hmc},h}
- \mathsf{\Pi}_1\mathsf{U}^{mh}_{\textsc{hmc}}$ over the $m$ integration steps and
using that $\mathsf{U}^{ih}_{\textsc{hmc}}(Y_k,\xi_k) \sim \pi \otimes
\mathcal{N}(0,I)$ is at stationarity (since $Y_0 \sim \pi$ and the inner loop is
exact), term $(b)$ of \eqref{eqn-sketch-one-step} is bounded by 
\begin{equation}
\label{eqn-one-step-hmc-discretization-error-decomposition-propagation}
    (b) \leq \sum_{i=0}^{m-1}\big|\,\sfPi_1 \sfU_{\hmc,h}^{m-i-1}(\sfU_{\hmc,h}^{1} - \sfU_{\hmc}^h)(q^*, p^*)\,\big|_{2, \ell^\infty}\,,
    \qquad (q^*, p^*)\sim \pi \otimes \mathcal{N}(0,I)\,.
\end{equation}

Each term represents the error incurred in a single integration step, propagated over the subsequent $m-i-1$ leap-frog steps. 

The discretization error over $N$ outer loops, term $(b)$ of
\eqref{eqn-sketch-multi-step}, telescopes in the same way at the level of outer
loops. It is a sum over loops $l = 0, \ldots, N-1$, of the one-loop error
incurred at loop $l$, propagated through the remaining $N-l-1$ outer loops. Combined with a single-loop decomposition similar to \eqref{eqn-one-step-hmc-discretization-error-decomposition-propagation}, the total error is a sum of single integration-step errors, each
propagated forward through the remaining leap-frog steps within its own loop and
through all subsequent outer loops. Term $(b)$ in \eqref{eqn-sketch-multi-step} can be therefore bounded as
\begin{equation}
\label{eqn-multi-step-hmc-discretization-error-decomposition-propagation}
    (b) \leq \sum_{l=0}^{N-1}\sum_{i=0}^{m-1}|\sfPhi_{h,\xi_{k+N-1}}^m \cdots \sfPhi_{h,\xi_{k+l+1}}^m\sfPi_1 \sfU_{\hmc,h}^{m-i-1}(\sfU_{\hmc,h}^{1} - \sfU_{\hmc}^h)(q^*, p^*)|_{2,\ell^\infty}\,,
\end{equation}
where $(q^*, p^*) \sim \pi \otimes \mathcal{N}(0,I)$. Here $i$ indexes the inner integration steps and $l$ the outer loops. Characterizing the propagator matrices that govern how a local error evolves into
the position coordinate is therefore the core of the discretization-error
analysis.

\subsection{UL}
\label{sec-sketch-underdamped}
The framework for UL follows that for HMC, with two modifications. First, the
dynamics evolve without momentum refreshment, so we do not project onto the
position between steps. Second, since contraction in the $\ell^2$ norm
requires a weighted metric \cite{monmarche2021highdimensionalmcmc,leimkuhler2024contraction}, we adopt a weighted $\ell^\infty$ norm throughout.
\subsubsection{Notation and weighted norm}
 We use solution-map notation analogous to HMC. The one-step BAOAB map with step
size $h$ and O-step noise $\xi$ is $\sfU_{\baoab,h}^\xi$, and we write
$\sfU_{\abao,h}^\xi$, $\sfU_{\bao,h}^\xi$, and $\sfU_{\ab,h}$ for the ABAO, BAO,
and AB sub-schemes, defined by the operator compositions
\begin{equation}
\label{eqn-def-baoab-aba-bao-ab}
\begin{aligned}
    &\sfU_{\baoab,h}^\xi \coloneqq \sfB_{h/2} \sfA_{h/2} \sfO_{h}(\cdot; \xi) \sfA_{h/2}\sfB_{h/2}\,, \quad
    \sfU_{\abao,h}^\xi\coloneqq \sfO_h(\cdot;\xi)\sfA_{h/2}\sfB_{h}\sfA_{h/2}\,,\\
    & \sfU_{\bao,h}^\xi \coloneqq \sfO_{h}(\cdot;\xi)\sfA_{h/2}\sfB_{h/2}\,, \quad
    \sfU_{\ab,h} \coloneqq \sfB_{h/2}\sfA_{h/2}\,.
\end{aligned}
\end{equation}
We use half steps for each $\sfA$ and $\sfB$ in BAO and AB rather than the
conventional full steps, since we invoke them as the components
$\sfU_{\baoab,h}^\xi = \sfU_{\ab,h} \circ \sfU_{\bao,h}^\xi$ of the BAOAB scheme. In place of the continuous-time UL dynamics
\eqref{eqn-continuous-time-underdamped}, we measure the discretization error
against the HOH splitting $\sfU_{\hoh,h}^\xi \coloneqq \sfH_{h/2}\, \sfO(\cdot;\xi)\,
\sfH_{h/2}$, where $\sfH_h \coloneqq \sfU_{\hmc,h}$ is the Hamiltonian solution map. Following \cite{leimkuhler2024contraction}, this keeps the error estimate accurate in the high-friction regime, and HOH preserves the target
$\pi \otimes \mathcal{N}(0,I)$.

We consider the weighted $\ell^\infty$ norm
\begin{equation}
\label{eqn-def-weighted-linf-norm}
    |(q,p)|_{\ell^\infty_w} \coloneqq \left|W\begin{bmatrix} q\\ p \end{bmatrix}\right|_{\ell^\infty}\,,
    \qquad
    W = \begin{bmatrix}
        I & bI\\
        0 & \sqrt{a-b^2}\,I
    \end{bmatrix}\,,
\end{equation}
where $a = \tfrac{1}{\beta}$ and $b = \tfrac{h}{1-\eta}$. 
The weight matrix $W$
is taken from~\cite{leimkuhler2024contraction}, where the same weight makes BAOAB
contract in $\ell^2$. Equivalence with the Euclidean norm requires $b^2 \le a/4$
\cite{monmarche2021highdimensionalmcmc,leimkuhler2024contraction}, under which
\begin{equation}
\label{eqn-l2-norm-equivalence}
    \tfrac12\big(|q|_{\ell^2}^2 + a |p|_{\ell^2}^2\big) \leq |(q,p)|_{\ell^2_w}^2 \leq \tfrac32\big(|q|_{\ell^2}^2 + a |p|_{\ell^2}^2\big)\,.
\end{equation}
This condition is met under the step-size requirement
$h \leq \frac{1-\eta}{2\sqrt{\beta}}$ for UL. For
probabilistic bounds we use the expected norms $|(q,p)|_{2,\ell^\infty_w}
\coloneqq \sqrt{\bE[\,|(q,p)|_{\ell^\infty_w}^2\,]}$ and its unweighted
counterpart $|\cdot|_{2,\ell^\infty}$.

Starting from $(X_k, P_k)$, the BAOAB chain evolves over $N$ steps to
\begin{equation}
\label{eqn-ul-chains}
    (X_{k+N}, P_{k+N}) = \sfU_{\baoab,h}^{\xi_{k+N-1}}\cdots\sfU_{\baoab,h}^{\xi_k}(X_k, P_k)\,,
\end{equation}
and the HOH chain evolves analogously from $(Y_k, P'_k)$ by composing the
$\sfU_{\hoh,h}^\xi$ maps, where $\xi_k, \ldots, \xi_{k+N-1}$ are the
O-step Gaussian noises, shared by both chains to match the coupling
below.

\subsubsection{Analysis framework via coupling}
\label{sec-subsub-underdamped-coupling-framework}
Given a fixed $k$, we couple the two chains and initialize $(Y_0, P'_0)$ from
the HOH stationary distribution $\pi\otimes\mathcal{N}(0,I)$. 

\noindent\textbf{Multi-step bound.}  As in HMC, the
position gap splits into a contraction term $(a)$ and a discretization term
$(b)$,
\begin{equation}
\label{eqn-sketch-underdamped}
\begin{aligned}
   &|X_k - Y_k|_{2,\ell^\infty}\\
   \leq & \underbrace{\left|\sfPi_1\left(\sfU_{\baoab,h}^{\xi_{k-1}}\cdots\sfU_{\baoab,h}^{\xi_0} (X_0, P_0) -\sfU_{\baoab,h}^{\xi_{k-1}}\cdots \sfU_{\baoab,h}^{\xi_0} (Y_0, P'_0)\right)\right|_{2,\ell^\infty}}_{(a)}\\
   & + \underbrace{\left|\sfPi_1\left(\sfU_{\baoab,h}^{\xi_{k-1}}\cdots\sfU_{\baoab,h}^{\xi_0} (Y_0, P'_0) -\sfU_{\hoh,h}^{\xi_{k-1}}\cdots \sfU_{\hoh,h}^{\xi_0} (Y_0, P'_0)\right)\right|_{2,\ell^\infty}}_{(b)}\,.
\end{aligned}
\end{equation}
In both the weak and sparse interaction cases, the contraction is established
over multiple steps, though for different reasons. For weak interactions, we
exploit the one-step weighted contraction of the $\sfA\sfB\sfA\sfO$ operator,
writing $(\sfB\sfA\sfO\sfA\sfB)^k = \sfB\sfA\sfO\,(\sfA\sfB\sfA\sfO)^{k-1}\,\sfA\sfB$,
so that the contraction of the inner $\sfA\sfB\sfA\sfO$ operators drives the bound
while $\sfB\sfA\sfO$ and $\sfA\sfB$ contribute only constant prefactors. These
prefactors are what preclude a strict one-step bound and require accumulating
contraction over several steps. This decomposition follows the $\ell^2$ argument
for the BAOAB  scheme in \cite{leimkuhler2024contraction}. For sparse interactions,
$\sfA\sfB\sfA\sfO$ is no longer a weighted-$\ell^\infty$ contraction, so we
instead analyze the full $\sfB\sfA\sfO\sfA\sfB$ map and upgrade the $\ell^2$ contraction of
\cite{leimkuhler2024contraction} to an $\ell^\infty$ contraction over multiple
steps. Since we only need the asymptotic bias in the position variable, one
option there is to track the propagator matrices governing error propagation into
the position coordinates, whose marginal is non-Markovian across intermediate
steps. In both cases, we establish the multi-step contraction from the initial
state through to the final step $k$ rather than step by step.

In both regimes, \eqref{eqn-sketch-underdamped} is taken in the unweighted $\ell^\infty$ norm of the position coordinate to match the Wasserstein bias structure. The weighted contraction enters by inserting $W$ and $W^{-1}$ at the
intermediate steps and invoking the equivalence in
\eqref{eqn-l2-norm-equivalence}.  In Section \ref{sec-sub-Wasserstein-bias}, we omit an explicit iterative Wasserstein bias inequality for UL analogous to Propositions \ref{prop-sketch-weak-bound-hmc} and \ref{prop-sketch-sparse-bound-hmc}, because the insertion of the weight matrices and the end-to-end bias bound complicate the statement.

\noindent\textbf{Discretization error.} The discretization term $(b)$ is handled through the same propa\-gator-ma\-trix
mechanism as HMC, following the asymptotic-bias analysis of BAOAB in
\cite{leimkuhler2024contraction} with the weighted $\ell^2$ norm replaced by the
$\ell^\infty$ norm. To bound the error accurately across regimes of $\gamma$,
\cite{leimkuhler2024contraction} analyzes the process over blocks of
$\tilde l = \lceil \frac{1}{2\sqrt\beta h} \rceil$ steps. Organizing $(b)$ in
\eqref{eqn-sketch-underdamped} block by block gives
\begin{equation}
\label{eqn-multi-step-baoab-discretization-error-decomposition-propagation}
\begin{aligned}
    (b) & \leq  \sum_{i=0}^{\lfloor k/\tilde l\rfloor - 1} \left|\sfPi_1\sfU_{\baoab,h}^{k-(i+1)\tilde l}(\sfU_{\baoab,h}^{\tilde l} - \sfU_{\hoh,h}^{\tilde l}) (q^*, p^*)\right|_{2,\ell^\infty} \\
    & \quad + \left|\sfPi_1\left(\sfU_{\baoab,h}^{k-\lfloor k/\tilde l \rfloor \tilde l}  - \sfU_{\hoh,h}^{k - \lfloor k/\tilde l \rfloor \tilde l}\right)(q^*, p^*)\right|_{2,\ell^\infty}\,,
\end{aligned}
\end{equation}
where $(q^*, p^*)\sim \pi \otimes \mathcal{N}(0,I)$, and $\sfU_{\baoab,h}^l$,
$\sfU_{\hoh,h}^l$ denote the $l$-fold compositions of the BAOAB and HOH solution
maps, with the coupled Gaussian noises from \eqref{eqn-sketch-underdamped}
omitted for brevity. Each term is the error over at most $\tilde l$ steps,
propagated through the subsequent BAOAB steps. Characterizing these propagator
matrices for the evolution of the difference between two paths is therefore central to the accumulated discretization-error analysis.

\section{A Matrix Polynomial Framework for Propagator Analysis}
\label{sec-matrix-polynomial}
As shown in the previous section, the propagator matrices are fundamental to characterizing both the contraction rates and the discretization error under the $\ell^\infty$ norm. In the absence of continuous-time differentiation, we track the evolution of path differences by introducing a framework that expresses these propagators as \emph{matrix polynomials}. The underlying polynomial coefficients then facilitate the derivation of the required $\ell^\infty$ bounds. We illustrate this structure with a Gaussian distribution.
\subsection{Propagators: Gaussian examples}
The propagators of the leap-frog scheme are closely related to Chebyshev polynomials, as illustrated by the following proposition.
\begin{proposition}
\label{prop-hmc-Gaussian-propagators}
For $\pi = \mathcal{N}(0, \Sigma)$, after $k$ leap-frog steps, the difference between the position components $x_k = \sfPi_1\sfU_{\hmc,h}^k(x_0, p_0)$ and $y_k = \sfPi_1\sfU_{\hmc, h}^k(y_0, p'_0)$ satisfies
\begin{equation}
\label{eqn-hmc-Gaussian-propagators}
x_k - y_k = T_k(I - \frac{h^2}{2}\Sigma^{-1}) (x_0 - y_0) + hU_{k-1}(I - \frac{h^2}{2}\Sigma^{-1})(p_0 - p'_0)\,,
\end{equation}
where $T_k$ is the $k$-th Chebyshev polynomial of the first kind, and $U_k$ is the $k$-th Chebyshev polynomial of the second kind. We adopt the convention $U_{-1} = 0$, which is consistent with the standard three-term recurrence relation $U_{k+1}(x) = 2xU_k(x) - U_{k-1}(x)$ for $k\geq 0$ when initializing the sequence at $U_0(x) = 1$.
\end{proposition}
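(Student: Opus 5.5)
Since $\pi=\mathcal N(0,\Sigma)$, we have $\nabla V(q)=\Sigma^{-1}q$, so the leap-frog map \eqref{eqn-leap-frog} is linear. The differences $e_j=x_j-y_j$ and $f_j=p_j-p'_j$ therefore evolve by the same linear recursion, with no inhomogeneous term. Write $A=\Sigma^{-1}$ and $M=I-\frac{h^2}{2}A$. One step reads
\begin{equation*}
e_{j+1}=M e_j + h f_j,\qquad f_{j+1}=f_j-\tfrac h2 A e_j-\tfrac h2 A e_{j+1}.
\end{equation*}
The plan is to eliminate the momentum and obtain a scalar-type three-term recurrence in $e_j$ alone. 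Every matrix involved is a polynomial in $A$, so they all commute. Once the recurrence is in hand, we match it with the Chebyshev recurrence.

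\textbf{Step 1: eliminate the momentum.} From the first equation at step $j+1$ we have $e_{j+2}=Me_{j+1}+hf_{j+1}$. Substituting $hf_{j+1}=hf_j-\frac{h^2}{2}Ae_j-\frac{h^2}{2}Ae_{j+1}$ and $hf_j=e_{j+1}-Me_j$ gives
\begin{equation*}
e_{j+2}=\big(M+I-\tfrac{h^2}{2}A\big)e_{j+1}-\big(M+\tfrac{h^2}{2}A\big)e_j = 2M e_{j+1}-e_j,
\end{equation*}
using $M+I-\frac{h^2}{2}A=2M$ and $M+\frac{h^2}{2}A=I$. This is exactly the Chebyshev recurrence with matrix argument $M$, valid for $j\ge 0$. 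The initial data are $e_0$ and $e_1=Me_0+hf_0$.

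\textbf{Step 2: match with the Chebyshev polynomials.} Both $T_k(M)$ and $U_{k-1}(M)$ satisfy $Z_{k+1}=2MZ_k-Z_{k-1}$, with $T_0=I$, $T_1(M)=M$, $U_{-1}=0$ and $U_0=I$. Define the candidate $\tilde e_k=T_k(M)e_0+hU_{k-1}(M)f_0$. It obeys the same recurrence, and it has $\tilde e_0=e_0$ and $\tilde e_1=Me_0+hf_0=e_1$. A second-order linear recurrence is determined by its first two terms, so induction on $k$ gives $e_k=\tilde e_k$ for all $k\ge 0$. This is \eqref{eqn-hmc-Gaussian-propagators}.

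\textbf{Main obstacle.} The only delicate point is the elimination in Step 1. It works because all the coefficient matrices are polynomials in $\Sigma^{-1}$ and so commute. The convention $U_{-1}=0$ is what makes the $k=0$ case consistent. No further difficulty is expected.
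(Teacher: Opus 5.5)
Your proof is correct and follows the paper's route: you eliminate the momentum difference to obtain $e_{k+1} = 2Me_k - e_{k-1}$, then match the initial data $e_0$ and $e_1 = Me_0 + hf_0$ against the Chebyshev recurrences by induction, which is the $\eta=1$, constant-Hessian case of Part (a) of Proposition~\ref{prop-matrix-polynomial}. One small correction: the elimination in Step 1 never uses commutativity, since the same algebra works verbatim with step-dependent, non-commuting Hessians $H_k$, and that is exactly what the paper's general multivariate framework relies on.
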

\begin{proof}[Proof sketch]
Writing down the iteration \eqref{eqn-leap-frog} in two consecutive steps and eliminating the momentum
difference yields, for $\Delta q_k \coloneqq x_k - y_k$, the three-term recurrence
\begin{equation}
\label{eqn-three-term-recurr-in-x-gaussian}
    \Delta q_{k+1} = (2I-h^2 \Sigma^{-1})\Delta q_k - \Delta q_{k-1}\,,\quad k \geq 1.
\end{equation} Deriving the corresponding recurrence for the propagator matrices reveals the Chebyshev polynomials.
\end{proof}

This is a special case of Proposition~\ref{prop-matrix-polynomial} for which we provide the complete proof. It is natural for $\Delta q_k$ to satisfy a three-term recurrence, as the leap-frog scheme is a discretization of a second-order ordinary differential equation in the position variable. In fact, the position-only iteration \eqref{eqn-three-term-recurr-in-x-gaussian} is the St\"ormer form of the leap-frog scheme~\cite{leimkuhler2015moleculardynamics}. The relation to the Chebyshev polynomials is also well-motivated. For instance,  in a one-dimensional system with the quadratic potential $V(q) = \frac{1}{2} q^2$, the exact solution is $q(t) = q(0)\cos(t) + p(0)\sin(t)$. The Chebyshev polynomials provide the discrete analogue to these trigonometric solutions via the identities $T_k(\cos\theta) = \cos(k\theta)$ and $U_{k-1}(\cos\theta) = \sin(k\theta)/\sin\theta$. 

With momentum refreshment between outer loops, multiplying their matrix polynomials
gives the following HMC propagator across multiple loops. 
\begin{proposition}
\label{prop-hmc-Gaussian-outer-propagators}
For $\pi = \mathcal{N}(0, \Sigma)$, let $x_{Nm+k}$ and $y_{Nm+k}$ denote the position components after initial $k$ leap-frog steps, followed by $N$ full HMC outer loops (each consisting of $m$ steps). Specifically, let $x_{Nm+k} \coloneqq \sfPhi_{h,\xi_{N-1}}^m \cdots \sfPhi_{h,\xi_{0}}^m\sfPi_1 \sfU_{\hmc,h}^{k}(x_0, p_0)$ and $y_{Nm+k} \coloneqq \sfPhi_{h,\xi_{N-1}}^m \cdots \sfPhi_{h,\xi_{0}}^m\sfPi_1 \sfU_{\hmc,h}^{k}(y_0, p'_0)$. The difference $\Delta q_{Nm+k} \coloneqq x_{Nm+k} - y_{Nm+k}$ satisfies
\begin{equation}
\label{eqn-hmc-Gaussian-outer-propagators}
\begin{aligned}
\Delta q_{Nm+k}  &= \left( T_m(I - \frac{h^2}{2}\Sigma^{-1}) \right)^N\\
& \qquad \cdot \left(T_k(I - \frac{h^2}{2}\Sigma^{-1}) (x_0 - y_0) 
 + hU_{k-1}(I - \frac{h^2}{2}\Sigma^{-1})(p_0 - p'_0)\right)\,.
\end{aligned}
\end{equation}
Here, $x_{Nm+k}$ and $y_{Nm+k}$ are coupled by utilizing identical momentum refreshments $\xi_0, \ldots, \xi_{N-1}$ at each outer loop.
\end{proposition}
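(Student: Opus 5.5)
The approach is induction on the number of outer loops $N$, using Proposition~\ref{prop-hmc-Gaussian-propagators} as the one-loop building block. Because $\pi=\mathcal{N}(0,\Sigma)$ has $\nabla V(q)=\Sigma^{-1}q$, the leap-frog map is affine. Its linear part acts on $(q,p)$ and does not depend on the state, so differences between two coupled trajectories evolve linearly. Additive terms such as the refreshed momentum cancel whenever both chains use the same value.

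For the base case $N=0$, the claim is exactly \eqref{eqn-hmc-Gaussian-propagators} applied after $k$ leap-frog steps from $(x_0,p_0)$ and $(y_0,p_0')$. This gives
\[
\Delta q_k = T_k(I-\tfrac{h^2}{2}\Sigma^{-1})(x_0-y_0)+hU_{k-1}(I-\tfrac{h^2}{2}\Sigma^{-1})(p_0-p_0').
\]

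For the inductive step, suppose the formula holds for $N$. Consider the $(N+1)$-th outer loop $\sfPhi^m_{h,\xi_N}$, applied to the two positions $x_{Nm+k}$ and $y_{Nm+k}$. Both chains draw the same fresh momentum $\xi_N$, so the initial momentum difference for this loop is zero. Applying Proposition~\ref{prop-hmc-Gaussian-propagators} with $m$ steps, initial position difference $\Delta q_{Nm+k}$, and zero momentum difference, the $U_{m-1}$ term drops out. We obtain
\[
\Delta q_{(N+1)m+k}=T_m(I-\tfrac{h^2}{2}\Sigma^{-1})\,\Delta q_{Nm+k}.
\]
Substituting the inductive hypothesis produces one more factor of $T_m(I-\tfrac{h^2}{2}\Sigma^{-1})$. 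All the matrices involved are polynomials in the single symmetric matrix $\Sigma^{-1}$, so they commute, and the factors can be collected as the power $(T_m(\cdot))^{N+1}$ placed in front, exactly as in \eqref{eqn-hmc-Gaussian-outer-propagators}.

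The only point needing care is confirming that Proposition~\ref{prop-hmc-Gaussian-propagators} really gives a state-independent linear relation for differences, with the momentum entering only through $p_0-p_0'$. This holds because the Gaussian gradient is linear, so the three-term recurrence \eqref{eqn-three-term-recurr-in-x-gaussian} for $\Delta q$ holds regardless of the absolute states. Commutativity of the factors is immediate, since everything is a polynomial in $\Sigma^{-1}$. I expect no substantive obstacle beyond bookkeeping of the index ordering: the first $k$ steps come before the $N$ full loops.
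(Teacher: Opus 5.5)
Your proof is correct and is essentially the paper's argument, which is the Gaussian case of Proposition~\ref{prop-matrix-polynomial}(b): the shared refreshment makes the momentum difference zero at the start of each outer loop, so each loop contributes exactly one factor $T_m(I-\tfrac{h^2}{2}\Sigma^{-1})$, and the initial $k$-step formula is substituted at the end. Commutativity is not actually needed, since each new factor multiplies on the left; this is why the same argument yields the ordered product in the non-Gaussian case.
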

The UL dynamics are closely related to the Hamiltonian dynamics but crucially involve a friction term. The friction induces a modification to the polynomial structure. Specifically, the resulting three-term recurrence for the BAOAB scheme incorporates damped coefficients, which recovers the leap-frog coefficients by taking $\gamma =0$.
\begin{proposition}
\label{prop-underdamped-Gaussian-propagators}
For $\pi = \mathcal{N}(0, \Sigma)$, let $x_{k}$ and $y_{k}$ denote the position components after $k$ BAOAB steps. Specifically, let $x_{k} \coloneqq \sfPi_1\sfU_{\baoab,h}^{\xi_{k-1}}\cdots\sfU_{\baoab,h}^{\xi_0}(x_0, p_0)$ and $y_{k} \coloneqq \sfPi_1\sfU_{\baoab,h}^{\xi_{k-1}}\cdots\sfU_{\baoab,h}^{\xi_0} (y_0, p'_0)$. The difference $\Delta q_k \coloneqq x_k - y_k$ then satisfies
\begin{equation}
\label{eqn-underdamped-Gaussian-propagators}
\Delta q_{k+1} = (1+\eta)(I -\frac{h^2}{2}\Sigma^{-1})\Delta q_k - \eta \Delta q_{k-1}\,, \quad k \geq 1.
\end{equation}
\end{proposition}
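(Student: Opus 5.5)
The plan is to write one BAOAB step as a linear recursion for the differences, exploit the fact that both chains share the same noise $\xi_k$ so that it cancels, and then eliminate the momentum difference between two consecutive steps, exactly as in the proof sketch of Proposition~\ref{prop-hmc-Gaussian-propagators}. For $V(q)=\tfrac12 q^\top\Sigma^{-1}q$ we have $\nabla V(x)-\nabla V(y)=\Sigma^{-1}(x-y)$. Set $\Delta q_k = x_k-y_k$ and $\Delta p_k = p_k - p'_k$, and abbreviate $M \coloneqq I-\tfrac{h^2}{2}\Sigma^{-1}$. Subtracting \eqref{eqn-baoab} for the two chains, the $\sqrt{1-\eta^2}\,\xi_k$ terms cancel, which leaves
\begin{align*}
\Delta q_{k+1} &= \tfrac{h}{2}(1+\eta)\Delta p_k + (1+\eta)\Big(I-\tfrac{h^2}{4}\Sigma^{-1}\Big)\Delta q_k - \eta\,\Delta q_k\,,\\
\Delta p_{k+1} &= \eta\Big(\Delta p_k - \tfrac{h}{2}\Sigma^{-1}\Delta q_k\Big) - \tfrac{h}{2}\Sigma^{-1}\Delta q_{k+1}\,.
\end{align*}
To obtain the first line, rewrite $\Delta q_k - \tfrac{h^2}{4}(1+\eta)\Sigma^{-1}\Delta q_k$ as $(1+\eta)\bigl(I-\tfrac{h^2}{4}\Sigma^{-1}\bigr)\Delta q_k - \eta\Delta q_k$.

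It is cleaner to introduce the half-step momentum difference $\Delta\tilde p_k \coloneqq \Delta p_k - \tfrac{h}{2}\Sigma^{-1}\Delta q_k$, which is the momentum difference after the first B half-step. In this variable the position update reads
\[
\Delta q_{k+1} = \Delta q_k + \tfrac{h}{2}(1+\eta)\Delta\tilde p_k\,.
\]
Combining the momentum update with one more B half-step at $q_{k+1}$ gives
\[
\Delta\tilde p_{k+1} = \eta\,\Delta\tilde p_k - h\Sigma^{-1}\Delta q_{k+1}\,.
\]

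Now take $k\ge1$ and use the position update at index $k-1$ to express $\Delta\tilde p_{k-1} = \tfrac{2}{h(1+\eta)}(\Delta q_k - \Delta q_{k-1})$. Substituting this into $\Delta\tilde p_k = \eta\Delta\tilde p_{k-1} - h\Sigma^{-1}\Delta q_k$ and then into $\Delta q_{k+1} = \Delta q_k + \tfrac{h}{2}(1+\eta)\Delta\tilde p_k$ yields
\[
\Delta q_{k+1} = \Delta q_k + \eta(\Delta q_k - \Delta q_{k-1}) - \tfrac{h^2}{2}(1+\eta)\Sigma^{-1}\Delta q_k = (1+\eta)M\Delta q_k - \eta\Delta q_{k-1}\,,
\]
which is \eqref{eqn-underdamped-Gaussian-propagators}. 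Setting $\eta=1$ recovers \eqref{eqn-three-term-recurr-in-x-gaussian}, which gives a consistency check.

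There is no real obstacle in this argument. The only point needing care is the bookkeeping: the elimination must use the update at index $k-1$, which is why the recurrence holds only for $k\ge1$. One should also confirm that the noise cancels through the O-step, since $\eta$ multiplies only the momentum difference there.
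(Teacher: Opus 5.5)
Your proposal is correct and follows essentially the paper's route. You subtract the two coupled BAOAB updates so the shared noise cancels, then eliminate the momentum difference across two consecutive steps; the paper does this in the general integrated-Hessian setting of Proposition~\ref{prop-matrix-polynomial}, and taking $H_k = \Sigma^{-1}$ gives the Gaussian case. Your half-step variable $\Delta\tilde p_k$ is only a tidy way of organizing that same elimination.
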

\subsection{Propagators: general cases}
\label{sec-matrix-polynomial-general-cases}
Inspired by the Gaussian case, we establish that the propagator matrices can be represented as matrix polynomials, whose coefficients can be accurately characterized.  However, the Hessian varies along the trajectory, and the
resulting matrices do not commute. The polynomial must therefore be generalized to
one in several ordered, non-commuting arguments, each monomial being a product
taken in a fixed order. We  view the general propagator as a ``path-dependent'' polynomial, where the recurrence remains identical to the Gaussian case but the arguments are updated in order at each step. We now define these polynomials.

\begin{definition}[Multivariate Damped Chebyshev Polynomials]
\label{def-multivariate-damped-Chebyshev}
    For a sequence of matrices $M_i \in \mathbb{R}^{d\times d}$, we define the
multivariate damped Chebyshev polynomial of the first kind,
$\tilde T_k^\eta(M_1,\dots, M_k)$, via the three-term recurrence
  \begin{equation}
\label{eqn-scalar-multivar-damped-chebyshev-first-kind}
    \tilde T_{k+1}^\eta(M_1,\dots, M_{k+1}) = (1+\eta)\,M_{k+1}\tilde T_k^\eta(M_1,\dots, M_k) - \eta\, \tilde T_{k-1}^\eta(M_1,\dots, M_{k-1})
\end{equation}
     for $k\geq 1$, with the initial conditions
     $\tilde T_0^\eta \equiv I$ and
$\tilde T_1^\eta(M_1) = \tfrac{1+\eta}{2} M_1 + \tfrac{1-\eta}{2}I$. Note that the new
argument $M_{k+1}$ enters on the left. 
The multivariate damped Chebyshev polynomial of
the second kind, $\tilde U_k^\eta(M_1,\dots,M_k)$, is defined by the same
recurrence
\begin{equation}
    \label{eqn-scalar-multivar-damped-chebyshev-second-kind}
    \tilde U_{k+1}^\eta(M_1,\dots, M_{k+1}) = (1+\eta)M_{k+1}\tilde U_k^\eta(M_1,\dots, M_k) - \eta \tilde U_{k-1}^\eta(M_1,\dots, M_{k-1})
    \end{equation}
    for $k\geq 0$, with different initial conditions $\tilde U_{-1}^\eta \equiv 0$, $\tilde U_{0}^\eta \equiv \tfrac{1+\eta}{2}I$. For notation brevity, we also adopt the sequence notation $\tilde T_{k}^\eta(\{M_i\}_{i=1}^{k})$ and $\tilde U_{k}^\eta(\{M_i\}_{i=1}^{k})$.
\end{definition}
We note that when $\eta =1$, which corresponds to the frictionless Hamiltonian regime $\gamma = 0$, setting all $M_i \coloneqq xI$ with a scalar $x$ gives $\tilde T_k^\eta(xI,\dots,xI) = T_k(x)I$
and $\tilde U_k^\eta(xI,\dots,xI) = U_k(x)I$, recovering the classical Chebyshev
polynomials $T_k$ and $U_k$. We also note that, even for scalar arguments, our polynomials differ from the
conventional multivariate Chebyshev polynomials on $\mathbb{R}^d$, which are built
from tensor products of one-dimensional polynomials rather than a recurrence in
ordered arguments. In the following proposition, we show that the multivariate damped Chebyshev polynomials govern the general propagator matrices in the leap-frog and BAOAB schemes. In particular, it includes the Gaussian propositions stated above. We present its proof in Appendix~\ref{appendix-derivation-matrix-polynomial}.

\begin{proposition}
\label{prop-matrix-polynomial}
The propagators of the leap-frog and BAOAB schemes are characterized by the
multivariate damped Chebyshev polynomials $\tilde T_k^\eta$ and $\tilde U_{k-1}^\eta$. Throughout, let $(x_0, p_0)$ and $(y_0, p'_0)$ be two initial states evolved
under synchronous coupling, sharing the same momentum refreshments in HMC and
the same O-step Gaussian noise in BAOAB, and let $x_i, y_i$ denote the
resulting coupled positions after $i$ steps along the trajectory specified in
each item below. We set 
\begin{equation}
\label{eqn-prop-matrix-polynomial-Mi}
    M_i \coloneqq I - \tfrac{h^2}{2} H_i\,,
    \qquad
    H_i = \int_0^1 \nabla^2 V\big(\tau x_i + (1-\tau) y_i\big)\,{\rm d}\tau\,.
\end{equation}
\begin{enumerate}
    \item \emph{(One scheme, $k$ steps.)} For the leap-frog iterations in HMC
    ($\eta = 1$) and the BAOAB iterations in UL ($\eta = e^{-\gamma h}$), write
    $\sfU^i$ for $i$ steps of the scheme, namely the leap-frog map
    $\sfU_{\hmc,h}^i$ for HMC and the BAOAB composition
    $\sfU_{\baoab,h}^{\xi_{i-1}}\cdots\allowbreak\sfU_{\baoab,h}^{\xi_0}$ for UL, so that
    $x_i = \sfPi_1\sfU^i(x_0,p_0)$ and $y_i = \sfPi_1\sfU^i(y_0,p'_0)$. The
    position difference after $k$ steps satisfies
    \begin{equation}
    \label{eqn-prop-matrix-polynomial-single-scheme}
        x_k - y_k = \tilde T_k^\eta(M_0, \dots, M_{k-1})(x_0 - y_0)
        + h\, \tilde U_{k-1}^\eta(M_1, \dots, M_{k-1})(p_0 - p'_0)\,.
    \end{equation}
 
    \item \emph{(HMC, initial leap-frog run plus $N$ outer loops.)} After $k$
    leap-frog steps followed by $N$ outer loops of $m$ leap-frog steps each, the position difference between
    $x_{Nm+k} \coloneqq \sfPhi_{h,\xi_{N-1}}^m \cdots \sfPhi_{h,\xi_0}^m\, \sfPi_1 \sfU_{\hmc,h}^{k}(x_0, p_0)$
    and
    $y_{Nm+k} \coloneqq \sfPhi_{h,\xi_{N-1}}^m \cdots \sfPhi_{h,\xi_0}^m\, \sfPi_1 \allowbreak\sfU_{\hmc,h}^{k}(y_0, p'_0)$
    satisfies
    \begin{equation}
    \label{eqn-propagator-multiple-outer-loop}
    \begin{aligned}
        x_{Nm+k} - y_{Nm+k}
        &= \left[\prod_{j \in\{k + im\}_{i=0}^{N-1}}^{\longleftarrow}
            \tilde T_m^1(M_j, M_{j+1}, \dots, M_{j+m-1})\right] \\
        & \cdot \Big[\tilde T_k^1(M_0, \dots, M_{k-1})(x_0 - y_0)
            + h\, \tilde U_{k-1}^1(M_1, \dots, M_{k-1})(p_0 - p'_0)\Big]\,,
    \end{aligned}
    \end{equation}
    where $x_j, y_j$ are the coupled positions after $j$ total steps (the first
    $k$ within the initial leap-frog run, the remainder within subsequent outer
    loops), and $\prod^{\longleftarrow}$ orders the factors so that those with
    smaller index $j$ stand to the right.
\end{enumerate}
\end{proposition}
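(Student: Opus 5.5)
The plan is to reduce everything to a pathwise, position-only three-term recurrence for $\Delta q_i \coloneqq x_i - y_i$, and then read off the polynomial coefficients by induction. First, by the fundamental theorem of calculus along the segment between $y_i$ and $x_i$, we have $\nabla V(x_i) - \nabla V(y_i) = H_i \Delta q_i$, with $H_i$ as in \eqref{eqn-prop-matrix-polynomial-Mi}. This identity holds realization by realization, so the fact that $H_i$ depends on the trajectory causes no difficulty. It suffices to treat BAOAB \eqref{eqn-baoab} with general $\eta$, since $\eta = 1$ reduces it to the leap-frog scheme \eqref{eqn-leap-frog}. Under synchronous coupling the noise terms $\xi_k$ cancel in the differences, and with $\Delta p_k \coloneqq p_k - p'_k$ we obtain
\begin{equation*}
\Delta q_{k+1} = \Big(\tfrac{1+\eta}{2}M_k + \tfrac{1-\eta}{2}I\Big)\Delta q_k + h\tfrac{1+\eta}{2}\Delta p_k\,,\qquad
\Delta p_{k+1} = \eta\Delta p_k - \tfrac{h\eta}{2}H_k\Delta q_k - \tfrac h2 H_{k+1}\Delta q_{k+1}\,.
\end{equation*}
For $k=0$, the first identity already reads $\Delta q_1 = \tilde T_1^\eta(M_0)\Delta q_0 + h\tilde U_0^\eta\Delta p_0$.

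Next we eliminate the momentum. We write the first identity at index $k+1$ and substitute the second for $\Delta p_{k+1}$. The term $h\frac{1+\eta}{2}\eta\Delta p_k$ is replaced using the first identity at index $k$, namely $\eta\big(\Delta q_{k+1} - (\tfrac{1+\eta}{2}M_k + \tfrac{1-\eta}{2}I)\Delta q_k\big)$. The Hessian terms are rewritten via $-\frac{h^2}{2}H_j = M_j - I$. Collecting terms, the coefficient of $\Delta q_{k+1}$ is $(1+\eta)M_{k+1}$ and the coefficient of $\Delta q_k$ collapses to $-\eta I$. This gives
\begin{equation*}
\Delta q_{k+2} = (1+\eta)M_{k+1}\Delta q_{k+1} - \eta\Delta q_k\,,\qquad k\ge 0\,,
\end{equation*}
which generalizes Proposition~\ref{prop-underdamped-Gaussian-propagators}. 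Item 1 then follows by induction on $k$. Write $\Delta q_k = A_k\Delta q_0 + hB_k\Delta p_0$. Both $(A_k)$ and $(B_k)$ obey the recurrence above, with the new factor $M_{k+1}$ multiplying on the left. Their initial data are $A_0 = I$, $A_1 = \tilde T_1^\eta(M_0)$ and $B_0 = 0 = \tilde U_{-1}^\eta$, $B_1 = \tilde U_0^\eta$. Comparing with Definition~\ref{def-multivariate-damped-Chebyshev}, we identify $A_k = \tilde T_k^\eta(M_0,\dots,M_{k-1})$ and $B_k = \tilde U_{k-1}^\eta(M_1,\dots,M_{k-1})$. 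The shift in the argument list of $\tilde U$ is exactly what makes the factor entering at step $k+2$ be $M_{k+1}$ in both sequences.

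For item 2 with $\eta=1$, the first $k$ steps are item 1 verbatim. At the start of each outer loop, both chains receive the same refreshment $\xi$, so the momentum difference is reset to $0$. Applying item 1 to the loop starting at global step $j$, with $\Delta p=0$ and the matrices $M_j,\dots,M_{j+m-1}$ evaluated along the coupled trajectory, gives $\Delta q_{j+m} = \tilde T_m^1(M_j,\dots,M_{j+m-1})\Delta q_j$. Composing these identities over $j = k, k+m, \dots, k+(N-1)m$, with later loops acting on the left, yields \eqref{eqn-propagator-multiple-outer-loop}. Propositions~\ref{prop-hmc-Gaussian-propagators} and~\ref{prop-hmc-Gaussian-outer-propagators} are the special case where all $M_i$ are equal to $I-\frac{h^2}{2}\Sigma^{-1}$.

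The only real obstacle is bookkeeping in the momentum elimination. One must verify that the $(1-\eta)/2$ pieces and the $\eta(1+\eta)/2\,M_k$ pieces cancel exactly, so that the $\Delta q_k$ coefficient is the constant $-\eta I$ with no residual $M_k$. One must also track the index conventions so that $M_0$ appears only through $\tilde T$ while $\tilde U$ starts at $M_1$. I would carry out this cancellation explicitly once, for general $\eta$.
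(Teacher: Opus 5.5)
Your proposal is correct and follows the paper's proof essentially step for step. You write the synchronous-coupling difference equations for BAOAB (with leap-frog as $\eta=1$), eliminate $\Delta p$ to get $\Delta q_{k+1}=(1+\eta)M_k\Delta q_k-\eta\Delta q_{k-1}$, identify the coefficients with $\tilde T_k^\eta$ and $\tilde U_{k-1}^\eta$ by induction on their recurrences and initial data, and then compose the $\tilde T_m^1$ factors across outer loops, using that the shared refreshment zeroes the momentum difference. Your explicit check that the $H_k\Delta q_k$ terms cancel is precisely the computation the paper only summarizes as ``matching coefficients.''
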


\section{Bounds for Propagators}
\label{sec-propagator-bounds}
In this section, we present $\ell^2$ and $\ell^\infty$ bounds for propagators, instantiating the technical framework sketched in Section \ref{sec-sketch-techniques}. While the $\ell^2$ bounds are obtained by adapting existing techniques to our setting, the $\ell^\infty$ bounds represent our primary analytical contribution. These estimates hinge on the structural properties of the multivariate damped Chebyshev polynomials, which allow us to bridge from $\ell^2$ stability to $\ell^\infty$ guarantees under the assumption of weak or sparse interactions.

\subsection{$\ell^2$ bounds}
\label{sec-l2-bounds}
The relevant $\ell^2$ bounds for HMC are as follows.
\begin{proposition}[$\ell^2$ bounds for HMC propagators]
\label{prop-l2bounds-hmc}
Suppose Assumption \ref{assumption-V-log-concave} hold. If $H_i = \int \nabla^2 V {\rm d} \nu_i$  for a probability measure $\nu_i$ on $\mathbb{R}^d$,
we have the following $\ell^2$ bounds for the leap-frog propagators $\tilde T_k$ and $\tilde U_{k-1}$ provided $kh \leq 1/\sqrt{20\beta}$: 
\begin{equation}
\label{eqn-l2bounds-hmc-Tk}
%\label{eqn-l2bounds-hmc-Uk} 
        |\tilde T_k^1(\{I - \frac{h^2}{2}H_i\}_{i=0}^{k-1})|_{\ell^2} \leq 1 - \frac{\alpha}{10}(kh)^2\,,\quad 
    |h\tilde U_{k-1}^1(\{I - \frac{h^2}{2}H_i\}_{i=1}^{k-1})|_{\ell^2} \leq \sqrt{\frac{2}{5\beta}}\,.
\end{equation}
\end{proposition}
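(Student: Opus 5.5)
The plan is to work with the coupled leap-frog trajectories directly rather than with the coefficients of the matrix polynomials. By Proposition~\ref{prop-matrix-polynomial}(1) with $\eta=1$, the matrix $\tilde T_k^1(\{I-\frac{h^2}{2}H_i\}_{i=0}^{k-1})$ applied to a vector $v$ equals the position difference $\Delta q_k$ of two leap-frog paths with $\Delta q_0=v$ and $\Delta p_0=0$. Likewise, $h\tilde U^1_{k-1}v$ equals $\Delta q_k$ when $\Delta q_0=0$ and $\Delta p_0=v$. The only property of the $H_i$ used below is $\alpha I\preceq H_i\preceq\beta I$ with $H_i$ symmetric, which holds because $H_i=\int\nabla^2V\,{\rm d}\nu_i$. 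Write $\kappa\coloneqq (kh)^2\beta\le 1/20$.

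\textbf{Step 1: the $\tilde U$ bound (easy).} Unrolling the leap-frog recursion for differences gives $\Delta p_j=\Delta p_0-h\sum_{i<j}\frac{H_i\Delta q_i+H_{i+1}\Delta q_{i+1}}{2}$ and $\Delta q_{j+1}=\Delta q_j+h\Delta p_j-\frac{h^2}{2}H_j\Delta q_j$. With $\Delta q_0=0$, I would prove by induction the crude growth bound $\max_{j\le k}|\Delta q_j|_{\ell^2}\le kh\,|v|_{\ell^2}(1+O(\kappa))$, for instance by comparison with the scalar majorant $\sinh(jh\sqrt\beta)/\sqrt\beta$. Since $\kappa\le1/20$, this gives $|h\tilde U^1_{k-1}|_{\ell^2}\le 1.1\,kh\le 1.1/\sqrt{20\beta}$, which is well below $\sqrt{2/(5\beta)}$.

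\textbf{Step 2: first-order expansion for $\tilde T$.} Take $\Delta q_0=x$ with $|x|_{\ell^2}=1$ and $\Delta p_0=0$. Eliminating the momentum in the same way gives
\begin{equation*}
\Delta q_k = x - h^2\sum_{j=0}^{k-1} w_j H_j\Delta q_j ,\qquad w_0=\tfrac k2,\quad w_j=k-j \ \ (j\ge1),
\end{equation*}
so that $h^2\sum_j w_j=(kh)^2/2$. The same crude induction as in Step 1 gives $|\Delta q_j|_{\ell^2}\le 1+O(\kappa)$.

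The key idea is to measure everything through $a_j\coloneqq|H_j^{1/2}\Delta q_j|_{\ell^2}$ and $b_j\coloneqq|H_j^{1/2}x|_{\ell^2}$, rather than through $\beta$. The reason is that a naive remainder bound of size $\beta^2(kh)^4$ cannot be absorbed into a gain of order $\alpha(kh)^2$.

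Expanding the square gives
\begin{equation*}
|\Delta q_k|^2=1-2h^2\sum_j w_j\,x^\top H_j\Delta q_j+\Big|h^2\sum_j w_jH_j\Delta q_j\Big|^2 .
\end{equation*}
For the quadratic term I would use $|H_j\Delta q_j|\le\sqrt\beta\,a_j$ together with Cauchy--Schwarz. This gives a bound of $\frac{\kappa}{2}\,h^2\sum_j w_j a_j^2$, which is a small multiple of the main gain.

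For the cross term, write $x^\top H_j\Delta q_j=b_j^2+x^\top H_j(\Delta q_j-x)$. Then use $\Delta q_j-x=-h^2\sum_{i<j}w^{(j)}_iH_i\Delta q_i$, where the weights satisfy $h^2\sum_i w^{(j)}_i\le (jh)^2/2$. Each piece $x^\top H_jH_i\Delta q_i$ is bounded by $b_j\,\sqrt\beta\,\sqrt\beta\,a_i$. Summing, the cross-term error is at most $C\kappa\,h^2\sum_j w_j b_j\,\bar a$, where $\bar a^2$ is a $w$-weighted average of the $a_i^2$. Finally, AM--GM absorbs this error into $h^2\sum_j w_j(b_j^2+a_j^2)$.

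\textbf{Step 3: closing the estimate, and the main obstacle.} The remaining task is to compare $a_j$ with $b_j$. Here $a_j\le b_j+\sqrt\beta\,|\Delta q_j-x|$ and $|\Delta q_j-x|\lesssim \kappa^{1/2}\,(jh)\sqrt\beta\,\bar a$. This holds by the same $H^{1/2}$-weighted Cauchy--Schwarz, which yields a closed linear inequality $\bar a\le \bar b(1+O(\kappa))$. Collecting terms gives
\begin{equation*}
|\Delta q_k|^2\le 1-2(1-C\kappa)\,h^2\sum_j w_j b_j^2\le 1-(1-C\kappa)\,\alpha (kh)^2 ,
\end{equation*}
where the last step uses $b_j^2\ge\alpha$. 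With $\kappa\le1/20$ the constants leave ample room: taking the square root gives at least $1-\frac{\alpha}{10}(kh)^2$. Here it is used that $(kh)^2\alpha\le\kappa$ is small, so $\sqrt{1-t}\le 1-t/2$ applies.

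I expect the main obstacle to be this bookkeeping in Step 3. Every error term must be kept in the weighted quantities $a_i,b_j$, multiplied by a factor $\kappa$, so that it can be absorbed by the $\alpha$-sized gain and never produces a bare $\beta|x|^2$ term. The non-commutativity of the $H_i$ is harmless in this scheme, because each $H_i$ is only used through $H_i^{1/2}$ and the bounds $|H_i^{1/2}|_{\ell^2}\le\sqrt\beta$ and $H_i\succeq\alpha I$. If the constants turn out too tight, the fallback is to split the $k$ steps into two halves and run the argument on each half, using the crude bound from Step 1 as a stability estimate between them.
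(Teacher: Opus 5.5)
Your proof is correct, and it takes a genuinely different route from the paper.

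\textbf{The paper's route.} The paper adapts Lemma~19 of \cite{bou-rabee203meanfield}. It interpolates the linear leap-frog recursion in continuous time and treats $a_t=|z_t|^2$, $b_t=2z_t\cdot w_t$ as a perturbed harmonic oscillator. It solves this by variation of parameters against the kernels $\cos(\sqrt{C_1\alpha}\,t)$ and $\sin(\sqrt{C_1\alpha}\,t)/\sqrt{C_1\alpha}$, and integrates the grid-point jumps by parts. It then checks that the accumulated perturbation has a negative coefficient for $C_1=4/9$ and $\beta t^2\le 1/20$. The $\tilde U$ bound reruns the same machinery with $z_0=0$.

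\textbf{Your route.} You stay entirely discrete. You use the exact Duhamel identity $\Delta q_k=x-h^2\sum_j w_jH_j\Delta q_j$ and measure every error through $a_j=|H_j^{1/2}\Delta q_j|$. This is the paper's $\varphi=z\cdot Hz$, used for the same reason ($|Hz|^2\le\beta\,z^\top Hz$).

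Your bookkeeping closes more easily than you describe. Since $w_i^{(j)}\le w_i$, Cauchy--Schwarz gives $\sqrt\beta\,|\Delta q_j-x|\le\tfrac{\kappa}{2}\bar a$ uniformly in $j$; your displayed bound on $|\Delta q_j-x|$ carries a stray factor $\sqrt\beta$. Hence $\bar a\le\bar b/(1-\kappa/2)$, and
\[
|\Delta q_k|^2\le 1-(kh)^2\Big(\bar b^2-\tfrac{\kappa}{2}\bar a\bar b-\tfrac{\kappa}{4}\bar a^2\Big)\le 1-0.96\,\alpha(kh)^2 .
\]
So $|\tilde T_k^1|_{\ell^2}\le 1-0.48\,\alpha(kh)^2$, and the splitting fallback is not needed.

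Similarly, for $\tilde U$ your majorant comfortably beats $\sqrt{2/(5\beta)}$. So does the coefficient expansion of Lemma~\ref{lem-matrix-coefficients-sign}, which gives $|h\tilde U_{k-1}^1|_{\ell^2}\le hU_{k-1}(1+\tfrac{h^2\beta}{2})\le 0.23/\sqrt\beta$.

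\textbf{What each buys.} Your route gives a short, purely discrete proof with a near-sharp constant; $\tfrac12$ is the Gaussian value coming from $\cos(kh\sqrt\alpha)$. The paper's route reuses an existing continuous-time template.

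\textbf{One phrasing point.} For arbitrary $\nu_i$, the $H_i$ need not come from any pair of leap-frog trajectories. State the identification for the linear time-inhomogeneous recursion driven by the given $H_i$, as the paper does. Your computations already work at that level.
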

The proof is provided in Appendix~\ref{appendix-bounds-leap-frog}. Combined with Proposition~\ref{prop-matrix-polynomial}, Proposition~\ref{prop-l2bounds-hmc} states that the leap-frog scheme induces a contraction in the position components, provided the trajectories are initialized with identical momentum and the total integration time remains within a specific threshold scaled as $1/\sqrt{\beta}$.  
The momentum difference may expand, though this growth remains well-controlled by a magnitude scaled as $1/\sqrt{\beta}$. These scalings are consistent with the Gaussian setting under the continuous-time dynamics within the time range. We adapt the proof of an existing discrete-time contraction result~\cite[Lemma~19]{bou-rabee203meanfield} to obtain the bound for $\tilde T_k^1$. %\eqref{eqn-l2bounds-hmc-Tk}.
Applying a similar framework for the momentum components, we establish the  $\tilde U_k^1$ bound. 

For UL, contraction is induced in the weighted norms of
Section~\ref{sec-sketch-techniques}, so we build the weight matrix $W$ into the
propagator representation. For two coupled trajectories with initial states
$z_0 = (x_0,p_0)$ and $z'_0 = (y_0, p'_0)$, viewed as column vectors in
$\mathbb{R}^{2d}$, a one-step update $\sfU$ (any of BAOAB, ABAO, BAO, AB) gives
$z_1 = \sfU(z_0)$ and $z'_1 = \sfU(z'_0)$. The resulting state difference is
$z_1 - z'_1 = M(H_0)(z_0 - z'_0)$, where $M(H_0)$ is the one-step propagator of
the unweighted difference, parametrized by an integrated Hessian $H_0$.
Inserting $I = W^{-1}W$ gives
\begin{equation}
\label{eqn-weighted-propagators-primitive}
    W(z_1 - z'_1) = \big(W M(H_0) W^{-1}\big)\, W(z_0 - z'_0)\,.
\end{equation}
The weighted propagator $M^w(H_0) \coloneqq WM(H_0)W^{-1}$ acts on the weighted
difference $W(z_0 - z'_0)$, so a weighted $\ell^2$ contraction follows from
bounding $|M^w(H_0)|_{\ell^2}$, and discretization errors are measured in the
 weighted norm $|z_0 - z_0'|_{\ell^2_w} = |W(z_0-z_0')|_{\ell^2}$. Over $k$ steps the propagator is
the composition $M_k \coloneqq M(H_{k-1})\cdots M(H_0)$, with weighted counterpart
$M^w_k \coloneqq W M_k W^{-1}$, and a $k$-step weighted contraction follows from
bounding the full-space
operator norm $|M^w_k|_{\ell^2}$.

Our goal, however, is the bias of the \emph{unweighted} position marginal,
recovered by the projection $\sfPi_1 = \begin{bmatrix} I & 0 \end{bmatrix}
\in \mathbb{R}^{d\times 2d}$. After one step, we have
\begin{equation}
\label{eqn-weighted-position-propagators}
    \sfPi_1(z_1 - z'_1) = \begin{bmatrix} I & 0 \end{bmatrix} W^{-1} M^w(H_0)\, W(z_0 - z'_0)\,.
\end{equation}
The factor $\begin{bmatrix} I & 0 \end{bmatrix} M(H_0)$ captures the dependence of
$x_1 - y_1$ on the initial position and momentum differences. The same structure
holds over $k$ steps, with $\tilde T_k^\eta$ and $h\tilde U_{k-1}^\eta$ as the
position-row blocks of $M_k$. We bound both the full-space propagator and these weighted
blocks for the BAOAB scheme as follows.

\begin{proposition}[$\ell^2$ bounds for UL propagators]
\label{prop-l2bounds-baoab} 
Suppose $H_i$ satisfies the same assumptions as in Proposition \ref{prop-l2bounds-hmc}. For a step size $h \leq \frac{1-\eta}{2\sqrt{\beta}}$, and letting $c(h) = \frac{\alpha h^2}{4(1-\eta)}$, the full-space weighted BAOAB propagator satisfies 
\begin{equation}
\label{eqn-prop-l2bounds-baoab-full-space-weighted}
    |M_{\baoab,k}^w|_{\ell^2} \leq 7(1-c(h))^{\frac{k-1}{2}}\,,
\end{equation}

and its weighted position-row blocks satisfy
\begin{equation}
\label{eqn-prop-l2bounds-baoab-11block}
    |\tilde T_k^\eta(\{I - \frac{h^2}{2}H_i\}_{i=0}^{k-1})|_{\ell^2} \leq 7 \sqrt{2} \left(1 - c(h)\right)^{\frac{k-1}{2}}\,,
\end{equation}
\begin{equation}
\label{eqn-prop-l2bounds-baoab-12block}
\begin{aligned}
    &\frac{1}{\sqrt{a-b^2}}\left|-b \tilde T_k^\eta(\{I - \frac{h^2}{2}H_i\}_{i=0}^{k-1}) + h \tilde U_{k-1}^\eta(\{I - \frac{h^2}{2}H_i\}_{i=1}^{k-1})\right|_{\ell^2}  \leq 7\sqrt{2}\left( 1 - c(h)\right)^{\frac{k-1}{2}}\,,
\end{aligned}
\end{equation}
\end{proposition}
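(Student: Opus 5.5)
The plan is to reduce everything to the $\ell^2$ contraction of the BAOAB scheme in the weighted norm established in \cite{leimkuhler2024contraction}, using the splitting $(\sfB\sfA\sfO\sfA\sfB)^k = \sfB\sfA\sfO\,(\sfA\sfB\sfA\sfO)^{k-1}\,\sfA\sfB$. First, write the one-step unweighted propagators of each sub-scheme as explicit $2d\times 2d$ block matrices in the integrated Hessian $H_i$. For AB this is $\begin{bmatrix} I & \tfrac h2 I\\ -\tfrac h2 H & I-\tfrac{h^2}{4}H\end{bmatrix}$, and analogously for BAO (with the $\eta$ factor on the momentum row) and ABAO. The $k$-step BAOAB propagator is then $M_{\baoab,k} = M_{\ab}(H_{k-1})\,\big[\prod M_{\abao}\big]\,M_{\bao}(H_0)$. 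The Hessian indices must be aligned so that each ABAO factor uses the single $H_i$ at the intermediate position; this works because the B half-steps of consecutive BAOAB steps merge into one full B evaluated at the same position.

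Second, bound each ABAO factor in the weighted norm. Following \cite{leimkuhler2024contraction}, $|W M_{\abao}(H)W^{-1}|_{\ell^2}^2\le 1-c(h)$ for every symmetric $H$ with $\alpha I\preceq H\preceq\beta I$. Since $H_i=\int\nabla^2V\,{\rm d}\nu_i$ satisfies these bounds, the lemma applies directly. The verification reduces to showing $W^{-1\top}M^\top W^\top W M W^{-1}\preceq (1-c)I$. All blocks are polynomials in a single symmetric $H$, so they commute, and the check becomes a $2\times 2$ scalar matrix inequality uniformly over eigenvalues $\lambda\in[\alpha,\beta]$. This is where $b=h/(1-\eta)$, $a=1/\beta$ and $h\le(1-\eta)/(2\sqrt\beta)$ are used. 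I expect this eigenvalue-wise positivity check, a determinant and trace computation, to be the main obstacle. Its difficulty lies in keeping the constant $c(h)=\alpha h^2/(4(1-\eta))$ exact uniformly in $\eta\in(0,1)$, including the high-friction limit.

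Third, bound the two boundary factors crudely: $|WM_{\ab}W^{-1}|_{\ell^2}$ and $|WM_{\bao}W^{-1}|_{\ell^2}$ are each $O(1)$. This follows by bounding entries using $h^2\beta\le 1/4$, $|b|\le\sqrt a/2$, and $\sqrt{a-b^2}\ge\sqrt{3a}/2$. A product of the two constants of at most $7$ should come out of explicit entrywise estimates. Multiplying the three pieces gives \eqref{eqn-prop-l2bounds-baoab-full-space-weighted} with exponent $(k-1)/2$. The case $k=0$ or $k=1$ follows from the boundary bounds alone.

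Finally, extract the blocks. By Proposition~\ref{prop-matrix-polynomial}, the position row of $M_{\baoab,k}$ is $[\tilde T_k^\eta,\ h\tilde U_{k-1}^\eta]$. Hence $\sfPi_1 M_k W^{-1} = \sfPi_1 W^{-1} M^w_k$. Computing $W^{-1}=\begin{bmatrix} I & -b(a-b^2)^{-1/2}I\\ 0 & (a-b^2)^{-1/2}I\end{bmatrix}$ gives $\sfPi_1 M_kW^{-1} = [\tilde T_k^\eta,\ (a-b^2)^{-1/2}(-b\tilde T_k^\eta + h\tilde U_{k-1}^\eta)]$, which are exactly the two blocks in \eqref{eqn-prop-l2bounds-baoab-11block} and \eqref{eqn-prop-l2bounds-baoab-12block}. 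Each block is bounded by $|\sfPi_1 W^{-1}|_{\ell^2}\,|M^w_k|_{\ell^2}$. Since $|\sfPi_1W^{-1}|_{\ell^2}^2 = 1+b^2/(a-b^2)\le 1+1/3\le 2$, this yields the factor $\sqrt2$ and the final bound $7\sqrt2(1-c(h))^{(k-1)/2}$.
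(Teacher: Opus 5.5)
Your proposal is correct and is essentially the paper's proof. The paper also gets the full-space bound from the $\sfB\sfA\sfO(\sfA\sfB\sfA\sfO)^{k-1}\sfA\sfB$ contraction of \cite{leimkuhler2024contraction}, but imports it as a black box (noting it uses only $\alpha I\preceq H_i\preceq\beta I$, so it holds for the time-inhomogeneous linear scheme with arbitrary $H_i$) rather than re-checking the ABAO and boundary estimates as you plan to; it then reads off the blocks from $[\tilde T_k^\eta,\ h\tilde U_{k-1}^\eta]W^{-1}$ with the factor $\sqrt2$ from $|q|_{\ell^2}^2+a|p|_{\ell^2}^2\le 2|(q,p)|_{\ell^2_w}^2$, which is equivalent to your bound on $|\sfPi_1W^{-1}|_{\ell^2}$ (harmless index slip: the final $\sfA\sfB$ factor carries $H_k$, not $H_{k-1}$).
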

The proof is detailed in Appendix~\ref{appendix-bounds-baoab}. Proposition \ref{prop-l2bounds-baoab} is an adaptation of the existing contraction result~\cite[Theorem~5.1]{leimkuhler2024contraction}. The specific combinations of matrices are of interest because we have
\begin{equation}
\label{eqn-weight-matrix-polynomial-baoab}
\begin{bmatrix}
    \tilde T_k^\eta & h \tilde U_{k-1}^\eta
\end{bmatrix} W^{-1} =
\begin{bmatrix} 
\tilde T_k^\eta & - \frac{b}{\sqrt{a-b^2}} \tilde T_k^\eta + \frac{1}{\sqrt{a-b^2}}h\tilde U_{k-1}^\eta
\end{bmatrix}\,.
\end{equation}

\subsection{$\ell^\infty$ bounds under weak interactions}
The weak interactions defined in Assumption \ref{assumption-weak-interactions} indicate that $\nabla^2 V$ is a small perturbation of a diagonal matrix. The resulting
propagators therefore have $\ell^\infty$ norms close to their $\ell^2$ norms,
since the two norms coincide for a diagonal matrix, and a contraction is preserved in $\ell^\infty$ whenever the interactions are weak. For HMC, since one outer loop is $m$ leap-frog steps, the relevant object is the multi-step propagator $\tilde T_k^1$.
\begin{proposition}[$\ell^\infty$ bounds under weak interactions, HMC]
\label{prop-linftybounds-hmc-weak} Suppose Assumptions \ref{assumption-V-log-concave} and \ref{assumption-weak-interactions} hold with the weak interaction parameter $C^{(O)} = 1/50$. Let the integrated Hessians $H_i$ be defined as  in Proposition \ref{prop-l2bounds-hmc}. For $kh \leq 1/\sqrt{20\beta}$, we have
\begin{equation}
|\tilde T_k^1 (\{I - \frac{h^2}{2}H_i\}_{i=0}^{k-1})|_{\ell^\infty} \leq 1 - \frac{\alpha}{10}(kh)^2 + \frac{kh\alpha}{100\sqrt{\beta}}\,,\quad
|h\tilde U_{k-1}^1(\{I - \frac{h^2}{2}H_i\}_{i=1}^{k-1})|_{\ell^\infty} \leq \frac{3}{2\sqrt\beta}\,.
\end{equation}
In particular, for $kh = \frac{1}{\sqrt{20\beta}}$, we have the contraction $|\tilde T_k^1 (\{I - \frac{h^2}{2}H_i\}_{i=0}^{k-1})|_{\ell^\infty} \leq 1 - \frac{\alpha}{400\beta}$.
\end{proposition}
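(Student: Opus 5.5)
The plan is a perturbation argument around the diagonal part of the Hessians. Split each integrated Hessian as $H_i = D_i + O_i$, with $D_i = \mathrm{diag}(H_i)$ and $O_i$ off-diagonal. Since $H_i$ is an average of $\nabla^2 V$, Assumptions~\ref{assumption-V-log-concave} and~\ref{assumption-weak-interactions} give $\alpha \le (D_i)_{jj} \le \beta$ and $|O_i|_{\ell^\infty} \le C^{(O)}\alpha = \alpha/50$.

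\textbf{Diagonal part.} Write $M_i = I - \frac{h^2}{2}H_i = M_i^D - \frac{h^2}{2}O_i$ with $M_i^D = I - \frac{h^2}{2}D_i$. The diagonal matrices $M_i^D$ commute, so $\tilde T_k^1(\{M_i^D\})$ is diagonal. Its $j$-th entry is the scalar propagator for a one-dimensional leap-frog run with Hessians $(D_i)_{jj}\in[\alpha,\beta]$. Proposition~\ref{prop-l2bounds-hmc} applies verbatim to this scalar problem (one-dimensional, with the measure $\nu_i$ replaced by the point value). Hence each entry is bounded by $1-\frac{\alpha}{10}(kh)^2$, and likewise $|h\tilde U_{k-1}^1(\{M_i^D\})| \le \sqrt{2/(5\beta)}$. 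For a diagonal matrix the $\ell^\infty$ and $\ell^2$ operator norms coincide, so this yields the leading terms.

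\textbf{Off-diagonal correction.} I would bound the perturbation by a discrete Duhamel formula. Recall the three-term recurrence $\tilde T_{k+1} = 2M_{k}\tilde T_k - \tilde T_{k-1}$, with the new argument entering on the left. Subtracting the diagonal recurrence gives an error $E_k = \tilde T_k(\{M_i\}) - \tilde T_k(\{M_i^D\})$ satisfying
\[
E_{k+1} = 2M_k^D E_k - E_{k-1} - h^2 O_k \tilde T_k(\{M_i\}).
\]
Solving this with the diagonal second-kind propagator started at step $l$, as in item (1) of Proposition~\ref{prop-matrix-polynomial} applied to the shifted sequence, gives
\[
E_k = -\sum_{l} h\,\tilde U^{1,D}_{(l\to k)}\; h\, O_l\, \tilde T_l(\{M_i\}).
\]
Here $h\tilde U^{1,D}_{(l\to k)}$ is the diagonal response to a momentum kick at step $l$. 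Concretely, $E_k$ is the effect of the force perturbations $-hO_l\Delta q_l$ injected into the momentum.

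By the scalar bound, $|h\tilde U^{1,D}|_{\ell^\infty}\lesssim \min\{(k-l)h,\sqrt{2/(5\beta)}\}$, and $(k-l)h\le kh$ is the relevant quantity. Using $|O_l|_{\ell^\infty}\le \alpha/50$, we get
\[
|E_k|_{\ell^\infty} \le \frac{\alpha}{50}\sum_{l<k} h\,\big((k-l)h\big)\,|\tilde T_l|_{\ell^\infty}.
\]
A bootstrap (discrete Gronwall) argument then applies. Suppose $|\tilde T_l|_{\ell^\infty}\le 2$ for $l<k$; this is self-consistent because $\alpha (kh)^2 \le 1/20$. The sum is then at most $\frac{\alpha}{50}\cdot 2\cdot kh\cdot kh$. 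Since $kh \le 1/\sqrt{20\beta}$, this is at most $\frac{kh\alpha}{25\sqrt{20\beta}} \le \frac{kh\alpha}{100\sqrt\beta}$.

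\textbf{Second-kind propagator.} The same Duhamel argument gives $|h\tilde U_{k-1}^1|_{\ell^\infty} \le \sqrt{2/(5\beta)} + O(\alpha (kh)^2/\sqrt\beta)$. Since $\sqrt{2/5}+1/500<3/2$, this is at most $\frac{3}{2\sqrt\beta}$.

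\textbf{Final contraction.} Plug in $kh = 1/\sqrt{20\beta}$. This gives
\[
1-\frac{\alpha}{200\beta}+\frac{\alpha}{100\sqrt{20}\,\beta} \le 1-\frac{\alpha}{400\beta},
\]
since $\frac{1}{100\sqrt{20}}<\frac{1}{400}$.

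\textbf{Main obstacle.} The delicate step is the Duhamel representation with non-commuting, path-dependent arguments: the response operator from step $l$ to step $k$ must be exactly a shifted diagonal $\tilde U$, which requires checking the index convention of Definition~\ref{def-multivariate-damped-Chebyshev}. The constants also need to be tracked carefully enough to obtain the sharp $1/100$; a sharper estimate $|h\tilde U^{1,D}_{(l\to k)}|\le (k-l)h$ will likely be needed rather than the crude $\sqrt{2/(5\beta)}$.
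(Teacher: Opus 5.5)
Your argument is correct, but it takes a different route from the paper.

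\textbf{The paper's route.} The paper does not use a Duhamel formula. It invokes the alternating-sign structure of the coefficients (Lemma~\ref{lem-matrix-coefficients-sign}) to majorize the correction term by term, $|H_{i_1}\cdots H_{i_j}-H^{(D)}_{i_1}\cdots H^{(D)}_{i_j}|_{\ell^\infty}\le(\beta+\frac{\alpha}{50})^j-\beta^j$. It then sums these against the aggregate coefficients $t_j^{(k)}$, which collapses everything onto the scalar difference $T_k(1+\frac{h^2}{2}(\beta+\frac{\alpha}{50}))-T_k(1+\frac{h^2}{2}\beta)$. It finishes with the mean value theorem and the hyperbolic form of $T_k$ on $[1,\infty)$. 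The second kind is handled the same way, via $hU_{k-1}(1+\frac{h^2}{2}(\beta+\frac{\alpha}{50}))$. This route needs no bootstrap and reuses the coefficient machinery built for the sparse case, but it bounds the polynomial by an absolute majorant.

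\textbf{Your route.} Your variation-of-constants argument keeps the true diagonal dynamics as the propagator and pays only for $O_l$. So it needs neither the sign lemma nor explicit Chebyshev coefficients. In exchange it needs a bootstrap and a sharp bound on the diagonal response.

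\textbf{The diagonal response bound.} You are right that $|h\tilde U^{1,D}_{(l\to k)}|_{\ell^\infty}\le (k-l)h$ is indispensable. With only $\sqrt{2/(5\beta)}$, the correction exceeds $\frac{kh\alpha}{100\sqrt\beta}$ and the constant $\frac{1}{400}$ is lost. The bound holds with constant exactly $1$ by a scalar induction:
\begin{itemize}
\item Take a diagonal entry $u_j$ with $u_{-1}=0$, $u_0=1$. The increments $\delta_j=u_j-u_{j-1}$ obey $\delta_{j+1}=\delta_j-h^2\lambda_{j+1}u_j$ with $\lambda_{j+1}\in[\alpha,\beta]$.
\item While $u\ge0$, this gives $\delta\le1$, hence $u_j\le j+1$.
\item Conversely, $\delta_{j+1}\ge1-\frac{h^2\beta}{2}(j+1)(j+2)>0$, which keeps $u\ge0$ throughout $kh\le1/\sqrt{20\beta}$.
\end{itemize}

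\textbf{Indexing.} The response at step $k$ to a forcing $F_l$, which first enters $E_{l+1}$, is $\tilde U^1_{k-l-1}(M^D_{l+1},\dots,M^D_{k-1})F_l$. The $l=0$ term carries weight $\frac12$ because $E_1=-\frac{h^2}{2}O_0$, and this only helps.

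\textbf{A side benefit.} In your route the correction is naturally $\frac{\alpha}{25}(kh)^2$. This gives $|\tilde T^1_k|_{\ell^\infty}\le1-\frac{3\alpha}{50}(kh)^2$ for every admissible $k$. By contrast, the stated bound exceeds $1$ when $kh<\frac{1}{10\sqrt\beta}$. That linear-in-$kh$ term is an artifact of the paper's estimate $\sinh(k\theta)\le\frac12 e^{k\theta}$ in its derivative bound.
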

The proof can be found in Appendix~\ref{appendix-bounds-leap-frog}. 

For UL, by contrast, the contraction is obtained at the level of a single step
rather than through a multi-step polynomial. We decompose the full $\sfB\sfA\sfO\sfA\sfB$ operator into $\sfB\sfA\sfO$, $\sfA\sfB\sfA\sfO$, and $\sfA\sfB$, and establish a one-step contraction for the $\sfA\sfB\sfA\sfO$ operator.
As in the general
construction~\eqref{eqn-weighted-propagators-primitive},
$M_{\abao}^w(H_0)$ is the weighted one-step propagator of the
$\sfU_{\abao,h}^\xi$ map, satisfying
$W(\sfU_{\abao,h}^\xi(z_0) - \sfU_{\abao,h}^\xi(z'_0)) = M_{\abao}^w(H_0)\, W(z_0 - z'_0)$
for states $z_0 = (x_0,p_0)$, $z'_0 = (y_0,p'_0)$ coupled through the same
O-step noise. 
Here $H_0 = \int_0^1 \nabla^2 V\big(\tau(x_0+\tfrac{h}{2}p_0)+(1-\tau)(y_0 + \tfrac{h}{2}p'_0)\big)\,{\rm d}\tau$, and
\begin{equation}
\label{eqn-def-abao-weighted-matrix}
    M_{\abao}^w(H_0)= \begin{bmatrix}
        I - \frac{h^2}{2} H_0 - \frac{\eta}{1-\eta}h^2 H_0
        & h^3\frac{\frac14 + \frac{\eta}{(1-\eta)^2}}{\sqrt{1/\beta-h^2/(1-\eta)^2}}H_0\\[4pt]
        - \eta h \sqrt{\frac{1}{\beta}-\frac{h^2}{(1-\eta)^2}}H_0
        & \eta\big(I + (-\frac12 + \frac{1}{1-\eta})h^2H_0\big)
    \end{bmatrix}\,.
\end{equation}
We work with this full phase-space propagator rather than its position
projection because the $\sfA\sfB\sfA\sfO$ operator drives contraction in the joint
space over several iterations before the position marginal is extracted. The
maps $M_{\bao}^w$ and $M_{\ab}^w$ are defined analogously by their one-step
propagation of the weighted state differences, though the state combination over
which $H_0$ integrates may differ between schemes.

\begin{proposition}[$\ell^\infty$ bounds under weak interactions, UL]
\label{prop-propagators-weak-interactions-baoab}
Suppose Assumptions \ref{assumption-V-log-concave} and \ref{assumption-weak-interactions} hold with $C^{(O)} = 1/20$. Let the integrated Hessian $H_0$ be defined as  in Proposition \ref{prop-l2bounds-hmc}. For $h \leq \frac{1-\eta}{2\sqrt\beta}$ and recalling $c(h) = \frac{\alpha h^2}{4(1-\eta)}$ from Proposition \ref{prop-l2bounds-baoab},  we have 
\begin{equation}
    |M_{\abao}^w(H_0)|_{\ell^\infty} \leq  1 - \frac12c(h)\,, \quad |M_{\bao}^w(H_0)|_{\ell^\infty} \leq 3\,, \quad |M_{\ab}^w(H_0)|_{\ell^\infty} \leq 2\,.
\end{equation}
\end{proposition}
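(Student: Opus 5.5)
The proof is a direct entrywise computation of the $\ell^\infty$ operator norm. For a block matrix $\begin{bmatrix} A & B\\ C & D\end{bmatrix}$ with $d\times d$ blocks, the $\ell^\infty$ norm is the maximal absolute row sum. It is therefore bounded by $\max\{|A|_{\ell^\infty}+|B|_{\ell^\infty},\ |C|_{\ell^\infty}+|D|_{\ell^\infty}\}$. I will split each block into its diagonal and off-diagonal parts, writing $H_0 = H_0^{(D)} + H_0^{(O)}$. Since $H_0$ is an average of Hessians, Assumption~\ref{assumption-V-log-concave} gives $\alpha \le [H_0^{(D)}]_{ii}\le \beta$, and Assumption~\ref{assumption-weak-interactions} passes through the integral to give $|H_0^{(O)}|_{\ell^\infty}\le C^{(O)}\alpha$. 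The strategy is the natural one: first compute the row sums as if $H_0$ were diagonal, which is a scalar $2\times 2$ problem in each coordinate with $\lambda = [H_0]_{ii}\in[\alpha,\beta]$. Then add the off-diagonal contributions, which are $C^{(O)}\alpha$ times the sum of the absolute scalar coefficients multiplying $H_0$ in each row.

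For $M^w_{\abao}$, the diagonal-case row sums come from \eqref{eqn-def-abao-weighted-matrix}. Write $b=h/(1-\eta)$, so that $b\le 1/(2\sqrt\beta)$ and $\sqrt{a-b^2}\ge \tfrac{\sqrt3}{2}\sqrt a$.

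For the first row, the diagonal coefficient is $1-\lambda h^2(\tfrac12+\tfrac{\eta}{1-\eta}) = 1-\lambda h^2\tfrac{1+\eta}{2(1-\eta)}$. This is nonnegative because $\lambda h^2/(1-\eta)\le \beta h\cdot h/(1-\eta)\le \tfrac14(1-\eta)$. The off-block coefficient is $\lambda h^3(\tfrac14+\tfrac{\eta}{(1-\eta)^2})/\sqrt{a-b^2}$. Using $h\le (1-\eta)/(2\sqrt\beta)$, this is at most roughly $\lambda h^2\cdot\tfrac{1}{\sqrt3}\cdot\tfrac{(1-\eta)^2/4+\eta}{1-\eta}$. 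I then need to check that the sum is $\le 1-c\,\lambda h^2/(1-\eta)$ with $c$ comfortably above $\tfrac14$, so that there is slack. The bookkeeping must track how $\eta$ enters; the worst case is $\eta\to1$ with $h$ at its maximum.

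For the second row, the coefficients are $\eta\lambda h\sqrt{a-b^2}$ and $\eta(1+\lambda h^2(\tfrac{1}{1-\eta}-\tfrac12))$. Their sum is at most $\eta(1+\lambda h\sqrt a+\lambda h^2/(1-\eta))$, and this must be shown to be $\le 1-\tfrac34 c(h)$ or similar. The key point is that $1-\eta$ dominates. Since $\lambda h\sqrt a\le \sqrt\beta h\le (1-\eta)/2$ and $\lambda h^2/(1-\eta)\le (1-\eta)/4$, the sum is at most $\eta(1+\tfrac34(1-\eta))$. This is $\le 1-\tfrac14(1-\eta)$, which is much stronger than $1-c(h)$ because $c(h)\le (1-\eta)/16$ under the step condition.

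After the diagonal case, I add $C^{(O)}\alpha$ times the sum of the $H_0$-coefficients in each row. Each such sum is of order $h^2/(1-\eta)$ in the first row and of order $h/\sqrt\beta+h^2/(1-\eta)$ in the second. Bounding these by multiples of $c(h)/\alpha$, or by $(1-\eta)$ in the second row, and taking $C^{(O)}=1/20$ leaves a net contraction of $1-\tfrac12 c(h)$.

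\textbf{Main obstacle.} The main obstacle is the first row. There the contraction margin is only of order $\alpha h^2/(1-\eta)$, which is tiny, and it is eroded by the $\lambda$-dependent cross term, which is of order $\beta h^2/(1-\eta)$ and not $\alpha$-small. The favorable sign must come from the structure of the diagonal coefficient, which decreases in $\lambda$ faster than the cross term grows. I would first try to show that the first-row sum is a decreasing function of $\lambda$ when $\lambda\in[\alpha,\beta]$, using $(\tfrac14+\tfrac{\eta}{(1-\eta)^2})h/\sqrt{a-b^2}<\tfrac{1+\eta}{2(1-\eta)}$. With that, the worst case is $\lambda=\alpha$, giving a margin of at least about $\alpha h^2/(1-\eta)\cdot(\text{const}>\tfrac14)$. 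If this monotonicity fails for small $1-\eta$, I would redo the estimate keeping exact constants, since that is where the choice $C^{(O)}=1/20$ is presumably calibrated.

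The bounds for $M^w_{\bao}$ and $M^w_{\ab}$ are cruder. I would write their unweighted propagators explicitly: $\sfA\sfB$ gives $\begin{bmatrix} I & \tfrac h2 I\\ -\tfrac h2 H & I-\tfrac{h^2}{4}H\end{bmatrix}$, and BAO gives $\begin{bmatrix} I & 0\\ -\eta\tfrac h2 H & \eta I\end{bmatrix}$ composed with the $\sfA$ half-step. I would then conjugate each by $W$ and bound every row sum crudely using $b\sqrt\beta\le\tfrac12$, $\sqrt{a-b^2}\ge\tfrac{\sqrt3}{2}\sqrt a$, $h\sqrt\beta\le\tfrac12$ and $|H_0|_{\ell^\infty}\le \beta+\alpha/20$. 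This yields constants such as $1+b/\sqrt{a-b^2}+O(h\sqrt\beta)\le 2$ and $\le 3$ respectively.
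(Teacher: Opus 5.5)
Your proposal is correct and follows essentially the paper's argument: explicit row sums of the conjugated block matrices, with the diagonal/off-diagonal split of $H_0$ and $\sqrt{a-b^2}\ge \frac{\sqrt3}{2\sqrt\beta}$ giving a diagonal row sum of at most $1-c(h)$ in the first ABAO row. The same applies to the crude block-wise bounds for BAO and AB, and your direct bound $\eta(1+\frac34(1-\eta))\le 1-\frac14(1-\eta)$ for the second ABAO row is a slightly cleaner substitute for the paper's monotonicity-in-$\eta$ comparison at $\eta=1-2\sqrt\beta h$. One small slip: because $\sfA$ sits between $\sfB$ and $\sfO$, the unweighted BAO propagator is $\begin{bmatrix} I-\frac{h^2}{4}H_0 & \frac h2 I\\ -\frac{\eta h}{2}H_0 & \eta I\end{bmatrix}$, not your $\sfO\sfB$ matrix composed with an $\sfA$ half-step, although the slack in the constant $3$ makes this harmless.
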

These bounds are the $\ell^\infty$ analogues of the $\ell^2$ ingredients used to
prove BAOAB contraction in \cite{leimkuhler2024contraction}. See Appendix~\ref{appendix-bounds-baoab}  for its proof.

\subsection{$\ell^\infty$ bounds under sparse interactions}
\label{sec-bound-sparse}
We now turn to the propagator bounds under sparse interactions. 
Recall from
Assumption~\ref{assumption-sparse-interactions} that each term $V_i$ depends only on the
coordinates in the neighborhood $\sfN(i)$. A single Hessian $\nabla^2 V$ therefore has a nonzero $(i,j)$ entry only when some
term $V_l$ involves both $i$ and $j$, that is, $i, j \in \sfN(l)$, so $i$ and $j$ are within two edges in the interaction graph. More generally, Lemma~\ref{lem-sparse-instantiation} in Appendix~\ref{appendix-bounds-leap-frog} shows that a product of $r$ Hessians, the basic building block of the matrix
polynomials, has its $(i,j)$ entry supported on $j \in \sfN_{2r}(i)$, with at most $s_r$ nonzero entries per
row. 

This locality is what yields the $\ell^\infty$ bounds. To bound a row of a matrix
polynomial, we fix a degree threshold $r$ and split the entries by neighborhood.
For $j \in \sfN_{2r}(i)$, there are at most $s_r$ such indices, and a
Cauchy--Schwarz estimate over them bounds their contribution by $\sqrt{s_r}$
times the $\ell^2$ norm. For $j \notin \sfN_{2r}(i)$,
the entry only arises from
products of degree above $r$, which carry high powers of the step size and
decay geometrically. Bounding $s_r$ together with this decay converts the
$\ell^2$ propagator estimates into $\ell^\infty$ ones and controls error
accumulation. The following proposition is proved %
in Appendix~\ref{appendix-bounds-leap-frog}.

\begin{proposition}[$\ell^\infty$ bounds under sparse interactions, HMC] 
\label{prop-linf-bounds-sparse-hmc}
Suppose Assumptions \ref{assumption-V-log-concave} and \ref{assumption-sparse-interactions} hold. Let the integrated Hessians $H_i$ be defined as  in Proposition \ref{prop-l2bounds-hmc}. Take $r_i = \lceil ih\sqrt\beta e + \frac{\log\sqrt{d}}{\log(5/3)}\rceil$. For $mh \leq 1/\sqrt{20\beta}$, integers $N \geq 0$ and $k < m$, we have
\begin{equation}
\label{eqn-prop-linf-bounds-sparse-hmc-position}
    \Big|\Big(\prod_{j \in\{k+im\}_{i=0}^{N-1}}^{\longleftarrow}\tilde T_m^1(\{I - \frac{h^2}{2}H_{j+l} \}_{l=0}^{m-1})\Big)\tilde T_k^1(\{I-\frac{h^2} {2}H_i\}_{i=0}^{k-1})\Big|_{\ell^\infty} \leq 4 \sqrt{s_{r_{Nm+k}}}\,,
\end{equation}
\begin{equation}
    \Big|\Big(\prod_{j \in\{k+im\}_{i=0}^{N-1}}^{\longleftarrow}\tilde T_m^1(\{I - \frac{h^2}{2}H_{j+l} \}_{l=0}^{m-1})\Big) h\tilde U_{k-1}^1(\{I-\frac{h^2} {2}H_i\}_{i=1}^{k-1})\Big|_{\ell^\infty} \leq \frac{4}{\sqrt\beta} \sqrt{s_{r_{Nm+k}}}\,.
\end{equation}
\end{proposition}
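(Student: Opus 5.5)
The plan is to expand the product of polynomials as a single matrix polynomial in the Hessians and split it by degree. Each row is then bounded in two parts: a low-degree part through the $\ell^2$ bounds of Proposition~\ref{prop-l2bounds-hmc} and Cauchy--Schwarz on the sparse support, and a high-degree tail through geometric decay of its coefficients.

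Write $P$ for the full product on the left of \eqref{eqn-prop-linf-bounds-sparse-hmc-position}, with total step count $n = Nm+k$. Since $M_i = I - \tfrac{h^2}{2}H_i$, the recurrence \eqref{eqn-scalar-multivar-damped-chebyshev-first-kind} with $\eta=1$ shows that $\tilde T_m^1$ and $\tilde T_k^1$ are noncommutative polynomials in $h^2 H_j$. Hence $P = \sum_{r\ge 0} P_r$, where $P_r$ collects the ordered monomials of degree exactly $r$ in the $h^2H_j$'s. By Lemma~\ref{lem-sparse-instantiation}, the $i$-th row of $P_{\le r}\coloneqq\sum_{r'\le r}P_{r'}$ is supported on $\sfN_{2r}(i)$, which has at most $s_r$ indices.

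Fix $r = r_n$. The row bound is
\[
\sum_j |P_{ij}| \le \sqrt{s_r}\,|P_{\le r}|_{\ell^2} + |P_{>r}|_{\ell^\infty}.
\]
For the first term, write $P_{\le r} = P - P_{>r}$. Proposition~\ref{prop-l2bounds-hmc} gives $|\tilde T_m^1|_{\ell^2}\le 1$ and $|\tilde T_k^1|_{\ell^2}\le 1$ (since $mh,kh\le 1/\sqrt{20\beta}$), so $|P|_{\ell^2}\le 1$. The remaining piece satisfies $|P_{>r}|_{\ell^2}\le |P_{>r}|_{\ell^\infty}$ up to a $\sqrt d$ factor, because the rows of $P_{>r}$ have at most $d$ entries. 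Thus both remaining contributions reduce to showing $\sqrt d\,|P_{>r}|_{\ell^\infty}\le 1$ or so; the constant $4$ leaves room for this.

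For the tail, bound each monomial crudely in $\ell^\infty$ using $|h^2H_j|_{\ell^\infty}\le h^2\beta\sqrt{s_1}$. I will try to avoid the $\sqrt{s_1}$ factor instead, since it would interfere with geometric decay. The plan is to replace each matrix by a scalar majorant and compare the monomial coefficients of $P$ with those of the scalar polynomial obtained by setting every $h^2H_j$ to the scalar $x=h^2\beta$ with all signs made positive. Concretely, $T_k(1+x/2)$-type quantities have coefficients of degree $r$ bounded by roughly $\binom{n+r}{2r}x^r\le (e\,n h\sqrt\beta/ r)^{2r}$.

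For $r\ge e\,nh\sqrt\beta$ (the first part of $r_n$), the ratio is at most a fixed constant $<1$, giving decay. I expect the constant $3/5$, whence the $\log(5/3)$ in $r_n$. The second part of $r_n$, namely $\log\sqrt d/\log(5/3)$, then makes $(3/5)^{r}\le d^{-1/2}$. This absorbs the $\sqrt d$ from the $\ell^2\to\ell^\infty$ conversion, and summing the geometric tail gives a constant.

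The hard part is making the entrywise $\ell^\infty$ tail bound honest. The Hessian entries can be up to $\beta$ in each of $s_1$ positions, so I will bound $|\nabla^2V|_{\ell^\infty}$ via the sparsity structure of Assumption~\ref{assumption-sparse-interactions}, absorbing $s_1\le C 2^n$ into constants. If that fails, I will shift $r$ by a constant multiple, since the constants in $r_n$ are fixed and $s_1$ enters only through $C$.

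The $\tilde U$ statement follows identically. The extra factor $h\tilde U_{k-1}^1$ has $\ell^2$ norm at most $\sqrt{2/(5\beta)}$ by Proposition~\ref{prop-l2bounds-hmc}, and its scalar majorant $hU_{k-1}(1+x/2)$ has coefficients bounded by $kh$ times the same binomial growth, which yields the $1/\sqrt\beta$ prefactor.
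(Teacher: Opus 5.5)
Your architecture is the right one: split $P$ by degree at $r=r_{Nm+k}$, apply Cauchy--Schwarz over the at most $s_r$ indices of $\sfN_{2r}(i)$ against an $\ell^2$ bound, and let the $\log\sqrt d/\log(5/3)$ part of $r_{Nm+k}$ pay a factor $\sqrt d$ in a geometric tail. But the step you flag as hard fails as planned, and your fallback proves a different statement.

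You convert norms in the direction $|P_{>r}|_{\ell^2}\le\sqrt d\,|P_{>r}|_{\ell^\infty}$. This commits you to estimating the tail $P_{>r}$ directly in $\ell^\infty$. A scalar majorant there needs a bound on $|H_j|_{\ell^\infty}$. The only general one is $\sqrt{s_1}\,\beta$ (Cauchy--Schwarz on a row with at most $s_1$ nonzeros), and it is not improvable in general. This factor enters once per Hessian, so the ratio of your geometric tail becomes $\sqrt{s_1}$ times your $3/5$-type constant. That need not be below $1$ at the prescribed threshold.

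The factor cannot be absorbed into constants, because it multiplies the ratio, not the sum. Enlarging the threshold by an $s_1$-dependent factor gives a bound with $s_{r'}$ for some $r'\neq r_{Nm+k}$, which is not the proposition, since $r_i$ is fixed there. The sharper per-monomial bound $\sqrt{s_j}\,\beta^j$ (Lemma~\ref{lem-sparse-instantiation} plus Cauchy--Schwarz) does not rescue your bookkeeping either. With it, $\sqrt d\,|P_{>r}|_{\ell^\infty}$ is only bounded by a sparsity factor such as $\sum_{l\ge1}(3/5)^l\sqrt{s_{r+l}}$. Feeding that back through $|P_{\le r}|_{\ell^2}\le 1+\sqrt d\,|P_{>r}|_{\ell^\infty}$ multiplies it by $\sqrt{s_r}$.

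The missing idea is never to estimate the tail in $\ell^\infty$; the paper's proof works this way.
\begin{itemize}
\item In the $\ell^2$ operator norm every integrated Hessian satisfies $|H_j|_{\ell^2}\le\beta$, regardless of sparsity.
\item All degree-$j$ monomials of $P$ carry the common sign $(-1)^j$ with nonnegative coefficients. This is Lemma~\ref{lem-matrix-coefficients-sign}, and the sign structure is multiplicative across factors. Hence $|P_{>r}|_{\ell^2}\le\sum_{j>r}h^{2j}\beta^j a_j$, where $a_j$ is the aggregate degree-$j$ coefficient.
\item Only then convert, in the opposite direction to yours: $\sum_{l\notin\sfN_{2r}(i)}|P_{il}|\le|P_{>r}|_{\ell^\infty}\le\sqrt d\,|P_{>r}|_{\ell^2}$. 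This $\sqrt d$ is exactly what $(3/5)^{r_{Nm+k}}\le d^{-1/2}$ pays for.
\item For the near part, apply Cauchy--Schwarz to row $i$ of $P$ itself, since entries outside $\sfN_{2r}(i)$ come only from $P_{>r}$. Then $|P|_{\ell^2}\le 1$ suffices, and no $P_{\le r}$ bookkeeping is needed.
\end{itemize}

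Your coefficient estimate ``roughly $\binom{n+r}{2r}$'' is asserted rather than derived for a product of $N+1$ polynomials. What is needed is $a_j\le\max_{i\le Nm+k}t_j^{(i)}$. To get it, note that $a_j$ is the $x^j$ coefficient of $(T_m(1+\tfrac x2))^N T_k(1+\tfrac x2)$ (set $H_i=I$ and flip the sign of $x$). Also, $\cosh^N(m\theta)\cosh(k\theta)$ is a convex combination of $\cosh(i\theta)$ for $0\le i\le Nm+k$ (Proposition~\ref{prop-matrix-polynomial-agg-coeff-outer}). The explicit $t_j^{(i)}$ then give $h^{2j}\beta^j t_j^{(i)}\le(3/5)^j$ for $j\ge(Nm+k)h\sqrt\beta e$.

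The same changes handle $h\tilde U_{k-1}^1$. Use $|h\tilde U_{k-1}^1|_{\ell^2}\le\sqrt{2/(5\beta)}$ from Proposition~\ref{prop-l2bounds-hmc}, and $b_j\le\max_i u_j^{(i)}$. The product-to-sum weights are now signed, and dropping the negative ones only increases the bound.
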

For the BAOAB scheme, the weighted matrices relevant to the position projection have the following $\ell^\infty$ bounds. The proof is provided in Appendix~\ref{appendix-bounds-baoab}.
\begin{proposition}[$\ell^\infty$ bounds under sparse interactions, UL] 
\label{prop-linf-bounds-sparse-baoab}
Suppose Assumptions \ref{assumption-V-log-concave} and \ref{assumption-sparse-interactions} hold. Let the integrated Hessians $H_i$ be defined as  in Proposition \ref{prop-l2bounds-hmc}, and let $r_i = \lceil \frac{ie^2\beta}{1-\eta}h^2+\log(\sqrt{d}) \rceil$. 
Recall that $c(h) = \frac{\alpha h^2}{4(1-\eta)}$. For $h \leq \frac{1-\eta}{2 \sqrt{\beta}}$, we have 
\begin{equation}
    |\tilde T_k^\eta(\{I-\frac{h^2} {2}H_i\}_{i=0}^{k-1})|_{\ell^\infty} \leq (7\sqrt{2}+1)\sqrt{s_{r_k}}\exp(-\frac{k-1}{2}c(h))\,,
\end{equation}
\begin{equation}
\begin{aligned}
&\frac{1}{\sqrt{a-b^2}}\left|-b \tilde T_k^\eta(\{I - \frac{h^2}{2}H_i\}_{i=0}^{k-1}) + h \tilde U_{k-1}^\eta(\{I - \frac{h^2}{2}H_i\}_{i=1}^{k-1})\right|_{\ell^\infty} \\
    \leq &  (7\sqrt{2}+ \frac{2}{\sqrt{3}})\sqrt{s_{r_k}}\exp(-\frac{k-1}{2}c(h))\,.
\end{aligned}
\end{equation}
\end{proposition}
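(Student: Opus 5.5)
The plan is to follow the locality-splitting strategy described before Proposition~\ref{prop-linf-bounds-sparse-hmc}. We write each propagator as a polynomial in the Hessians $H_i$ and fix a degree threshold $r = r_k$. The part of degree at most $r$ is handled row by row via Cauchy--Schwarz against the $\ell^2$ bound of Proposition~\ref{prop-l2bounds-baoab}. The part of degree above $r$ is handled by a crude $\ell^\infty$ bound that decays geometrically in the degree.

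\textbf{Step 1: expansion and bounds on the coefficient sums.} Since $M_i = I - \frac{h^2}{2}H_i$, iterating the recurrence \eqref{eqn-scalar-multivar-damped-chebyshev-first-kind} expands $\tilde T_k^\eta$ as a sum of ordered monomials $c_w (h^2/2)^{|w|} H_{i_{|w|}}\cdots H_{i_1}$, with scalar coefficients depending only on $\eta$ and $k$. The same holds for $\tilde U_{k-1}^\eta$, and hence for the combination $-b\tilde T_k^\eta + h\tilde U_{k-1}^\eta$. Let $P_{\le r}$ and $P_{>r}$ denote the degree-truncated parts. We bound $\sum_{|w|=j}|c_w|$ by the corresponding scalar recurrence with the sign of the $M$-coefficient flipped: replace $M_i$ by $1+x$, with $x = \beta h^2/2$ standing for $|H|\le\beta$. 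This majorant recursion $u_{k+1} = (1+\eta)(1+x)u_k + \eta u_{k-1}$ grows. I expect, however, that after extracting the degree-$j$ coefficient one gets a bound like $\binom{k}{j}(1-\eta)^{-j}$ times something polynomial. Then $\sum_{j>r}(\text{coef})\,(\beta h^2)^j \le \sum_{j>r} \bigl(ek\beta h^2/((1-\eta)j)\bigr)^j$. This is geometric with ratio at most $1/e$ once $r\ge \frac{ke^2\beta h^2}{1-\eta}$, and it is at most $d^{-1/2}$ once $r\ge\log\sqrt d$. These two requirements explain the definition of $r_k$. The $1/(1-\eta)$ factor reflects the damped-oscillator Green's function $\sum_l \eta^l \approx 1/(1-\eta)$ and keeps the bound uniform as $\gamma\to\infty$. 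For the combination with $b$ and the $1/\sqrt{a-b^2}$ normalization, I would use $b\le\sqrt a/2$ to absorb constants, which yields the $2/\sqrt3$.

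\textbf{Step 2: bounding the low-degree part.} By Lemma~\ref{lem-sparse-instantiation}, each row of $P_{\le r}$ is supported on $\sfN_{2r}(i)$, which has at most $s_r$ entries. Hence $|P_{\le r}|_{\ell^\infty}\le \sqrt{s_r}\,|P_{\le r}|_{\ell^2}$. We then write $|P_{\le r}|_{\ell^2}\le |P|_{\ell^2}+|P_{>r}|_{\ell^2}$. Here $|P|_{\ell^2}\le 7\sqrt2(1-c(h))^{(k-1)/2}\le 7\sqrt2\, e^{-(k-1)c(h)/2}$ by \eqref{eqn-prop-l2bounds-baoab-11block}, and \eqref{eqn-prop-l2bounds-baoab-12block} gives the analogous bound for the combination.

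\textbf{Step 3: bounding the tail.} For the tail we use $|P_{>r}|_{\ell^\infty}\le\sqrt d\,|P_{>r}|_{\ell^2}$, or the direct $\ell^\infty$ coefficient bound. Step 1 gives $|P_{>r}|\lesssim d^{-1/2}e^{-(k-1)c(h)/2}$, so this term contributes the ``$+1$'' in $(7\sqrt2+1)$. Combining Steps 2 and 3 gives the stated bounds.

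\textbf{Main obstacle.} The hardest point is that the tail must decay like $e^{-(k-1)c(h)/2}$ as well as being small. Crude coefficient majorants grow in $k$ rather than contract. I would first try to choose $r_k$ large enough that the tail bound beats the growth factor $e^{(k-1)c(h)/2}$. This works because $c(h)k \le \alpha k h^2/(1-\eta) \lesssim r_k/e^2$, so the ratio $1/e$ per degree beyond $r_k$ more than compensates. If this fails, the fallback is to split the product at a point and apply the $\ell^2$ contraction to the remaining factor.
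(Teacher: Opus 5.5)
Your overall architecture matches the paper's. You split each row at $\sfN_{2r}(i)$, use Cauchy--Schwarz with the $\ell^2$ bounds of Proposition~\ref{prop-l2bounds-baoab} on the near part, and convert the order-$>r$ tail to $\ell^\infty$ at cost $\sqrt d$. Your choice of $r_k$ then gives tail $\le \sqrt d\, e^{-r_k}\le \exp(-\frac{ke^2\beta h^2}{1-\eta})\le \exp(-\frac{(k-1)c(h)}{2})$.

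\textbf{The gap is Step 1.} The coefficient bound $t_j^{(k)}\le (1-\eta)^{-j}\binom kj$ is the heart of the proof, and the mechanism you propose cannot produce it. Your majorant takes absolute values in the recurrence, $u_{k+1}=(1+\eta)(1+x)u_k+\eta u_{k-1}$. Already at degree $0$ ($x=0$) it gives $1,\,1,\,1+2\eta,\,1+4\eta+2\eta^2,\dots$. This grows like $\lambda_+^k$ with $\lambda_+=\frac{1+\eta+\sqrt{(1+\eta)^2+4\eta}}{2}>1$, which equals $1+\sqrt2$ at $\eta=1$ (and $\eta$ near $1$ is admissible for small $h$). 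The true constant coefficient is $1$ for every $k$. Every degree-$j$ coefficient of your majorant inherits this growth, so your tail bound carries a factor exponential in $k$ at a rate independent of $h$. The statement's $r_k$ grows only like $k\beta h^2/(1-\eta)\le k\sqrt\beta h/2$, which can be far smaller than $k$, so it cannot absorb that factor. Your ``main obstacle'' paragraph therefore has it backwards: enlarging $r_k$ does not rescue the crude majorant, and with the correct coefficient bound there is no growth to beat at all.

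\textbf{What is missing.} You need to keep the $-\eta$ term exactly and track each monomial separately (Lemma~\ref{lem-matrix-coefficients-sign}, Proposition~\ref{prop-matrix-polynomial-agg-coeff-eta}).
\begin{itemize}
\item A monomial $H_{i_1}\cdots H_{i_j}$ with largest index $i_1$ first appears in $\tilde T^\eta_{i_1+1}$. Its coefficient there is $\frac{1+\eta}{2}$ times that of $H_{i_2}\cdots H_{i_j}$ in $\tilde T^\eta_{i_1}$.
\item After that it receives no new contributions, since new factors are prepended with larger index. So its coefficient obeys the homogeneous recurrence $x_{k+1}=(1+\eta)x_k-\eta x_{k-1}$, with characteristic roots $1$ and $\eta$.
\item Hence the coefficient equals $\frac{1-\eta^{k-i_1}}{1-\eta}x_{i_1+1}$. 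This is nonnegative, which gives the sign coherence $(-1)^jh^{2j}$, and at most $\frac{1}{1-\eta}x_{i_1+1}$.
\item This yields $t_j^{(k)}\le\frac{1}{1-\eta}\sum_{i=j-1}^{k-1}t_{j-1}^{(i)}$. Induction and the hockey-stick identity then give $t_j^{(k)}\le(1-\eta)^{-j}\binom kj$.
\item Likewise $u_j^{(k)}\le(1-\eta)^{-(j+1)}\binom kj$. The extra $(1-\eta)^{-1}$ makes the tail of $h\tilde U^\eta_{k-1}$ equal to $\frac{h}{1-\eta}e^{-r}=b\,e^{-r}$.
\item That is exactly why the normalized combination has tail $\frac{2b}{\sqrt{a-b^2}}e^{-r}\le\frac{2}{\sqrt3}e^{-r}$.
\end{itemize}

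\textbf{A minor point on Step 2.} Truncating to $P_{\le r}$ and using $|P_{\le r}|_{\ell^2}\le|P|_{\ell^2}+|P_{>r}|_{\ell^2}$ adds an extra $\sqrt{s_r}\,e^{-r}$ term. You would then land at $7\sqrt2+2$ rather than the stated $7\sqrt2+1$. The paper instead applies Cauchy--Schwarz directly to the entries of the full matrix on $\sfN_{2r}(i)$, bounded by the full row norm $\le|\tilde T_k^\eta|_{\ell^2}$. It uses the tail only for entries off $\sfN_{2r}(i)$, which arise solely from orders above $r$.
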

A contraction factor decaying exponentially in $N$ could also be
included in~\eqref{eqn-prop-linf-bounds-sparse-hmc-position} of
Proposition~\ref{prop-linf-bounds-sparse-hmc}, as in
Proposition~\ref{prop-linf-bounds-sparse-baoab}. 
We omit it, as our analysis does
not require it.
\section{Discretization Error and Sampling Bias}
\label{sec-discretization-error-main-text}
Besides the propagator bounds of Section \ref{sec-propagator-bounds}, the other critical component of the coupling framework in Section \ref{sec-sketch-techniques} is the discretization error. In this section
we bound it for HMC and UL, then combine it with the
propagator bounds 
to obtain the sampling bias.
\subsection{Discretization error in HMC}
To analyze the position and momentum errors separately, we split each solution map
into its position and momentum components, writing $\sfU_{\hmc}^t = (\sfQ^t, \sfP^t)$ for
the continuous Hamiltonian dynamics~\eqref{eqn-continuous-time-Hamiltonian}, and
$\sfU_{\hmc,h}^m = (\sfQ_h^m, \sfP_h^m)$ for $m$ leap-frog
steps~\eqref{eqn-leap-frog}. The single-step discretization error then splits as
\begin{equation}
\label{eqn-fine-discretization-error-hmc-decomp}
    (\sfU_{\hmc,h}^1 - \sfU_{\hmc}^h)(q^*, p^*)
    = \big((\sfQ_h^1-\sfQ^h)(q^*, p^*),\ (\sfP_h^1-\sfP^h)(q^*, p^*)\big)\,,
\end{equation}
which is the common building block of
\eqref{eqn-one-step-hmc-discretization-error-decomposition-propagation} and
\eqref{eqn-multi-step-hmc-discretization-error-decomposition-propagation}. The
following proposition bounds each component. We provide the proof to a generalized version in
Appendix~\ref{appendix-discretization-error}.
\begin{proposition}
\label{prop-discretization-error-hmc-main-text}
Suppose for any probability measures $\nu_1, \nu_2$ on $\mathbb{R}^d$, it holds that $| \int \nabla^2 V {\rm d} \nu_1|_{\ell^\infty} < C_3$ and $|(\int  \nabla^2 V {\rm d} \nu_1)(\int \nabla^2 V {\rm d} \nu_2)|_{\ell^\infty} < C_4$ for constants $C_3, C_4 > 0$ that may depend on $\beta$. Then for $h \leq 1/\sqrt{20\beta}$ and $(q^*, p^*) \sim \pi \otimes \mathcal{N}(0, I)$, we have  
\begin{equation}
\label{eqn-one-step-position-err-main-text}
|\sfQ_h^1(q^*,p^*)-\sfQ^h(q^*,p^*)|_{2,\ell^\infty} \leq  \frac1{\sqrt{2}}h^3C_3\sqrt{\log(2d)}\,,
\end{equation}
\begin{equation}
\label{eqn-one-step-momentum-err-main-text}
|\sfP_h^1(q^*,p^*)-\sfP^h(q^*,p^*)|_{2,\ell^\infty}\leq 2 h^2C_3 \sqrt{\log(2d)}  + \frac{1}{2\sqrt{5}}h^4 C_4 \sqrt{\log(2d)}\,.
\end{equation}
\end{proposition}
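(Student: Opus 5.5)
We compare one leap-frog step with the exact Hamiltonian flow over time $h$, both started at $(q^*,p^*)$. Writing $(q_t,p_t)=\sfU_{\hmc}^t(q^*,p^*)$, Taylor's formula with integral remainder gives
\begin{equation*}
q_h = q^* + hp^* - \int_0^h (h-s)\nabla V(q_s)\,{\rm d}s,\qquad p_h = p^* - \int_0^h \nabla V(q_s)\,{\rm d}s .
\end{equation*}
Subtracting the leap-frog update \eqref{eqn-leap-frog} yields
\begin{align*}
\sfQ_h^1-\sfQ^h &= \int_0^h (h-s)\big(\nabla V(q_s)-\nabla V(q^*)\big)\,{\rm d}s,\\
\sfP_h^1-\sfP^h &= \int_0^h \nabla V(q_s)\,{\rm d}s - \tfrac h2\nabla V(q^*) - \tfrac h2 \nabla V(q_1^{\rm lf}),
\end{align*}
where $q_1^{\rm lf}=q^*+hp^*-\tfrac{h^2}{2}\nabla V(q^*)$ is the leap-frog position. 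Each gradient difference is written as an integrated Hessian times a displacement: $\nabla V(q_s)-\nabla V(q^*)=H_s(q_s-q^*)$ with $H_s=\int_0^1\nabla^2V(\tau q_s+(1-\tau)q^*)\,{\rm d}\tau$. The $\ell^\infty$ operator norm of $H_s$ is at most $C_3$ by hypothesis, since $H_s$ is an average of Hessians against a probability measure.

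\textbf{Position error.} We have $q_s-q^*=\int_0^s p_u\,{\rm d}u$, which gives
\begin{equation*}
|\sfQ_h^1-\sfQ^h|_{\ell^\infty}\le C_3\int_0^h (h-s)\int_0^s|p_u|_{\ell^\infty}\,{\rm d}u\,{\rm d}s.
\end{equation*}
Taking $L^2(\mathbb P)$ norms and applying Minkowski's inequality reduces this to controlling $|p_u|_{2,\ell^\infty}$. By stationarity of the exact Hamiltonian flow, $p_u\sim\mathcal N(0,I)$ for every $u$, and the maximal inequality for Gaussians gives $|p_u|_{2,\ell^\infty}\le\sqrt{2\log(2d)}$. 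Since $\int_0^h(h-s)s\,{\rm d}s=h^3/6$, the bound is $C_3h^3\sqrt{2\log(2d)}/6\le h^3C_3\sqrt{\log(2d)}/\sqrt2$, as claimed.

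\textbf{Momentum error.} Split it as
\begin{equation*}
\int_0^h\big(\nabla V(q_s)-\nabla V(q^*)\big)\,{\rm d}s - \tfrac h2\big(\nabla V(q_1^{\rm lf})-\nabla V(q^*)\big).
\end{equation*}
The first piece is treated exactly as above and contributes $C_3\sqrt{2\log(2d)}\,h^2/2$.

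For the second piece, write
\begin{equation*}
\nabla V(q_1^{\rm lf})-\nabla V(q^*)=\tilde H\big(hp^*-\tfrac{h^2}{2}\nabla V(q^*)\big).
\end{equation*}
The $hp^*$ part contributes at most $\tfrac h2 C_3 h\sqrt{2\log(2d)}$. The remaining part is $\tfrac{h^3}{4}\tilde H\nabla V(q^*)$, and this is where $C_4$ must enter. I would use stationarity of $q^*\sim\pi$, together with a representation of $\nabla V(q^*)$ as a Hessian-weighted Gaussian quantity, to produce a product of two integrated Hessians whose $\ell^\infty$ norm is bounded by $C_4$.

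\textbf{Main obstacle.} The key step is controlling $|\nabla V(q^*)|_{\ell^\infty}$ under $\pi$ with the right scaling. The bound $h^4C_4\sqrt{\log(2d)}/(2\sqrt5)$ indicates that one should not bound $\nabla V(q^*)$ directly. Instead, I would re-expand $\nabla V(q^*)$ itself. One route is to redo the comparison as a second-order expansion of $\nabla V$ along the exact trajectory. Here $\nabla V(q_s)=\nabla V(q^*)+H_s\int_0^s p_u\,{\rm d}u$, and $p_u=p^*-\int_0^u\nabla V(q_v)\,{\rm d}v$. The term $\nabla V(q^*)$ then appears multiplied by a Hessian, and I would try to cancel it against the leap-frog correction. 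What survives should be products of two Hessians acting on Gaussian momenta, giving order $h^4C_4\sqrt{\log(2d)}$, with the constant $1/(2\sqrt5)$ coming from $h\le1/\sqrt{20\beta}$ absorbing one factor.

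Collecting terms then yields $2h^2C_3\sqrt{\log(2d)}$ plus the $C_4$ term. If the exact cancellation proves delicate, I would fall back on representing $q^*$ via a Gaussian coupling. This uses the fact that the estimate only requires $\ell^\infty$ moments of quantities that are Lipschitz images of Gaussians.
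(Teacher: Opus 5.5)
Your position estimate is fine, and a bit cleaner than the paper's: writing $q_s-q^*=\int_0^s p_u\,{\rm d}u$ with $p_u\sim\mathcal N(0,I)$ avoids any bound on $\nabla V$. The momentum estimate, however, is not proved. You stop at $\tfrac{h^3}{4}\tilde H\nabla V(q^*)$ and propose to convert it into a two-Hessian $C_4$ term by re-expansion and cancellation. That is not an argument, and it cannot work as described: even for smooth $V$, the one-step momentum error contains a term $\tfrac{h^3}{12}\nabla^2V(q^*)\nabla V(q^*)$ that no cancellation removes.

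No such trick is needed. The missing ingredient is the stationary gradient bound $\sqrt{\bE_\pi|\nabla V|_{\ell^\infty}^2}\le 2\sqrt{\beta\log(2d)}$ (Lemma~\ref{lem-gradV-expectation}). Together with $h\sqrt\beta\le 1/\sqrt{20}$ it gives $\tfrac{h^3}{4}C_3\cdot 2\sqrt{\beta\log(2d)}\le\tfrac{1}{2\sqrt{20}}h^2C_3\sqrt{\log(2d)}$, i.e.\ an $O(C_3h^2)$ contribution. This is exactly how the paper treats $\nabla V(q^*)$ (its term $(b_2)$). In the paper, $C_4$ enters elsewhere: it expands $q_t-\sfQ_h^1$ through the exact flow, so an outer integrated Hessian multiplies $\int_0^t\int_0^s H_\tau(q_\tau-q^*)\,{\rm d}\tau\,{\rm d}s$, producing a product of two integrated Hessians. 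In your decomposition $C_4$ never appears, which is harmless because the stated bound only adds a nonnegative $C_4$ term.

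Once completed, your triangle-inequality bookkeeping also needs care with constants. The claim $|p_u|_{2,\ell^\infty}\le\sqrt{2\log(2d)}$ is the standard bound for the first moment $\bE|p_u|_{\ell^\infty}$, not for $(\bE|p_u|_{\ell^\infty}^2)^{1/2}$. With the paper's Lemma~\ref{lem-Gaussian-linf}, $|p^*|_{2,\ell^\infty}\le 2\sqrt{\log(2d)}$. Your position bound survives this, giving $\tfrac13h^3C_3\sqrt{\log(2d)}$. Your three momentum pieces, however, then sum to $(2+\tfrac{1}{2\sqrt{20}})h^2C_3\sqrt{\log(2d)}$, which exceeds the claimed $2h^2C_3\sqrt{\log(2d)}$.

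The paper avoids this by working with squared norms. It writes the error as $\tfrac12\int_0^h(\nabla V(q_t)-\nabla V(q^*))\,{\rm d}t+\tfrac12\int_0^h(\nabla V(q_t)-\nabla V(\sfQ_h^1))\,{\rm d}t$, uses $|x+y|^2\le 2|x|^2+2|y|^2$ and Cauchy--Schwarz in time, and groups the momentum in $q_t-\sfQ_h^1$ as $p^*(t-h)$. This gives a leading coefficient of about $\sqrt{10/3}\approx 1.83$. Alternatively, you can keep your decomposition and sharpen the Gaussian constant. Since $p\mapsto|p|_{\ell^\infty}$ is $1$-Lipschitz, the Gaussian Poincar\'e inequality gives $\bE|p^*|_{\ell^\infty}^2\le 2\log(2d)+1\le(2+1/\log 2)\log(2d)$. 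Your total then becomes about $1.97\,h^2C_3\sqrt{\log(2d)}$, which is within the stated bound.
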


\subsection{Discretization error in UL} For the BAOAB integrator we adapt the framework of \cite{leimkuhler2024contraction}, comparing the BAOAB chain with an HOH chain for an accurate discretization error estimate even in the large-$\gamma$ limit. 
Our analysis differs mainly in measuring the error in the $\ell^\infty$ norm
rather than $\ell^2$.
The key component of
\eqref{eqn-multi-step-baoab-discretization-error-decomposition-propagation} is
$(\sfU_{\baoab,h}^l - \sfU_{\hoh,h}^{l})(q^*, p^*)$. Run $l$ steps of
each scheme from the \emph{same} initial state $(q^*,p^*)$ under the \emph{same}
O-step noise $\xi_0, \dots, \xi_{l-1}$:
\begin{equation}
    (x_i, p_i) = \sfU_{\hoh,h}^{\xi_{i-1}}(x_{i-1}, p_{i-1})\,, \qquad
    (y_i, p'_i) = \sfU_{\baoab,h}^{\xi_{i-1}}(y_{i-1}, p'_{i-1})\,,
\end{equation}
with $(x_0, p_0) = (y_0, p_0') = (q^*,p^*)$. Writing
$\Delta_q^i = x_i - y_i$ and $\Delta_p^i = p_i - p_i'$ for the accumulated
discrepancies, the difference above is exactly $(\Delta_q^l, \Delta_p^l)$. The
following proposition bounds it in the weighted norm. See
Appendix~\ref{appendix-discretization-error} for the proof.
\begin{proposition}
\label{prop-discretization-error-underdamped-main-text}    
Assume the shared initial state is $(q^*,p^*)\sim \pi \otimes \mathcal{N}(0,I)$ and suppose Assumption \ref{assumption-V-log-concave} holds. If for any probability measures $\nu_1, \nu_2$ on $\mathbb{R}^d$, we have $|\int \nabla^2 V {\rm d}\nu_1|_{\ell^\infty} \leq C_5$ and $|\left(\int \nabla^2 V {\rm d}\nu_1\right)\left(\int \nabla^2 V {\rm d}\nu_2\right)|_{\ell^\infty} \leq C_6$, and the step size satisfies $h \leq \frac{1-\eta}{2\sqrt\beta}$, then the accumulated discretization error across $l$ steps can be bounded by  
\begin{equation}
\label{eqn-prop-l-step-error-gronwall-main-text}
    |(\Delta_q^l, \Delta_p^l)|_{2,\ell^\infty_w} \leq \exp\left[(l-1)h(\frac32\frac{C_5}{\sqrt\beta}  + \frac{1}{16}\frac{C_6}{\beta^{3/2}} + \sqrt\beta)\right] E_l\,,
\end{equation}
where
\begin{equation}
    E_l = \left( \frac{27}{64}C_5\, h^3 l
    + {\frac{27}{8}}\frac{h^2 C_5}{\sqrt{\beta}(1-\eta)}
    + {\frac38}\frac{h^4 C_6}{\sqrt{\beta}(1-\eta)} \right) \sqrt{\log(2d)}\,.
\end{equation}
\end{proposition}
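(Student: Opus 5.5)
The plan is to write the $l$-step discrepancy as a discrete Gr\"onwall-type recursion in the weighted norm. Each step contributes a one-step local error, and the accumulated error is amplified by a Lipschitz factor $1+O(h)$. Concretely, for each step $i$ I decompose
\[
(\Delta_q^{i},\Delta_p^{i}) = \Big[\sfU_{\hoh,h}^{\xi_{i-1}}(x_{i-1},p_{i-1})-\sfU_{\baoab,h}^{\xi_{i-1}}(x_{i-1},p_{i-1})\Big] + \Big[\sfU_{\baoab,h}^{\xi_{i-1}}(x_{i-1},p_{i-1})-\sfU_{\baoab,h}^{\xi_{i-1}}(y_{i-1},p'_{i-1})\Big].
\]
The first bracket is a local error evaluated at $(x_{i-1},p_{i-1})$. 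Since HOH preserves $\pi\otimes\mathcal N(0,I)$ and the O-step noise is independent, this state is exactly stationary. The second bracket equals $M_{\baoab}(H)(\Delta_q^{i-1},\Delta_p^{i-1})$ for an integrated Hessian $H$ of the type allowed in the hypotheses.

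For the propagation factor, I write out $W M_{\baoab}(H) W^{-1}$ entrywise, using $a=1/\beta$, $b=h/(1-\eta)$ and $b^2\le a/4$. I then bound its $\ell^\infty$ operator norm using only $|H|_{\ell^\infty}\le C_5$ and $|H H'|_{\ell^\infty}\le C_6$. The $\eta$-contraction on momentum together with the identity blocks give a leading term $\le 1$. The remaining terms are $h\sqrt a\, C_5$, $h^2 C_5$ and $h^3 C_6$ type contributions, which after using $h\le 1/(2\sqrt\beta)$ collapse to $1+h(\tfrac32 C_5/\sqrt\beta+\tfrac1{16}C_6/\beta^{3/2}+\sqrt\beta)$. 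The stray $\sqrt\beta$ term absorbs $O(h)$ off-diagonal weight effects. Iterating $l-1$ times and using $1+x\le e^x$ gives the exponential prefactor. Note that the first step has zero propagated error because the two chains share the initial state, hence the exponent $l-1$.

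For the local error, I compare $\sfH_{h/2}\sfO\sfH_{h/2}$ with $\sfB\sfA\sfO\sfA\sfB$ at a stationary point. The half-step $\sfH_{h/2}$ versus $\sfB_{h/2}\sfA_{h/2}$ (and its mirror image) differ in position by $\int\!\int(\nabla V(q_s)-\nabla V(q_0))$. By the mean value theorem this equals an integrated Hessian applied to $\int (q_s-q_0)$, which is $\approx s p_0$ plus an $h^2\nabla V$ term. Each resulting piece has the form (matrix with $\ell^\infty$ norm $\le C_5$ or $C_6$) $\times$ a Gaussian vector, or $\times\nabla V(q^*)$. The $2,\ell^\infty$ norm of such a vector is controlled by $\sqrt{\log(2d)}$ via the maximal inequality for (sub-)Gaussian coordinates. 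For the $\nabla V$ factor under $\pi$, I use that each coordinate $\partial_i V$ is sub-Gaussian with variance proxy of order $\beta$ under strong log-concavity (e.g.\ by the Poincar\'e inequality or a Lipschitz-concentration argument). The errors from the first half $\sfH$ are then pushed through $\sfO$ (contributing a factor $\eta$) and the second $\sfA\sfB$. The accumulated contributions should produce the three pieces of $E_l$.

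\textbf{Main obstacle.} The $h^3 l$ term is the naive sum of per-step position errors of order $h^3$. The terms with $1/(1-\eta)$ are the delicate ones, because momentum errors of order $h^2$ enter position only through the weight $b=h/(1-\eta)$. To avoid an $l h^2$ growth, which would be fatal in the high-friction regime, I will follow the strategy of \cite{leimkuhler2024contraction}. Specifically, I will track the momentum error separately. Because of the $\eta$ factor each step, this error sums geometrically as $\sum \eta^j h^2\lesssim h^2/(1-\eta)$, rather than linearly in $l$. Making this geometric summation compatible with the crude Gr\"onwall factor in the $\ell^\infty_w$ norm is the step I expect to need the most care.
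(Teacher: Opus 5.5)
\textbf{Gap: the accumulation step.} Your local decomposition matches the paper's: HOH versus BAOAB at the stationary HOH state, plus BAOAB propagation of the previous discrepancy. The accumulation scheme you build on it, however, cannot give the stated $E_l$, and the step you defer as the ``main obstacle'' is the actual content of the proof.

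Once you bound the weighted one-step propagator by a scalar, $|WM_{\baoab}(H)W^{-1}|_{\ell^\infty}\le 1+Kh$, and iterate $u_i\le(1+Kh)u_{i-1}+\ell_i$, the damping factor $\eta$ on the momentum is absorbed into the leading ``$\le 1$'' and lost. Write $(\delta_q,\delta_p)$ for the local error. Each weighted local error is then only controlled by $\ell_i\le|\delta_q|_{2,\ell^\infty}+\sqrt a\,|\delta_p|_{2,\ell^\infty}\lesssim(h^3+h^2/\sqrt\beta)\,C_5\sqrt{\log(2d)}$. The first weighted coordinate $\Delta_q+b\Delta_p$ is even forced by $b\,\varepsilon_p$ at every step. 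So you end with a bound of order $e^{(l-1)Kh}\, l\,h^2C_5\sqrt{\log(2d)}/\sqrt\beta$, which is weaker than $E_l$ once $l\gg(1-\eta)^{-1}$. This is exactly the high-friction failure you anticipated. No care with constants fixes it inside a joint weighted-norm recursion: the geometric summation has to be performed before Gr\"onwall, on separate unweighted quantities.

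\textbf{What the paper does instead.} From Proposition~\ref{prop-discretization-error-decomp} it obtains two scalar recursions:
$|\Delta_q^l|_{2,\ell^\infty}\le(1+h^2C_5)|\Delta_q^{l-1}|_{2,\ell^\infty}+h|\Delta_p^{l-1}|_{2,\ell^\infty}+\tfrac{27}{64}C_5h^3\sqrt{\log(2d)}$ and
$|\Delta_p^l|_{2,\ell^\infty}\le(\eta+\tfrac{h^2}{2}C_5)|\Delta_p^{l-1}|_{2,\ell^\infty}+(hC_5+\tfrac{h^3}{4}C_6)|\Delta_q^{l-1}|_{2,\ell^\infty}+|\varepsilon_p|_{2,\ell^\infty}$.
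The position recursion is unrolled additively, so only the $O(h^3)$ local position error accumulates linearly in $l$. The momentum recursion is unrolled against $\eta$ alone, so the local momentum errors sum to $|\varepsilon_p|_{2,\ell^\infty}/(1-\eta)$ while the cross terms receive weights $\eta^{i-1}\le 1$.

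Adding the two with weight $\sqrt a$ on momentum gives $u_l\le Kh\sum_{i=1}^{l-1}u_i+E_l$ for $u_l=|\Delta_q^l|_{2,\ell^\infty}+\sqrt a\,|\Delta_p^l|_{2,\ell^\infty}$. Here $K=\tfrac32\tfrac{C_5}{\sqrt\beta}+\tfrac1{16}\tfrac{C_6}{\beta^{3/2}}+\sqrt\beta$ after using $h\le 1/(2\sqrt\beta)$; the $\sqrt\beta$ comes from the coupling $h|\Delta_p|=h\sqrt\beta\cdot\sqrt a\,|\Delta_p|$. The summation-form discrete Gr\"onwall inequality, with $E_l$ nondecreasing, yields $u_l\le e^{(l-1)Kh}E_l$. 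Finally, $u_l$ dominates $|(\Delta_q^l,\Delta_p^l)|_{2,\ell^\infty_w}$ because $b,\sqrt{a-b^2}\le\sqrt a$.

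\textbf{Smaller point.} Poincar\'e or Lipschitz concentration only gives $\partial_iV$ a variance proxy of $\beta^2/\alpha$ under $\pi$. That would introduce a $\sqrt{\beta/\alpha}$ factor that cannot be absorbed under $h\le1/(2\sqrt\beta)$. The order-$\beta$ proxy comes instead from integration by parts, $\tfrac{{\rm d}}{{\rm d}\lambda}\bE_\pi e^{\lambda\partial_iV}=\lambda\,\bE_\pi[\partial_{ii}V\,e^{\lambda\partial_iV}]$, combined with $\partial_{ii}V\le\beta$. The paper simply cites Lemma~\ref{lem-gradV-expectation} for this.
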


\subsection{Assembling the bias bounds}
Combining the discretization-error estimates above with the $\ell^\infty$
propagator bounds of Section~\ref{sec-propagator-bounds}, under the coupling
framework of Section~\ref{sec-sketch-techniques}, yields iterative inequalities
for HMC and its underdamped counterpart. Iterating these to the stationary limit
gives the bias bounds of Theorems~\ref{thm-weak-formal}
and~\ref{thm-sparse-formal}. The full assembly is carried out in
Appendix~\ref{appendix-sampling-bias-bounds}.
 
\section{Conclusions and Discussions} 
\label{sec-conclusion}
We study the delocalization of bias phenomenon in unadjusted Hamiltonian Monte
Carlo and underdamped Langevin algorithms. Although these algorithms are biased at
finite step size $h$, we show that the $W_2$ bias of low-dimensional marginals
scales with the marginal dimension rather than the full dimension $d$, when the
target potential has weak or sparse interactions, or is Gaussian. Consequently, {$O(\sqrt{K})$ integration steps} suffice for a bounded $W_2$ error across all
$K$-marginals, up to logarithmic dependence on $d$. This offers a counterpoint to common practice in
Bayesian statistics, where Metropolis--Hastings adjustment is incorporated to
eliminate discretization bias. The adjusted variants instead require a step size scaling with the full dimension $d$, leading to a large number of iterations.

{More broadly, we introduce a general matrix-polynomial framework for the
propagators of discretized Hamiltonian and Langevin dynamics. The polynomial
representation is well-suited to the $\ell^\infty$ norm, which is not amenable to
continuous-time tools such as differentiation. The
framework is also not specific to the leap-frog and BAOAB schemes studied here. It
applies to other splitting integrators, including OBABO and ABOBA, where, possibly
after regrouping the iterates, a single repeating core block satisfies a similar
three-term recurrence and the boundary factors contribute only bounded prefactors
that do not affect the coefficient estimates. We thus expect the framework to
facilitate the study of delocalization of bias in other dynamics as well. }

{Another interesting direction is randomized HMC~\cite{bou2017randomized,bou-rabee2025unadjustedhamiltonian}, where the integration time
is randomized to avoid periodicity. Our proof strategy does not directly apply there, as the number of
leap-frog steps per outer loop is random rather than fixed.}

We follow assumptions on the potential similar to
those in~\cite{chen2024convergence,lacker2025hierarchical}, and {relaxing them toward
conditions met by real physical systems is a significant future direction.} One such setting is non-log-concave target
distributions. Another is more general
interaction patterns, such as the decaying but long-range interactions of
physical systems rather than the sparse interactions considered here. These extensions are left for future work.

\vspace{0.2in}
\noindent {\bf Acknowledgments}
We thank Nawaf Bou-Rabee, Aaron Dinner, Daniel Lacker, Peter A.\ Whalley,  and Fuzhong Zhou for helpful discussions. Y.~Chen acknowledges the support from the UCLA DataX Pilot Projects Grant Program. J.~Niles-Weed is supported in part by National Science Foundation award DMS-2339829. J.~Weare and X.~Cheng are supported in part by National Science Foundation award DMS-2425899. X.~Cheng is also supported in part by a Dean's Dissertation Fellowship from the Graduate School of Arts and Science at New York University. AI tools were used only to edit language, flag typographical errors and minor inconsistencies, and generate the TikZ code for the figure.

  % branched from Michael Forbes' macro document
  % September 2014

  \newcommand{\cSTOC}[1]{\nth{\intcalcSub{#1}{1968}}\ Annual\ ACM\ Symposium\ on\ Theory\ of\ Computing\ (STOC)}
  \newcommand{\cFSTTCS}[1]{\nth{\intcalcSub{#1}{1980}}\ International\ Conference\ on\ Foundations\ of\ Software\ Technology\ and\ Theoretical\ Computer\ Science\ (FSTTCS)}
  \newcommand{\cCCC}[1]{\nth{\intcalcSub{#1}{1985}}\ Annual\ IEEE\ Conference\ on\ Computational\ Complexity\ (CCC)}
  \newcommand{\cFOCS}[1]{\nth{\intcalcSub{#1}{1959}}\ Annual\ IEEE\ Symposium\ on\ Foundations\ of\ Computer\ Science\ (FOCS)}
  \newcommand{\cRANDOM}[1]{\nth{\intcalcSub{#1}{1996}}\ International\ Workshop\ on\ Randomization\ and\ Computation\ (RANDOM)}
  \newcommand{\cISSAC}[1]{#1\ International\ Symposium\ on\ Symbolic\ and\ Algebraic\ Computation\ (ISSAC)}
  \newcommand{\cICALP}[1]{\nth{\intcalcSub{#1}{1973}}\ International\ Colloquium\ on\ Automata,\ Languages and\ Programming\ (ICALP)}
    \newcommand{\cCOLT}[1]{\nth{\intcalcSub{#1}{1987}}\ Conference\ on\ Learning\ Theory\ (COLT)}
  \newcommand{\cCSR}[1]{\nth{\intcalcSub{#1}{2005}}\ International\ Computer\ Science\ Symposium\ in\ Russia\ (CSR)}
  \newcommand{\cMFCS}[1]{\nth{\intcalcSub{#1}{1975}}\ International\ Symposium\ on\ the\ Mathematical\ Foundations\ of\ Computer\ Science\ (MFCS)}
  \newcommand{\cPODS}[1]{\nth{\intcalcSub{#1}{1981}}\ Symposium\ on\ Principles\ of\ Database\ Systems\ (PODS)}
  \newcommand{\cSODA}[1]{\nth{\intcalcSub{#1}{1989}}\ Annual\ ACM-SIAM\ Symposium\ on\ Discrete\ Algorithms\ (SODA)}
  \newcommand{\cNIPS}[1]{Advances\ in\ Neural\ Information\ Processing\ Systems\ \intcalcSub{#1}{1987} (NeurIPS)}
  \newcommand{\cWALCOM}[1]{\nth{\intcalcSub{#1}{2006}}\ International\ Workshop\ on\ Algorithms\ and\ Computation\ (WALCOM)}
  \newcommand{\cSoCG}[1]{\nth{\intcalcSub{#1}{1984}}\ Annual\ Symposium\ on\ Computational\ Geometry\ (SCG)}
  \newcommand{\cKDD}[1]{\nth{\intcalcSub{#1}{1994}}\ ACM\ SIGKDD\ International\ Conference\ on\ Knowledge\ Discovery\ and\ Data\ Mining\ (KDD)}
  \newcommand{\cICML}[1]{\nth{\intcalcSub{#1}{1983}}\ International\ Conference\ on\ Machine\ Learning\ (ICML)}
  \newcommand{\cAISTATS}[1]{\nth{\intcalcSub{#1}{1997}}\ International\ Conference\ on\ Artificial\ Intelligence\ and\ Statistics\ (AISTATS)}
  \newcommand{\cITCS}[1]{\nth{\intcalcSub{#1}{2009}}\ Conference\ on\ Innovations\ in\ Theoretical\ Computer\ Science\ (ITCS)}
  \newcommand{\cPODC}[1]{{#1}\ ACM\ Symposium\ on\ Principles\ of\ Distributed\ Computing\ (PODC)}
  \newcommand{\cAPPROX}[1]{\nth{\intcalcSub{#1}{1997}}\ International\ Workshop\ on\ Approximation\ Algorithms\ for\  Combinatorial\ Optimization\ Problems\ (APPROX)}
  \newcommand{\cSTACS}[1]{\nth{\intcalcSub{#1}{1983}}\ International\ Symposium\ on\ Theoretical\ Aspects\ of\  Computer\ Science\ (STACS)}
  \newcommand{\cMTNS}[1]{\nth{\intcalcSub{#1}{1991}}\ International\ Symposium\ on\ Mathematical\ Theory\ of\  Networks\ and\ Systems\ (MTNS)}
  \newcommand{\cICM}[1]{International\ Congress\ of\ Mathematicians\ {#1} (ICM)}
  \newcommand{\cWWW}[1]{\nth{\intcalcSub{#1}{1991}}\ International\ World\ Wide\ Web\ Conference\ (WWW)}
  \newcommand{\cICLR}[1]{\nth{\intcalcSub{#1}{2012}}\ International\ Conference\ on\ Learning\ Representations\ (ICLR)}
  \newcommand{\cICCV}[1]{\nth{\intcalcSub{#1}{1994}}\ IEEE\ International\ Conference\ on\ Computer\ Vision\ (ICCV)}
  \newcommand{\cICASSP}[1]{#1\ International\ Conference\ on\ Acoustics,\ Speech,\ and\ Signal\ Processing\ (ICASSP)}
  \newcommand{\cUAI}[1]{\nth{\intcalcSub{#1}{1984}}\ Annual\ Conference\ on\ Uncertainty\ in\ Artificial\ Intelligence\ (UAI)}
  \newcommand{\cSOSA}[1]{\nth{\intcalcSub{#1}{2017}}\ Symposium\ on\ Simplicity\ in\ Algorithms\ (SOSA)}
  \newcommand{\cISIT}[1]{#1\ IEEE\ International\ Symposium\ on\ Information\ Theory\ (ISIT)}  
  \newcommand{\cALT}[1]{\nth{\intcalcSub{#1}{1989}}\ International\ Conference\ on\ Algorithmic\ Learning\ Theory\ (ALT)} 
  \newcommand{\cWSDM}[1]{\nth{\intcalcSub{#1}{2007}}\ International\ Conference\ on\ Web\ Search\ and\ Data\ Mining\ (WSDM)} 
  \newcommand{\cICDM}[1]{#1\ IEEE\ International\ Conference\ on\ Data\ Mining\ (ICDM)}  
  \newcommand{\cEC}[1]{\nth{\intcalcSub{#1}{1999}}\ ACM\ Conference\ on\ Economics\ and\ Computation\ (EC)} 
  \newcommand{\cDISC}[1]{\nth{\intcalcSub{#1}{1986}}\ International\ Symposium\ on\ Distributed\ Computing\ (DISC)}
  \newcommand{\cSIGMOD}[1]{{#1} ACM SIGMOD International Conference on Management of Data}
  \newcommand{\cCIKM}[1]{\nth{\intcalcSub{#1}{1991}}\ ACM\ International\ Conference\ on\ Information\ and\ Knowledge\ Management\ (CIKM)}
 \newcommand{\cAAAI}[1]{AAAI\ Conference\ on\ Artificial\ Intelligence\ (AAAI)}
  \newcommand{\cICDE}[1]{\nth{\intcalcSub{#1}{1984}}\ IEEE\ International\ Conference\ on\ Data\ Engineering\ (ICDE)}  
  \newcommand{\cSPAA}[1]{\nth{\intcalcSub{#1}{1988}}\ ACM\ Symposium\ on\ Parallel\ Algorithms\ and\ Architectures\ (SPAA)}  
  \newcommand{\cESA}[1]{\nth{\intcalcSub{#1}{1992}}\ European\ Symposium\ on\ Algorithms\ (ESA)}  

  \newcommand{\pSTOC}[1]{Preliminary\ version\ in\ the\ \cSTOC{#1}}
  \newcommand{\pFSTTCS}[1]{Preliminary\ version\ in\ the\ \cFSTTCS{#1}}
  \newcommand{\pCCC}[1]{Preliminary\ version\ in\ the\ \cCCC{#1}}
  \newcommand{\pFOCS}[1]{Preliminary\ version\ in\ the\ \cFOCS{#1}}
  \newcommand{\pRANDOM}[1]{Preliminary\ version\ in\ the\ \cRANDOM{#1}}
  \newcommand{\pISSAC}[1]{Preliminary\ version\ in\ the\ \cISSAC{#1}}
  \newcommand{\pICALP}[1]{Preliminary\ version\ in\ the\ \cICALP{#1}}
  \newcommand{\pCOLT}[1]{Preliminary\ version\ in\ the\ \cCOLT{#1}}
  \newcommand{\pCSR}[1]{Preliminary\ version\ in\ the\ \cCSR{#1}}
  \newcommand{\pMFCS}[1]{Preliminary\ version\ in\ the\ \cMFCS{#1}}
  \newcommand{\pPODS}[1]{Preliminary\ version\ in\ the\ \cPODS{#1}}
  \newcommand{\pSODA}[1]{Preliminary\ version\ in\ the\ \cSODA{#1}}
  \newcommand{\pNIPS}[1]{Preliminary\ version\ in\ \cNIPS{#1}}
  \newcommand{\pWALCOM}[1]{Preliminary\ version\ in\ the\ \cWALCOM{#1}}
  \newcommand{\pSoCG}[1]{Preliminary\ version\ in\ the\ \cSoCG{#1}}
  \newcommand{\pKDD}[1]{Preliminary\ version\ in\ the\ \cKDD{#1}}
  \newcommand{\pICML}[1]{Preliminary\ version\ in\ the\ \cICML{#1}}
  \newcommand{\pAISTATS}[1]{Preliminary\ version\ in\ the\ \cAISTATS{#1}}
  \newcommand{\pITCS}[1]{Preliminary\ version\ in\ the\ \cITCS{#1}}
  \newcommand{\pPODC}[1]{Preliminary\ version\ in\ the\ \cPODC{#1}}
  \newcommand{\pAPPROX}[1]{Preliminary\ version\ in\ the\ \cAPPROX{#1}}
  \newcommand{\pSTACS}[1]{Preliminary\ version\ in\ the\ \cSTACS{#1}}
  \newcommand{\pMTNS}[1]{Preliminary\ version\ in\ the\ \cMTNS{#1}}
  \newcommand{\pICM}[1]{Preliminary\ version\ in\ the\ \cICM{#1}}
  \newcommand{\pWWW}[1]{Preliminary\ version\ in\ the\ \cWWW{#1}}
  \newcommand{\pICLR}[1]{Preliminary\ version\ in\ the\ \cICLR{#1}}
  \newcommand{\pICCV}[1]{Preliminary\ version\ in\ the\ \cICCV{#1}}
  \newcommand{\pICASSP}[1]{Preliminary\ version\ in\ the\ \cICASSP{#1}}
  \newcommand{\pUAI}[1]{Preliminary\ version\ in\ the\ \cUAI{#1}, #1}
  \newcommand{\pSOSA}[1]{Preliminary\ version\ in\ the\ \cSOSA{#1}, #1}
  \newcommand{\pISIT}[1]{Preliminary\ version\ in\ the\ \cISIT{#1}}
  \newcommand{\pALT}[1]{Preliminary\ version\ in\ the\ \cALT{#1}}
  \newcommand{\pWSDM}[1]{Preliminary\ version\ in\ the\ \cWSDM{#1}}
  \newcommand{\pICDM}[1]{Preliminary\ version\ in\ the\ \cICDM{#1}}
  \newcommand{\pEC}[1]{Preliminary\ version\ in\ the\ \cEC{#1}}
  \newcommand{\pDISC}[1]{Preliminary\ version\ in\ the\ \cDISC{#1}, #1}
  \newcommand{\pSIGMOD}[1]{Preliminary\ version\ in\ the\ \cSIGMOD{#1}, #1}
  \newcommand{\pCIKM}[1]{Preliminary\ version\ in\ the\ \cCIKM{#1}, #1}
  \newcommand{\pAAAI}[1]{Preliminary\ version\ in\ the\ \cAAAI{#1}, #1}
  \newcommand{\pICDE}[1]{Preliminary\ version\ in\ the\ \cICDE{#1}, #1}
  \newcommand{\pSPAA}[1]{Preliminary\ version\ in\ the\ \cSPAA{#1}, #1}
    \newcommand{\pESA}[1]{Preliminary\ version\ in\ the\ \cESA{#1}, #1}

  \newcommand{\STOC}[1]{Proceedings\ of\ the\ \cSTOC{#1}}
  \newcommand{\FSTTCS}[1]{Proceedings\ of\ the\ \cFSTTCS{#1}}
  \newcommand{\CCC}[1]{Proceedings\ of\ the\ \cCCC{#1}}
  \newcommand{\FOCS}[1]{Proceedings\ of\ the\ \cFOCS{#1}}
  \newcommand{\RANDOM}[1]{Proceedings\ of\ the\ \cRANDOM{#1}}
  \newcommand{\ISSAC}[1]{Proceedings\ of\ the\ \cISSAC{#1}}
  \newcommand{\ICALP}[1]{Proceedings\ of\ the\ \cICALP{#1}}
  \newcommand{\COLT}[1]{Proceedings\ of\ the\ \cCOLT{#1}}
  \newcommand{\CSR}[1]{Proceedings\ of\ the\ \cCSR{#1}}
  \newcommand{\MFCS}[1]{Proceedings\ of\ the\ \cMFCS{#1}}
  \newcommand{\PODS}[1]{Proceedings\ of\ the\ \cPODS{#1}}
  \newcommand{\SODA}[1]{Proceedings\ of\ the\ \cSODA{#1}}
  \newcommand{\NIPS}[1]{\cNIPS{#1}}
  \newcommand{\WALCOM}[1]{Proceedings\ of\ the\ \cWALCOM{#1}}
  \newcommand{\SoCG}[1]{Proceedings\ of\ the\ \cSoCG{#1}}
  \newcommand{\KDD}[1]{Proceedings\ of\ the\ \cKDD{#1}}
  \newcommand{\ICML}[1]{Proceedings\ of\ the\ \cICML{#1}}
  \newcommand{\AISTATS}[1]{Proceedings\ of\ the\ \cAISTATS{#1}}
  \newcommand{\ITCS}[1]{Proceedings\ of\ the\ \cITCS{#1}}
  \newcommand{\PODC}[1]{Proceedings\ of\ the\ \cPODC{#1}}
  \newcommand{\APPROX}[1]{Proceedings\ of\ the\ \cAPPROX{#1}}
  \newcommand{\STACS}[1]{Proceedings\ of\ the\ \cSTACS{#1}}
  \newcommand{\MTNS}[1]{Proceedings\ of\ the\ \cMTNS{#1}}
  \newcommand{\ICM}[1]{Proceedings\ of\ the\ \cICM{#1}}
  \newcommand{\WWW}[1]{Proceedings\ of\ the\ \cWWW{#1}}
  \newcommand{\ICLR}[1]{Proceedings\ of\ the\ \cICLR{#1}}
  \newcommand{\ICCV}[1]{Proceedings\ of\ the\ \cICCV{#1}}
  \newcommand{\ICASSP}[1]{Proceedings\ of\ the\ \cICASSP{#1}}
  \newcommand{\UAI}[1]{Proceedings\ of\ the\ \cUAI{#1}}
  \newcommand{\SOSA}[1]{Proceedings\ of\ the\ \cSOSA{#1}}
  \newcommand{\ISIT}[1]{Proceedings\ of\ the\ \cISIT{#1}}
  \newcommand{\ALT}[1]{Proceedings\ of\ the\ \cALT{#1}}
  \newcommand{\WSDM}[1]{Proceedings\ of\ the\ \cWSDM{#1}}
  \newcommand{\ICDM}[1]{Proceedings\ of\ the\ \cICDM{#1}}
  \newcommand{\EC}[1]{Proceedings\ of\ the\ \cEC{#1}}
  \newcommand{\DISC}[1]{Proceedings\ of\ the\ \cDISC{#1}}
  \newcommand{\SIGMOD}[1]{Proceedings\ of\ the\ \cSIGMOD{#1}}
  \newcommand{\CIKM}[1]{Proceedings\ of\ the\ \cCIKM{#1}}
  \newcommand{\AAAI}[1]{Proceedings\ of\ the\ \cAAAI{#1}}
  \newcommand{\ICDE}[1]{Proceedings\ of\ the\ \cICDE{#1}}
  \newcommand{\SPAA}[1]{Proceedings\ of\ the\ \cSPAA{#1}}
  \newcommand{\ESA}[1]{Proceedings\ of\ the\ \cESA{#1}}

  \newcommand{\arXiv}[1]{\href{http://arxiv.org/abs/#1}{arXiv:#1}}
  \newcommand{\farXiv}[1]{Full\ version\ at\ \arXiv{#1}}
  \newcommand{\parXiv}[1]{Preliminary\ version\ at\ \arXiv{#1}}
  \newcommand{\CoRR}{Computing\ Research\ Repository\ (CoRR)}

  \newcommand{\cECCC}[2]{\href{http://eccc.hpi-web.de/report/20#1/#2/}{Electronic\ Colloquium\ on\ Computational\ Complexity\ (ECCC),\ Technical\ Report\ TR#1-#2}}
  \newcommand{\ECCC}{Electronic\ Colloquium\ on\ Computational\ Complexity\ (ECCC)}
  \newcommand{\fECCC}[2]{Full\ version\ in\ the\ \cECCC{#1}{#2}}
  \newcommand{\pECCC}[2]{Preliminary\ version\ in\ the\ \cECCC{#1}{#2}}
\bibliographystyle{plain}
\bibliography{ref}
\vspace{2em}
\appendix
 The appendices contain all technical details. Section \ref{appendix-gaussian} gives the proof for Gaussian distributions. Section \ref{appendix-derivation-matrix-polynomial} derives the matrix polynomials. Sections \ref{appendix-bounds-leap-frog} and \ref{appendix-bounds-baoab} provide bounds for the propagators of the leap-frog and BAOAB integrators in $\ell^2$ and $\ell^\infty$. Discretization error analysis is laid out in Section \ref{appendix-discretization-error}. The sampling bias bounds are concluded in Section \ref{appendix-sampling-bias-bounds}.

\section{Proof for Gaussian Distributions}
\label{appendix-gaussian}
\begin{proof}[Proof of Example \ref{thm-Gaussian-hmc-main-text}]
{We assume $m$ is chosen so that $mh$ avoids the discrete set of resonant values at
which some eigenmode fails to contract. Away from these, the leap-frog iteration
contracts in every mode and the chain converges to $\pi_h$. As these values are
exceptional, we omit the condition from the example.}
According to \cite{apers2022hamiltonianmonte}, the invariant distribution for a target distribution $\pi=\mathcal{N}(\mu, \Sigma)$ is given by $\pi_h = \mathcal{N}(\mu, \Sigma_{\infty})$, where $  \Sigma_\infty^{-1} = \Sigma^{-1}(I- \frac{h^2}4 \Sigma^{-1})$.  
Consider the coupling between $\pi_h$ and $\pi$ as $X = \Sigma^{1/2}Z + \mu$ and $Y  = \Sigma_\infty^{1/2}Z+\mu$ with $Z \sim \mathcal{N}(0,I)$. Then,
    \begin{equation}
    W_{2,\ell^\infty}^2(\pi_h, \pi) \leq \mathbb{E}[|X-Y|_{\ell^\infty}^2] = \mathbb{E}[|(\Sigma^{1/2} - \Sigma_\infty^{1/2})Z|_{\ell^\infty}^2].
    \end{equation}
    Entries of $(\Sigma^{1/2}-\Sigma_\infty^{1/2})Z$ are $|\Sigma^{1/2}-\Sigma_\infty^{1/2}|_{\ell^2}^2$-subgaussian. And
    \begin{equation*}
    |\Sigma^{1/2}-\Sigma_\infty^{1/2}|_{\ell^2} = \max_{1\leq i \leq d}\left|\frac{1}{\omega_i} - \frac{1}{\hat \omega_i}\right| = O(\sqrt{\beta}h^2)\,.
    \end{equation*}
    By Lemma \ref{lem-Gaussian-linf}, we have $W_{2,\ell^\infty}(\pi_h, \pi) = O(\sqrt{\beta}h^2 \sqrt{\log(2d)})=O(h\sqrt{\log(2d)})$ as $h < 1/\sqrt{\beta}$.
\end{proof}
\section{Derivation of Matrix Polynomials}
\label{appendix-derivation-matrix-polynomial}
Throughout the appendices, when a multivariate damped Chebyshev polynomial is
evaluated at the canonical argument determined by its index, with $H_i$ as
specified in the relevant statement, we suppress the argument and write
\begin{equation*}
    \tilde T_k^\eta \coloneqq \tilde T_k^\eta\big(\{I - \tfrac{h^2}{2}H_i\}_{i=0}^{k-1}\big)\,,
    \qquad
    \tilde U_{k-1}^\eta \coloneqq \tilde U_{k-1}^\eta\big(\{I - \tfrac{h^2}{2}H_i\}_{i=1}^{k-1}\big)\,.
\end{equation*}
Arguments are written explicitly whenever they differ from this convention, as in
the shifted windows of the outer-loop blocks and in the scalar reductions below.
\begin{proof}[Proof of Proposition \ref{prop-matrix-polynomial}] 
\noindent\textbf{Part (a).} We derive the correspondence with multivariate damped matrix polynomials from three-term recurrence relations, similar to Proposition \ref{prop-hmc-Gaussian-propagators} but with varying Hessians.
With the same noise applied to both trajectories, the BAOAB
iteration~\eqref{eqn-baoab} for the differences
$\Delta q_k \coloneqq x_k - y_k$ and $\Delta p_k \coloneqq p_k - p'_k$ is
\begin{equation}
\label{eqn-inner-loop-difference}
\begin{aligned}
    \Delta q_{k+1} &= \Delta q_k + \tfrac{h}{2}(1+\eta)\Delta p_k - \tfrac{h^2}{4}(1+\eta)H_k \Delta q_k\,,\\
    \Delta p_{k+1} &= \eta\big(\Delta p_k - \tfrac{h}{2}H_k\Delta q_k\big) - \tfrac{h}{2}H_{k+1} \Delta q_{k+1}\,.
\end{aligned}
\end{equation}
The leap-frog iteration~\eqref{eqn-leap-frog} is the special case $\eta = 1$, so
it suffices to treat~\eqref{eqn-inner-loop-difference} for general $\eta$. We write down \eqref{eqn-inner-loop-difference} also for $(\Delta q_k, \Delta p_k)$ and eliminate the momentum $\Delta p_k$ in two consecutive steps by matching its coefficients. We then obtain
\begin{equation}
\label{eqn-three-term-recurrence-baoab-general}
    \Delta q_{k+1} = (1+\eta)(I - \tfrac{h^2}{2}H_k)\Delta q_k - \eta \Delta q_{k-1}\,,
    \qquad k \geq 1\,.
\end{equation}
At $\eta = 1$, the
recurrence recovers the relation for the leap-frog scheme,
$\Delta q_{k+1} = (2I - h^2 H_k)\Delta q_k - \Delta q_{k-1}$.

We prove~\eqref{eqn-prop-matrix-polynomial-single-scheme} by induction. It holds
for $k=0,1$ by the initial conditions of $\tilde T_k^\eta$ and $\tilde U_k^\eta$
in Definition~\ref{def-multivariate-damped-Chebyshev}. For $k \geq 1$, the
induction hypothesis and~\eqref{eqn-three-term-recurrence-baoab-general} give
\begin{equation}
\begin{aligned}
    \Delta q_{k+1} &= \big((1+\eta)(I -\tfrac{h^2}{2}H_k)\tilde T_k^\eta - \eta \tilde T_{k-1}^\eta\big)\Delta q_0\\
    & \quad + \big((1+\eta)(I - \tfrac{h^2}{2}H_k)h\tilde U_{k-1}^\eta - \eta h\tilde U_{k-2}^\eta\big)\Delta p_0\,,
\end{aligned}
\end{equation}
and the three-term recurrences for $\tilde T_k^\eta$ and $\tilde U_k^\eta$
in~\eqref{eqn-scalar-multivar-damped-chebyshev-first-kind}
and~\eqref{eqn-scalar-multivar-damped-chebyshev-second-kind}
identify the right-hand side as~\eqref{eqn-prop-matrix-polynomial-single-scheme},
completing the induction.

    \noindent\textbf{Part (b).} We iterate the single-scheme identity at $\eta = 1$ across the HMC outer loops.
For $N \geq 1$, since both trajectories share the refreshment $\xi_{N-1}$, the
identity gives
\begin{equation}
\begin{aligned}
    \Delta q_{Nm+k}
    &= \tilde T_m^1\big(\{I -\tfrac{h^2}{2}H_i\}_{i=(N-1)m+k}^{Nm+k-1}\big)\,\Delta q_{(N-1)m + k}\\
    &= \prod_{j \in\{k + im\}_{i=0}^{N-1}}^{\longleftarrow}
        \tilde T_m^1\big(\{I - \tfrac{h^2}{2}H_l\}_{l=j}^{j+m-1}\big)\,\Delta q_k\,,
\end{aligned}
\end{equation}
and the conclusion follows by substituting the expression for $\Delta q_k$
from~\eqref{eqn-prop-matrix-polynomial-single-scheme}.
\end{proof}
A property of these matrix polynomials, which we use when bounding the
propagators, is that their coefficients have alternating signs depending only on
the total order of each term.
\begin{lemma}
\label{lem-matrix-coefficients-sign}
For any $\eta \in [0,1]$, the multivariate damped Chebyshev matrix polynomials
$\tilde T_k^\eta$ and $\tilde U_{k}^\eta$ expand as
\begin{equation}
\label{eqn-first-kind-expansion}
    \tilde T_k^\eta = t^{(k)}_0 I + \sum_{j=1}^{k}(-1)^j h^{2j}\sum_{k-1\geq i_1 > \dots > i_j \geq 0} \tilde t_{i_1,\dots,i_j}^{(k)} H_{i_1}\cdots H_{i_j}\,,
\end{equation}
\begin{equation}
\label{eqn-second-kind-expansion}
    \tilde U_{k}^\eta = u_0^{(k)}I + \sum_{j=1}^{k} (-1)^j h^{2j}\sum_{k\geq i_1>\dots>i_j\geq1}\tilde u_{i_1,\dots,i_j}^{(k)}H_{i_1}\cdots H_{i_j}\,,
\end{equation}
with $\tilde U_{-1}^\eta \equiv 0$. For $k \geq 0$,
\begin{equation}
    t_0^{(k)} = 1\,, \qquad u_0^{(k)} = \tfrac12(1+\eta)\frac{\eta^{k+1} - 1}{\eta - 1}\,,
\end{equation}
and, for $k \geq i_1+1$,
\begin{equation}
    \tilde t_{i_1,\dots,i_j}^{(k)} = \frac{\eta^{k-i_1}-1}{\eta-1}\,\tilde t_{i_1,\dots,i_{j}}^{(i_1+1)}\,, \qquad
    \tilde u_{i_1,\dots i_j}^{(k-1)} = \frac{\eta^{k-i_1}-1}{\eta-1}\,\tilde u_{i_1,\dots,i_j}^{(i_1)}\,,
\end{equation}
where for $\eta = 1$ the quotients are read as the limit
$\lim_{\eta \to 1}(\eta^k - 1)/(\eta-1)= k$. All of these coefficients are
nonnegative:
\begin{equation}
    t_0^{(k)},\ u_0^{(k)},\ \tilde t_{i_1,\dots,i_j}^{(k)},\ \tilde u_{i_1,\dots,i_j}^{(k)} \geq 0\,.
\end{equation}
\end{lemma}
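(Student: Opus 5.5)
The plan is a direct induction on $k$. I substitute $M_i = I - \tfrac{h^2}{2}H_i$ into the three-term recurrences of Definition~\ref{def-multivariate-damped-Chebyshev} and track, for each ordered monomial $H_{i_1}\cdots H_{i_j}$, the coefficient it receives. For the first kind, the recurrence becomes
\begin{equation*}
\tilde T_{k+1}^\eta = (1+\eta)\tilde T_k^\eta - \eta\tilde T_{k-1}^\eta - (1+\eta)\tfrac{h^2}{2}H_k\tilde T_k^\eta\,.
\end{equation*}
The same identity holds for $\tilde U^\eta$, with the new Hessian carrying index $k+1$ instead of $k$.

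Since new arguments enter on the left, induction shows that every monomial has strictly decreasing indices $i_1>\dots>i_j$. Moreover, the extra factor $-\tfrac{h^2}{2}H_k$ raises the degree $j$ by one and flips the sign. This justifies writing the expansions \eqref{eqn-first-kind-expansion}--\eqref{eqn-second-kind-expansion} with a factor $(-1)^jh^{2j}$ and real coefficients. It remains to compute these coefficients and show they are nonnegative.

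\textbf{Constant terms.} The constant term obeys the scalar recurrence $x_{n+1}=(1+\eta)x_n-\eta x_{n-1}$. Writing $d_n \coloneqq x_{n+1}-x_n$, this gives $d_n=\eta d_{n-1}$.
\begin{itemize}
\item For $\tilde T$: the initial values are $t_0^{(0)}=1$ and $t_0^{(1)}=\tfrac{1+\eta}{2}+\tfrac{1-\eta}{2}=1$. So $d_0=0$, every $d_n$ vanishes, and $t_0^{(k)}\equiv1$.
\item For $\tilde U$: the initial values are $u_0^{(-1)}=0$ and $u_0^{(0)}=\tfrac{1+\eta}{2}$. Then $d_n=\tfrac{1+\eta}{2}\eta^{n+1}$, so $u_0^{(k)}=\tfrac{1+\eta}{2}(1+\eta+\dots+\eta^k)=\tfrac12(1+\eta)\frac{\eta^{k+1}-1}{\eta-1}$.
\end{itemize}

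\textbf{Non-constant monomials.} Fix a monomial with top index $i_1$. There are two cases.
\begin{itemize}
\item \emph{The monomial first appears.} It is created only by the term $-(1+\eta)\tfrac{h^2}{2}H_{i_1}\tilde T^\eta_{i_1}$. This yields $\tilde t^{(i_1+1)}_{i_1,i_2,\dots,i_j}=\tfrac{1+\eta}{2}\tilde t^{(i_1)}_{i_2,\dots,i_j}$, where for $j=1$ the right-hand coefficient is $t_0^{(i_1)}=1$. The base case $\tilde T_1^\eta = I-\tfrac{1+\eta}{4}h^2H_0$ fits this formula. The coefficient is nonnegative by the induction hypothesis.
\item \emph{Later steps $n\ge i_1+1$.} The term $H_n\tilde T_n^\eta$ only produces monomials with top index $n>i_1$. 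Hence the coefficient $c^{(n)}$ of our monomial satisfies the same scalar recurrence, with $c^{(i_1)}=0$ since the monomial is absent from $\tilde T^\eta_{i_1}$. The difference argument then gives $c^{(n)}=\frac{\eta^{n-i_1}-1}{\eta-1}c^{(i_1+1)}$, which is the claimed formula.
\end{itemize}
Because $\eta\in[0,1]$, the factor $\frac{\eta^{n-i_1}-1}{\eta-1}=1+\eta+\dots+\eta^{n-i_1-1}$ is nonnegative. At $\eta=1$ it equals $n-i_1$, which matches the stated limit.

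For $\tilde U$ the argument is identical, except that $\tilde U_{k}^\eta$ involves $H_1,\dots,H_k$. The monomial with top index $i_1$ is created in $\tilde U_{i_1}^\eta$ with coefficient $\tfrac{1+\eta}{2}\tilde u^{(i_1-1)}_{i_2,\dots}$, or $\tfrac{1+\eta}{2}u_0^{(i_1-1)}$ when $j=1$. It is absent in $\tilde U_{i_1-1}^\eta$, which yields $\tilde u^{(k-1)}=\frac{\eta^{k-i_1}-1}{\eta-1}\tilde u^{(i_1)}$.

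The main obstacle is the index bookkeeping rather than any estimate. Three points need care:
\begin{itemize}
\item the offset between the $\tilde T$ and $\tilde U$ argument windows;
\item confirming that $\tilde T_1^\eta$, which is not produced by the recurrence, fits the general pattern;
\item checking that later multiplications by $H_n$ with $n>i_1$ can never feed back into a monomial with top index $i_1$.
\end{itemize}
The third point is what makes each coefficient obey a homogeneous scalar recurrence once it has been created. I would state it explicitly as part of the induction hypothesis.
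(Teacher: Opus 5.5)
Your argument is the same as the paper's: constant terms from the scalar recurrence, a first-emergence coefficient equal to $\frac{1+\eta}{2}$ times a lower-degree coefficient, and then a homogeneous three-term recurrence started from $c^{(i_1)}=0$ (resp.\ $c^{(i_1-1)}=0$ for $\tilde U$). Your geometric-sum form handles $\eta=1$ cleanly.

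One assertion is wrong, however. $\tilde T_1^\eta$ does \emph{not} fit your first-emergence formula. For $i_1=0$, $j=1$ the formula predicts $\tilde t_0^{(1)}=\frac{1+\eta}{2}t_0^{(0)}=\frac{1+\eta}{2}$, but $\tilde T_1^\eta=I-\frac{1+\eta}{4}h^2H_0$ gives $\tilde t_0^{(1)}=\frac{1+\eta}{4}$. The monomial $H_0$ comes from the initial condition. It does not come from a term $-(1+\eta)\frac{h^2}{2}H_0\tilde T_0^\eta$, which the recurrence never produces, so your claim that the monomial is ``created only by'' that term also fails at $i_1=0$.

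This is why the paper states the first-emergence identity only for $i_1\ge 1$ and records $\tilde t_0^{(1)}=\frac{1+\eta}{4}\ge 0$ separately. The lemma needs only this nonnegativity, and its closed form is expressed relative to $\tilde t^{(i_1+1)}$, so your proof goes through once you replace the claimed equality by this separate base case. Do not carry the equality forward, though. At $\eta=1$ it would give $t_1^{(1)}=1$ instead of $\frac12$, and $t_1^{(2)}=3$ instead of $2$. In general only $\tilde t^{(i_1+1)}_{i_1,\dots,i_j}\le\frac{1+\eta}{2}\,\tilde t^{(i_1)}_{i_2,\dots,i_j}$ holds, which is exactly how the paper uses it later.
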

\begin{proof}[Proof of Lemma \ref{lem-matrix-coefficients-sign}]
The definitions of $\tilde T_k^\eta$ and $\tilde U_k^\eta$
in~\eqref{eqn-scalar-multivar-damped-chebyshev-first-kind}
and~\eqref{eqn-scalar-multivar-damped-chebyshev-second-kind} give the matrix
recurrences
\begin{align}
\label{eqn-matrix-polynomial-first-kind-recurrence}
    \tilde T_{k+1}^\eta &= (1+\eta)(I - \tfrac{h^2}{2}H_{k})\tilde T_k^\eta - \eta \tilde T_{k-1}^\eta\,,\\
\label{eqn-matrix-polynomial-second-kind-recurrence}
    \tilde U_{k+1}^\eta &= (1+\eta)(I - \tfrac{h^2}{2}H_{k+1})\tilde U_k^\eta - \eta \tilde U_{k-1}^\eta\,.
\end{align}
We establish the expansions~\eqref{eqn-first-kind-expansion}
and~\eqref{eqn-second-kind-expansion} by induction, the cases $k=0,1$ following
from the initial conditions.
 
\noindent\textbf{Constant terms.} The constant coefficients satisfy
$t_0^{(k+1)} = (1+\eta) t_0^{(k)} - \eta t_0^{(k-1)}$ and
$u_0^{(k+1)} = (1+\eta) u_0^{(k)} - \eta u_0^{(k-1)}$, with $t_0^{(0)} = t_0^{(1)} = 1$,
$u_{0}^{(-1)} = 0$, and $u_0^{(0)} = (1+\eta)/2$. Solving these gives $t_0^{(k)} = 1$
and $u_0^{(k)} = \tfrac12 (1+\eta)\frac{\eta^{k+1}-1}{\eta-1}$, the latter read as
$k+1$ when $\eta = 1$.
 
\noindent\textbf{Higher-order terms.} Assume~\eqref{eqn-first-kind-expansion}
and~\eqref{eqn-second-kind-expansion} hold up to index $k$. Passing to $k+1$
via~\eqref{eqn-matrix-polynomial-first-kind-recurrence}
and~\eqref{eqn-matrix-polynomial-second-kind-recurrence}, the only new matrix products
arise from $-(1+\eta)\tfrac{h^2}{2}H_k\tilde T_k^\eta$ and
$-(1+\eta)\tfrac{h^2}{2}H_{k+1}\tilde U_k^\eta$, which prepend $H_k$ (resp.\
$H_{k+1}$) to the existing terms. Applied to the order-$j$ terms of
$\tilde T_k^\eta$ and $\tilde U_k^\eta$, this produces the order-$(j+1)$
contributions
\begin{equation}
    (-1)^{j+1}h^{2(j+1)}\tfrac{1+\eta}{2}\!\!\sum_{k-1\geq i_1>\dots>i_j \geq 0}\!\! \tilde t_{i_1,\dots,i_j}^{(k)} H_k H_{i_1}\cdots H_{i_j}\,,
\end{equation}
\begin{equation}
    (-1)^{j+1}h^{2(j+1)}\tfrac{1+\eta}{2}\!\!\sum_{k\geq i_1>\dots>i_j\geq1}\!\!\tilde u_{i_1,\dots,i_j}^{(k)}H_{k+1}H_{i_1}\cdots H_{i_j}\,,
\end{equation}
together with the new order-one terms $-(1+\eta)\tfrac{h^2}{2}t_0^{(k)}H_k$ and
$-(1+\eta)\tfrac{h^2}{2}u_0^{(k)}H_{k+1}$. In each new product the prepended
factor carries a strictly larger index than all of $i_1 > \dots > i_j$, so the
matrices remain in descending index order, and~\eqref{eqn-first-kind-expansion}
and~\eqref{eqn-second-kind-expansion} hold at $k+1$.
 
\noindent\textbf{Nonnegativity.} We match matrix products across the
recurrences and induct on $j$, the number of factors. The case $j=0$ was settled
above. A product $H_{i_1}\cdots H_{i_j}$ first appears in $\tilde T_{i_1+1}^\eta$
and $\tilde U_{i_1}^\eta$, with coefficients, {for $i_1\geq 1$,}
\begin{equation}
\label{eqn-coeff-first-emerge}
    \tilde t_{i_1,\dots,i_j}^{(i_1+1)} = \tfrac{1+\eta}{2}\,\tilde t_{i_2,\dots,i_j}^{(i_1)}\,, \qquad
    \tilde u_{i_1,\dots,i_j}^{(i_1)} = \tfrac{1+\eta}{2}\,\tilde u_{i_2,\dots,i_j}^{(i_1-1)}\,,
\end{equation}
where for $j=1$ the right-hand factors are read as the constant coefficients
$t_0^{(i_1)}$ and $u_0^{(i_1-1)}$. {Besides the induction, we also note that the first-kind base case is $\tilde t_0^{(1)}=\frac{1+\eta}{4} \geq 0$, the coefficient of $-h^2H_0$ in $\tilde T_1^\eta$.} By the induction hypothesis on $j-1$, these
initial coefficients are nonnegative.  Tracking the coefficient of this same
product $H_{i_1}\cdots H_{i_j}$ in the higher-index polynomials, it satisfies
\begin{equation}
    \tilde t_{i_1,\dots,i_j}^{(i_1+2)} = (1+\eta)\tilde t_{i_1,\dots,i_j}^{(i_1+1)}\,, \qquad
    \tilde u_{i_1,\dots,i_j}^{(i_1+1)} = (1+\eta)\tilde u_{i_1,\dots,i_j}^{(i_1)}\,,
\end{equation}
and, for $k \geq i_1+2$,
\begin{equation}
    \tilde t_{i_1,\dots,i_j}^{(k+1)} = (1+\eta)\tilde t_{i_1,\dots,i_j}^{(k)} - \eta\tilde t_{i_1,\dots,i_j}^{(k-1)}\,, \qquad
    \tilde u_{i_1,\dots,i_j}^{(k)} = (1+\eta)\tilde u_{i_1,\dots,i_j}^{(k-1)} - \eta\tilde u_{i_1,\dots,i_j}^{(k-2)}\,.
\end{equation}
Solving these recurrences gives, for $k \geq i_1+1$,
\begin{equation}
    \tilde t_{i_1,\dots,i_j}^{(k)} = \frac{\eta^{k-i_1}-1}{\eta-1}\,\tilde t_{i_1,\dots,i_j}^{(i_1+1)}\,, \qquad
    \tilde u_{i_1,\dots,i_j}^{(k-1)} = \frac{\eta^{k-i_1}-1}{\eta-1}\,\tilde u_{i_1,\dots,i_j}^{(i_1)}\,,
\end{equation}
which are nonnegative, with the $\eta=1$ values read as limits as before.
\end{proof}
To bound the propagators we need the magnitude of these coefficients. Since all
coefficients of a given order share one sign, it suffices to track their
aggregate sums
\begin{equation}
    t_j^{(k)} = \sum_{k-1\geq i_1>\dots>i_j \geq 0} \tilde t_{i_1,\dots,i_j}^{(k)}\,, \qquad
    u_j^{(k)}=\sum_{k \geq i_1>\dots>i_j \geq 1}\tilde u_{i_1,\dots,i_j}^{(k)}\,.
\end{equation}
We characterize these aggregate coefficients in two regimes, exactly when
$\eta = 1$, and by upper bounds when $\eta \in [0,1)$.
\begin{proposition}
\label{prop-matrix-polynomial-agg-coeff}
When $\eta = 1$,
\begin{equation}
    t_j^{(k)} = k\,\frac{(k+j-1)!}{(k-j)!(2j)!}\,, \quad k\geq 1,\ 0\leq j \leq k\,,
\end{equation}
\begin{equation}
    u_j^{(k)} = \frac{(k+j+1)!}{(k-j)!(2j+1)!}\,, \quad k \geq 0,\ 0 \leq j \leq k\,,
\end{equation}
together with $t_0^{(0)} = 1$.
\end{proposition}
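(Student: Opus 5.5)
The aggregate coefficients are obtained by a specialization trick. Set every $H_i$ equal to one scalar matrix $xI$. Then each ordered product $H_{i_1}\cdots H_{i_j}$ collapses to $x^j I$. By the expansions \eqref{eqn-first-kind-expansion} and \eqref{eqn-second-kind-expansion} in Lemma~\ref{lem-matrix-coefficients-sign}, this gives
\[
\tilde T_k^1(\{(1-\tfrac{h^2x}{2})I\}) = \sum_{j=0}^k (-1)^j t_j^{(k)} (h^2x)^j\, I, \qquad \tilde U_k^1(\{(1-\tfrac{h^2x}{2})I\}) = \sum_{j=0}^k (-1)^j u_j^{(k)} (h^2x)^j\, I.
\]
So $t_j^{(k)}$ and $u_j^{(k)}$ are, up to the sign $(-1)^j$, the Taylor coefficients in $y := h^2 x$ of scalar polynomials evaluated at $1-y/2$. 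At $\eta=1$ these scalar polynomials are exactly $T_k$ and $U_k$: with constant scalar arguments, the recurrences \eqref{eqn-scalar-multivar-damped-chebyshev-first-kind}--\eqref{eqn-scalar-multivar-damped-chebyshev-second-kind} and the initial data $\tilde T_1^1(M)=M$, $\tilde U_0^1=I$ reduce to the classical ones. This is the remark after Definition~\ref{def-multivariate-damped-Chebyshev}. The claim therefore reduces to the purely scalar identities
\[
T_k(1-\tfrac y2) = \sum_{j=0}^k (-1)^j\, \frac{k\,(k+j-1)!}{(k-j)!\,(2j)!}\, y^j, \qquad U_k(1-\tfrac y2) = \sum_{j=0}^k (-1)^j\, \frac{(k+j+1)!}{(k-j)!\,(2j+1)!}\, y^j .
\]

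These are the classical expansions of Chebyshev polynomials about $x=1$. Set $x = 1 - y/2 = 1 - 2\sin^2(\theta/2)$, so that $y = 4\sin^2(\theta/2)$. One route is the known hypergeometric forms $T_k(x) = {}_2F_1(-k,k;\tfrac12;\tfrac{1-x}{2})$ and $U_k(x) = (k+1)\,{}_2F_1(-k,k+2;\tfrac32;\tfrac{1-x}{2})$, with $\tfrac{1-x}{2} = y/4$. Expanding the series, the $j$-th coefficient for $T_k$ is
\[
\frac{(-k)_j (k)_j}{(1/2)_j\, j!}\, 4^{-j}.
\]
Using $(1/2)_j\, j!\, 4^j = (2j)!$, $(-k)_j = (-1)^j k!/(k-j)!$ and $(k)_j = (k+j-1)!/(k-1)!$, this equals $(-1)^j\, k\,(k+j-1)!/\big((k-j)!(2j)!\big)$, as required. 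For $U_k$, use $(3/2)_j\, j!\, 4^j = (2j+1)!$ and $(k+2)_j = (k+j+1)!/(k+1)!$; the prefactor $(k+1)$ then cancels to give $(k+j+1)!/\big((k-j)!(2j+1)!\big)$.

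To keep the argument self-contained, I would instead verify the closed forms directly by induction on $k$. Put $c_j^{(k)} = k(k+j-1)!/((k-j)!(2j)!)$. Substituting the expansions into $T_{k+1} = (2-y)T_k - T_{k-1}$ and matching the coefficient of $(-y)^j$ gives the recurrence $c_j^{(k+1)} = 2c_j^{(k)} + c_{j-1}^{(k)} - c_j^{(k-1)}$. This is the same recurrence as the one for the aggregate sums implied by Lemma~\ref{lem-matrix-coefficients-sign}. The same recurrence holds for the $u$'s. The base cases are:
\begin{itemize}
\item $t_0^{(0)}=1$ and $t^{(1)}$, since $T_1(1-y/2) = 1 - y/2$ gives $t_1^{(1)} = 1/2 = 1\cdot 1!/(0!\,2!)$;
\item $u^{(0)} = 1$ and $u^{(1)}$, since $U_1 = 2-y$ gives $u_0^{(1)}=2$ and $u_1^{(1)} = 1 = 3!/(0!\,3!)$.
\end{itemize}

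The main obstacle is the inductive verification of the recurrence for the closed forms. Dividing through by the common factorial $(k+j-1)!/((k-j+1)!(2j)!)$ reduces it to a polynomial identity in $k$ and $j$, which I would expand and check term by term. Boundary indices need separate care: for $j=k+1$ only the $c_{j-1}^{(k)}$ term survives, and for $j=0$ the constant terms equal $1$ and $k+1$. The hypergeometric route avoids this algebra, so I would present it as the main argument and keep the induction as a check.
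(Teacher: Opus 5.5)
Your proposal is correct and follows the paper's route: specialize the Hessians to scalar multiples of the identity, reduce at $\eta=1$ to the classical $T_k$ and $U_k$ evaluated at $1-y/2$, and read off the coefficients from the hypergeometric representations. The only differences are that the paper sets $H_i = I$ and expands in $h^2$ where you set $H_i = xI$ and expand in $y = h^2x$, and that your explicit Pochhammer computations (and the optional inductive check) fill in the coefficient matching the paper leaves to the reference.
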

 
\begin{proof}[Proof of Proposition \ref{prop-matrix-polynomial-agg-coeff}]
Taking $H_i \coloneqq I$ in Lemma~\ref{lem-matrix-coefficients-sign} collapses the
expansions onto their aggregate coefficients,
\begin{equation}
\begin{aligned}
    \tilde T_k^\eta(\{(1- \tfrac{h^2}{2})I\}_{i=0}^{k-1}) &= t_0^{(k)} I + \sum_{j=1}^k (-1)^j h^{2j}t_j^{(k)}I\,, \\
    \tilde U_k^\eta(\{(1 - \tfrac{h^2}{2})I\}_{i=1}^{k}) &= u_0^{(k)}I + \sum_{j=1}^k (-1)^jh^{2j}u_j^{(k)}I\,.
\end{aligned}
\end{equation}
At $\eta = 1$ with equal arguments, the multivariate polynomials reduce to the
classical Chebyshev polynomials $T_k$ and $U_k$ (as noted after
Definition~\ref{def-multivariate-damped-Chebyshev}), so
\begin{equation}
\label{eqn-reduced-matrix-polynomial-coefficients}
    T_k(1 - \tfrac{h^2}{2})\,I = \sum_{j=0}^k(-1)^j h^{2j}t_j^{(k)}I\,, \qquad
    U_k(1-\tfrac{h^2}{2})\,I = \sum_{j=0}^k(-1)^jh^{2j}u_j^{(k)}I\,.
\end{equation}
The hypergeometric forms of $T_k$ and $U_k$~\cite{abramowitz1964handbook} give
explicit expressions for $T_k(1-x)$ and $U_k(1-x)$ at $x = h^2/2$, and matching
coefficients yields the stated formulas.
\end{proof}
For $\eta \in [0,1)$, explicit formulas are no longer available, but the
following upper bounds suffice to control the propagator norms.
\begin{proposition}
\label{prop-matrix-polynomial-agg-coeff-eta}
For $\eta \in [0, 1)$ and the same ranges of $j, k$ as in
Proposition~\ref{prop-matrix-polynomial-agg-coeff},
\begin{equation}
\label{eqn-prop-matrix-polynomial-agg-coeff-eta}
    t_j^{(k)} \leq \left(\frac{1}{1-\eta}\right)^j \binom{k}{j}\,, \quad u_j^{(k)} \leq \left(\frac{1}{1-\eta}\right)^{j+1}\binom{k}{j}\,.
\end{equation}
\end{proposition}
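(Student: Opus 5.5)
The plan is to reduce to scalar recurrences for the aggregate coefficients, pass to first differences in $k$ to get rid of the subtraction in the three-term recurrence, and then induct on the order $j$ using the hockey-stick identity.

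\textbf{Step 1: recurrences for the aggregates.} As in the proof of Proposition~\ref{prop-matrix-polynomial-agg-coeff}, I set $H_i \coloneqq I$ in Lemma~\ref{lem-matrix-coefficients-sign} and write $s = h^2$. The scalar polynomials $\sum_j (-s)^j t_j^{(k)}$ then satisfy
\[
P_{k+1} = (1+\eta)\big(1-\tfrac{s}{2}\big)P_k - \eta P_{k-1}.
\]
Matching coefficients of $(-s)^j$ gives
\[
t_j^{(k+1)} = (1+\eta)\,t_j^{(k)} - \eta\, t_j^{(k-1)} + \tfrac{1+\eta}{2}\, t_{j-1}^{(k)}.
\]
The same holds for $u_j^{(k)}$. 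I adopt the conventions $t_j^{(k)} = u_j^{(k)} = 0$ whenever $j > k$, and $t_{-1} = u_{-1} = 0$. The base values come from Definition~\ref{def-multivariate-damped-Chebyshev}: $t_0^{(k)} = 1$, $t_1^{(1)} = \tfrac{1+\eta}{4}$, and $u_0^{(k)} = \tfrac{1+\eta}{2}\,\tfrac{1-\eta^{k+1}}{1-\eta}$.

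\textbf{Step 2: pass to first differences.} The subtraction $-\eta t_j^{(k-1)}$ prevents a direct monotone induction. I therefore define
\[
d_j^{(k)} \coloneqq t_j^{(k)} - t_j^{(k-1)},
\]
and analogously for $u$. The recurrence then becomes first order:
\[
d_j^{(k+1)} = \eta\, d_j^{(k)} + \tfrac{1+\eta}{2}\, t_{j-1}^{(k)}.
\]
Unrolling it gives
\[
d_j^{(k+1)} = \eta^{k-k_0} d_j^{(k_0+1)} + \tfrac{1+\eta}{2}\sum_{l=k_0+1}^{k}\eta^{k-l}\, t_{j-1}^{(l)},
\]
where $k_0$ is the first index at which order $j$ can appear. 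The boundary term is either zero (for $j \ge 2$, since $t_j^{(j-1)} = 0$) or explicitly small (for $j = 1$: $d_1^{(1)} = \tfrac{1+\eta}{4} \le \tfrac{1}{1-\eta}$). It can therefore be absorbed into the same bound by starting the sum one index earlier.

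\textbf{Step 3: induction on $j$.} Write $c = \tfrac{1}{1-\eta}$ and assume $t_{j-1}^{(l)} \le c^{j-1}\binom{l}{j-1}$ for all $l$. Since $\binom{l}{j-1}$ is nondecreasing in $l$ and $\sum_{i\ge0}\eta^i = c$,
\[
d_j^{(m+1)} \le \tfrac{1+\eta}{2}\, c \cdot c^{j-1}\binom{m}{j-1} \le c^j \binom{m}{j-1},
\]
using $\tfrac{1+\eta}{2} \le 1$. Summing the differences and applying the hockey-stick identity gives
\[
t_j^{(k)} = \sum_{m=0}^{k-1} d_j^{(m+1)} \le c^j \sum_{m=0}^{k-1}\binom{m}{j-1} = c^j\binom{k}{j}.
\]
The base case $j = 0$ is immediate since $t_0^{(k)} = 1$.

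For $u$ the argument is identical, with one extra factor of $c$. The base case is $u_0^{(k)} \le \tfrac{1+\eta}{2}\,c \le c$, and the induction hypothesis $u_{j-1}^{(m)} \le c^{j}\binom{m}{j-1}$ gives $d_j^{(m+1)} \le c^{j+1}\binom{m}{j-1}$. Summing again yields $u_j^{(k)} \le c^{j+1}\binom{k}{j}$.

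\textbf{Main obstacle.} The only delicate point is bookkeeping at the boundary, namely the first index at which each order appears. Here one must check the initial conditions: the $\tfrac{1+\eta}{2}$ factors from Definition~\ref{def-multivariate-damped-Chebyshev}, and $\tilde U_{-1}^\eta = 0$. The goal is to confirm that the unrolled sums start at the right place, so that the hockey-stick identity yields exactly $\binom{k}{j}$ rather than a shifted binomial.

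No sign information on the differences $d$ is needed, since only upper bounds on them are summed, and the nonnegativity of the coefficients from Lemma~\ref{lem-matrix-coefficients-sign} is not used.
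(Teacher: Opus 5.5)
Your argument is correct and is essentially the paper's proof. Summing your unrolled first-difference recurrence over $m$ gives exactly the paper's inequality $t_j^{(k)} \le \frac{1+\eta}{2}\sum_{l} \frac{1-\eta^{k-l}}{1-\eta}\, t_{j-1}^{(l)}$, with the same absorption of the $j=1$ boundary term via $\frac{1+\eta}{4} \le \frac{1+\eta}{2}t_0^{(0)}$, after which both arguments bound the geometric factor by $\frac{1}{1-\eta}$, induct on $j$, and close with the hockey-stick identity. The only difference is bookkeeping: you derive these weights from the aggregated scalar recurrence at $H_i = I$, whereas the paper reads them off the per-monomial coefficient formulas of Lemma~\ref{lem-matrix-coefficients-sign}.
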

\begin{proof}[Proof of Proposition \ref{prop-matrix-polynomial-agg-coeff-eta}]
We derive iterative inequalities for the coefficients from their recurrences and
conclude by induction. We treat $t_j^{(k)}$, and the argument for $u_j^{(k)}$ is
analogous. For $j \geq 1$, Lemma~\ref{lem-matrix-coefficients-sign} gives
\begin{equation}
    t_j^{(k)} = \sum_{k-1 \geq i_1 > \dots > i_j \geq 0}\tilde t_{i_1,\dots,i_j}^{(k)}
    = \sum_{k-1\geq i_1 > \dots > i_j \geq 0}\frac{\eta^{k-i_1}-1}{\eta - 1}\,\tilde t_{i_1,\dots,i_j}^{(i_1+1)}\,.
\end{equation}
By~\eqref{eqn-coeff-first-emerge} and the initial conditions
$\tilde t_0^{(1)} = \frac{1+\eta}{4}$ (which is the coefficient of $-h^2H_0$) and $t_0^{(0)}= 1$, we have
$\tilde t_{i_1,\dots,i_j}^{(i_1+1)} \leq \tfrac12(1+\eta)\,\tilde t_{i_2,\dots,i_j}^{(i_1)}$,
so
\begin{equation}
\begin{aligned}
    t_j^{(k)} &\leq \tfrac12(1+\eta)\sum_{k-1\geq i_1\geq j-1}\frac{\eta^{k-i_1}-1}{\eta-1}\left(\sum_{i_1-1\geq i_2 > \dots > i_j \geq 0} \tilde t_{i_2,\dots,i_j}^{(i_1)}\right)\\
    & = \tfrac12(1+\eta)\sum_{k-1\geq i_1 \geq j-1}\frac{\eta^{k-i_1}-1}{\eta-1}\,t_{j-1}^{(i_1)}
    \leq \frac{1}{1-\eta}\sum_{k-1\geq i_1 \geq j-1} t_{j-1}^{(i_1)}\,,
\end{aligned}
\end{equation}
using the definition of $t_{j-1}^{(i_1)}$ in the equality. We now induct on $j$.
The case $j = 0$ holds since $t_0^{(k)} = 1$ for all $k$. Assuming
\eqref{eqn-prop-matrix-polynomial-agg-coeff-eta} for some $j$ and all $k \geq j$,
then for $k \geq j+1$,
\begin{equation}
    t_{j+1}^{(k)}\leq \frac{1}{1-\eta}\sum_{k-1\geq i_1 \geq j} \Big(\frac{1}{1-\eta}\Big)^j\binom{i_1}{j}
    = \Big(\frac{1}{1-\eta}\Big)^{j+1}\binom{k}{j+1}\,,
\end{equation}
where the last step uses Pascal's rule
$\sum_{k-1\geq i_1 \geq j}\binom{i_1}{j} = \binom{k}{j+1}$. This proves the bound
for $t_j^{(k)}$.
 
For $u_j^{(k)}$, the same steps give the iterative inequality
$u_j^{(k)} \leq \frac{1}{1-\eta}\sum_{k \geq i_1\geq j} u_{j-1}^{(i_1-1)}$.
Combined with the initial condition
$u_0^{(k)} = \tfrac12(1+\eta)\frac{\eta^{k+1}-1}{\eta-1}$ from
Lemma~\ref{lem-matrix-coefficients-sign} and the identity
$\sum_{k\geq i_1\geq j}\binom{i_1-1}{j-1} = \binom{k}{j}$, this yields the stated
bound.
\end{proof}
Part (b) of Proposition~\ref{prop-matrix-polynomial} concerns the HMC
propagators across multiple outer loops. The corresponding matrix polynomials
again have coefficients of alternating signs, related to the inner-loop
coefficients $t_j^{(k)}$ and $u_j^{(k)}$. We only need that the outer-loop
aggregate coefficients inherit any uniform bound on the inner-loop ones.
 
\begin{proposition}
\label{prop-matrix-polynomial-agg-coeff-outer}
When $\eta = 1$, the propagators
in~\eqref{eqn-propagator-multiple-outer-loop} expand as
\begin{equation}
\label{eqn-propagator-outerloop-sign-a}
\begin{aligned}
    &\Big(\prod_{j \in\{k + im\}_{i=0}^{N-1}}^{\longleftarrow} \tilde T_m^1(\{I - \tfrac{h^2}{2}H_l\}_{l=j}^{j+m-1})\Big)\, \tilde T_k^1\\
    &\qquad = a_0^{(Nm+k)}I + \sum_{j=1}^{Nm+k} (-1)^j h^{2j}\!\!\sum_{i_1>\dots>i_j}\!\! a_{i_1,\dots,i_j}^{(Nm+k)} H_{i_1}\cdots H_{i_j}\,,
\end{aligned}
\end{equation}
\begin{equation}
\label{eqn-propagator-outerloop-sign-b}
\begin{aligned}
    &\Big(\prod_{j \in\{k + im\}_{i=0}^{N-1}}^{\longleftarrow} \tilde T_m^1(\{I - \tfrac{h^2}{2}H_l\}_{l=j}^{j+m-1})\Big)\, \tilde U_{k-1}^1\\
    &\qquad = b_0^{(Nm+k-1)}I + \sum_{j=1}^{Nm+k-1} (-1)^j h^{2j}\!\!\sum_{i_1>\dots>i_j}\!\! b_{i_1,\dots,i_j}^{(Nm+k-1)} H_{i_1}\cdots H_{i_j}\,,
\end{aligned}
\end{equation}
with all coefficients nonnegative. Writing
$a_j^{(Nm+k)} = \sum_{i_1>\dots>i_j} a_{i_1,\dots,i_j}^{(Nm+k)}$ and
$b_j^{(Nm+k-1)} = \sum_{i_1>\dots>i_j} b_{i_1,\dots,i_j}^{(Nm+k-1)}$ for the
aggregate coefficients, we have
\begin{equation}
\label{eqn-outerloop-coeff-bound}
    a_j^{(Nm+k)} \leq \max_{j \leq i \leq Nm+k} t_j^{(i)}\,, \qquad
    b_j^{(Nm+k-1)} \leq \max_{j \leq i \leq Nm+k-1} u_j^{(i)}\,.
\end{equation}
\end{proposition}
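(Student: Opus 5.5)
The plan is to reduce everything to the scalar, equal-argument case, as in the proof of Proposition~\ref{prop-matrix-polynomial-agg-coeff}, and then bound the products of Chebyshev polynomials.

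\textbf{Step 1: expansion and signs.} Each factor $\tilde T_m^1(\{I-\tfrac{h^2}{2}H_l\}_{l=j}^{j+m-1})$ is a shifted copy of $\tilde T_m^1$. By Lemma~\ref{lem-matrix-coefficients-sign}, its order-$r$ part is $(-1)^r h^{2r}$ times a nonnegative combination of descending products of $H_l$ with $l\in[j,j+m-1]$. The same holds for $\tilde T_k^1$ and $\tilde U_{k-1}^1$, with indices in $[0,k-1]$ and $[1,k-1]$.

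Now multiply the factors in the stated order, with larger windows to the left. Every product of monomials is again a descending product $H_{i_1}\cdots H_{i_j}$, because the index windows are disjoint and ordered. Its sign is $(-1)^{\sum r}=(-1)^j$, and its coefficient is a product of nonnegative numbers. Collecting terms gives \eqref{eqn-propagator-outerloop-sign-a} and \eqref{eqn-propagator-outerloop-sign-b} with nonnegative $a$ and $b$.

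\textbf{Step 2: aggregate generating function.} Set all $H_i=I$. Because of the sign structure, the aggregate coefficients satisfy
\[
\sum_j (-1)^j h^{2j} a_j^{(Nm+k)} x^j\big|_{x=1} = T_m(1-\tfrac{h^2}{2})^N T_k(1-\tfrac{h^2}{2}).
\]
More usefully, replace $h^2$ by a formal variable $y$ and write $z=1-y/2$. Then $\sum_j a_j(-y)^j=T_m(z)^NT_k(z)$, so $a_j$ is the coefficient of $y^j$ in $P(y):=T_m(1+\tfrac y2)^N T_k(1+\tfrac y2)$. All Taylor coefficients of $T_n(1+\tfrac y2)$ are nonnegative, namely $t^{(n)}_j$.

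Use the parametrization $1+\tfrac y2=\cosh\theta$. Then $T_n(\cosh\theta)=\cosh(n\theta)$, and
\[
\cosh(m\theta)^N\cosh(k\theta)=2^{-N}\sum_{\epsilon}\cosh\bigl((\textstyle\sum_i\epsilon_i m+k)\theta\bigr),
\]
summing over sign patterns $\epsilon\in\{\pm1\}^N$ (using $\cosh a\cosh b=\tfrac12(\cosh(a+b)+\cosh(a-b))$ and evenness of $\cosh$). Hence
\[
P(y)=2^{-N}\sum_\epsilon T_{|\sum\epsilon_i m+k|}(1+\tfrac y2).
\]
This is an average of Chebyshev polynomials $T_i$ with $0\le i\le Nm+k$.

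Taking the coefficient of $y^j$ gives $a_j^{(Nm+k)}=2^{-N}\sum_\epsilon t_j^{(i_\epsilon)}$. This is at most $\max_i t_j^{(i)}$, and terms with $i<j$ vanish since $t_j^{(i)}=0$ there. So the maximum may be restricted to $j\le i\le Nm+k$.

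\textbf{Step 3: the $U$ case.} Use $U_{n-1}(\cosh\theta)=\sinh(n\theta)/\sinh\theta$ and
\[
\cosh(a)\sinh(b)=\tfrac12\bigl(\sinh(a+b)-\sinh(a-b)\bigr).
\]
Each $\cosh(m\theta)$ factor then maps $U_{n-1}$ to $\tfrac12\bigl(U_{n+m-1}+\mathrm{sgn}\cdot U_{|n-m|-1}\bigr)$, and the sign can be negative. Here I would need that subtracted terms only lower the coefficients. This requires $u_j^{(i)}\ge0$ and the fact that a negative term is always paired with a larger positive one, so averaging gives at most the maximum. Since we only need an upper bound and all terms are nonnegative in $y$-coefficients, the bound $b_j\le\tfrac12(\cdot)+\tfrac12(\cdot)\le\max_i u_j^{(i)}$ still holds after iterating. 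This step, the sign bookkeeping in the $U$ recursion, is the main obstacle. The case $n=m$ gives $U_{-1}=0$, which is harmless.
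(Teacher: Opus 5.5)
Your proposal is correct and follows essentially the paper's route. The steps match: the $(-1)^j$ sign pattern multiplies across the disjoint, ordered index windows; setting $H_i = I$ reduces everything to classical Chebyshev polynomials; a product-to-sum expansion writes $T_m^N T_k$ as a $2^{-N}$-average of $T_i$ with $i \le Nm+k$, where your $\cosh$ and sign-pattern bookkeeping is equivalent to the paper's $\cos$ and binomial expansion; and the average is then bounded by the maximum. The $U$ step you flag as the main obstacle is settled exactly as in the paper, by discarding the subtracted terms. It is cleanest done in one shot via $\cosh(m\theta)^N\sinh(k\theta)=2^{-N}\sum_\epsilon\sinh(n_\epsilon\theta)$ with $n_\epsilon=\sum_i\epsilon_i m+k$. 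This gives $b_j^{(Nm+k-1)} \le 2^{-N}\sum_{\epsilon:\,n_\epsilon>0}u_j^{(n_\epsilon-1)}$ directly, and your ``paired with a larger positive term'' remark is not needed.
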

 
\begin{proof}[Proof of Proposition~\ref{prop-matrix-polynomial-agg-coeff-outer}]
\noindent\textbf{Alternating signs.} In each factor $\tilde T_k^1$ and $\tilde U_k^1$ with
arguments $I - \tfrac{h^2}{2}H_i$, a product of $j$ matrices carries the sign
$(-1)^j$. This sign is multiplicative across the product of factors
in~\eqref{eqn-propagator-outerloop-sign-a}
and~\eqref{eqn-propagator-outerloop-sign-b}, so every order-$j$ product carries
$(-1)^j$, and the coefficients are nonnegative.
 
\noindent\textbf{Reduction to Chebyshev polynomials.} Taking $H_i \coloneqq I$ collapses the
expansions onto their aggregate coefficients, and at $\eta = 1$ the multivariate
polynomials reduce to the classical $T_k$ and $U_k$, giving
\begin{equation}
\begin{aligned}
    \big(T_m(1-\tfrac{h^2}{2})\big)^N T_k(1-\tfrac{h^2}{2})\,I
    &= a_0^{(Nm+k)}I + \sum_{j=1}^{Nm+k}(-1)^j h^{2j}a_j^{(Nm+k)}I\,,\\
    \big(T_m(1-\tfrac{h^2}{2})\big)^N U_{k-1}(1-\tfrac{h^2}{2})\,I
    &= b_0^{(Nm+k-1)}I + \sum_{j=1}^{Nm+k-1}(-1)^j h^{2j}b_j^{(Nm+k-1)}I\,.
\end{aligned}
\end{equation}
Writing $\theta = \arccos(1-h^2/2)$, so that
$T_k(1-\tfrac{h^2}{2}) = \cos(k\theta)$ and
$U_{k-1}(1-\tfrac{h^2}{2}) = \sin(k\theta)/\sin\theta$, and expanding
\begin{equation}
\label{eqn-cos-power-N}
    \big(\cos(m\theta)\big)^N = \frac{1}{2^N}\sum_{i=0}^N\binom{N}{i}\exp(\mathrm{i}(2i-N)m\theta)\,,
\end{equation}
we pair conjugate exponentials into cosines and apply
$2\cos x\cos y = \cos(x-y)+\cos(x+y)$ and
$2\cos x\sin y = \sin(x+y) - \sin(x-y)$. Rewriting the resulting trigonometric
sums back in terms of Chebyshev polynomials expresses
$\big(T_m\big)^N T_k$ as a binomially weighted sum of $T_{\,\cdot}$ at shifted
indices $2im \pm k$ (for $N$ even) or $(2i-1)m \pm k$ (for $N$ odd), all with
positive weights. For $\big(T_m\big)^N U_{k-1}$ the $\sin(x+y) - \sin(x-y)$
identity instead produces a signed sum of $U_{\,\cdot}$ at indices
$2im \pm k - 1$ (for $N$ even) or $(2i-1)m \pm k - 1$ (for $N$ odd), the
$x-y$ terms entering with a minus sign. Matching coefficients with
$t_j^{(\cdot)}$ and $u_j^{(\cdot)}$ then writes each $a_j^{(Nm+k)}$ and
$b_j^{(Nm+k-1)}$ as a $2^{-N}$-weighted sum of inner-loop aggregate coefficients
$t_j^{(i)}$ and $u_j^{(i)}$, with binomial weights and indicator factors
$\mathbf 1_{j \leq i}$. The sum for $a_j^{(Nm+k)}$ has all positive weights, while
that for $b_j^{(Nm+k-1)}$ inherits the minus signs above. Explicit formulas can
be read off in this way, but the main results need only the bound below.
 
\noindent\textbf{Bound.} Each $a_j^{(Nm+k)}$ is a sum of the form
$\tfrac{1}{2^N}\sum_i w_i\, t_j^{(i)}\mathbf 1_{j \leq i}$ with nonnegative
binomial weights $w_i$ satisfying $\sum_i w_i = 2^N$. Since the $t_j^{(i)}$ are
nonnegative by Lemma~\ref{lem-matrix-coefficients-sign}, bounding each term by
$t_j^{(i)}\mathbf 1_{j \leq i} \leq \max_{j \leq i \leq Nm+k} t_j^{(i)}$ gives
\begin{equation}
    a_j^{(Nm+k)} \leq \frac{1}{2^N}\Big(\sum_i w_i\Big)\max_{j \leq i \leq Nm+k} t_j^{(i)}
    = \max_{j \leq i \leq Nm+k} t_j^{(i)}\,.
\end{equation}
The same argument bounds $b_j^{(Nm+k-1)}$. Here some $u_j^{(i)}$ enter with
negative signs, but since the $u_j^{(i)}$ are nonnegative, discarding the
subtracted terms only increases the bound, and the remaining terms are controlled
as above by $\max_{j \leq i \leq Nm+k-1} u_j^{(i)}$.
\end{proof}

\section{Bounds for Leap-frog Propagators}
\label{appendix-bounds-leap-frog}
\subsection{$\ell^2$ bounds} 
Our $\ell^2$ bound for $\tilde T_k^1$ adapts the discrete-time HMC contraction
result for the position component, Lemma~19 in
\cite{bou-rabee203meanfield}, replacing the state-dependent Hessians there by
general matrices. The underlying argument carries through unchanged. Extending
this to the momentum component gives the bound for $\tilde U_{k-1}^1$.
 
The adaptation rests on decoupling the matrix sequence from the trajectory.
Because the leap-frog scheme is nonlinear in the state-dependent Hessians, a
trajectory contraction bound $|x_k - y_k|_{\ell^2} \leq c|x_0-y_0|_{\ell^2}$
does not by itself give a matrix-norm bound. We therefore analyze a linear system
driven by an arbitrary, state-independent matrix sequence $\{H_i\}$. The
contraction proof of~\cite{bou-rabee203meanfield} does not use that the Hessians
are evaluated along the trajectory, so the adaptation is direct. The resulting
bounds hold for any initial condition and hence translate to matrix-norm bounds.
Instantiating with the integrated Hessians yields
Proposition~\ref{prop-l2bounds-hmc}.

\begin{proposition}[$\ell^2$ bounds for leap-frog, general matrices]
\label{prop-l2bounds-hmc-general}
Let $\{H_i\}$ be symmetric positive definite matrices with
$\alpha I \preceq H_i \preceq \beta I$ for $\alpha, \beta$ as in
Assumption~\ref{assumption-V-log-concave}. For $kh \leq 1/\sqrt{20\beta}$, the
matrix polynomials $\tilde T_k^1$ and $\tilde U_{k-1}^1$, evaluated at the
canonical arguments $\{I - \tfrac{h^2}{2}H_i\}$ with these general $H_i$ (recall
the shorthand introduced in 
Section~\ref{appendix-derivation-matrix-polynomial}), satisfy
\begin{equation}
\label{eqn-l2bounds-hmc-Tk-general}
    |\tilde T_k^1|_{\ell^2} \leq 1 - \frac{\alpha}{10}(kh)^2\,,\qquad
    |h\tilde U_{k-1}^1|_{\ell^2} \leq \sqrt{\frac{2}{5\beta}}\,.
\end{equation}
\end{proposition}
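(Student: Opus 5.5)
Since the statement involves an arbitrary state-independent sequence $\{H_i\}$ with $\alpha I\preceq H_i\preceq\beta I$, the plan is to study the linear system driven by these matrices and turn trajectory bounds into operator-norm bounds. By Proposition~\ref{prop-matrix-polynomial}(a) at $\eta=1$, whose derivation uses only the recurrence~\eqref{eqn-three-term-recurrence-baoab-general} and not where the $H_i$ come from, the linear leap-frog recursion
\[
\Delta q_{i+1}=\Delta q_i+h\Delta p_i-\tfrac{h^2}{2}H_i\Delta q_i,\qquad \Delta p_{i+1}=\Delta p_i-\tfrac h2 H_i\Delta q_i-\tfrac h2 H_{i+1}\Delta q_{i+1}
\]
started from $(\Delta q_0,\Delta p_0)$ gives $\Delta q_k=\tilde T_k^1\Delta q_0+h\tilde U_{k-1}^1\Delta p_0$. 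Taking $\Delta p_0=0$ gives $|\tilde T_k^1|_{\ell^2}=\sup_{|\Delta q_0|=1}|\Delta q_k|$, and taking $\Delta q_0=0$ gives the corresponding identity for $h\tilde U_{k-1}^1$. So it suffices to bound two trajectories of this linear system for arbitrary initial data.

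\textbf{Bound on $\tilde T_k^1$.} I would follow \cite[Lemma~19]{bou-rabee203meanfield}, with $\Delta q_0=x$, $\Delta p_0=0$. First, an a priori bound: summing the recursions gives $\Delta q_i=x+ih\cdot(\text{average momentum})$ and $|\Delta p_i|\le h\sum_{l\le i}\beta|\Delta q_l|$. A discrete Gronwall argument with $(ih)^2\beta\le 1/20$ then shows $\max_{i\le k}|\Delta q_i|\le(1+\epsilon)|x|$ and $|\Delta p_i|\le (1+\epsilon) ih\beta|x|$ for a small explicit $\epsilon$. Second, I would expand $\Delta q_k=x-\sum_{l}c_{l}h^2H_l\Delta q_l$ with nonnegative weights $c_l$ summing to $(kh)^2/(2h^2)$; this is the discrete Duhamel formula, which can also be read off from Lemma~\ref{lem-matrix-coefficients-sign}. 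Writing $\Delta q_l=x+r_l$ with $|r_l|\le C(lh)^2\beta|x|$, the main part is $(I-\sum c_lh^2H_l)x$. Since $\sum c_l h^2 H_l$ lies between $\tfrac{(kh)^2}{2}\alpha I$ and $\tfrac{(kh)^2}{2}\beta I$, with $(kh)^2\beta/2\le 1/40$, the main part has norm at most $(1-\tfrac{\alpha}{2}(kh)^2)|x|$. The remainder is at most $\tfrac{(kh)^2}{2}\beta\cdot C(kh)^2\beta|x|\le C'\tfrac{\alpha}{?}\dots$; this is the \emph{main obstacle}. The remainder is bounded by $\beta^2(kh)^4$ rather than by $\alpha(kh)^2$, so the naive estimate fails when $\alpha\ll\beta$.

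To handle this, I would keep the leading remainder in matrix form instead of bounding it crudely: $r_l\approx -\sum c'_{l'}h^2H_{l'}x$, so the second-order term is $+\sum c\,c' h^4 H_lH_{l'}x$. I would then compute $\langle x,\Delta q_k\rangle$ and $|\Delta q_k|^2$ directly, as Bou-Rabee et al.\ do. The cross term then involves $\sum h^4\langle H_l x,H_{l'}x\rangle$, which is bounded by $\beta(kh)^2\cdot(kh)^2\langle x,\bar H x\rangle$, where $\bar H$ is an average of the $H_l$. Using $(kh)^2\beta\le1/20$, it is absorbed into a fraction of the $(kh)^2\langle x,\bar H x\rangle\ge(kh)^2\alpha|x|^2$ gain. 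This yields $|\Delta q_k|^2\le(1-\tfrac{\alpha}{5}(kh)^2)|x|^2$, hence the stated $1-\tfrac{\alpha}{10}(kh)^2$ after taking square roots.

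\textbf{Bound on $h\tilde U_{k-1}^1$.} With $\Delta q_0=0$ and $\Delta p_0=v$, the same a priori Gronwall estimate gives $|\Delta q_i-ihv|\le \epsilon\, ih|v|$. Hence $|\Delta q_k|\le(1+\epsilon)kh|v|\le(1+\epsilon)|v|/\sqrt{20\beta}$. Since $\sqrt{2/5}\approx0.63>1/\sqrt{20}\approx0.22$, this crude bound already suffices; no contraction is needed, only the Gronwall constant. Taking the supremum over unit $v$ gives $|h\tilde U_{k-1}^1|_{\ell^2}\le\sqrt{2/(5\beta)}$.
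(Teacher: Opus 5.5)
Your bound for $h\tilde U_{k-1}^1$ is correct, and simpler than the paper's. With $\Delta q_0=0$ and $\Delta p_0=v$, the Duhamel form $\Delta q_k=khv-h^2\sum_{i=1}^{k-1}(k-i)H_i\Delta q_i$ together with the inductive hypothesis $|\Delta q_i|\le M\,ih|v|$ closes with $M=120/119$. This uses $\sum_i(k-i)i\le k^3/6$ and $\beta(kh)^2\le 1/20$, and gives $|h\tilde U_{k-1}^1|_{\ell^2}\le \tfrac{120}{119}(20\beta)^{-1/2}$, well inside $\sqrt{2/(5\beta)}$. The paper instead reruns its oscillator comparison and only obtains $\sqrt8\,kh|v|$.

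\textbf{The gap is in the $\tilde T_k^1$ bound, exactly at the point you flag.} Write $\Delta q_l=x-h^2\sum_{j<l}c^{(l)}_jH_j\Delta q_j$, with $c^{(l)}_0=l/2$ and $c^{(l)}_j=l-j$, and set $r_l=\Delta q_l-x$. You keep only $r^{(1)}_l=-h^2\sum_j c^{(l)}_jH_jx$ in matrix form. The neglected piece $-h^2\sum_j c^{(l)}_jH_jr_j$ is controlled only through your a priori bound $|r_j|\le C\beta(jh)^2|x|$. It is therefore of size $O(\beta^2(kh)^4|x|)$ with no directional information. Its contribution to the cross term $2h^2\sum_l c^{(k)}_l\langle H_lx,r_l\rangle$ can then only be estimated via $|H_lx|\,|x|\le\sqrt{\beta/\alpha}\,\langle x,H_lx\rangle$, which is sharp up to a constant when $x$ mixes eigendirections near $\alpha$ and $\beta$. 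The result is a fraction of order $(\beta(kh)^2)^2\sqrt{\beta/\alpha}$ of the gain $2h^2\sum_l c^{(k)}_l\langle x,H_lx\rangle$, which is not small when $\alpha\ll\beta$. The squared term $|h^2\sum_l c^{(k)}_lH_lr_l|^2$ has the same problem. Expanding to any fixed order only raises the power of $\beta(kh)^2$, so the argument never closes uniformly in the condition number, and uniformity is what the statement requires.

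\textbf{The missing idea is to run the a priori estimate in energy form from the start.} Using $|H_jv|\le\sqrt{\beta\langle v,H_jv\rangle}$, show by induction that $|r_l|\le M\sqrt\beta\,h^2\sum_{j<l}c^{(l)}_j\langle x,H_jx\rangle^{1/2}$. This closes because $\sum_{j:\,i<j<l}c^{(l)}_jc^{(j)}_i\le\tfrac{l^2}{6}c^{(l)}_i$ and $\beta(lh)^2\le\tfrac1{20}$. Then $2ab\le a^2+b^2$, with the same weight comparison, bounds both the cross term and the squared term by $O(\beta(kh)^2)$ times the gain. The parameter $\alpha$ enters only at the very end, through $h^2\sum_l c^{(k)}_l\langle x,H_lx\rangle\ge\tfrac{(kh)^2}{2}\alpha|x|^2$. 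This yields $|\Delta q_k|^2\le(1-c\,\alpha(kh)^2)|x|^2$ with $c$ comfortably above $\tfrac15$.

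This is precisely the mechanism in the paper's proof. There, every perturbation is bounded by $\beta h^2\varphi$ or by $|w|^2\le\tfrac12\beta t\int\varphi$, where $\varphi=\langle z,Hz\rangle$ is evaluated at the current difference, and is absorbed into the $-(2-C_1)\varphi$ term. Once repaired this way, your purely discrete Duhamel route is a legitimate and somewhat cleaner alternative to the paper's continuous interpolation, variation of parameters, and integration by parts at grid points.
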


\begin{proof}[Proof of Proposition~\ref{prop-l2bounds-hmc-general}]
We interpret the polynomials with general $\{H_i\}$ as the propagator of a leap-frog
iteration with \emph{varying} quadratic potentials, which lets us bound their
$\ell^2$ norms through the contraction and boundedness of the discrete-time
dynamics, analyzed via their continuous-time interpolation.
 
\noindent\textbf{Connection with dynamics.} Consider the time-inhomogeneous linear
dynamics driven by $\{H_i\}$,
\begin{equation}
\label{eqn-linear-inner-loop}
\begin{aligned}
    q_{k+1} &= q_k + hp_k - \tfrac{h^2}{2}H_k q_k\,,\\
    p_{k+1} &= p_k - \tfrac{h}{2}H_kq_k - \tfrac{h}{2}H_{k+1}q_{k+1}\,,
\end{aligned}
\end{equation}
which is the leap-frog scheme~\eqref{eqn-leap-frog} with the quadratic potential
$V_k(q) = \tfrac12 q^\top H_k q$ at step $k$. For two initial states $(x_0, p_0)$,
$(y_0, p_0')$, the differences $\Delta q_k \coloneqq x_k - y_k$,
$\Delta p_k \coloneqq p_k - p'_k$ satisfy
\begin{equation}
\label{eqn-linear-inner-loop-difference}
\begin{aligned}
    \Delta q_{k+1} &= (I- \tfrac{h^2}{2} H_k)\Delta q_k + h \Delta p_k\,,\\
    \Delta p_{k+1} &= \Delta p_k - \tfrac{h}{2}H_k \Delta q_k - \tfrac{h}{2}H_{k+1}\Delta q_{k+1}\,.
\end{aligned}
\end{equation}
This matches~\eqref{eqn-inner-loop-difference} in the proof of
Proposition~\ref{prop-matrix-polynomial}, now with general $\{H_i\}$ in place of
the integrated Hessians. As that proof does not use the specific Hessian
structure, the same argument gives the three-term recurrence
$\Delta q_{k+1} = (2I - h^2 H_k)\Delta q_k - \Delta q_{k-1}$ for $k \geq 1$ and,
by induction,
\begin{equation}
\label{eqn-prop-proof-q-dependence-q0-p0}
    \Delta q_{k} = \tilde T_{k}^1\,\Delta q_0 + h\tilde U_{k-1}^1\,\Delta p_0\,.
\end{equation}
Thus $\tilde T_k^1$ and $h\tilde U_{k-1}^1$ are the propagators
of~\eqref{eqn-linear-inner-loop}. We bound their $\ell^2$ norms by choosing
initial conditions that isolate each, setting $\Delta p_0 = 0$ for $\tilde T_k^1$, and
$\Delta q_0 = 0$ for $h\tilde U_{k-1}^1$.
 
\noindent\textbf{$\ell^2$ norm of $\tilde T_k^1$.} Adapting the proof of
\cite[Lemma~19]{bou-rabee203meanfield} to a fixed matrix sequence $\{H_i\}$
rather than the leap-frog scheme of the target potential, we show that the position
difference still contracts when the potential is decoupled from the trajectory.
We include it both for completeness and to set up the $\tilde U_k^1$ bound.
Introduce the continuous-time interpolation of~\eqref{eqn-linear-inner-loop},
\begin{equation}
\label{eqn-linear-inner-loop-interpolation}
\begin{aligned}
    \frac{{\rm d} q_t}{{\rm d }t} &= p_{\lfloor t \rfloor_h} - (t - \lfloor t \rfloor_h)H_{\lfloor t \rfloor_h}q_{\lfloor t \rfloor_h}\,,\\
    \frac{{\rm d}p_t}{{\rm d}t} &= -\tfrac12(H_{\lfloor t \rfloor_h} q_{\lfloor t \rfloor_h} + H_{\lceil t \rceil_h} q_{\lceil t \rceil_h})\,,
\end{aligned}
\end{equation}
with $\lfloor t \rfloor_h \coloneqq \lfloor t/h\rfloor h$ and
$\lceil t \rceil_h \coloneqq \lceil t/h \rceil h$. Take two trajectories from
$(x_0, p_0)$ and $(y_0, p_0')$ with $p_0 = p_0'$, set $z_t \coloneqq x_t - y_t$,
$w_t \coloneqq p_t - p_t'$, and at each grid point $t_k = kh$ define
$\Phi_{kh} \coloneqq H_{k} z_{kh}$ and $\varphi_{kh} \coloneqq \Phi_{kh}\cdot z_{kh}$,
with $\cdot$ the inner product. Then $w_0 = 0$ and
$|\Phi_t|_{\ell^2}^2 \leq \beta \varphi_t$, and~\eqref{eqn-linear-inner-loop-interpolation}
gives
\begin{equation}
\label{eqn-linear-inner-loop-interpolation-difference}
\begin{aligned}
    \frac{{\rm d}z_t}{{\rm d}t} &= w_{\lfloor t \rfloor_h} - (t - \lfloor t \rfloor_h) \Phi_{\lfloor t \rfloor_h}\,,\\
    \frac{{\rm d}w_t}{{\rm d}t} &= -\tfrac12(\Phi_{\lfloor t \rfloor_h}+\Phi_{\lceil t \rceil_h})\,.
\end{aligned}
\end{equation}
Let $a_t = |z_t|_{\ell^2}^2$ and $b_t = 2 z_t \cdot w_t$, so $b_0 = 0$ and
$\alpha a_{kh} \leq \varphi_{kh} \leq \beta a_{kh}$ since
$\alpha I \preceq H_k \preceq \beta I$. We track $a_t, b_t$ to bound $a_t$.
From~\eqref{eqn-linear-inner-loop-interpolation-difference},
\begin{equation}
\label{eqn-linear-inner-loop-interpolation-ab}
\begin{aligned}
    \frac{{\rm d} a_t}{{\rm d}t} &= b_t + \delta_t\,,\\
    \frac{{\rm d}b_t}{{\rm d}t} &= - C_1\alpha a_t + \eta_t\,,
\end{aligned}
\end{equation}
where $C_1>0$ is chosen later and the perturbations are
\begin{equation}
\begin{aligned}
    \delta_t &= 2z_t \cdot(w_{\lfloor t \rfloor_h}-w_t - (t - \lfloor t \rfloor_h)\Phi_{\lfloor t \rfloor_h}) = (t - \lfloor t \rfloor_h)z_t \cdot (\Phi_{\lceil t\rceil_h} - \Phi_{\lfloor t \rfloor_h})\,,\\
    \eta_t &= C_1\alpha a_t + 2|w_{\lfloor t \rfloor_h} - (t - \lfloor t \rfloor_h)\Phi_{\lfloor t \rfloor_h}|_{\ell^2}^2 - z_t \cdot(\Phi_{\lfloor t \rfloor_h} + \Phi_{\lceil t \rceil_h})\\
    &\quad -(t - \lfloor t \rfloor_h)(w_{\lfloor t \rfloor_h} - (t - \lfloor t \rfloor_h)\Phi_{\lfloor t \rfloor_h})\cdot(\Phi_{\lceil t \rceil_h}-\Phi_{\lfloor t \rfloor_h})\,.
\end{aligned}
\end{equation}
The term $\delta_t$ is piecewise smooth, with
\begin{equation*}
    \frac{{\rm d} \delta_t}{{\rm d}t} = z_t \cdot(\Phi_{\lceil t \rceil_h}-\Phi_{\lfloor t\rfloor_h})+(t - \lfloor t \rfloor_h)(w_{\lfloor t \rfloor_h} - (t - \lfloor t\rfloor_h)\Phi_{\lfloor t\rfloor_h})\cdot(\Phi_{\lceil t\rceil_h}-\Phi_{\lfloor t\rfloor_h})
\end{equation*}
in the interior of each step, and at grid points $\delta_{kh^+} = 0$ and
\begin{equation*}
    \delta_{kh^-} = h(\varphi_{kh} - \varphi_{(k-1)h}) - h(hw_{(k-1)h}-\tfrac12h^2\Phi_{(k-1)h})\cdot \Phi_{(k-1)h}\,.
\end{equation*}
Solving the perturbed system~\eqref{eqn-linear-inner-loop-interpolation-ab} by
variation of parameters, with $s_{t} = \frac{\sin(\sqrt{C_1\alpha}t)}{\sqrt{C_1\alpha}}$
and $c_t = \cos(\sqrt{C_1\alpha}t)$ (so $s_t$ is increasing and nonnegative on
$0\leq t \leq \frac{\pi}{2\sqrt{C_1\alpha}}$), gives
\begin{equation}
\label{eqn-solution-a}
\begin{aligned}
    a_t & = c_t a_0 + \int_0^t c_{t-r}\delta_r {\rm d}r + \int_0^t s_{t-r}\eta_r {\rm d}r\\
    & = c_t a_0 + \sum_{k=1}^{\lceil t/h\rceil-1}[s_{t-r}\delta_r]|^{r=kh^+}_{r=kh^-} - [s_{t-r}\delta_r]|^{r=t^-}_{r=0^+} + \int_0^t s_{t-r}(\eta_r + \tfrac{{\rm d}\delta_r}{{\rm d}r}) {\rm d} r\\
    & = c_t a_0 -\sum_{k=1}^{\lceil t/h\rceil -1}s_{t-kh}\delta_{kh^-}+\int_0^t s_{t-r}(-2\varphi_{\lfloor r\rfloor_h}+C_1\alpha a_{\lfloor r \rfloor_h} + \varepsilon_r) {\rm d} r\,,
\end{aligned}
\end{equation}
where we integrated by parts and set $\varepsilon_t \coloneqq \varepsilon_t^1 + \varepsilon_t^2 + \varepsilon_t^3$,
\begin{equation*}
\begin{aligned}
    \varepsilon_t^1 &= 2|w_{\lfloor t \rfloor_h} - (t - \lfloor t \rfloor_h)\Phi_{\lfloor t \rfloor_h}|_{\ell^2}^2\,,\\
    \varepsilon_t^2 &= -2(z_t - z_{\lfloor t \rfloor_h})\cdot \Phi_{\lfloor t \rfloor_h}\,,\\
    \varepsilon_t^3 &= C_1\alpha(z_t - z_{\lfloor t \rfloor_h}) \cdot (z_t + z_{\lfloor t \rfloor_h})\,.
\end{aligned}
\end{equation*}
We aim to show $a_t \leq c_t a_0$ for a suitable $C_1$, which requires bounding
the remaining terms in~\eqref{eqn-solution-a}. Using
$|\Phi_{\lfloor t \rfloor_h}|_{\ell^2}^2 \leq \beta \varphi_{\lfloor t \rfloor_h}$,
\begin{equation}
    \varepsilon_t^1 \leq 6|w_{\lfloor t \rfloor_h}|_{\ell^2}^2 + 3(t-\lfloor t \rfloor_h)^2|\Phi_{\lfloor t \rfloor_h}|_{\ell^2}^2 \leq 6|w_{\lfloor t \rfloor_h}|_{\ell^2}^2 + 3\beta h^2\varphi_{\lfloor t \rfloor_h}\,.
\end{equation}
Substituting $z_t - z_{\lfloor t \rfloor_h}$
from~\eqref{eqn-linear-inner-loop-interpolation-difference} into
$\varepsilon_t^2$,
\begin{equation}
\begin{aligned}
    \varepsilon_t^2 &= -2 (t - \lfloor t \rfloor_h) w_{\lfloor t \rfloor_h}\cdot \Phi_{\lfloor t \rfloor_h} + (t - \lfloor t \rfloor _h)^2|\Phi_{\lfloor t\rfloor_h}|_{\ell^2}^2\\
    & \leq 2|w_{\lfloor t \rfloor_h}|_{\ell^2}^2+ \tfrac32 h^2 |\Phi_{\lfloor t \rfloor_h}|_{\ell^2}^2\leq 2|w_{\lfloor t \rfloor_h}|_{\ell^2}^2 + \tfrac32\beta h^2\varphi_{\lfloor t \rfloor_h}\,,
\end{aligned}
\end{equation}
and similarly for $\varepsilon_t^3$,
\begin{equation}
\begin{aligned}
    \varepsilon_t^3 &= C_1 \alpha|(t-\lfloor t \rfloor_h)w_{\lfloor t \rfloor_h} - \tfrac12(t - \lfloor t \rfloor_h)^2 \Phi_{\lfloor t\rfloor_h}|_{\ell^2}^2\\
    & \quad +2 C_1\alpha z_{\lfloor t \rfloor_h} \cdot ((t - \lfloor t \rfloor_h) w_{\lfloor t \rfloor_h} - \tfrac12(t - \lfloor t \rfloor_h)^2\Phi_{\lfloor t\rfloor_h})\,.
\end{aligned}
\end{equation}
The first term is bounded by
$2 C_1\alpha h^2|w_{\lfloor t \rfloor_h}|_{\ell^2}^2 + \tfrac{1}{2}C_1\alpha h^4\beta \varphi_{\lfloor t\rfloor_h}$.
For the second, $|z_{\lfloor t \rfloor_h}|_{\ell^2}^2 = a_{\lfloor t \rfloor_h} \leq \varphi_{\lfloor t \rfloor_h}/\alpha$
gives
\begin{equation*}
    2 C_1\alpha z_{\lfloor t \rfloor_h} \cdot (t - \lfloor t \rfloor_h)w_{\lfloor t\rfloor_h} \leq h^2 C_1^2\alpha^2|z_{\lfloor t \rfloor_h}|_{\ell^2}^2 + |w_{\lfloor t \rfloor_h}|_{\ell^2}^2 \leq h^2C_1^2\alpha \varphi_{\lfloor t\rfloor_h}+|w_{\lfloor t \rfloor_h}|_{\ell^2}^2\,.
\end{equation*}
Since $z_{\lfloor t \rfloor_h} \cdot \Phi_{\lfloor t \rfloor_h} \geq 0$,
\begin{equation}
    \varepsilon_t^3 \leq (1 + 2C_1\alpha h^2)|w_{\lfloor t\rfloor_h}|_{\ell^2}^2 + C_1\alpha(C_1 h^2 + \tfrac12 \beta h^4)\varphi_{\lfloor t \rfloor_h}\,,
\end{equation}
and combining the three estimates,
\begin{equation}
\label{eqn-bound-varepsilont}
    \varepsilon_t \leq (9 + 2C_1\alpha h^2)|w_{\lfloor t\rfloor_h}|_{\ell^2}^2 + \left[ \tfrac92\beta h^2+C_1\alpha(C_1 h^2 + \tfrac12 \beta h^4) \right]\varphi_{\lfloor t \rfloor_h}\,.
\end{equation}
For the boundary sum in~\eqref{eqn-solution-a}, when $t/h\in \mathbb{Z}^+$,
summation by parts gives
\begin{equation}
\label{eqn-bound-boundary-terms-original}
\begin{aligned}
    -\sum_{k=1}^{t/h-1}s_{t-kh}\delta_{kh^-}
    = & \sum_{k=2}^{t/h-1}h(s_{t-kh} - s_{t-(k-1)h})\varphi_{(k-1)h}-h s_h\varphi_{t-h} + h s_{t-h}\varphi_0 \\
    & + \sum_{k=1}^{t/h-1}hs_{t-kh}(hw_{(k-1)h} -\tfrac12h^2\Phi_{(k-1)h})\cdot \Phi_{(k-1)h}\,.
\end{aligned}
\end{equation}
Since $s_{t-kh} \leq s_{t-r}$ for $r \in [(k-1)h, kh)$, $\varphi_{kh} \geq 0$, and
\begin{equation*}
    h^2 w_{(k-1)h} \cdot \Phi_{(k-1)h} \leq \tfrac12h|w_{(k-1)h}|_{\ell^2}^2 + \tfrac12h^3|\Phi_{(k-1)h}|_{\ell^2}^2\leq \tfrac12h|w_{(k-1)h}|_{\ell^2}^2 + \tfrac12\beta h^3\varphi_{(k-1)h}\,,
\end{equation*}
we obtain, for $t/h \in \mathbb{Z}^+$,
\begin{equation}
\label{eqn-bound-boundary-terms-same-momenta}
    -\sum_{k=1}^{t/h-1} s_{t-kh} \delta_{kh^-} \leq \int_0^h s_{t-r}\varphi_{\lfloor r \rfloor_h}{\rm d}r+ \tfrac12\int_0^t s_{t-r}(|w_{\lfloor r \rfloor_h}|_{\ell^2}^2 + \beta h^2 \varphi_{\lfloor r \rfloor_h}){\rm d} r\,.
\end{equation}
From~\eqref{eqn-linear-inner-loop-interpolation-difference} with $w_0=0$,
Cauchy--Schwarz and $|\Phi_{\lfloor \tau\rfloor_h}|_{\ell^2}^2 \leq \beta\varphi_{\lfloor \tau\rfloor_h}$
give
\begin{equation}
\label{eqn-w-integration-no-initial-bound}
    |w_{\lfloor t \rfloor_h}|_{\ell^2}^2 = \left|\tfrac12\int_0^{\lfloor t \rfloor_h}(\Phi_{\lfloor \tau \rfloor_h} + \Phi_{\lceil \tau \rceil_h}){\rm d}\tau\right|_{\ell^2}^2
    \leq \tfrac12\beta \lfloor t \rfloor_h \left(\int_0^{\lfloor t\rfloor_h} \varphi_{\lfloor \tau\rfloor_h} {\rm d} \tau + \int_0^{\lfloor t \rfloor_h} \varphi_{\lceil \tau \rceil_h}{\rm d}\tau\right)\,.
\end{equation}
{
To bound $\int_0^t s_{t-r}|w_{\lfloor r \rfloor_h}|_{\ell^2}^2{\rm d}r$, we note that for $t/h\in \mathbb{Z}^+$,
\begin{equation}
\label{eqn-sw-bound-prep}
    \int_0^t s_{t-r} \lfloor r \rfloor_h \left(\int_0^{\lfloor r \rfloor_h}\varphi_{\lceil \tau \rceil_h}{\rm d}\tau \right) {\rm d}r = \int_h^{t} \left(\int_{\lfloor \tau \rfloor_h}^{t}s_{t-r}\lfloor r \rfloor_h {\rm d}r\right) \varphi_{\lfloor \tau \rfloor_h}{\rm d}\tau\,.
\end{equation}
As $s_t$ is increasing, we decompose the integral in $r$ into sub-intervals of length $h$, and direct calculation of the sum of $\lfloor r \rfloor_h$ terms gives
\begin{equation}
   \int_{\lfloor\tau\rfloor_h}^{t} s_{t-r}\lfloor r\rfloor_h {\rm d}r \leq \left(\int_{\lfloor \tau \rfloor_h}^{\lfloor \tau \rfloor_h + h}s_{t-r}{\rm d}r\right) \frac{t^2}{2h}\,.
\end{equation}
Plugging into \eqref{eqn-sw-bound-prep} and switching the order of integration yield an upper bound
\begin{equation}
     \int_h^t \left(\int_{\lfloor \tau \rfloor_h}^{\lfloor \tau \rfloor_h + h}s_{t-r}{\rm d}r\right) \frac{t^2}{2h} \varphi_{\lfloor \tau\rfloor_h}{\rm d}\tau = \frac{t^2}{2}\int_h^{t} s_{t-r}\varphi_{\lfloor r\rfloor_h} {\rm d}r\,.
\end{equation}
We can similarly derive for the other term in \eqref{eqn-w-integration-no-initial-bound} that
\begin{equation}
    \int_0^ts_{t-r}\lfloor r \rfloor_h \left( \int_0^{\lfloor r \rfloor_h}\varphi_{\lfloor \tau \rfloor_h}{\rm d}\tau\right){\rm d}r \leq  \frac{t^2}{2}\int_0^{t-h}s_{t-r}\varphi_{\lfloor r\rfloor_h}{\rm d}r\,,
\end{equation}
and combining these two bounds yields
\begin{equation}
\label{eqn-sw-bound}
    \int_0^t s_{t-r}|w_{\lfloor r \rfloor_h}|_{\ell^2}^2 {\rm d}r \leq  \tfrac12\beta t^2\int_0^t s_{t-r}\varphi_{\lfloor r \rfloor_h}{\rm d}r\,.
\end{equation}
}
Hence the last two terms of~\eqref{eqn-solution-a} are bounded by
\begin{equation}
\begin{aligned}
    & -\sum_{k=1}^{\lceil t/h\rceil -1}s_{t-kh}\delta_{kh^-}+\int_0^t s_{t-r}(-2\varphi_{\lfloor r\rfloor_h}+C_1\alpha a_{\lfloor r \rfloor_h} + \varepsilon_r) {\rm d} r\\
    &\leq \int_0^h s_{t-r}\varphi_{\lfloor r\rfloor_h}{\rm d}r + \int_0^t s_{t-r}(-2\varphi_{\lfloor r\rfloor_h} + C_1\alpha  a_{\lfloor r\rfloor_h}) {\rm d} r\\
    & \quad +\int_0^t s_{t-r}((5\beta h^2 + C_1^2\alpha h^2 + \tfrac12 C_1 \alpha \beta h^4)\varphi_{\lfloor r\rfloor_h} + (\tfrac{19}{2} + 2C_1\alpha h^2)|w_{\lfloor r\rfloor_h}|_{\ell^2}^2){\rm d}r\,.
\end{aligned}
\end{equation}
Applying~\eqref{eqn-sw-bound} and $\alpha a_{kh} \leq \varphi_{kh}$,
\begin{equation}
\begin{aligned}
    & -\sum_{k=1}^{\lceil t/h\rceil -1}s_{t-kh}\delta_{kh^-}+\int_0^t s_{t-r}(-2\varphi_{\lfloor r\rfloor_h}+C_1\alpha a_{\lfloor r \rfloor_h} + \varepsilon_r) {\rm d} r\\
    \leq & \int_0^t s_{t-r}\left[-1 + C_1 + 5\beta h^2 + C_1^2\alpha h^2 + \tfrac12 C_1 \alpha \beta h^4 + (\tfrac{19}{4}+ C_1\alpha h^2)\beta t^2\right]\varphi_{\lfloor r \rfloor_h}{\rm d}r\\
    \leq & \int_0^t s_{t-r}\left[-1 + C_1 + (\tfrac{39}{4}+C_1^2 + \tfrac32C_1\beta t^2)\beta t^2 \right]{\varphi_{\lfloor r \rfloor_h}}{\rm d}r\,,
\end{aligned}
\end{equation}
using $t/h\in \mathbb{Z}^+$ and $\alpha \leq \beta$ in the last line. Taking
$C_1 = 4/9$ and $\beta t^2 \leq 1/20$ makes the bracketed coefficient negative,
so~\eqref{eqn-solution-a} gives $a_t \leq c_t a_0$. The same choice gives
$t \leq \pi/(2\sqrt{C_1\alpha})$, ensuring $s_t$ is monotone on the interval, and
\begin{equation*}
\begin{aligned}
    c_t &\leq 1 - \tfrac12C_1\alpha t^2 + \tfrac16C_1^2\alpha^2 t^4 \leq 1 - \tfrac12 C_1\alpha t^2 + \tfrac{1}{120}C_1^2\alpha t^2(\alpha/\beta)\\
    & \leq 1 - (\tfrac12 C_1 - \tfrac{1}{120}C_1^2)\alpha t^2 \leq 1 - \tfrac15\alpha t^2\,.
\end{aligned}
\end{equation*}
Thus for $t = kh$ with $p_0 = p_0'$,
$|x_t - y_t|_{\ell^2}^2 \leq (1 - \tfrac{\alpha}{5}(kh)^2)|x_0 - y_0|_{\ell^2}^2$,
and by~\eqref{eqn-prop-proof-q-dependence-q0-p0},
\begin{equation}
    |\tilde T_{k}^1|_{\ell^2} \leq 1 - \frac{\alpha}{10}(kh)^2\,.
\end{equation}
 
\noindent\textbf{$\ell^2$ norm of $h\tilde U_{k-1}^1$.}
By~\eqref{eqn-prop-proof-q-dependence-q0-p0}, bounding $|h\tilde U_{k-1}^1|_{\ell^2}$
again reduces to bounding $a_t$, now with initial conditions $z_0 = 0$,
$w_0 = p_0 - p_0'$, so $a_0 = b_0 = 0$ and
\begin{equation}
\label{eqn-solution-a-same-positions}
\begin{aligned}
    a_t &= \int_0^t c_{t-r}\delta_r {\rm d}r + \int_0^t s_{t-r}\eta_r {\rm d}r\\
    &= -\sum_{k=1}^{\lceil t/h\rceil -1}s_{t-kh}\delta_{kh^-}+\int_0^t s_{t-r}(-2\varphi_{\lfloor r\rfloor_h}+C_1\alpha a_{\lfloor r \rfloor_h} + \varepsilon_r) {\rm d} r\,.
\end{aligned}
\end{equation}
The boundary decomposition~\eqref{eqn-bound-boundary-terms-original} still holds,
and since now $\varphi_0=0$, it refines~\eqref{eqn-bound-boundary-terms-same-momenta}
to
\begin{equation}
    -\sum_{k=1}^{t/h-1}s_{t-kh}\delta_{kh^-} \leq \tfrac12 \int_0^t s_{t-r}(|w_{\lfloor r\rfloor_h}|_{\ell^2}^2 + \beta h^2 \varphi_{\lfloor r \rfloor_h}){\rm d}r\,.
\end{equation}
The bound~\eqref{eqn-bound-varepsilont} on $\varepsilon_r$ still applies, so
combining with $\alpha a_{\lfloor r \rfloor_h} \leq \varphi_{\lfloor r \rfloor_h}$,
\begin{equation}
\label{eqn-same-initial-pos-at-intermediate-bound}
\begin{aligned}
    a_t & \leq \int_0^t s_{t-r}(-2+C_1 +5\beta h^2 + C_1^2\alpha h^2 + \tfrac12 C_1\alpha \beta h^4)\varphi_{\lfloor r \rfloor_h}{\rm d}r\\
    & \quad + \int_0^t s_{t-r} (\tfrac{19}{2}+2C_1\alpha h^2)|w_{\lfloor r\rfloor_h}|_{\ell^2}^2 {\rm d}r\,.
\end{aligned}
\end{equation}
For $w_{\lfloor r \rfloor_h}$, repeating the derivation
of~\eqref{eqn-w-integration-no-initial-bound}
 but with general $w_0$, for any $C_2 > 0$,
\begin{equation}
\begin{aligned}
    |w_{\lfloor t \rfloor_h}|_{\ell^2}^2 &= \left| w_0 - \tfrac12 \int_0^{\lfloor t\rfloor_h}(\Phi_{\lfloor s\rfloor_h} + \Phi_{\lceil s\rceil_h}) {\rm d} s\right|_{\ell^2}^2\\
    %& \leq (1+C_2)|w_0|_{\ell^2}^2 + (1+\tfrac{1}{C_2})\left| \tfrac12\int_0^{\lfloor t \rfloor_h}(\Phi_{\lfloor s \rfloor_h}+\Phi_{\lceil s \rceil_h}){\rm d} s \right|_{\ell^2}^2\\
    &\leq (1+C_2)|w_0|_{\ell^2}^2 + (1+\tfrac{1}{C_2}) \tfrac12\beta \lfloor t \rfloor_h \left(\int_0^{\lfloor t\rfloor_h} \varphi_{\lfloor \tau\rfloor_h} + \varphi_{\lceil \tau \rceil_h}{\rm d} \tau \right)\,.
\end{aligned}
\end{equation}
With the derivation of~\eqref{eqn-sw-bound}, this bounds~\eqref{eqn-same-initial-pos-at-intermediate-bound}
by
\begin{equation}
\begin{aligned}
    a_t & \leq \int_0^t s_{t-r}\left[-2 + C_1 + 5\beta h^2 + C_1^2\beta h^2+ \tfrac12 C_1 \beta^2 h^4 \right.\\
    &\qquad\left.+ (1 + \tfrac{1}{C_2})(\tfrac{19}{4}+C_1\beta h^2)\beta t^2 \right]\varphi_{\lfloor r \rfloor_h}{\rm d}r + (\tfrac{19}{2}+2C_1\alpha h^2)(1+C_2)\int_0^t s_{t-r}|w_0|_{\ell^2}^2 {\rm d}r\,.
\end{aligned}
\end{equation}
With $C_1 = 4/9$, $\beta t^2 \leq 1/20$, and $C_2 = 1/4$, the coefficient of
$\varphi_{\lfloor r \rfloor_h}$ is negative. Since $\alpha h^2 \leq \beta t^2$
gives $\tfrac{19}{2} + 2C_1\alpha h^2 \leq 10$, for $t/h \in \mathbb{Z}^+$,
\begin{equation}
    a_t \leq 16 |w_0|_{\ell^2}^2 \int_0^t s_{t-r} {\rm d}r = \frac{16(1 - \cos(\sqrt{C_1\alpha}t))}{C_1\alpha} |w_0|_{\ell^2}^2 \leq 8t^2|w_0|_{\ell^2}^2\leq \frac{2}{5\beta}|w_0|_{\ell^2}^2\,,
\end{equation}
using $\cos t\geq 1-\tfrac12t^2$. By~\eqref{eqn-prop-proof-q-dependence-q0-p0},
\begin{equation}
    |h\tilde U_{k-1}^1|_{\ell^2} \leq \sqrt{\frac{2}{5\beta}}\,.
\end{equation}
\end{proof}
 \begin{proof}[Proof of Proposition~\ref{prop-l2bounds-hmc}]
This follows from Proposition~\ref{prop-l2bounds-hmc-general}, since the
integrated Hessian satisfies $\alpha I \preceq H_i \preceq \beta I$ by
Assumption~\ref{assumption-V-log-concave}.
\end{proof}
\subsection{$\ell^\infty$ bounds under weak interactions}

\begin{proof}[Proof of Proposition~\ref{prop-linftybounds-hmc-weak}]
Following the decomposition of $\nabla^2 V$ in
Assumption~\ref{assumption-weak-interactions}, set
\begin{equation*}
    H_i^{(D)} \coloneqq \int \nabla^2 V^{(D)} {\rm d} \nu_i\,,\qquad
    H_i^{(O)} \coloneqq \int \nabla^2 V^{(O)} {\rm d} \nu_i\,,
\end{equation*}
so that $H_i = H_i^{(D)}+H_i^{(O)}$. We begin with
$|\tilde T_k^1|_{\ell^\infty}$, which we split into a diagonal contribution and a
perturbation. By the sign structure of the coefficients in
Lemma~\ref{lem-matrix-coefficients-sign},
\begin{equation}
\label{eqn-propagator-infty-norm-decomp-hmc-weak}
\begin{aligned}
    |\tilde T_k^1|_{\ell^\infty} &\leq \underbrace{|\tilde T_k^1(\{I - \tfrac{h^2}{2}H_i^{(D)}\}_{i=0}^{k-1})|_{\ell^\infty}}_{(a)}\\
    &\quad +\underbrace{\sum_{j=1}^kh^{2j}\sum_{k-1 \geq i_1 > \dots > i_j \geq 0} \tilde t_{i_1,\dots,i_j}^{(k)} |H_{i_1}\cdots H_{i_j} - H_{i_1}^{(D)}\cdots H_{i_j}^{(D)}|_{\ell^\infty}}_{(b)}\,.
\end{aligned}
\end{equation}
The matrix in $(a)$ is diagonal, so its $\ell^\infty$ norm equals its $\ell^2$
norm. The $\ell^2$ bound of Proposition~\ref{prop-l2bounds-hmc} gives
\begin{equation}
    (a) = |\tilde T_k^1(\{I - \tfrac{h^2}{2}H_i^{(D)}\}_{i=0}^{k-1})|_{\ell^2} \leq 1 - {\frac{\alpha}{10}(kh)^2}\,.
\end{equation}
For $(b)$, expanding
$H_{i_1}\cdots H_{i_j} = (H_{i_1}^{(D)}+H_{i_1}^{(O)}) \cdots (H_{i_j}^{(D)}+H_{i_j}^{(O)})$
and bounding each term yield
\begin{equation}
\begin{aligned}
    |H_{i_1} \cdots H_{i_j} - H_{i_1}^{(D)}\cdots H_{i_j}^{(D)}|_{\ell^\infty}
    & \leq \sum_{l = 1}^j \binom{j}{l}(\sup_k |H_k^{(D)}|_{\ell^\infty})^{j-l}(\sup_k |H_k^{(O)}|_{\ell^\infty})^l\\
    & \leq (\beta + \tfrac{\alpha}{50})^j - \beta^j\,,
\end{aligned}
\end{equation}
so that
\begin{equation}
\label{eqn-bound-b-combine-binomials}
\begin{aligned}
    (b) &\leq \sum_{j=0}^k h^{2j} \sum_{k-1 \geq i_1 > \ldots > i_j \geq 0} \tilde t_{i_1, \ldots, i_j}^{(k)}\Big((\beta + \tfrac{\alpha}{50})^j - \beta^j\Big)\\
    &= \sum_{j=0}^k h^{2j} t_j^{(k)} \Big((\beta + \tfrac{\alpha}{50})^j - \beta^j\Big)
    = T_k(1 + \tfrac{h^2}{2}(\beta + \tfrac{\alpha}{50})) - T_k(1 + \tfrac{h^2}{2} \beta)\,,
\end{aligned}
\end{equation}
using that the $t_j^{(k)}$ are the coefficients of
$T_k(1+\tfrac{1}{2}x) = \sum_{j=0}^k t_j^{(k)}x^j$, by
\eqref{eqn-reduced-matrix-polynomial-coefficients}. Combining the estimates,
\begin{equation}
\label{eqn-propagator-infty-norm-decomp-intermediate}
    |\tilde T_k^1|_{\ell^\infty} \leq 1 - {\frac{\alpha}{10}(kh)^2} + T_k(1 + \tfrac{h^2}{2} (\beta + \tfrac{\alpha}{50})) - T_k(1 + \tfrac{h^2}{2}\beta)\,.
\end{equation}
It remains to show the perturbation in $T_k$ is controlled, so the contraction
factor stays of the same order. For $x\geq 1$,
$T_k(x)= \cosh(k\log(x + \sqrt{x^2 - 1}))$, with
\[
    \frac{{\rm d}}{{\rm d}x} T_k(x) = \sinh(k\log(x + \sqrt{x^2 - 1})) \frac{k}{\sqrt{x^2 - 1}}\,.
\]
Substituting $x = 1 + \tfrac{h^2}{2}z$ with $z > 0$,
\begin{equation}
    \left|\frac{{\rm d}}{{\rm d} x} T_k(x)\right|\Big|_{x = 1+ \frac{h^2}{2}z} \leq \tfrac12\left(1 + \tfrac{h^2}{2}z + h \sqrt{z}\sqrt{1 + \tfrac{h^2z}{4}}\right)^k \frac{k}{h\sqrt{z}}\,,
\end{equation}
so by the mean value theorem, for $z \in [\beta, \beta + \tfrac{\alpha}{50}]$,
\begin{equation}
\label{eqn-difference-Tk-bound}
\begin{aligned}
    &\left|T_k\left(1 + \tfrac{h^2}{2}\left(\beta + \tfrac{\alpha}{50}\right)\right) - T_k\left(1 + \tfrac{h^2}{2}\beta\right)\right|\leq \frac{h^2\alpha}{100} \sup_{\substack{x = 1 + \frac{h^2}{2}z,\\ z \in [\beta,\beta + \frac{\alpha}{50}]}} \left|\frac{\mathrm{d}}{\mathrm{d}x} T_k(x)\right|\\
    \leq & \frac{kh\alpha}{200\sqrt{\beta}} \exp\left( kh\Big(\tfrac{51}{100}h\beta + \sqrt{\tfrac{51}{50}\beta}\sqrt{1 + \tfrac{51}{200}h^2\beta}\Big)\right)\leq {\frac{kh\alpha}{100\sqrt{\beta}}}\,,
\end{aligned}
\end{equation}
where the last step uses $h \leq kh \leq 1/\sqrt{20\beta}$. Substituting
into~\eqref{eqn-propagator-infty-norm-decomp-intermediate},
\begin{equation}
    |\tilde T_k^1|_{\ell^\infty} \leq 1 - {\frac{\alpha}{10}(kh)^2 + \frac{kh\alpha}{100\sqrt{\beta}}}\,.
\end{equation}
This remains a contraction as in the $\ell^2$ norm for large enough $kh$. Plugging in $kh = \frac{1}{\sqrt{20\beta}}$ gives the upper bound $1 - \frac{\alpha}{400\beta}$.

An analogous argument bounds $h \tilde U_{k-1}^1$.
Lemma~\ref{lem-matrix-coefficients-sign} gives
\begin{equation}
\label{eqn-weak-interaction-Bk-infty-norm-decomp}
\begin{aligned}
    |h \tilde U_{k-1}^1|_{\ell^\infty} &\leq \underbrace{|h \tilde U_{k-1}^1(\{I - \tfrac{h^2}{2}H_i^{(D)}\}_{i=1}^{k-1})|_{\ell^\infty}}_{(a)}\\
    &\quad + \underbrace{\sum_{j=1}^{k-1}h^{2j+1}\sum_{k-1\geq i_1 >\ldots >i_j \geq 1} \tilde u_{i_1,\ldots,i_j}^{(k)}|H_{i_1}\cdots H_{i_j} - H_{i_1}^{(D)}\cdots H_{i_j}^{(D)}|_{\ell^\infty}}_{(b)}\,,
\end{aligned}
\end{equation}
where, as before, $(a) = |h \tilde U_{k-1}^1(\{I-\tfrac{h^2}{2}H_i^{(D)}\}_{i=1}^{k-1})|_{\ell^2} \leq \sqrt{\tfrac{2}{5\beta}}$
by Proposition~\ref{prop-l2bounds-hmc}, and
\begin{equation}
\begin{aligned}
    (b) &\leq \sum_{j=0}^{k-1} h^{2j+1}\sum_{k-1\geq i_1 > \ldots > i_j \geq 1}\tilde u_{i_1,\ldots, i_j}^{(k)}\Big((\beta + \tfrac{\alpha}{50})^j - \beta^j \Big)\\
    & = h U_{k-1}(1 + \tfrac{h^2}{2}(\beta + \tfrac{\alpha}{50})) - h U_{k-1}(1 + \tfrac{h^2}{2}\beta)\leq h U_{k-1}(1 + \tfrac{h^2}{2}(\beta + \tfrac\alpha{50}))\,,
\end{aligned}
\end{equation}
since $U_{k-1}(x) \geq 0$ for $x \geq 1$. Using
$U_{k-1}(x) = \sinh(k\log(x+\sqrt{x^2-1}))/\sqrt{x^2-1}$ for $x \geq 1$,
\begin{equation}
\label{eqn-Chebyshev-U-b-bound}
    (b) \leq \frac{1}{2\sqrt\beta}\exp\left( kh\Big(\tfrac{51}{100}h\beta + \sqrt{\tfrac{51}{50}\beta}\sqrt{1 + \tfrac{51}{200}h^2\beta}\Big)\right)\leq \frac{13}{20\sqrt\beta}\,.
\end{equation}
Combining these in~\eqref{eqn-weak-interaction-Bk-infty-norm-decomp},
\begin{equation}
    |h\tilde U_{k-1}^1|_{\ell^\infty} \leq \frac{3}{2\sqrt{\beta}}\,.
\end{equation}
\end{proof}

\subsection{$\ell^\infty$ bounds under sparse interactions} 
\label{sec-sparse-hmc}
Here we use the sparsity of the Hessians to bound the propagators in $\ell^\infty$.
% Again, we introduce assumptions on general matrices that are analogous to Assumption \ref{assumption-sparse-interactions}, so that their results naturally apply to the target measure, whose Hessian is an instantiation of these matrices.
% \begin{assumption}[sparse interactions, general matrices]
% \label{assumption-sparse-interactions-linear}
% We assume $\{H_i\}$, $H$, and $H'$ are symmetric positive definite matrices satisfying $\alpha I \preceq \cdot \preceq \beta I$. Additionally, any product of $r$ of these matrices is assumed to be $s_r$-sparse (i.e., each row contains at most $s_r$ nonzero entries).
% \end{assumption}
The next lemma, adapted from Proposition~C.1 of \cite{chen2024convergence}, shows that the
integrated Hessian products inherit a sparsity property from
Assumption~\ref{assumption-sparse-interactions}. We include the proof for
completeness.
\begin{lemma}
\label{lem-sparse-instantiation}
Suppose $V$ satisfies Assumption~\ref{assumption-sparse-interactions}, and let
$\nu_1, \dots, \nu_r$ be probability measures on $\mathbb{R}^d$. Then the product
\begin{equation}
    \left(\int \nabla^2 V {\rm d} \nu_1\right)\left(\int \nabla^2 V {\rm d} \nu_2\right) \cdots \left(\int \nabla^2 V {\rm d} \nu_r\right)
\end{equation}
is $s_r$-sparse. More precisely, its $(i,j)$ entry is nonzero only if
$j \in \sfN_{2r}(i)$.
\end{lemma}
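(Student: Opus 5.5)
The argument has three steps: the sparsity pattern of a single Hessian, the fact that integration against $\nu_l$ preserves that pattern, and induction on $r$ for products.

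\emph{Single Hessian.} Under Assumption~\ref{assumption-sparse-interactions}, $V=\sum_l V_l(\mathcal{X}^l)$. Hence
\[
\nabla^2 V(x)=\sum_l \nabla^2 V_l(\mathcal{X}^l),
\]
where $\nabla^2 V_l$ is supported on the block $\sfN(l)\times\sfN(l)$. So $[\nabla^2 V(x)]_{ij}\neq 0$ only if there is some $l$ with $i,j\in\sfN(l)$. Since $i\sim l$, $l\sim j$ and the neighborhood relation is symmetric, this means $j\in\sfN_2(i)$. The pattern does not depend on $x$, and integrating against a probability measure $\nu_l$ is entrywise linear, so $\int\nabla^2 V\,{\rm d}\nu_l$ has the same zero pattern. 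Integrability is not an issue, since entries are bounded by $\beta$ under Assumption~\ref{assumption-V-log-concave}.

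\emph{Products.} Induct on $r$. Write $P_r$ for the product of the first $r$ factors and $A_{r+1}=\int\nabla^2V\,{\rm d}\nu_{r+1}$, so that
\[
[P_rA_{r+1}]_{ij}=\sum_k [P_r]_{ik}[A_{r+1}]_{kj}.
\]
A term can be nonzero only if $k\in\sfN_{2r}(i)$ and $j\in\sfN_2(k)$. I then need the composition property $\sfN_{a}$ followed by $\sfN_b$ lands in $\sfN_{a+b}$. This follows from the recursive definition by a short induction on $b$: $\sfN_{a+1}(i)$ is the set of nodes adjacent to some node of $\sfN_a(i)$. Using $i\sim i$, the sets are also nested, $\sfN_k(i)\subseteq\sfN_{k+1}(i)$. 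Therefore $j\in\sfN_{2r+2}(i)$.

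\emph{Row count.} Row $i$ of the product has nonzeros only in columns $j\in\sfN_{2r}(i)$. There are at most $\max_i|\sfN_{2r}(i)|=s_r$ such columns, so the product is $s_r$-sparse.

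There is no real obstacle here. The only care needed is to check that the neighborhoods compose additively, and that nestedness (from reflexivity $i\sim i$) makes the "within distance" reading consistent with the recursive definition of $\sfN_k$.
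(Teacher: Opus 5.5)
Your proof is correct and follows the same route as the paper. The paper's one-line proof also rests on the observation that a nonzero $(i,j)$ entry of an $r$-fold product of integrated Hessians forces a chain of $r$ shared neighborhoods, i.e.\ a walk of length $2r$ from $i$ to $j$, which gives $j\in\sfN_{2r}(i)$ and hence at most $s_r$ nonzeros per row. Your version spells out what the paper leaves implicit: the $\sfN_2$ pattern of a single Hessian, and the additivity $k\in\sfN_a(i),\ j\in\sfN_b(k)\Rightarrow j\in\sfN_{a+b}(i)$.
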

\begin{proof}[Proof of Lemma~\ref{lem-sparse-instantiation}]
The $(i,j)$ entry of the product is nonzero only if $j \in \mathsf{N}_r(i)$, or if
$i \in \sfN_r(k)$ and $j \in \sfN_r(k)$ for some $k$.
Since $\max_{1\leq i \leq d} |\sfN_{2r}(i)| = s_r$, each row has at most $s_r$
nonzero entries, so the matrix is $s_r$-sparse.
\end{proof}
 
We now prove an extended version of Proposition~\ref{prop-linf-bounds-sparse-hmc}.
\begin{proposition}[Extended version of Proposition~\ref{prop-linf-bounds-sparse-hmc}]
\label{prop-linf-bounds-sparse-hmc-appendix}
Suppose Assumptions~\ref{assumption-V-log-concave}
and~\ref{assumption-sparse-interactions} hold, and
let the integrated Hessians $H_i$ be as in Proposition~\ref{prop-l2bounds-hmc}.
Take $r_i = \lceil ih\sqrt{\beta} e + \frac{\log\sqrt{d}}{\log(5/3)}\rceil$, and
let $mh \leq 1/\sqrt{20\beta}$ with integers $N \geq 0$ and $k < m$.
Define
\begin{equation}
    A_{Nm + k} \coloneqq \Big(\prod_{j \in\{k+im\}_{i=0}^{N-1}}^{\longleftarrow}\tilde T_m^1(\{I - \tfrac{h^2}{2}H_{j+l} \}_{l=0}^{m-1})\Big)\tilde T_k^1\,,
\end{equation}
\begin{equation}
    B_{Nm+k-1} \coloneqq \Big(\prod_{j \in\{k+im\}_{i=0}^{N-1}}^{\longleftarrow}\tilde T_m^1(\{I - \tfrac{h^2}{2}H_{j+l} \}_{l=0}^{m-1})\Big)h\tilde U_{k-1}^1\,,
\end{equation}
with the product read as the identity when $N=0$. Then
\begin{equation}
\label{eqn-prop-linf-bounds-ANmkBNmk}
    |A_{Nm+k}|_{\ell^\infty} \leq 4 \sqrt{s_{r_{Nm+k}}}\,, \qquad
    |B_{Nm+k-1}|_{\ell^\infty} \leq \frac{4}{\sqrt\beta} \sqrt{s_{r_{Nm+k}}}\,.
\end{equation}
Moreover, for arbitrary probability measures $\nu_0, \nu_0'$ and
$H \coloneqq \int \nabla^2 V {\rm d} \nu_0$, $H' \coloneqq \int \nabla^2 V {\rm d} \nu_0'$,
\begin{equation}
\begin{aligned}
    &|A_{Nm+k}H|_{\ell^\infty} \leq 4\beta \sqrt{s_{r_{Nm+k}}}\,,\\
    &|B_{Nm+k-1}H|_{\ell^\infty} \leq 4\sqrt\beta \sqrt{s_{r_{Nm+k}}}\,, \qquad
    |B_{Nm+k-1}HH'|_{\ell^\infty} \leq 5\beta^{\frac32} \sqrt{s_{r_{Nm+k}}}\,.
\end{aligned}
\end{equation}
\end{proposition}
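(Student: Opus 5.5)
Fix a row index $i$ and a degree threshold $r = r_{Nm+k}$, and split $A_{Nm+k}$ and $B_{Nm+k-1}$ into a low-degree and a high-degree part. By Proposition~\ref{prop-matrix-polynomial-agg-coeff-outer},
\[
A_{Nm+k} = a_0^{(Nm+k)} I + \sum_{j\geq 1}(-1)^j h^{2j}\sum_{i_1>\dots>i_j} a^{(Nm+k)}_{i_1,\dots,i_j} H_{i_1}\cdots H_{i_j},
\]
with nonnegative coefficients. By Lemma~\ref{lem-sparse-instantiation}, every monomial of degree $j\le r$ has $(i,l)$ entry supported on $l\in\sfN_{2r}(i)$. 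Let $P_i$ denote the coordinate restriction to $\sfN_{2r}(i)$. Then the $i$-th row of $A_{Nm+k}$ splits as
\[
e_i^\top A_{Nm+k} = e_i^\top A_{Nm+k}P_i + e_i^\top A_{Nm+k}(I-P_i).
\]
Only monomials of degree $>r$ contribute to the second piece.

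\textbf{Local piece.} Here we use Cauchy--Schwarz:
\[
|e_i^\top A P_i|_{\ell^1} \le \sqrt{s_r}\,|e_i^\top A|_{\ell^2} \le \sqrt{s_r}\,|A|_{\ell^2}.
\]
The $\ell^2$ norm is controlled by Proposition~\ref{prop-l2bounds-hmc}. Since $mh\le 1/\sqrt{20\beta}$, each factor satisfies $|\tilde T_m^1|_{\ell^2}\le 1$, so $|A|_{\ell^2}\le 1$ and $|B|_{\ell^2}\le\sqrt{2/(5\beta)}$. For the $H$-multiplied variants we also use $|H|_{\ell^2}\le\beta$. Applying the same Cauchy--Schwarz step to $AH$, $BH$ and $BHH'$ costs at most one or two extra Hessians, so we enlarge the neighborhood by a constant. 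The monotone growth of $s$, absorbed into the definition of $r_i$, makes this harmless.

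\textbf{Tail piece.} We bound the tail in $\ell^\infty$ crudely by the triangle inequality, using $|H_{i_1}\cdots H_{i_j}|_{\ell^\infty}\le (\sqrt{s_1}\beta)^j$ or a comparable bound. This is where I expect trouble, since a crude $\ell^\infty$ bound on a Hessian might be of order $\sqrt{s}\beta$. It is cleaner to go through $\ell^2$ again: an entry of a product is at most $\beta^j$, and a row of the degree-$j$ product has at most $s_j$ nonzeros. So
\[
|e_i^\top H_{i_1}\cdots H_{i_j}|_{\ell^1}\le \sqrt{s_j}\,\beta^j \le \sqrt{d}\,\beta^j.
\]
The tail is therefore at most
\[
\sqrt d\sum_{j>r}h^{2j}\beta^j a_j^{(Nm+k)} \le \sqrt d\sum_{j>r}(h^2\beta)^j\max_{i\le Nm+k} t_j^{(i)}.
\]
By Proposition~\ref{prop-matrix-polynomial-agg-coeff},
\[
t_j^{(n)}\le \frac{n\,(n+j-1)!}{(n-j)!\,(2j)!}\le \frac{(2n)^{2j}}{(2j)!}\le \Big(\frac{e n}{j}\Big)^{2j}.
\]
Hence each term is at most $(e\,nh\sqrt\beta/j)^{2j}$ with $n=Nm+k$. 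For $j\ge r> e\,nh\sqrt\beta$ the ratio is below $1$. The extra $\log\sqrt d/\log(5/3)$ in $r$ is designed so that the ratio is at most $3/5$ and the geometric tail is $O((3/5)^{2r})\le 1/\sqrt d$, cancelling the $\sqrt d$ prefactor. I would verify the exact constant choice by bounding $e n h\sqrt\beta/j\le 3/5$ once $j\ge r$ in the relevant regime, and otherwise by using $j- enh\sqrt\beta\ge \log\sqrt d/\log(5/3)$ together with $(x/j)^{j}$ decay.

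\textbf{Assembly and main obstacle.} Combining the two pieces gives $|A|_{\ell^\infty}\le \sqrt{s_r}+O(1)\le 4\sqrt{s_r}$. The same template handles $B$ using $u_j$, with an extra $h$ and $U$-coefficients of similar factorial size, which gives the factor $1/\sqrt\beta$ after noting $h\,u_j^{(n)}\lesssim n h (enh\sqrt\beta/j)^{2j}$. The $H$, $HH'$ variants pick up factors $\beta$ and $\beta^{3/2}$, and these carry constant blowups; tuning the constants $4$ and $5$ is routine. The main obstacle is making the tail estimate uniform over the outer-loop product with no contraction, which is why I rely on the aggregate-coefficient domination in Proposition~\ref{prop-matrix-polynomial-agg-coeff-outer} rather than on multiplying per-loop bounds.
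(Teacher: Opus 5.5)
Your outline matches the paper's proof. You split row $i$ at $\sfN_{2r}(i)$ and bound the near part by $\sqrt{s_r}$ times the $\ell^2$ norm from Proposition~\ref{prop-l2bounds-hmc}. You dominate the far part by the degree-$>r$ tail via Proposition~\ref{prop-matrix-polynomial-agg-coeff-outer} and the explicit Chebyshev coefficients, and you let the $\log\sqrt d/\log(5/3)$ slack in $r_i$ cancel the $\sqrt d$. Several steps do not go through as written, however.

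First, for $AH$, $BH$, $BHH'$ you enlarge the neighborhood. That produces $\sqrt{s_{r+1}}$ or $\sqrt{s_{r+2}}$ rather than $\sqrt{s_{r_{Nm+k}}}$. Since $r_i$ is fixed by the statement, this cannot be absorbed into its definition. The best comparison, $s_{r+1}\le s_1 s_r$, leaves an extra $\sqrt{s_1}$ (or $\sqrt{s_2}$) that the constants $4\beta$, $4\sqrt\beta$, $5\beta^{3/2}$ do not allow. Instead, keep $\sfN_{2r}(i)$ and lower the degree cutoff, as the paper does. The monomials of degree $\le r-1$ in $A$ or $B$, multiplied by $H$ (degree $\le r-2$, multiplied by $HH'$), are still supported on $\sfN_{2r}(i)$. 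So the near part is at most $\sqrt{s_r}\,|AH|_{\ell^2}\le\beta\sqrt{s_r}$, and similarly for the others, while the tail merely starts at $j=r$ (resp.\ $j=r-1$) at the cost of a constant factor.

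Second, in the tail, let $x\coloneqq e(Nm+k)h\sqrt\beta$. Your claim that the slack makes $x/j\le 3/5$ for $j>r$ is false. The slack is additive and $x$ grows linearly in the number of outer loops, so $x/(r+1)\to 1$. Your bound $t_j^{(n)}\le (en/j)^{2j}$ gives away $4^j$ against the AM--GM bound $n^{2j}/(2j)!$, so no uniform ratio is available from it. The paper instead uses $t_j^{(i)}\le\big(e(i^2/j^2+i/j)\big)^j$. This yields $h^{2j}\beta^jt_j^{(i)}\le(e^{-1}+h\sqrt\beta)^j\le(3/5)^j$ from $j\ge x$ alone, and the slack is spent only on $(3/5)^{r}\le(3/5)^{x}/\sqrt d$. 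Your fallback does work, and you should commit to it. By $\log y\le y-1$, $(x/j)^{2j}\le e^{-2(j-x)}$. With $L=\log\sqrt d/\log(5/3)$ and $e^{-2L}=d^{-1/\log(5/3)}$, the far part is then at most $\sqrt d\,e^{-2L}e^{-2}/(1-e^{-2})\le 0.16$.

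Third, for $B$ your estimate $h^{2j+1}\beta^ju_j^{(n)}\lesssim nh\,(x/j)^{2j}$ carries the factor $\beta^{-1/2}\cdot nh\sqrt\beta$, which is unbounded in $N$. Nothing absorbs it, since $\sqrt{s_r}$ need not grow (e.g.\ $s_r\equiv1$ for product potentials). So it does not give the $1/\sqrt\beta$ bound uniformly. Keep the odd power instead: $u_j^{(i)}\le(2i+1)^{2j+1}/(2j+1)!$ gives, for $i\le n$ and using $eh\sqrt\beta<1$, $h^{2j+1}\beta^ju_j^{(i)}\le\beta^{-1/2}\big(\tfrac{e(2n+1)h\sqrt\beta}{2j+1}\big)^{2j+1}\le\beta^{-1/2}e^{-2(j-x)}$.
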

\begin{proof}[Proof of Proposition~\ref{prop-linf-bounds-sparse-hmc-appendix}]
The bounds follow a common framework, so we give the details for $A_{Nm+k}$. The
strategy is to split the matrix expansion by order in $h$ at a threshold $r$ that
balances sparsity against magnitude, as for a tridiagonal matrix, low-order
powers stay sparse (so their $\ell^\infty$ and $\ell^2$ norms are comparable),
while the growing density of higher-order powers is offset by the decay in
$h^{2j}$. The sparsity here is controlled by $s_r$.
 
For any threshold $r$, Lemma~\ref{lem-sparse-instantiation} gives that the entries
of each order-$r$ product are supported on $\sfN_{2r}(i)$, and the absolute row
sum decomposes as
\begin{equation}
\label{eqn-inf-bound-two-parts-a}
\begin{aligned}
    \sum_{j=1}^d|(A_{Nm+k})_{ij}| = & \sum_{j\in \sfN_{2r}(i)}|(A_{Nm+k})_{ij}| + \sum_{j\notin \sfN_{2r}(i)}|(A_{Nm+k})_{ij}|\\
    \leq &\sqrt{|\sfN_{2r}(i)|}\sqrt{\sum_{j\in\sfN_{2r}(i)}|(A_{Nm+k})_{ij}|^2} + \sum_{j \notin \sfN_{2r}(i)}|(A_{Nm+k})_{ij}|\\
    \leq & \sqrt{s_{r}}|A_{Nm+k}|_{\ell^2} + \sum_{j \notin \sfN_{2r}(i)}|(A_{Nm+k})_{ij}|\,.
\end{aligned}
\end{equation}
By Proposition~\ref{prop-matrix-polynomial-agg-coeff-outer},
\begin{equation}
\label{eqn-decomp-neighbor-matrix-A}
\begin{aligned}
    A_{Nm+k} &= a_0^{(Nm+k)} I + \sum_{j=1}^{r} (-1)^jh^{2j}\sum_{i_1>\dots >i_j}a_{i_1,\dots,i_j}^{(Nm+k)} H_{i_1}\cdots H_{i_j}\\
    & \quad + \sum_{j=r+1}^{Nm+k}(-1)^jh^{2j}\sum_{i_1>\dots>i_j}a_{i_1,\dots,i_j}^{(Nm+k)}H_{i_1}\cdots H_{i_j}\,.
\end{aligned}
\end{equation}
For $j \notin \sfN_{2r}(i)$, the entry $(A_{Nm+k})_{ij}$ can only come from the
second line, so
\begin{equation}
\label{eqn-far-neighbor-elements-bound}
    \sum_{j \notin \sfN_{2r}(i)}|(A_{Nm+k})_{ij}| \leq \Big|\sum_{j=r+1}^{Nm+k}(-1)^j h^{2j}\sum_{i_1,\dots,i_j} a_{i_1,\dots,i_j}^{(Nm+k)}H_{i_1}\dots H_{i_j}\Big|_{\ell^\infty}\,.
\end{equation}
We control this $\ell^\infty$ norm through the $\ell^2$ norm,
\begin{equation}
\label{eqn-tail-matrices}
    \Big|\sum_{j=r+1}^{Nm+k}(-1)^j h^{2j}\sum_{i_1,\dots,i_j} a_{i_1,\dots,i_j}^{(Nm+k)}H_{i_1}\dots H_{i_j}\Big|_{\ell^2} \leq \sum_{j=r+1}^{Nm+k}h^{2j}\beta^j a_j^{(Nm+k)}\,.
\end{equation}
By Proposition~\ref{prop-matrix-polynomial-agg-coeff-outer}, $a_j^{(Nm+k)}$ is
bounded by $\max_{j\leq i\leq Nm+k} t_j^{(i)}$, the Chebyshev coefficients whose
explicit formulas are given in
Proposition~\ref{prop-matrix-polynomial-agg-coeff}. It therefore suffices to
bound $h^{2j}\beta^j t_j^{(i)}$. For $i > 0$,
\begin{equation}
\begin{aligned}
    t_j^{(i)} &= i\frac{(i+j-1)!}{(i-j)!(2j)!} = \frac{i(i+j-1) \cdots (i+1)}{(2j)\cdots(j+1)} \frac{i(i-1)\cdots(i-j+1)}{j!}\\
    & \leq \frac{(i+j)^j}{j^j} \frac{i^j}{j!} \leq \frac{(i+j)^j (ie)^j}{j^{2j}} = \left(e\left(\tfrac{i^2}{j^2}+\tfrac{i}{j} \right) \right)^j\,,
\end{aligned}
\end{equation}
using $j! \geq j^j/\exp(j)$ (from $\exp(j) \geq j^j/j!$). Hence for
$j \geq (Nm+k)\sqrt{\beta}eh$ and $i \leq Nm+k$,
\begin{equation}
    h^{2j}\beta^j t_j^{(i)}\leq \left( h^2\beta e\left(\tfrac{i^2}{j^2}+\tfrac{i}{j} \right) \right)^j \leq \left(e^{-1} + \sqrt\beta h\right)^j \leq \left(\tfrac{3}{5}\right)^j\,,
\end{equation}
where we use $h \leq mh \leq 1/\sqrt{20\beta}$. Since this holds uniformly over
$i \leq Nm+k$, 
Proposition~\ref{prop-matrix-polynomial-agg-coeff-outer} gives $h^{2j} \beta^j a_j^{(Nm+k)} \leq  \left(\tfrac35\right)^j$,
so~\eqref{eqn-tail-matrices} yields
\begin{equation}
    \Big|\sum_{j=r+1}^{Nm+k}(-1)^j h^{2j}\sum_{i_1,\dots,i_j} a_{i_1,\dots,i_j}^{(Nm+k)}H_{i_1}\dots H_{i_j}\Big|_{\ell^2} \leq \sum_{j=r+1}^{Nm+k}\left(\tfrac35\right)^j \leq {3} \left(\tfrac35\right)^{r+1}\,.
\end{equation}
Applying the norm inequality $|M|_{\ell^\infty} \leq \sqrt{d}\,|M|_{\ell^2}$, valid
for any matrix $M \in \mathbb{R}^{d\times d}$,
to~\eqref{eqn-far-neighbor-elements-bound},
\begin{equation}
    \sum_{j \notin \sfN_{2r}(i)}|(A_{Nm+k})_{ij}| \leq {3}\sqrt{d}\left( \tfrac35\right)^{r+1}\,.
\end{equation}
Substituting into~\eqref{eqn-inf-bound-two-parts-a} with $|A_{Nm+k}|_{\ell^2} \leq 1$
(Proposition~\ref{prop-l2bounds-hmc}) and taking
$r_{Nm+k} = \lceil (Nm+k)h\sqrt\beta e + \tfrac{\log \sqrt d}{\log (5/3)} \rceil$,
\begin{equation}
    |A_{Nm+k}|_{\ell^\infty} \leq \sqrt{s_{r_{Nm+k}}} + {3} \left( \tfrac35\right)^{(Nm+k)h\sqrt\beta e}\,,
\end{equation}
which gives the first bound in~\eqref{eqn-prop-linf-bounds-ANmkBNmk} since
$s_{r_{Nm+k}} \geq 1$.
 
For $A_{Nm+k}H$ the decomposition is the same
as~\eqref{eqn-inf-bound-two-parts-a} with $H$ appended, giving
$|A_{Nm+k}H|_{\ell^2} \leq \beta |A_{Nm+k}|_{\ell^2}$ and
\begin{equation}
    \Big|\sum_{j=r}^{Nm+k}(-1)^jh^{2j}\sum_{i_1>\dots>i_j}a_{i_1,\dots,i_j}^{(Nm+k)}H_{i_1}\cdots H_{i_j} H\Big|_{\ell^2} \leq \beta \sum_{j=r}^{Nm+k}\left(\tfrac35\right)^j \leq {3}\beta\left( \tfrac{3}{5}\right)^r\,,
\end{equation}
so taking $r_{Nm+k}$ as before,
\begin{equation}
    |A_{Nm+k} H|_{\ell^\infty} \leq \beta \sqrt{s_{r_{Nm+k}}} + {3}\beta\left(\tfrac35\right)^{(Nm+k)h\sqrt\beta e} \leq 4\beta \sqrt{s_{r_{Nm+k}}}\,.
\end{equation}
 
For $B_{Nm+k-1}$, the same decomposition gives, for any $r$,
\begin{equation}
\label{eqn-inf-bound-two-parts-b}
    \sum_{j=1}^d|(B_{Nm+k-1})_{ij}| \leq \sqrt{s_r}|B_{Nm+k-1}|_{\ell^2} + \sqrt{d}\sum_{j=r+1}^{Nm+k-1}h^{2j+1}\beta^jb_j^{(Nm+k-1)}\,.
\end{equation}
The difference from $A_{Nm+k}$ is that $b_j^{(Nm+k-1)}$ depends on $u_j^{(i)}$. By
Proposition~\ref{prop-matrix-polynomial-agg-coeff},
\begin{equation}
    u_j^{(i)} = \frac{(i+j+1)!}{(i-j)!(2j+1)!} \leq \frac{(i+j+1)^j}{j^j}\frac{(i+1)^{j+1}e^{j+1}}{j^{j+1}}\,,
\end{equation}
so by Proposition~\ref{prop-matrix-polynomial-agg-coeff-outer}, for
$j \geq (Nm+k)\sqrt\beta eh$ and $i \leq Nm+k-1$,
\begin{equation}
    h^{2j+1}\beta^jb_j^{(Nm+k-1)} \leq \frac{1}{\sqrt\beta}\left(h\sqrt\beta\left(\tfrac{i+1}{j}+1\right) \right)^j\left(\tfrac{h\sqrt\beta(i+1)e}{j}\right)^{j+1} \leq \frac{1}{\sqrt\beta}\left(\tfrac35\right)^j\,,
\end{equation}
and hence
\begin{equation}
    \sqrt{d}\sum_{j=r+1}^{Nm+k-1}h^{2j+1}\beta^jb_j^{(Nm+k-1)} \leq \frac{{3}\sqrt{d}}{\sqrt{\beta}}\left(\tfrac35\right)^{r+1}\,.
\end{equation}
Taking $r = r_{Nm+k}$ and using $|B_{Nm+k-1}|_{\ell^2} \leq \sqrt{\tfrac{2}{5\beta}}$
(Proposition~\ref{prop-l2bounds-hmc}),~\eqref{eqn-inf-bound-two-parts-b} gives
\begin{equation}
    |B_{Nm+k-1}|_{\ell^\infty} \leq \sqrt{\tfrac{2}{5\beta}}\sqrt{s_{r_{Nm+k}}}+ \frac{{3}}{\sqrt\beta}\left(\tfrac35\right)^{(Nm+k)h\sqrt\beta e} \leq \frac{4}{\sqrt\beta} \sqrt{s_{r_{Nm+k}}}\,.
\end{equation}
The bounds for $B_{Nm+k-1}H$ and $B_{Nm+k-1}HH'$ follow from similar
decompositions:
\begin{equation}
    |B_{Nm+k-1}H|_{\ell^\infty} \leq \sqrt{\tfrac{2\beta}{5}}\sqrt{s_{r_{Nm+k}}} + {3}\sqrt\beta \left(\tfrac35\right)^{(Nm+k)h\sqrt\beta e} \leq 4 \sqrt\beta\sqrt{s_{r_{Nm+k}}}\,,
\end{equation}
\begin{equation}
    |B_{Nm+k-1}H H'|_{\ell^\infty} \leq \sqrt{\tfrac{2\beta^3}{5}}\sqrt{s_{r_{Nm+k}}} + \frac{25}{6} \beta^{\frac32}\left(\tfrac{3}{5} \right)^{(Nm+k)\sqrt\beta eh} \leq 5\beta^{\frac32}\sqrt{s_{r_{Nm+k}}}\,.
\end{equation}
\end{proof}
\section{Bounds for BAOAB Propagators}
\label{appendix-bounds-baoab}
In this section we bound the BAOAB propagators. The key difference from the
leap-frog case is that we work with weighted $\ell^2$ and $\ell^\infty$ norms,
defined through the weight matrix in~\eqref{eqn-def-weighted-linf-norm}, which is
what makes the propagators contractive.
\subsection{$\ell^2$ bounds} The $\ell^2$ bounds rest on the BAOAB contraction result, Theorem~5.1
of~\cite{leimkuhler2024contraction}.
\begin{lemma}[Theorem 5.1 in \cite{leimkuhler2024contraction}]
\label{lem-contraction-baoab-original}
Let Assumption~\ref{assumption-V-log-concave} hold. If
$h \leq \frac{1-\eta}{2\sqrt{\beta}}$, then for any initial states
$(x_0, p_0), (y_0, p'_0) \in \mathbb{R}^{2d}$ and any standard normal sequence
$\{\xi_i\}_{i=0}^{k-1}$, the BAOAB chains driven by the same noise,
\begin{equation}
    (x_k, p_k) = \sfU_{\baoab,h}^{\xi_{k-1}}\cdots \sfU_{\baoab,h}^{\xi_0}(x_0, p_0)\,, \quad
    (y_k, p_k') = \sfU_{\baoab,h}^{\xi_{k-1}}\cdots \sfU_{\baoab,h}^{\xi_0}(y_0, p_0')\,,
\end{equation}
satisfy almost surely
\begin{equation}
    |(x_k - y_k, p_k - p'_k)|_{\ell^2_w} \leq 7 (1-c(h))^{\frac{k-1}{2}}|(x_0 - y_0, p_0 - p'_0)|_{\ell^2_w}\,,
\end{equation}
with $c(h) = \frac{\alpha h^2}{4(1-\eta)}$.
\end{lemma}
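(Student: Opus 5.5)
The plan is to reduce the nonlinear contraction to norm bounds on linear weighted propagators, and then to split the BAOAB composition so that a single repeated block carries all of the contraction.

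\textbf{Linearization.} Since both chains share $\xi_i$, the noise cancels in every O-step. Each B-step contributes a difference $-\tfrac h2 H(\cdot)$ with an integrated Hessian $H=\int_0^1\nabla^2V(\tau u+(1-\tau)v)\,{\rm d}\tau$, which is symmetric with $\alpha I\preceq H\preceq\beta I$ by Assumption~\ref{assumption-V-log-concave}. The state difference after $k$ steps is therefore a product of one-step linear propagators. It suffices to bound their weighted operator norms $|WMW^{-1}|_{\ell^2}$ uniformly over all such $H$, since those norms then multiply.

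\textbf{Regrouping.} As in the sketch of Section~\ref{sec-subsub-underdamped-coupling-framework}, I write
\[
(\sfB\sfA\sfO\sfA\sfB)^k=\sfB\sfA\sfO\,(\sfA\sfB\sfA\sfO)^{k-1}\,\sfA\sfB,
\]
that is, $M_{\baoab,k}=M_{\bao}\,M_{\abao}(H_{k-2})\cdots M_{\abao}(H_0)\,M_{\ab}$, each factor evaluated at its own integrated Hessian. The goal is then three bounds:
\[
|M^w_{\abao}(H)|_{\ell^2}\le\sqrt{1-c(h)},\qquad |M^w_{\bao}|_{\ell^2}\,|M^w_{\ab}|_{\ell^2}\le 7 .
\]
Together these give the stated factor $7(1-c(h))^{(k-1)/2}$.

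\textbf{Reduction to $2\times 2$ matrices.} Each one-step weighted propagator depends on a single symmetric $H$, and its blocks are polynomials in $H$; see the explicit form \eqref{eqn-def-abao-weighted-matrix}. Diagonalizing $H=Q\Lambda Q^\top$ and conjugating by $\mathrm{diag}(Q,Q)$ decouples the problem into $d$ independent $2\times2$ matrices $m(\lambda)$ with $\lambda\in[\alpha,\beta]$. The weight $W$ has scalar blocks, so it commutes with this conjugation.

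For the prefactors $M^w_{\bao}$ and $M^w_{\ab}$, I would bound the entries of $m(\lambda)$ crudely, using $h\sqrt\beta\le\tfrac12$, $b^2\le a/4$, and hence $\sqrt{a-b^2}\ge\tfrac{\sqrt3}{2}\sqrt a$. I expect constants of roughly $3$ and $2$ (matching Proposition~\ref{prop-propagators-weak-interactions-baoab}), though their product must be brought down to at most $7$.

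\textbf{Main obstacle: the ABAO contraction.} I need to show $m(\lambda)^\top m(\lambda)\preceq(1-c)I$. For a $2\times2$ matrix this is equivalent to two scalar conditions:
\[
\operatorname{tr}(m^\top m)\le 2(1-c)+\big((1-c)-\sigma_{\min}^2\big),\qquad \det\!\big((1-c)I-m^\top m\big)\ge 0 ,
\]
where $\sigma_{\min}^2=|\det m|^2/\sigma_{\max}^2$. In practice I would write $(1-c)I-m^\top m$ explicitly and check that its diagonal entries and its determinant are nonnegative. The natural variables are $x=h^2\lambda/(1-\eta)\in[0,\tfrac14]$ and $\eta\in[0,1)$. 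The weight $b=h/(1-\eta)$ is chosen so that the off-diagonal coupling $-\eta h\sqrt{a-b^2}\lambda$ is offset, and the diagonal entries then lose roughly $x$ and $\eta(1-\ldots)$. The leading-order loss should be $\gtrsim\alpha h^2/(1-\eta)\ge 4c$, with the quadratic remainders in $x$ absorbed using $x\le\tfrac14$.

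This inequality bookkeeping, done uniformly in $\eta$ and including the $\eta\to0$ and $\eta\to1$ edges, is the delicate step. I would first verify the limiting cases $\eta=0$ and $\lambda=\beta$ with $h=(1-\eta)/(2\sqrt\beta)$, and then argue monotonicity in $x$.
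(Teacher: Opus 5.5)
Your plan matches the argument the paper attributes to \cite{leimkuhler2024contraction}: linearize the synchronously coupled difference through integrated Hessians, regroup as $\sfB\sfA\sfO(\sfA\sfB\sfA\sfO)^{k-1}\sfA\sfB$, decouple each factor into $2\times 2$ blocks, prove $|M^w_{\abao}|_{\ell^2}\le\sqrt{1-c(h)}$, and bound the two boundary factors (their weighted $\ell^2$ norms are each below $2$, so $7$ is generous). Two small fixes: first, in matrix form the product is $M^w_{\ab}M^w_{\abao}\cdots M^w_{\abao}M^w_{\bao}$, with $\sfB\sfA\sfO$ acting first, which is harmless for the norm bound by submultiplicativity; second, in the $2\times 2$ check you must use the actual range $x=h^2\lambda/(1-\eta)\le(1-\eta)/4$, not just $x\le\frac14$, because as $\eta\to1$ the friction gap $1-\eta^2$ has to absorb both the growth of the momentum diagonal entry $\eta\big(1+(\frac{1}{1-\eta}-\frac12)h^2\lambda\big)$ and the $(2,1)$ entry $-\eta h\sqrt{a-b^2}\,\lambda$. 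Note that $b$ does not cancel this $(2,1)$ entry (it only removes the $O(h)$ part of the $(1,2)$ entry), and it enters the determinant at the same order $x(1-\eta)$ as the product of the diagonal slacks.
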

As for HMC, we prove a general-matrix version,
Proposition~\ref{prop-l2bounds-baoab-general} below, by reading the polynomials with
general $\{H_i\}$ as BAOAB propagators with varying quadratic potentials and
turning the contraction of Lemma~\ref{lem-contraction-baoab-original} into
operator-norm bounds.
\begin{proposition}[$\ell^2$ bounds for BAOAB, general matrices]
\label{prop-l2bounds-baoab-general}
Let $\{H_i\}$ be symmetric positive definite with
$\alpha I \preceq H_i \preceq \beta I$ for $\alpha, \beta$ as in
Assumption~\ref{assumption-V-log-concave}. For $h \leq \frac{1-\eta}{2\sqrt{\beta}}$
and $c(h) = \frac{\alpha h^2}{4(1-\eta)}$, the polynomials $\tilde T_k^\eta$ and
$\tilde U_{k-1}^\eta$ defined with $\{H_i\}$, and the full-space weighted
propagator $M_{\baoab, k}^w$, satisfy
\begin{equation}
\label{eqn-prop-l2bounds-baoab-fullspace-general}
    |M_{\baoab,k}^w|_{\ell^2} \leq 7 (1-c(h))^{\frac{k-1}{2}}\,,
\end{equation}
\begin{equation}
\label{eqn-prop-l2bounds-baoab-11block-general}
%\label{eqn-prop-l2bounds-baoab-12block-general}
    |\tilde T_k^\eta|_{\ell^2} \leq 7 \sqrt{2} (1 - c(h))^{\frac{k-1}{2}}\,, \quad 
    \frac{1}{\sqrt{a-b^2}}\left|-b \tilde T_k^\eta + h \tilde U_{k-1}^\eta\right|_{\ell^2} \leq 7\sqrt{2}\left( 1 - c(h)\right)^{\frac{k-1}{2}}\,.
\end{equation}
\end{proposition}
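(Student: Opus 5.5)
We read the polynomials with general $\{H_i\}$ as propagators of a BAOAB iteration with time-varying quadratic potentials $V_i(q)=\tfrac12 q^\top H_i q$, exactly as in the proof of Proposition~\ref{prop-l2bounds-hmc-general}. For such a linear iteration the state difference obeys \eqref{eqn-inner-loop-difference} with the given $H_i$, and the noise cancels under synchronous coupling. The derivation in Part (a) of Proposition~\ref{prop-matrix-polynomial} never used that $H_i$ is an integrated Hessian, so it yields
\[
\Delta q_k=\tilde T_k^\eta\,\Delta q_0+h\tilde U_{k-1}^\eta\,\Delta p_0 .
\]
Hence $\begin{bmatrix}\tilde T_k^\eta & h\tilde U_{k-1}^\eta\end{bmatrix}$ is the position row of the full propagator $M_{\baoab,k}=M(H_{k-1})\cdots M(H_0)$.

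\textbf{Step 1 (full-space bound).} We check that Lemma~\ref{lem-contraction-baoab-original} holds for this time-inhomogeneous linear system. The proof in \cite{leimkuhler2024contraction} proceeds one step at a time. At each step it uses only that the difference of forces equals $H\,\Delta q$ for some symmetric $H$ with $\alpha I\preceq H\preceq\beta I$; this $H$ is the integrated Hessian in the nonlinear case. It does not use that $H$ is the same across steps or that it comes from a fixed potential. So the argument carries over verbatim with $H_i$ in place of the integrated Hessian at step $i$. Because the system is linear, the contraction
\[
|W M_{\baoab,k}(z_0-z_0')|_{\ell^2}\le 7(1-c(h))^{\frac{k-1}{2}}|W(z_0-z_0')|_{\ell^2}
\]
holds for every initial difference. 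Taking the supremum over $W(z_0-z_0')$ gives \eqref{eqn-prop-l2bounds-baoab-fullspace-general} for $M^w_{\baoab,k}=WM_{\baoab,k}W^{-1}$. This is the step that needs care: one must confirm that the cited argument, in particular its handling of the $\sfA\sfB\sfA\sfO$ core and the boundary $\sfB\sfA\sfO$ and $\sfA\sfB$ factors, only ever invokes single-step Hessian bounds. The constant $7$ comes from those boundary factors. With several distinct $H_i$ appearing in adjacent half-steps, we must verify that the grouping $(\sfB\sfA\sfO\sfA\sfB)^k=\sfB\sfA\sfO(\sfA\sfB\sfA\sfO)^{k-1}\sfA\sfB$ still lines up. It does, because the $\sfB$ half-steps at each grid point use the same $H_i$ on both sides.

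\textbf{Step 2 (position-row blocks).} By \eqref{eqn-weighted-position-propagators},
\[
\begin{bmatrix}\tilde T_k^\eta & h\tilde U_{k-1}^\eta\end{bmatrix}W^{-1}=\begin{bmatrix}I&0\end{bmatrix}W^{-1}M^w_{\baoab,k}.
\]
Since the first row of $W$ is $[I,\ bI]$, we have $\begin{bmatrix}I&0\end{bmatrix}W^{-1}=\begin{bmatrix}I & -b/\sqrt{a-b^2}\,I\end{bmatrix}$. Its operator norm is $\sqrt{1+b^2/(a-b^2)}=\sqrt{a/(a-b^2)}\le\sqrt{4/3}\le\sqrt2$, using $b^2\le a/4$. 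The left-hand side equals the block row in \eqref{eqn-weight-matrix-polynomial-baoab}. Each block of a block row has $\ell^2$ norm at most that of the whole row, so both $\tilde T_k^\eta$ and $\frac{1}{\sqrt{a-b^2}}(-b\tilde T_k^\eta+h\tilde U_{k-1}^\eta)$ are bounded by $\sqrt2\cdot 7(1-c(h))^{\frac{k-1}{2}}$. This gives \eqref{eqn-prop-l2bounds-baoab-11block-general}.

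Finally, we note that $h\le\frac{1-\eta}{2\sqrt\beta}$ implies $b^2=h^2/(1-\eta)^2\le 1/(4\beta)=a/4$. This is the condition used in Step 2.
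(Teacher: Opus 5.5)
Your proposal is correct and follows the paper's proof. Both realize $\tilde T_k^\eta$ and $h\tilde U_{k-1}^\eta$ as position propagators of a BAOAB iteration with time-varying quadratic potentials, transfer the contraction of Lemma~\ref{lem-contraction-baoab-original} because its proof uses only per-step spectral bounds (with the $\mathsf{B}\mathsf{A}\mathsf{O}(\mathsf{A}\mathsf{B}\mathsf{A}\mathsf{O})^{k-1}\mathsf{A}\mathsf{B}$ regrouping intact), and read off the blocks of $\begin{bmatrix}\tilde T_k^\eta & h\tilde U_{k-1}^\eta\end{bmatrix}W^{-1}$. The only difference is cosmetic: the paper uses the norm equivalence \eqref{eqn-l2-norm-equivalence} to get $|\Delta q_k|_{\ell^2}\le\sqrt{2}\,|W(\Delta q_k,\Delta p_k)|_{\ell^2}$, whereas you compute $|\begin{bmatrix} I & 0\end{bmatrix}W^{-1}|_{\ell^2}=\sqrt{a/(a-b^2)}\le 2/\sqrt{3}$ directly, which gives a slightly sharper constant.
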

\begin{proof}[Proof of Proposition \ref{prop-l2bounds-baoab-general}]
Consider the time-inhomogeneous linear dynamics driven by $\{H_i\}$,
\begin{equation}
\label{eqn-baoab-linear}
\begin{aligned}
    q_{k+1} &= q_k + \tfrac{h}{2}(1+\eta)p_k - \tfrac{h^2}{4}(1+\eta)H_k q_k+ \tfrac{h}{2}\sqrt{1-\eta^2}\xi_{k}\,,\\
    p_{k+1} &= \eta(p_k - \tfrac{h}{2}H_kq_k) + \sqrt{1-\eta^2}\xi_{k}-\tfrac{h}{2}H_{k+1}q_{k+1}\,,
\end{aligned}
\end{equation}
which is the BAOAB scheme~\eqref{eqn-baoab} with the varying potential
$V_k(q) = \tfrac12q^\top H_k q$ at step $k$. For two trajectories coupled through
the same noise $\{\xi_i\}$, the differences $\Delta q_k, \Delta p_k$ obey
\begin{equation}
\label{eqn-leap-frog-varying-potential}
\begin{aligned}
    \Delta q_{k+1} &= (I - \tfrac{h^2}{4}(1+\eta)H_k) \Delta q_k + \tfrac{h}{2}(1+\eta)\Delta p_k\,,\\
    \Delta p_{k+1} &= \eta(\Delta p_k - \tfrac{h}{2}H_k\Delta q_k) -\tfrac{h}{2}H_{k+1}\Delta q_{k+1}\,.
\end{aligned}
\end{equation}
These yield the same three-term recurrence
$\Delta q_{k+1} = (1+\eta)(I - \tfrac{h^2}{2}H_k)\Delta q_k - \eta \Delta q_{k-1}$
as in the proof of Proposition~\ref{prop-matrix-polynomial}, so the same
induction gives
\begin{equation}
\label{eqn-q-on-initial-q0-p0-baoab-linear}
    \Delta q_{k} = \tilde T_{k}^\eta \Delta q_0 + h\tilde U_{k-1}^\eta \Delta p_0\,,
\end{equation}
identifying $\tilde T_k^\eta$ and $h\tilde U_{k-1}^\eta$ as the position-difference
propagators of~\eqref{eqn-baoab-linear}.
 
Although Lemma~\ref{lem-contraction-baoab-original} is stated for a fixed
potential, its proof uses only the spectral bounds
$\alpha I \preceq H_i \preceq \beta I$. \cite{leimkuhler2024contraction} employs the decomposition
\begin{equation}
\label{eqn-baoab-decomposition-abao}
    (\mathsf{B}\mathsf{A}\mathsf{O}\mathsf{A}\mathsf{B})^k = \mathsf{B}\mathsf{A}\mathsf{O}(\mathsf{A}\mathsf{B}\mathsf{A}\mathsf{O})^{k-1}\mathsf{A}\mathsf{B}
\end{equation}
and shows that the core operator $\mathsf{A}\mathsf{B}\mathsf{A}\mathsf{O}$ is a weighted
$\ell^2$ contraction (giving the $(1-c(h))$ factors) while the boundary terms $\mathsf{B}\mathsf{A}\mathsf{O}$ and
$\mathsf{A}\mathsf{B}$ stay bounded, all under those eigenvalue bounds.
Since each $H_i$ satisfies them, the result carries over to the varying-potential
case, giving under~\eqref{eqn-baoab-linear}
\begin{equation}
\label{eqn-baoab-weighted-2-norm-varying-potential}
    |(\Delta q_k, \Delta p_k)|_{\ell^2_w} \leq 7(1-c(h))^{\frac{k-1}{2}}|(\Delta q_0, \Delta p_0)|_{\ell^2_w}\,.
\end{equation}
 
We now derive the operator-norm bounds. Since
$M_{\baoab,k}^w = W M_{\baoab,k} W^{-1}$ maps $W(\Delta q_0, \Delta p_0)$ to
$W(\Delta q_k, \Delta p_k)$, \eqref{eqn-baoab-weighted-2-norm-varying-potential}
is precisely~\eqref{eqn-prop-l2bounds-baoab-fullspace-general}. For the position
blocks, the norm equivalence~\eqref{eqn-l2-norm-equivalence} gives
\begin{equation}
    |\Delta q_k|_{\ell^2} \leq \sqrt{|\Delta q_k|_{\ell^2}^2 + a |\Delta p_k|_{\ell^2}^2} \leq \sqrt{2}|(\Delta q_k, \Delta p_k)|_{\ell^2_w}\,,
\end{equation}
so
\begin{equation}
\label{eqn-useful-cor-baoab-pre-bound}
    |\Delta q_k|_{\ell^2} \leq 7\sqrt{2}(1- c(h))^{\frac{k-1}{2}}|W(\Delta q_0, \Delta p_0)|_{\ell^2}\,.
\end{equation}
Writing~\eqref{eqn-q-on-initial-q0-p0-baoab-linear} as
$\Delta q_k = \begin{bmatrix} \tilde T_{k}^\eta & h\tilde U_{k-1}^\eta \end{bmatrix} W^{-1}\, W(\Delta q_0, \Delta p_0)$,
the bound~\eqref{eqn-useful-cor-baoab-pre-bound} gives
\begin{equation}
    \left|\begin{bmatrix} \tilde T_{k}^\eta & h\tilde U_{k-1}^\eta \end{bmatrix} W^{-1} \right|_{\ell^2} \leq 7\sqrt{2}(1-c(h))^{\frac{k-1}{2}}\,,
\end{equation}
and since
\begin{equation}
    \begin{bmatrix} \tilde T_k^\eta & h \tilde U_{k-1}^\eta \end{bmatrix} W^{-1} =
    \begin{bmatrix} \tilde T_k^\eta & - \frac{b}{\sqrt{a-b^2}} \tilde T_k^\eta + \frac{1}{\sqrt{a-b^2}}h\tilde U_{k-1}^\eta \end{bmatrix}\,,
\end{equation}
the block-wise bounds~\eqref{eqn-prop-l2bounds-baoab-11block-general}
follow.
\end{proof}
We extract a corollary in $\ell^\infty$ norms from this proof.
\begin{corollary}
\label{cor-useful-estimates-bound-baoab}
Under the assumptions and notation of
Proposition~\ref{prop-l2bounds-baoab-general}, any state difference
$(\Delta q_k, \Delta p_k)$ under the varying-potential
scheme~\eqref{eqn-leap-frog-varying-potential} satisfies
\begin{equation}
\label{eqn-useful-cor-baoab-weighted-infty-bound}
    |(\Delta q_k, \Delta p_k)|_{\ell^\infty_w} \leq 7\sqrt{2d}(1-c(h))^{\frac{k-1}{2}}|(\Delta q_0, \Delta p_0)|_{\ell^\infty_w}\,,
\end{equation}
\begin{equation}
\label{eqn-useful-cor-baoab-infty-bound}
    |\Delta q_k|_{\ell^\infty} \leq 14\sqrt{d}(1-c(h))^{\frac{k-1}{2}}|(\Delta q_0, \Delta p_0)|_{\ell^\infty_w}\,.
\end{equation}
\end{corollary}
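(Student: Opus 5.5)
The corollary follows from the weighted $\ell^2$ contraction \eqref{eqn-baoab-weighted-2-norm-varying-potential} in the proof of Proposition~\ref{prop-l2bounds-baoab-general}, combined with the elementary comparison between $\ell^2$ and $\ell^\infty$ norms on $\mathbb{R}^{2d}$. For any $v\in\mathbb{R}^{2d}$ we have $|v|_{\ell^\infty}\le |v|_{\ell^2}\le \sqrt{2d}\,|v|_{\ell^\infty}$. We apply this to the weighted vectors $W(\Delta q_k,\Delta p_k)$ and $W(\Delta q_0,\Delta p_0)$, noting that $|(\cdot,\cdot)|_{\ell^p_w}=|W(\cdot,\cdot)|_{\ell^p}$ by definition.

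For the first inequality \eqref{eqn-useful-cor-baoab-weighted-infty-bound}, the chain is
\begin{equation*}
|(\Delta q_k,\Delta p_k)|_{\ell^\infty_w}\le |(\Delta q_k,\Delta p_k)|_{\ell^2_w}\le 7(1-c(h))^{\frac{k-1}{2}}|(\Delta q_0,\Delta p_0)|_{\ell^2_w}\le 7\sqrt{2d}\,(1-c(h))^{\frac{k-1}{2}}|(\Delta q_0,\Delta p_0)|_{\ell^\infty_w}.
\end{equation*}
The middle step is exactly \eqref{eqn-baoab-weighted-2-norm-varying-potential}. This contraction holds for the varying-potential linear scheme \eqref{eqn-leap-frog-varying-potential} under the spectral bounds $\alpha I\preceq H_i\preceq\beta I$, and the step-size condition $h\le\frac{1-\eta}{2\sqrt\beta}$ guarantees $b^2\le a/4$, so $W$ is well defined.

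For the second inequality \eqref{eqn-useful-cor-baoab-infty-bound}, we bound $|\Delta q_k|_{\ell^\infty}\le |\Delta q_k|_{\ell^2}$ and then use \eqref{eqn-useful-cor-baoab-pre-bound}:
\begin{equation*}
|\Delta q_k|_{\ell^2}\le 7\sqrt2\,(1-c(h))^{\frac{k-1}{2}}|W(\Delta q_0,\Delta p_0)|_{\ell^2}.
\end{equation*}
Bounding $|W(\Delta q_0,\Delta p_0)|_{\ell^2}\le\sqrt{2d}\,|(\Delta q_0,\Delta p_0)|_{\ell^\infty_w}$ gives the constant $7\sqrt2\cdot\sqrt{2d}=14\sqrt d$.

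No step here is a genuine obstacle. The only point requiring care is that \eqref{eqn-useful-cor-baoab-pre-bound} rests on the norm equivalence \eqref{eqn-l2-norm-equivalence}, which needs $b^2\le a/4$; this is ensured by the step-size assumption, as in Proposition~\ref{prop-l2bounds-baoab-general}. The dimensional factor $\sqrt d$ is intentionally crude: it is the loss that the sparse-interaction analysis later removes by trading it for $\sqrt{s_r}$.
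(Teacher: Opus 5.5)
Your proposal is correct and matches the paper's proof. The weighted bound follows from the weighted $\ell^2$ contraction together with the comparison $|v|_{\ell^\infty}\le|v|_{\ell^2}\le\sqrt{2d}\,|v|_{\ell^\infty}$ on $\mathbb{R}^{2d}$, which the paper states as the operator-norm inequality $|M|_{\ell^\infty}\le\sqrt{2d}\,|M|_{\ell^2}$; the position bound follows from the $7\sqrt{2}$ pre-bound and the same $\sqrt{2d}$ factor, giving $14\sqrt{d}$ exactly as you computed.
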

\begin{proof}[Proof of Corollary~\ref{cor-useful-estimates-bound-baoab}]
Inequality~\eqref{eqn-useful-cor-baoab-weighted-infty-bound} follows
from~\eqref{eqn-baoab-weighted-2-norm-varying-potential} and the matrix norm
equivalence $|M|_{\ell^\infty} \leq \sqrt{2d}|M|_{\ell^2}$ for
$M \in \mathbb{R}^{2d \times 2d}$.
For~\eqref{eqn-useful-cor-baoab-infty-bound},~\eqref{eqn-useful-cor-baoab-pre-bound}
gives
\begin{equation}
    |\Delta q_k|_{\ell^\infty} \leq |\Delta q_k|_{\ell^2} \leq 7\sqrt{2}(1-c(h))^{\frac{k-1}{2}}|(\Delta q_0, \Delta p_0)|_{\ell^2_w} \leq 14\sqrt{d}(1-c(h))^{\frac{k-1}{2}}|(\Delta q_0, \Delta p_0)|_{\ell^\infty_w}\,.
\end{equation}
\end{proof}
\begin{proof}[Proof of Proposition~\ref{prop-l2bounds-baoab}]
This follows from Proposition~\ref{prop-l2bounds-baoab-general}, since the
integrated Hessians satisfy the assumption.
\end{proof}
\subsection{$\ell^\infty$ bounds under weak interactions}
We bound the BAOAB propagators in $\ell^\infty$ following
\cite{leimkuhler2024contraction}, decomposing the scheme into the operators
$\mathsf{A}\mathsf{B}\mathsf{A}\mathsf{O}$, $\mathsf{B}\mathsf{A}\mathsf{O}$, and
$\mathsf{A}\mathsf{B}$ as in~\eqref{eqn-baoab-decomposition-abao}, where
$\mathsf{A}\mathsf{B}\mathsf{A}\mathsf{O}$ drives the contraction and the boundary
terms $\mathsf{B}\mathsf{A}\mathsf{O}$, $\mathsf{A}\mathsf{B}$ stay bounded. The contraction remains under the weak interactions.
 
Unlike HMC, a diagonal Hessian does not give a diagonal propagator here. On the
joint $(q,p)$ space the propagator is a $2\times2$ block matrix with diagonal
blocks. The crude inequality $|M|_{\ell^\infty} \leq \sqrt{2}|M|_{\ell^2}$ (from
the two nonzero entries per row) would not preserve the $\ell^2$ contraction, so
we bound each block in $\ell^\infty$ directly, using the $\ell^2$ bounds as a
guide to the contraction rate.
\begin{proof}[Proof of Proposition~\ref{prop-propagators-weak-interactions-baoab}]
\noindent\textbf{Estimate for $M_{\abao}^w$.}
The matrix is
\begin{equation}
    M_{\abao}^w(H_0) = \begin{bmatrix}
        I - \tfrac12 h^2 H_0 - \tfrac{\eta}{1-\eta}h^2 H_0 & h^3\tfrac{\frac14 + \frac{\eta}{(1-\eta)^2}}{\sqrt{a-b^2}}H_0\\
        - \eta h \sqrt{a-b^2}H_0 & \eta (I + (-\tfrac12 + \tfrac{1}{1-\eta})h^2H_0)
    \end{bmatrix}\,.
\end{equation}
For the second block of rows, the $\ell^1$ norm of each row is at most
\begin{equation}
\label{eqn-abao-row-2-bound}
    \eta h \sqrt{a-b^2} \cdot \tfrac{21}{20}\beta + \eta (1 + (-\tfrac12 + \tfrac{1}{1-\eta}) h^2 \cdot \tfrac{21}{20}\beta)\,.
\end{equation}
This increases in $\eta \in [0,1]$, while the target bound
$1 - \frac{h^2\alpha}{8(1-\eta)}$ decreases in $\eta$. Comparing them at the
largest admissible value $\eta = 1-2\sqrt\beta h$ confirms the target bound for
the second block.
 
For the first block of rows, split off the diagonal and off-diagonal parts
$H_0^{(D)}, H_0^{(O)}$ (defined as in the proof of
Proposition~\ref{prop-linftybounds-hmc-weak}):
\begin{equation}
\begin{aligned}
    &\begin{bmatrix} I - \tfrac12 h^2 H_0^{(D)} - \tfrac{\eta}{1-\eta}h^2 H_0^{(D)} & h^3\tfrac{\frac14 + \frac{\eta}{(1-\eta)^2}}{\sqrt{a-b^2}} H_0^{(D)} \end{bmatrix}\\
    & \quad + \begin{bmatrix} - \tfrac12 h^2 H_0^{(O)} - \tfrac{\eta}{1-\eta}h^2 H_0^{(O)} & h^3\tfrac{\frac14 + \frac{\eta}{(1-\eta)^2}}{\sqrt{a-b^2}} H_0^{(O)} \end{bmatrix}\,.
\end{aligned}
\end{equation}
In the first matrix, $H_0^{(D)}$ is diagonal, so each row reduces to a scalar
$\lambda \in [\alpha, \beta]$. Since
$\tfrac12 h^2\beta + \tfrac{\eta}{1-\eta}h^2\beta$ increases in $\eta$ and is
below $1$ at $\eta = 1 - 2\sqrt{\beta} h$, it is below $1$ throughout, and the
$\ell^1$ norm of the row is
\begin{equation}
\label{eqn-abao-row-1-bound}
    1 - \tfrac12 h^2\lambda - \tfrac{\eta}{1-\eta}h^2\lambda + h^3 \tfrac{\frac14 + \frac{\eta}{(1-\eta)^2}}{\sqrt{a-b^2}}\lambda\,.
\end{equation}
The condition $h \leq \frac{1-\eta}{2\sqrt\beta}$ gives
$\sqrt{a-b^2} \geq \frac{\sqrt{3}}{2 \sqrt{\beta}}$, so the last term is at most
$\frac{1}{4\sqrt{3}}h^2 \lambda + \frac{1}{\sqrt{3}}\frac{\eta}{1-\eta}h^2 \lambda$,
and~\eqref{eqn-abao-row-1-bound} is bounded by $1 - c(h)$. Adding the
off-diagonal contribution, the $\ell^\infty$ norm of the first block of rows is
\begin{equation}
    1 - c(h) + \left((\tfrac12 + \tfrac{\eta}{1-\eta})h^2 + h^3\tfrac{\frac14 + \frac{\eta}{(1-\eta)^2}}{\sqrt{a-b^2}}\right)| H_0^{(O)}|_{\ell^\infty} \leq 1 - \tfrac12c(h)\,.
\end{equation}
 
\noindent\textbf{Estimate for $M_{\bao}^w$.} 
By the scheme definition in~\eqref{eqn-def-baoab-aba-bao-ab},
\begin{equation}
    M_{\bao}^w(H_0) = \begin{bmatrix}
        I - \tfrac{h^2}{4}H_0 - \tfrac{h^2\eta}{2(1-\eta)}H_0 & \tfrac{1}{\sqrt{a-b^2}}(-\tfrac{h}{2}I + \tfrac{h^3(1+\eta)}{4(1-\eta)^2}H_0)\\
        - \tfrac{h}{2}\eta \sqrt{a-b^2}H_0 & \eta I + \tfrac{h^2\eta}{2(1-\eta)}H_0
    \end{bmatrix}\,.
\end{equation}
We bound each $2\times2$ block in $\ell^\infty$. The $(1,1)$ block is bounded by
$1 + (\tfrac{h^2}{4} + \tfrac{h}{2} \cdot \tfrac{1}{2\sqrt\beta})\tfrac{21}{20}\beta \leq \tfrac{383}{320}$,
the $(1,2)$ block by
$\frac{2\sqrt{\beta}}{\sqrt{3}}(\tfrac{h}{2} + \tfrac{h}{4}\cdot 2 \cdot \tfrac{1}{4\beta} \cdot \tfrac{21}{20}\beta) \leq \frac{101}{160\sqrt{3}}$,
the $(2,1)$ block by
$\tfrac12 \tfrac{1}{\sqrt{\beta}} h \cdot \tfrac{21}{20}\beta \leq \tfrac{21}{80}$,
and the $(2,2)$ block by
$\tfrac12 \cdot \tfrac{1}{2\sqrt\beta}h \cdot\tfrac{21}{20}\beta + 1 \leq \tfrac{181}{160}$.
Summing the block bounds over each row gives the result.

\noindent\textbf{Estimate for $M_{\ab}^w$.}
By \eqref{eqn-def-baoab-aba-bao-ab},
\begin{equation}
    M_{\ab}^w(H_0) = \begin{bmatrix}
        I - \tfrac{h^2}{2(1-\eta)}H_0 & \tfrac{1}{\sqrt{a-b^2}}(\tfrac{h}{2}I + \tfrac{h^3(1+\eta)}{4(1-\eta)^2}H_0)\\
        - \tfrac{h}{2}\sqrt{a-b^2}H_0 & I + (\tfrac{h^2}{2(1-\eta)}-\tfrac{h^2}{4})H_0
    \end{bmatrix}\,.
\end{equation}
By the same block-wise bounding, the $(1,1)$ block is at most
$1 + \tfrac{h}{4\sqrt\beta}\tfrac{21}{20}\beta \leq \tfrac{181}{160}$, the $(1,2)$
block at most
$\frac{2\sqrt{\beta}}{\sqrt{3}}(\tfrac{h}{2}+\tfrac{h}{2}\tfrac{1}{4\beta}\tfrac{21}{20}\beta) \leq \frac{101}{160\sqrt{3}}$,
the $(2,1)$ block at most
$\tfrac{h}{2\sqrt{\beta}}\tfrac{21}{20}\beta \leq \tfrac{21}{80}$, and the $(2,2)$
block at most
$1 + \tfrac{h}{2}\tfrac{1}{2\sqrt{\beta}}\tfrac{21}{20}\beta \leq \tfrac{181}{160}$.
A row sum concludes the proof.
\end{proof}
\subsection{$\ell^\infty$ bounds under sparse interactions} This proof follows that of
 Proposition~\ref{prop-linf-bounds-sparse-hmc-appendix}, with two changes. The
$\ell^2$ input is now the BAOAB contraction of
Proposition~\ref{prop-l2bounds-baoab}, and the coefficient bounds come from the
$\eta < 1$ estimates of Proposition~\ref{prop-matrix-polynomial-agg-coeff-eta}
rather than the exact $\eta=1$ formulas. We therefore give only the parts that
differ.
\begin{proof}[Proof of Proposition~\ref{prop-linf-bounds-sparse-baoab}]
Write $A_k \coloneqq \tilde T_{k}^\eta$ and $B_{k-1}\coloneqq h\tilde U_{k-1}^\eta$.
As in~\eqref{eqn-inf-bound-two-parts-a}, for each threshold $r$,
\begin{equation}
\label{eqn-linf-sparse-baoab-decomp}
    \sum_{j=1}^d|(A_k)_{ij}| \leq \sqrt{s_r}|A_k|_{\ell^2} + \sum_{j \notin \sfN_{2r}(i)}|(A_k)_{ij}|\,,
\end{equation}
where $|A_k|_{\ell^2} \leq 7\sqrt{2}(1-c(h))^{\frac{k-1}{2}}$ by
Proposition~\ref{prop-l2bounds-baoab}, and the far-neighbor sum comes from the terms of order greater than $r$,
\begin{equation}
    \sum_{j \notin \sfN_{2r}(i)}|(A_k)_{ij}| \leq \Big|\sum_{j=r+1}^k(-1)^jh^{2j}\sum_{k-1 \geq i_1 > \dots > i_j \geq 0} \tilde t_{i_1, \dots, i_j}^{(k)}H_{i_1}\dots H_{i_j}\Big|_{\ell^\infty}\,,
\end{equation}
whose $\ell^2$ norm is at most $\sum_{j=r+1}^k h^{2j} \beta^j t_j^{(k)}$. By
Proposition~\ref{prop-matrix-polynomial-agg-coeff-eta},
\begin{equation}
    t_j^{(k)} \leq \left(\tfrac{1}{1-\eta}\right)^j \binom{k}{j} \leq \left( \tfrac{k e}{(1-\eta)j} \right)^j\,,
\end{equation}
using $j! \geq j^j/\exp(j)$. Hence for $r > \frac{ke^2\beta}{1-\eta}h^2$,
\begin{equation}
\begin{aligned}
    \Big|\sum_{j=r+1}^k(-1)^jh^{2j}\sum_{k-1 \geq i_1 > \dots > i_j \geq 0} \tilde t_{i_1, \dots, i_j}^{(k)}H_{i_1}\dots H_{i_j}\Big|_{\ell^2}
    &\leq \sum_{j=r+1}^k \left( h^2 \tfrac{ke}{(1-\eta)j}\beta\right)^j\\
    &\leq \sum_{j=r+1}^k \exp(-j) \leq \exp(-r)\,.
\end{aligned}
\end{equation}
Substituting into~\eqref{eqn-linf-sparse-baoab-decomp} with
$|M|_{\ell^\infty} \leq \sqrt d |M|_{\ell^2}$ and taking
$r_k \coloneqq \lceil \frac{ke^2\beta}{1-\eta}h^2+\log\sqrt{d} \rceil$,
\begin{equation}
\begin{aligned}
    |A_k|_{\ell^\infty} &\leq 7\sqrt2\,\sqrt{s_{r_k}}(1- c(h))^{\frac{k-1}{2}}+ \exp\left(-\tfrac{ke^2\beta}{1-\eta}h^2\right) \\&\leq (7\sqrt 2 + 1) \sqrt{s_{r_k}} \exp\left(-\tfrac{\alpha h^2}{8(1-\eta)}(k-1)\right)\,.
\end{aligned}
\end{equation}
For $B_{k-1}$, Proposition~\ref{prop-matrix-polynomial-agg-coeff-eta} gives
\begin{equation}
    u_j^{(k-1)} \leq \left(\tfrac{1}{1-\eta}\right)^{j+1}\binom{k-1}{j} \leq \left( \tfrac{1}{1-\eta}\right)^{j+1} \left(\tfrac{(k-1)e}{j}\right)^j\,,
\end{equation}
so the terms of order greater than $r$ have $\ell^2$ norm at most
$\sum_{j=r+1}^{k-1}h^{2j+1}\beta^ju_j^{(k-1)} \leq \frac{h}{1-\eta} \exp(-r) = b \exp(-r)$.
Combining the tail estimates for $A_k$ and $B_{k-1}$,
\begin{equation}
    \sum_{j\notin \sfN_{2r}(i)}\left| \left(-\tfrac{b}{\sqrt{a-b^2}} A_k + \tfrac{1}{\sqrt{a-b^2}}B_{k-1} \right)_{ij}\right| \leq 2\sqrt{d}\, \tfrac{b}{\sqrt{a-b^2}}\exp(-r)\,,
\end{equation}
while Proposition~\ref{prop-l2bounds-baoab} gives
$|-\tfrac{b}{\sqrt{a-b^2}} A_k + \tfrac{1}{\sqrt{a-b^2}}B_{k-1}|_{\ell^2} \leq 7\sqrt{2}(1-c(h))^{\frac{k-1}{2}}$.
Hence, simplifying $\frac{b}{\sqrt{a-b^2}}$ via $h\leq \frac{1-\eta}{2\sqrt{\beta}}$,
\begin{equation}
\begin{aligned}
    &\left|-\tfrac{b}{\sqrt{a-b^2}} A_k + \tfrac{1}{\sqrt{a-b^2}}B_{k-1}\right|_{\ell^\infty} \leq \sqrt{s_{r_k}}7\sqrt{2}(1-c(h))^{\frac{k-1}{2}} + \tfrac{2b}{\sqrt{a-b^2}}\exp\left(-\tfrac{ke^2\beta}{1-\eta}h^2\right)\\
    &\qquad \leq \left(7\sqrt{2}+\tfrac{2}{\sqrt{3}}\right)\sqrt{s_{r_k}}\exp\left(-\tfrac{\alpha h^2}{8(1-\eta)}(k-1)\right)\,.
\end{aligned}
\end{equation}
\end{proof}

\section{Discretization Error Analysis}
\label{appendix-discretization-error}
We now bound the discretization error by comparing the discrete and continuous
dynamics. The following two bounds from \cite{chen2024convergence}, used
throughout, produce the $\log(2d)$ dependence of the bias.

\begin{lemma}[Lemma B.1 in \cite{chen2024convergence}]
\label{lem-Gaussian-linf}
Suppose $Y = (Y^{(1)}, \dots, Y^{(d)}) \in \bR^d$ with each $Y^{(i)}$ centered and
sub-Gaussian with variance proxy $\sigma^2$, i.e.\
$\bE[\exp(\lambda Y^{(i)})] \leq \exp(\tfrac{1}{2}\lambda^2\sigma^2)$. Then
\begin{equation}
    \bE[|Y|_{\ell^\infty}^2] \leq 4\sigma^2 \log(2d)\,.
\end{equation}
\end{lemma}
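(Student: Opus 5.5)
We would prove the bound by a union-bound argument on the moment generating function of the maximum, followed by Jensen's inequality. Since $|Y|_{\ell^\infty} = \max_{1\le i\le d}\max(Y^{(i)}, -Y^{(i)})$, the quantity to control is the maximum of the $2d$ centered variables $\pm Y^{(i)}$. Each of these is sub-Gaussian with variance proxy $\sigma^2$, because the defining inequality $\bE[\exp(\lambda Y^{(i)})]\le \exp(\tfrac12\lambda^2\sigma^2)$ holds for all real $\lambda$, including negative ones.

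The key steps are as follows. First we work with $|Y|_{\ell^\infty}^2$ directly, since a bound on $\bE|Y|_{\ell^\infty}$ alone does not control the second moment. For $0<\lambda<1/(2\sigma^2)$, Jensen's inequality applied to the convex function $x\mapsto \exp(\lambda x)$ gives
\[
\exp\big(\lambda\, \bE[|Y|_{\ell^\infty}^2]\big) \le \bE\big[\exp(\lambda |Y|_{\ell^\infty}^2)\big] \le \sum_{i=1}^d \bE\big[\exp(\lambda (Y^{(i)})^2)\big].
\]
Second, we bound each summand by the Gaussian mixing identity $\exp(\lambda y^2)=\bE_g[\exp(\sqrt{2\lambda}\,g\,y)]$ with $g\sim\mathcal N(0,1)$ independent of $Y$. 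Fubini and the sub-Gaussian hypothesis then give
\[
\bE[\exp(\lambda (Y^{(i)})^2)] \le \bE_g[\exp(\lambda g^2\sigma^2)] = (1-2\lambda\sigma^2)^{-1/2}.
\]
Third, we choose $\lambda = 1/(4\sigma^2)$, which makes each summand at most $\sqrt2$. This yields
\[
\bE|Y|_{\ell^\infty}^2 \le 4\sigma^2\log(\sqrt2\, d) \le 4\sigma^2\log(2d).
\]

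As an alternative, one could use the tail bound $\mathbb P(|Y|_{\ell^\infty}>t)\le 2d\,e^{-t^2/(2\sigma^2)}$ and integrate it after splitting at $t_0^2=2\sigma^2\log(2d)$. That route gives $2\sigma^2\log(2d)+2\sigma^2$, which is at most $4\sigma^2\log(2d)$ only because $\log 2\ge 1/2$. The constant fits, but the $d=1$ case is delicate. The main point needing care is therefore the constant $4$ together with the factor $2d$ rather than $d$. The MGF route absorbs both cleanly, so we would present that one and note that the tail route also works since $2\log(2d)\ge 2\log 2>1$.
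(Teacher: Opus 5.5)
Your main argument is correct and complete. Jensen's inequality for $x\mapsto e^{\lambda x}$, then replacing the maximum by a sum, then the Gaussian-mixing bound $\bE[e^{\lambda (Y^{(i)})^2}]\le(1-2\lambda\sigma^2)^{-1/2}$, and finally the choice $\lambda=1/(4\sigma^2)$ give $4\sigma^2\log(\sqrt2\,d)\le 4\sigma^2\log(2d)$. The paper quotes this lemma from \cite{chen2024convergence} without proof, so there is no argument to compare against. Your derivation gives exactly the stated constant.

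One correction to your closing aside about the tail route. The bound $2\sigma^2\log(2d)+2\sigma^2$ is at most $4\sigma^2\log(2d)$ if and only if $\log(2d)\ge 1$. The conditions you cite, $\log 2\ge 1/2$ and $2\log(2d)>1$, are not the right ones. For $d=1$ the tail route fails: $2\sigma^2(1+\log 2)\approx 3.39\,\sigma^2$, which exceeds $4\sigma^2\log 2\approx 2.77\,\sigma^2$. The threshold $s_0=2\sigma^2\log(2d)$ is where $2d\,e^{-s/(2\sigma^2)}$ crosses $1$, so it is already the optimal split, and no other threshold rescues the constant. To make the tail route work you must treat $d=1$ separately, e.g.\ via $\bE[(Y^{(1)})^2]\le\sigma^2\le 4\sigma^2\log 2$, which follows from the second-order expansion of the MGF bound. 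Either drop the claim that the tail route also works or add that case. The proof you actually intend to present is unaffected.
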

 
\begin{lemma}[Proposition 2.3 in \cite{chen2024convergence}]
\label{lem-gradV-expectation}
Let Assumption~\ref{assumption-V-log-concave} hold. Then
\begin{equation}
\label{eqn-expected-inf-V-bound}
    \sqrt{\bE_\pi[|\nabla V|_{\ell^\infty}^2]} \leq 2\sqrt{\beta\log (2d)}\,.
\end{equation}
\end{lemma}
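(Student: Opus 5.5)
The approach is to show that each coordinate $\partial_i V(X)$, with $X\sim\pi$, is centered and sub-Gaussian with variance proxy $\beta$, and then to apply Lemma~\ref{lem-Gaussian-linf} with $\sigma^2=\beta$. That lemma gives $\bE_\pi[|\nabla V|_{\ell^\infty}^2]\le 4\beta\log(2d)$, and taking square roots yields \eqref{eqn-expected-inf-V-bound}. The sub-Gaussianity comes from the Gaussian integration-by-parts identity for $\pi\propto e^{-V}$, namely
\[
\bE_\pi[f\,\partial_i V]=\bE_\pi[\partial_i f]
\]
for suitable test functions $f$. This holds because $\partial_i e^{-V}=-\partial_iV\,e^{-V}$ and the boundary terms vanish.

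\textbf{Centering.} Taking $f\equiv 1$ in the identity gives $\bE_\pi[\partial_iV]=0$.

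\textbf{Moment generating function.} Set $M(\lambda)\coloneqq\bE_\pi[e^{\lambda\partial_iV}]$ and apply the identity with $f=e^{\lambda\partial_iV}$:
\[
M'(\lambda)=\bE_\pi\big[\partial_iV\,e^{\lambda\partial_iV}\big]=\lambda\,\bE_\pi\big[\partial_{ii}V\,e^{\lambda\partial_iV}\big].
\]
By Assumption~\ref{assumption-V-log-concave}, $0<\alpha\le\partial_{ii}V\le\beta$. Hence $M'(\lambda)\le\lambda\beta M(\lambda)$ for $\lambda\ge0$, and $M'(\lambda)\ge\lambda\beta M(\lambda)$ for $\lambda\le0$. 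Integrating $(\log M)'$ from $0$ to $\lambda$ (respectively from $\lambda$ to $0$), with $M(0)=1$, gives $\log M(\lambda)\le\tfrac12\beta\lambda^2$ for every real $\lambda$. This is exactly the sub-Gaussian condition required in Lemma~\ref{lem-Gaussian-linf}, with $\sigma^2=\beta$, uniformly in $i$.

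\textbf{Main obstacle: justifying the integration by parts.} This is the only non-routine step; once it is justified, the rest is a one-line application of Lemma~\ref{lem-Gaussian-linf}. Three facts need checking.
\begin{itemize}
\item $M(\lambda)$ is finite. Since $|\nabla V(x)|\le|\nabla V(x^*)|+\beta|x-x^*|$, $\partial_iV$ grows at most linearly. Strong convexity gives $e^{-V(x)}\lesssim e^{-\alpha|x-x^*|^2/2}$, so $e^{\lambda\partial_iV}e^{-V}$ is integrable.
\item Differentiation under the integral sign is valid. The same Gaussian-type domination applies to $|\partial_iV|\,e^{|\lambda|\,|\partial_iV|}e^{-V}$, uniformly for $\lambda$ in compact sets.
\item The boundary terms vanish. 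Apply Fubini in the $x^{(i)}$ direction and use $e^{\lambda\partial_iV-V}\to0$ at infinity, since the quadratic decay of $e^{-V}$ dominates the linear growth in the exponent.
\end{itemize}
If needed, I would first prove the identity for compactly supported smooth cutoffs $f\chi_R$ and then let $R\to\infty$ by dominated convergence.
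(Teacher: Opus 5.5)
Your proof is correct. Applying the Stein identity $\bE_\pi[f\,\partial_i V]=\bE_\pi[\partial_i f]$ to $f=e^{\lambda\partial_i V}$ and using $\partial_{ii}V\le\beta$ gives each $\partial_i V(X)$ variance proxy $\beta$, and Lemma~\ref{lem-Gaussian-linf} then yields exactly $4\beta\log(2d)$; this is the same sub-Gaussian-plus-maximal-inequality route behind Proposition~2.3 of \cite{chen2024convergence}, which the present paper imports without reproving. Your integrability, differentiation-under-the-integral and boundary-term justifications are adequate, and only the upper Hessian bound enters the MGF estimate itself (the identity is Stein's identity for $\pi$, not a Gaussian one).
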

\subsection{Discretization error in HMC}
\label{sec-discretization-error-hmc}
The discretization error
decompositions~\eqref{eqn-one-step-hmc-discretization-error-decomposition-propagation}
and~\eqref{eqn-multi-step-hmc-discretization-error-decomposition-propagation} in
Section~\ref{sec-sketch-hmc} share the common structure
\begin{equation}
\label{eqn-hmc-discretization-error-common-structure}
    |M \sfPi_1\sfU_{\hmc,h}^{m-i-1}(\sfU_{\hmc,h}^{1} - \sfU_{\hmc}^h)(q^*, p^*)|_{2,\ell^\infty}\,,
\end{equation}
with a matrix $M \in \mathbb{R}^{d\times d}$ and $(q^*, p^*) \sim \pi \otimes \mathcal{N}(0,I)$,
where, as recalled in Section~\ref{sec-discretization-error-main-text},
\begin{equation}
    (\sfU_{\hmc,h}^1 - \sfU_{\hmc}^h)(q^*, p^*) = ((\sfQ_h^1-\sfQ^h), (\sfP_h^1-\sfP^h))(q^*, p^*)\,.
\end{equation}

The following proposition bounds the error of each component.
\begin{proposition}[Generalization of Proposition~\ref{prop-discretization-error-hmc-main-text}]
\label{prop-discretization-error}
For any matrix $M \in \mathbb{R}^{d \times d}$, suppose there exist constants
$C_3, C_4 > 0$ (possibly depending on $\beta$) such that
$|M \int \nabla^2 V {\rm d} \nu_1|_{\ell^\infty} \allowbreak< C_3$ and
$|M (\int \nabla^2 V {\rm d} \nu_1)(\int \nabla^2 V {\rm d} \nu_2)|_{\ell^\infty} < C_4$
for all probability measures $\nu_1, \nu_2$ on $\mathbb{R}^d$. Then for
$h \leq 1/\sqrt{20\beta}$ and $(q^*, p^*) \sim \pi \otimes \mathcal{N}(0, I)$,
\begin{equation}
\label{eqn-one-step-position-err}
    |M(\sfQ_h^1(q^*,p^*)-\sfQ^h(q^*,p^*))|_{2,\ell^\infty} \leq \frac1{\sqrt{2}}h^3C_3\sqrt{\log(2d)}\,,
\end{equation}
\begin{equation}
\label{eqn-one-step-momentum-err}
    |M(\sfP_h^1(q^*,p^*)-\sfP^h(q^*,p^*))|_{2,\ell^\infty}\leq 2 h^2C_3 \sqrt{\log(2d)} + \frac{1}{2\sqrt{5}}h^4 C_4 \sqrt{\log(2d)}\,.
\end{equation}
\end{proposition}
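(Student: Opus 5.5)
We compare one leap-frog step with the exact Hamiltonian flow over time $h$, both started from the same $(q^*,p^*)\sim\pi\otimes\mathcal N(0,I)$, and write each difference as an integral remainder in which only gradients evaluated along the exact trajectory (which stays at stationarity) and integrated Hessians appear. Let $(q_s,p_s)=\sfU_\hmc^s(q^*,p^*)$ for $s\in[0,h]$. Then $q_s\sim\pi$ and $p_s\sim\mathcal N(0,I)$ for every $s$.

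\textbf{Position.} The exact flow gives $\sfQ^h = q^* + hp^* - \int_0^h (h-s)\nabla V(q_s)\,{\rm d}s$, while the leap-frog step gives $\sfQ_h^1 = q^*+hp^*-\tfrac{h^2}{2}\nabla V(q^*)$. Hence
\[
\sfQ_h^1-\sfQ^h=\int_0^h (h-s)\big(\nabla V(q_s)-\nabla V(q^*)\big)\,{\rm d}s .
\]
We write $\nabla V(q_s)-\nabla V(q^*)=\bar H_s (q_s-q^*)$ with $\bar H_s$ an integrated Hessian, and $q_s-q^*=\int_0^s p_u\,{\rm d}u$. Applying $M$, the quantity $M\bar H_s$ has $\ell^\infty$ norm at most $C_3$ by hypothesis. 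This leaves $|M(\sfQ_h^1-\sfQ^h)|_{\ell^\infty}\le C_3\int_0^h(h-s)\int_0^s|p_u|_{\ell^\infty}\,{\rm d}u\,{\rm d}s$.

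We take the $L^2(\mathbb P)$ norm via Minkowski. Since $p_u\sim\mathcal N(0,I)$ has coordinates that are $1$-subgaussian, Lemma~\ref{lem-Gaussian-linf} gives $|p_u|_{2,\ell^\infty}\le 2\sqrt{\log(2d)}$. Integrating gives $\int_0^h (h-s)s\,{\rm d}s=h^3/6$, which yields a bound of order $h^3C_3\sqrt{\log(2d)}$. The stated constant $1/\sqrt2$ exceeds $2/6$, so this suffices. Note that $\bar H_s$ is random but is always of the form $\int\nabla^2V\,{\rm d}\nu$, so the hypothesis applies pathwise.

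\textbf{Momentum.} Exactly, $\sfP^h=p^*-\int_0^h\nabla V(q_s)\,{\rm d}s$, while the leap-frog step gives $\sfP_h^1=p^*-\tfrac h2\nabla V(q^*)-\tfrac h2\nabla V(\sfQ_h^1)$. We split the difference as
\[
\sfP_h^1-\sfP^h = \Big[\int_0^h\nabla V(q_s)\,{\rm d}s-\tfrac h2\nabla V(q^*)-\tfrac h2\nabla V(q_h)\Big] + \tfrac h2\big(\nabla V(q_h)-\nabla V(\sfQ_h^1)\big).
\]

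\emph{Second term.} This equals $\tfrac h2 \tilde H(q_h-\sfQ_h^1)$ for an integrated Hessian $\tilde H$. Using the representation from the position step, $q_h-\sfQ_h^1=-\int_0^h(h-s)\bar H_s\int_0^s p_u\,{\rm d}u\,{\rm d}s$. So $M$ applied to this term involves the product $M\tilde H\bar H_s$, which is controlled by $C_4$. This gives $\tfrac h2\cdot\tfrac{h^3}{6}\cdot C_4\cdot 2\sqrt{\log(2d)}$, which is of the stated form $h^4C_4\sqrt{\log(2d)}$.

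\emph{First term.} This is the trapezoidal-rule error for $g(s)=\nabla V(q_s)$. We write it as $\int_0^h\big[(g(s)-g(0))\tfrac{h-s}{h}\cdot\ldots\big]$. More simply, we use $\int_0^h g-\tfrac h2(g(0)+g(h)) = \tfrac12\int_0^h\big[(g(s)-g(0))+(g(s)-g(h))\big]\,{\rm d}s$, and express each bracket as $\bar H\,(q_s-q_{0})$ or $\bar H\,(q_s-q_h)$, each with $|q_s-q_{s'}|\le\int|p_u|\,{\rm d}u$. This gives an $O(h^2)$ bound: $\tfrac12\cdot 2\cdot C_3\cdot h\cdot h\cdot 2\sqrt{\log(2d)}\le 2h^2C_3\sqrt{\log(2d)}$, matching the stated first term.

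\textbf{Expected difficulty.} No genuine obstacle is expected. The only care needed is to bound the random integrated Hessians pathwise by the hypothesis before taking expectations, and to always reduce to $\ell^\infty$ norms of $p_u$, whose law is exactly Gaussian by stationarity. This avoids needing Lemma~\ref{lem-gradV-expectation} altogether.
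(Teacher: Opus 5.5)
Your proof is correct, but it takes a different route from the paper's, and the differences are worth noting.

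\textbf{How the paper argues.} It expands $q_s-q^*=sp^*-\int_0^s\int_0^\tau\nabla V(q_r)\,{\rm d}r\,{\rm d}\tau$ and works with squared norms and Cauchy--Schwarz. It therefore has to control $\bE_\pi[|\nabla V|_{\ell^\infty}^2]$ through Lemma~\ref{lem-gradV-expectation}. It then needs $h^2\beta\le 1/20$ to fold the resulting higher-order terms, of size $h^4\sqrt{\beta}$, into the leading $h^3$ and $h^2$ terms. For the momentum, it writes the error as $\tfrac12\int_0^h(\nabla V(q_t)-\nabla V(q^*))\,{\rm d}t+\tfrac12\int_0^h(\nabla V(q_t)-\nabla V(\sfQ_h^1))\,{\rm d}t$. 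It then expands $q_t-\sfQ_h^1$ into three pieces: a momentum piece, a gradient piece, and a Hessian-product piece.

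\textbf{How you argue.} You write $q_s-q^*=\int_0^s p_u\,{\rm d}u$ and use that $p_u\sim\mathcal N(0,I)$ exactly along the exact flow. You pass to $L^2$ via Minkowski's integral inequality. You split the momentum error into a trapezoidal-rule remainder along the exact path plus $\tfrac h2\tilde H(q_h-\sfQ_h^1)$. That second piece reuses your position representation, so the $C_4$ term comes directly from the pathwise bound on $M\tilde H\bar H_s$.

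\textbf{What your route buys.} The proof is shorter and the constants are sharper:
\begin{itemize}
\item $\tfrac13 h^3C_3\sqrt{\log(2d)}$ for the position error;
\item $h^2C_3\sqrt{\log(2d)}$ for the trapezoidal term, if you integrate $\tfrac12C_3\int_0^h\big(\int_0^s+\int_s^h\big)|p_u|_{\ell^\infty}\,{\rm d}u\,{\rm d}s$ exactly rather than bounding each bracket crudely;
\item $\tfrac16 h^4C_4\sqrt{\log(2d)}$ for the Hessian-product term.
\end{itemize}
You never invoke Lemma~\ref{lem-gradV-expectation}, and you never use the step-size restriction, so your version of the statement holds for every $h>0$. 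The paper's more mechanical expansion reaches the same conclusion but pays for it with the gradient lemma and the condition $h\le 1/\sqrt{20\beta}$.
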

\begin{proof}[Proof of Proposition~\ref{prop-discretization-error}]
Let $(q_t, p_t) = \sfU_{\hmc}^t(q^*, p^*)$. The continuous and discrete position
updates have the integral representations
\begin{equation}
\label{eqn-integral-hmc-position}
    \sfQ^h(q^*,p^*) = q^* + p^*h - \int_0^h\int_0^t \nabla V(q_s){\rm d} s {\rm d} t\,,\quad \sfQ_h^1(q^*,p^*) = q^* + p^*h - \tfrac12 h^2 \nabla V(q^*)\,,
\end{equation}
whose difference is
$\sfQ_h^1(q^*,p^*)-\sfQ^h(q^*,p^*) = \int_0^h \int_0^t (\nabla V(q_s)-\nabla V(q^*)) {\rm d}s {\rm d}t$.
Writing $\nabla V(q_s) - \nabla V(q^*) = (\int_0^1 \nabla^2 V(\tau q_s + (1-\tau)q^*){\rm d}\tau)(q_s - q^*)$
and using the hypothesis on $M$,
\begin{equation}
    |M(\sfQ_h^1(q^*,p^*)-\sfQ^h(q^*,p^*))|_{\ell^{\infty}}^2 \leq \tfrac12 h^2 C_3^2 \int_0^h\int_0^t|q_s-q^*|_{\ell^\infty}^2{\rm d}s {\rm d}t\,.
\end{equation}
Expanding $q_s$ bounds the remaining integral,
\begin{equation}
\label{eqn-qs-q-double-integral}
\begin{aligned}
    \int_0^h\int_0^t|q_s-q^*|_{\ell^\infty}^2 {\rm d}s{\rm d}t
    &\leq 2\int_0^h\int_0^t|p^*s|_{\ell^\infty}^2 {\rm d}s{\rm d}t + 2 \int_0^h\int_0^t\Big|\int_0^s\int_0^\tau \nabla V(q_r){\rm d}r {\rm d}\tau\Big|_{\ell^\infty}^2{\rm d}s{\rm d}t\\
    &\leq \tfrac16h^4|p^*|_{\ell^\infty}^2 + \int_0^h\int_0^ts^2\int_0^s\int_0^\tau|\nabla V(q_r)|_{\ell^\infty}^2{\rm d}r{\rm d}\tau {\rm d}s{\rm d}t\,.
\end{aligned}
\end{equation}
Since $(q^*,p^*) \sim \pi \otimes \mathcal{N}(0,I)$, stationarity gives $q_t \sim \pi$
for all $t$, so by Lemma~\ref{lem-Gaussian-linf},
\begin{equation}
\label{eqn-qs-q-double-integral-expectation}
    \bE \int_0^h\int_0^t|q_s-q^*|_{\ell^\infty}^2{\rm d}s{\rm d}t\leq \tfrac23h^4\log(2d) + \tfrac{1}{60}h^6 \bE_\pi[|\nabla V|_{\ell^\infty}^2]\,.
\end{equation}
Combining these with Lemma~\ref{lem-gradV-expectation} and $h^2\beta \leq 1/20$,
\begin{equation}
\begin{aligned}
    |M(\sfQ_h^1(q^*,p^*)-\sfQ^h(q^*,p^*))|_{2,\ell^\infty}
    &\leq \tfrac1{\sqrt{3}} h^3C_3\sqrt{\log(2d)} + \tfrac{1}{\sqrt{120}}h^4C_3\sqrt{\bE_{\pi}[|\nabla V|_{\ell^\infty}^2]}\\
    &\leq \tfrac1{\sqrt{2}}h^3C_3\sqrt{\log(2d)}\,,
\end{aligned}
\end{equation}
which is~\eqref{eqn-one-step-position-err}. For the momentum component,
\begin{equation}
    \sfP_h^1(q^*,p^*) = p^*-\tfrac12h\nabla V(q^*)-\tfrac12h\nabla V(\sfQ_h^1(q^*,p^*))\,,\quad \sfP^h(q^*,p^*) = p^*-\int_0^h\nabla V(q_t) {\rm d}t\,,
\end{equation}
so
\begin{equation}
\label{eqn-decomp-MPhP-a-b}
\begin{aligned}
    &|M(\sfP_h^1(q^*,p^*)-\sfP^h(q^*,p^*))|_{2,\ell^\infty}^2
   \\
   &\leq \tfrac12 h C_3^2 \underbrace{\bE\int_0^h|q_t-q^*|_{\ell^\infty}^2 {\rm d}t}_{(a)} + \tfrac12\underbrace{\bE\Big|\int_0^h \tilde M(q_t - \sfQ_h^1(q^*,p^*)){\rm d}t\Big|_{\ell^\infty}^2}_{(b)}\,,
\end{aligned}
\end{equation}
where $\tilde M \coloneqq \int_0^1 M \nabla^2 V(\tau q_t + (1-\tau)\sfQ_h^1(q^*,p^*)){\rm d}\tau$.
The bounds used in~\eqref{eqn-qs-q-double-integral}
and~\eqref{eqn-qs-q-double-integral-expectation} give
\begin{equation}
\label{eqn-bound-a}
    (a) \leq \tfrac83h^3\log(2d)+ \tfrac{1}{10} h^5 \bE_\pi[|\nabla V|_{\ell^\infty}^2]\,.
\end{equation}
Using the position differences in~\eqref{eqn-integral-hmc-position},
\begin{equation}
\begin{aligned}
    &q_t - \sfQ_h^1(q^*,p^*)
    = p^*(t-h) - \int_0^t \int_0^s (\nabla V(q_\tau)-\nabla V(q^*)){\rm d}\tau {\rm d}s+\tfrac12(h^2-t^2)\nabla V(q^*)\\
    = & p^*(t-h) - \int_0^t \int_0^s \Big(\int_0^1\nabla^2V(rq_\tau + (1-r)q^*) {\rm d}r \Big) (q_\tau - q^*){\rm d}\tau {\rm d}s+\tfrac12(h^2-t^2)\nabla V(q^*)\,,
\end{aligned}
\end{equation}
so $(b)$ splits as
\begin{equation}
\label{eqn-b-decomp-b1-b2-b3}
\begin{aligned}
    (b) &\leq 3h C_3^2\underbrace{\bE\int_0^h |p^*(t-h)|_{\ell^\infty}^2 {\rm d}t}_{(b_1)} + 3h C_3^2 \underbrace{\bE \int_0^h \Big| \tfrac12 (h^2-t^2)\nabla V(q^*)\Big|_{\ell^\infty}^2 {\rm d}t}_{(b_2)}\\
    & \quad + \tfrac12 h^3 C_4^2 \underbrace{\bE\int_0^h\int_0^t\int_0^s |q_\tau - q^*|_{\ell^\infty}^2{\rm d}\tau {\rm d}s {\rm d}t}_{(b_3)}\,.
\end{aligned}
\end{equation}
Direct computation of $(b_1), (b_2)$ and the approach
of~\eqref{eqn-qs-q-double-integral-expectation} for $(b_3)$ give
\begin{equation}
\label{eqn-bounds-b1-b2}
    (b_1) \leq \tfrac{4}{3}h^3\log(2d)\,,\quad (b_2) \leq \tfrac{2}{15}h^5 \bE_\pi[|\nabla V|_{\ell^\infty}^2]\,,\quad (b_3) \leq \tfrac{2}{15}h^5\log(2d) + \tfrac{1}{420}h^7 \bE_\pi[|\nabla V|_{\ell^\infty}^2]\,.
\end{equation}
Combining these via~\eqref{eqn-decomp-MPhP-a-b} and applying
Lemma~\ref{lem-gradV-expectation} gives~\eqref{eqn-one-step-momentum-err}.
\end{proof}

\subsection{Discretization error in UL} We adapt the framework of \cite{leimkuhler2024contraction}, in particular the
analysis leading to their Proposition~8.3, to the $W_{2,\ell^\infty}$ norm. We use
an HOH chain preserving the invariant distribution exactly, and measure the
discrepancy between the BAOAB and HOH chains.
\begin{proposition}
\label{prop-discretization-error-decomp}
Consider initial states $(x, p)$ and $(y, p')$ with
$(x, p) \sim \pi \otimes \mathcal{N}(0,I)$. Apply one HOH step to $(x,p)$ and one
BAOAB step to $(y,p')$, coupled through the same Gaussian noise $\xi$, and set
$(\Delta_q, \Delta_p) \coloneqq \sfU_{\hoh, h}^\xi(x,p) - \sfU_{\baoab,h}^\xi(y,p')$.
Suppose $\pi$ satisfies Assumption~\ref{assumption-V-log-concave} and
$h \leq \frac{1-\eta}{2\sqrt{\beta}}$. If
$|\int \nabla^2 V {\rm d}\nu_1|_{\ell^\infty} \leq C_5$ and
$|(\int \nabla^2 V {\rm d}\nu_1)(\int \nabla^2 V {\rm d}\nu_2)|_{\ell^\infty} \leq C_6$
for all probability measures $\nu_1, \nu_2$ on $\mathbb{R}^d$, then
\begin{equation}
\label{eqn-position-diff-discretization-error-decomp-baoab}
    |\Delta_q|_{2,\ell^\infty}\leq (1 + \tfrac14h^2(\eta+1)C_5)|x - y|_{2,\ell^\infty} + \tfrac{h}{2}(\eta + 1)|p - p'|_{2,\ell^\infty} + \tfrac{27}{64}C_5h^3\sqrt{\log(2d)}\,,
\end{equation}
\begin{equation}
\label{eqn-momentum-diff-discretization-error-decomp-baoab}
    \Delta_p = (\eta I - \tfrac{h^2(\eta+1)}{4}Q_2)(p-p')+(-\tfrac h 2 \eta Q_1 - \tfrac h 2 Q_2 + \tfrac{h^3(\eta+1)}{8}Q_2 Q_1)(x - y) + \varepsilon_p\,,
\end{equation}
where $|\varepsilon_p|_{2,\ell^\infty} \leq {\tfrac{27}{8}}h^2 C_5\sqrt{\log(2d)}+{\tfrac38}h^4 C_6\sqrt{\log(2d)}$,
and $Q_1, Q_2 \in \mathbb{R}^{d\times d}$ satisfy
$\alpha I \preceq Q_1, Q_2 \preceq \beta I$, $|Q_1|_{\ell^\infty} \leq C_5$, $|Q_2|_{\ell^\infty} \leq C_5$, and $|Q_2Q_1|_{\ell^\infty}\leq C_6$. 
\end{proposition}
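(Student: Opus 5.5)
The plan is to write out both one-step maps explicitly, match them term by term, and absorb every mismatch into either an integrated-Hessian factor acting on $(x-y, p-p')$ or a stationary remainder. For BAOAB, \eqref{eqn-baoab} gives
\[
y_1 = y + \tfrac{h}{2}(1+\eta)p' - \tfrac{h^2}{4}(1+\eta)\nabla V(y) + \tfrac{h}{2}\sqrt{1-\eta^2}\,\xi .
\]
The momentum update is $p'_1 = \eta(p'-\tfrac h2\nabla V(y)) + \sqrt{1-\eta^2}\xi - \tfrac h2\nabla V(y_1)$.

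For HOH I would write the exact half flow $\sfH_{h/2}$ in integral form, as in \eqref{eqn-integral-hmc-position}. Let $(q_t,p_t)$, $t\in[0,h/2]$, be the first Hamiltonian half-step from $(x,p)$. Let $(\tilde p_{h/2})$ be the O-updated momentum $\eta p_{h/2}+\sqrt{1-\eta^2}\xi$, and let $(\tilde q_t,\tilde p_t)$ be the second half-step. Then
\[
x_1 = x + \tfrac h2(1+\eta)p + \tfrac h2\sqrt{1-\eta^2}\xi - \int_0^{h/2}\!\!\int_0^t\nabla V(q_s) - \tfrac{h}{2}\eta\int_0^{h/2}\!\nabla V(q_s) - \int_0^{h/2}\!\!\int_0^t\nabla V(\tilde q_s),
\]
with an analogous expression for the momentum. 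The key structural fact is stationarity. The initial state is $\pi\otimes\mathcal N(0,I)$, $\sfH$ and $\sfO$ both preserve this measure, and $\xi$ is independent. Hence every intermediate state $(q_s,p_s)$, $(q_{h/2},\tilde p_{h/2})$, $(\tilde q_s,\tilde p_s)$ has law $\pi\otimes\mathcal N(0,I)$. Lemmas~\ref{lem-Gaussian-linf} and~\ref{lem-gradV-expectation} therefore apply to all of them.

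For the position, I would replace each $\nabla V(q_s)$ and $\nabla V(\tilde q_s)$ by $\nabla V(x)$, paying integrated-Hessian factors times $q_s-x$ or $\tilde q_s - x$. The leading coefficients add up to $\tfrac{h^2}{8}+\tfrac{h^2\eta}{4}+\tfrac{h^2}{8}=\tfrac{h^2}{4}(1+\eta)$, which matches the BAOAB coefficient. Then $\nabla V(x)-\nabla V(y)=Q_1(x-y)$ with $Q_1=\int_0^1\nabla^2V(\tau x+(1-\tau)y)\,{\rm d}\tau$. This produces the coefficients $(1+\tfrac14h^2(\eta+1)C_5)$ on $x-y$ and $\tfrac h2(1+\eta)$ on $p-p'$ in \eqref{eqn-position-diff-discretization-error-decomp-baoab}.

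The remainders are $\ell^\infty$-bounded by $C_5$ times $|q_s-x|$ or $|\tilde q_s-x|$. These displacements are $O(h)$ and decompose into $s\,p$, $s\sqrt{1-\eta^2}\xi$, and double integrals of $\nabla V$ along stationary paths. Bounding them as in \eqref{eqn-qs-q-double-integral}--\eqref{eqn-qs-q-double-integral-expectation}, with $h\le \frac{1}{2\sqrt\beta}$, gives a term of order $C_5h^3\sqrt{\log(2d)}$. The constant $\tfrac{27}{64}$ comes out of this bookkeeping.

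For the momentum, I would do the same matching on the $\nabla V$ terms, but leave the final BAOAB kick $-\tfrac h2\nabla V(y_1)$ against the HOH analogue. I would set $Q_2=\int_0^1\nabla^2V(\tau x_1+(1-\tau)y_1)\,{\rm d}\tau$ and write $\nabla V(x_1)-\nabla V(y_1)=Q_2(x_1-y_1)$. Substituting the position expansion for $x_1-y_1$ produces the following terms:
\begin{itemize}
\item $-\tfrac h2Q_2$ acting on $x-y$;
\item $+\tfrac{h^3(\eta+1)}{8}Q_2Q_1$ acting on $x-y$, from the $-\tfrac{h^2}{4}(1+\eta)Q_1(x-y)$ piece;
\item $-\tfrac{h^2(\eta+1)}{4}Q_2$ acting on $p-p'$.
\end{itemize}
The first B step contributes $-\tfrac h2\eta Q_1(x-y)$. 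This reproduces \eqref{eqn-momentum-diff-discretization-error-decomp-baoab}. The $x_1$-level remainder multiplied by $Q_2$ contributes $h\cdot C_5\cdot h^3$ and $h\cdot C_6h^3$-type terms. The mismatch between $\int\nabla V$ along the trajectory and the endpoint kicks contributes $C_5h^2\sqrt{\log(2d)}$, exactly as in \eqref{eqn-one-step-momentum-err}. Together these give the stated $\varepsilon_p$ bound.

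The properties of $Q_1,Q_2$ are immediate. Both are integrated Hessians, so $\alpha I\preceq Q_i\preceq\beta I$ by Assumption~\ref{assumption-V-log-concave}, and $|Q_i|_{\ell^\infty}\le C_5$ and $|Q_2Q_1|_{\ell^\infty}\le C_6$ by hypothesis.

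The main obstacle is the momentum bookkeeping. The HOH path crosses the O step, so the second half-flow's displacements mix $\eta p_{h/2}$ with the noise, and one must ensure that no term of order $h$ (rather than $h^2$) times a stationary quantity survives in $\varepsilon_p$. I would handle this by expanding around $x$ consistently. One subtlety is that the HOH endpoint is $x_1$, while the final BAOAB kick is evaluated at $y_1$; this is handled by the $Q_2$ linearization. Every residual gradient difference is then bounded using only stationarity of the intermediate states, not any regularity beyond $C_5,C_6$.
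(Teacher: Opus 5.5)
Your proposal is correct. It uses the same ingredients as the paper's proof: the exact integral form of the two Hamiltonian half-flows, stationarity of every intermediate HOH state, Lemmas~\ref{lem-Gaussian-linf} and~\ref{lem-gradV-expectation}, and integrated-Hessian linearization. The decomposition, however, is organized differently.

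\textbf{Position.} The paper compares $\nabla V(x(t))$ and $\nabla V(\bar x(t))$ directly with $\nabla V(y)$, after an integration by parts that exposes $\nabla^2 V(x(t))p(t)$, and obtains only an inequality. You instead expand around $x$ and obtain an exact identity $\Delta_q=(I-\tfrac{h^2}{4}(1+\eta)Q_1)(x-y)+\tfrac h2(1+\eta)(p-p')+\varepsilon_q$.

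\textbf{Momentum.} The paper linearizes the last kick against an auxiliary point $\hat x_c$, the BAOAB position update started from $(x,p)$ with the same noise. Then $\hat x_c-\hat y$ is exactly the principal linear expression. All mismatches (the HOH midpoint $\bar x$ versus $\hat x_c$, and trajectory integrals versus endpoint kicks) are collected in four explicit remainders $I_1,\dots,I_4$. You instead linearize at the true HOH endpoint $x_1$ and feed your position identity back in. This gives $\varepsilon_p=-\eta R_1-R_2-\tfrac h2 Q_2\varepsilon_q$, where $R_1,R_2$ are the two half-flow mismatches against the kicks at $x$ and at $x_1$.

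\textbf{Comparison.} Your route needs fewer auxiliary objects, and done carefully it gives constants comfortably within $\tfrac{27}{64}$, $\tfrac{27}{8}$ and $\tfrac38$. It has two costs.
\begin{enumerate}
\item $Q_2$ now depends on the HOH endpoint. This is harmless, since Proposition~\ref{prop-l-step-error-gronwall} uses only its norm bounds.
\item The term $\tfrac h2Q_2\varepsilon_q$ must be controlled through $C_6$. Every piece of $\varepsilon_q$ is an integrated Hessian $G$ times an $O(h)$ stationary displacement, so bound it using $|Q_2G|_{\ell^\infty}\le C_6$. Do not use $|Q_2|_{\ell^\infty}|\varepsilon_q|_{\ell^\infty}$, which would give a $C_5^2h^4$ term that is not of the stated form. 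In particular, the ``$h\cdot C_5\cdot h^3$'' contribution you mention should not arise.
\end{enumerate}
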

\begin{proof}[Proof of Proposition~\ref{prop-discretization-error-decomp}]
We use integral representations of the updates. Let
$\sfU_{\ho,h}^\xi$ be  the composition of a half-step of Hamiltonian dynamics $\sfH = \sfU_{\hmc}^{h/2}$, 
followed by a full Ornstein--Uhlenbeck map $\sfO$ with noise $\xi$. We write
$(\bar x, \bar p) \coloneqq \sfU_{\ho,h}^\xi(x,p)$. With
$(x(t), p(t)) = \sfU_{\hmc}^t(x,p)$ the continuous Hamiltonian
trajectory~\eqref{eqn-continuous-time-Hamiltonian}, we have
$(x(t), \allowbreak p(t))\sim \pi \otimes \mathcal{N}(0,I)$ and
\begin{equation}
    \bar x = x + \tfrac h 2 p - \int_0^{h/2} \nabla V(x(t))(\tfrac h 2 - t){\rm d}t\,, \quad \bar p = \eta\Big(p - \int_0^{h/2}\nabla V(x(t)){\rm d}t\Big) + \sqrt{1-\eta^2}\xi\,.
\end{equation}
 
\noindent\textbf{Position component.} Let $(\bar x(t), \bar p(t)) = \sfU_{\hmc}^t(\bar x, \bar p)$,
so $(\bar x(t), \bar p(t)) \sim \pi \otimes \mathcal{N}(0, I)$. The position
updates are
\begin{equation}
\begin{aligned}
    q_{\mathsf{HOH}} \coloneqq & x + \tfrac{h}{2} p(1+\eta) - \int_0^{h/2}\nabla V(x(t))(\tfrac{h}{2}-t){\rm d}t - \int_0^{h/2}\nabla V(\bar x(t))(\tfrac{h}{2} - t) {\rm d}t\\
    &+\tfrac{h}{2}\Big(\eta\Big(-\tfrac{h}{2}\nabla V(x) - \int_0^{h/2}\nabla^2 V(x(t))p(t)(\tfrac{h}{2}-t){\rm d}t \Big) + \sqrt{1-\eta^2}\xi \Big)\,,
\end{aligned}
\end{equation}
\begin{equation}
    q_{\mathsf{BAOAB}} \coloneqq y + \tfrac{h}{2} p' (1+\eta) - \tfrac{h^2}{4}(1+\eta) \nabla V(y) + \tfrac{h}{2}\sqrt{1-\eta^2}\xi\,,
\end{equation}
and their difference satisfies
\begin{equation}
\begin{aligned}
    &|\Delta_q|_{2,\ell^\infty}\\
    \leq &(1 + \tfrac{h^2}{4}\eta C_5) |x - y|_{2,\ell^\infty} + \tfrac{h}{2}(1+\eta)|p - p'|_{2,\ell^\infty}+ \Big| \int_0^{h/2}(\nabla V(x(t)) - \nabla V(y))(\tfrac{h}{2}-t){\rm d}t\\
    & + \eta \tfrac{h}{2}\int_0^{h/2}\nabla^2 V(x(t)) p(t) (\tfrac{h}{2}-t){\rm d}t + \int_0^{h/2}(\nabla V(\bar x(t)) - \nabla V(y))(\tfrac{h}{2}-t){\rm d}t\Big|_{2,\ell^\infty}\,.
\end{aligned}
\end{equation}
The last term is at most
\begin{equation}
    C_5\int_0^{h/2}(|x(t) - y|_{2,\ell^\infty}+|\bar x(t) - y|_{2,\ell^\infty})(\tfrac{h}{2}-t){\rm d}t + \tfrac{h\eta C_5}{2} \int_0^{h/2}|p(t)|_{2,\ell^\infty}(\tfrac{h}{2}-t){\rm d}t\,.
\end{equation}
Expanding the integral representation of $x(t)$ gives
$|x(t) - y|_{2, \ell^\infty} \leq |x-y|_{2, \ell^\infty} + |tp - \int_0^t \nabla V(x(s))(t-s){\rm d}s|_{2, \ell^\infty}$,
and by Lemmas~\ref{lem-Gaussian-linf} and~\ref{lem-gradV-expectation},
\begin{equation}
    |tp - \int_0^t \nabla V(x(s))(t-s){\rm d}s|_{2, \ell^\infty} \leq h\sqrt{\log (2d)} + \tfrac{h^2}{4}\sqrt{\beta\log(2d)}\,.
\end{equation}
Similarly, with $\bar x = x(h/2)$,
$|\bar x(t) - y|_{2, \ell^\infty} \leq |\bar x(t) - \bar x|_{2, \ell^\infty} + |\bar x - y|_{2, \ell^\infty}$,
where
\begin{equation}
\begin{aligned}
    &|\bar x(t) - \bar x|_{2, \ell^\infty} \leq h \sqrt{\log(2d)} + \tfrac{h^2}{4}\sqrt{\beta \log(2d)}\,,\\
    & |\bar x - y|_{2, \ell^\infty} \leq |x - y|_{2, \ell^\infty} + h\sqrt{\log(2d)}+\tfrac{h^2}{4}\sqrt{\beta \log(2d)}\,.
\end{aligned}
\end{equation}
Combining these and applying $h \leq 1/(2\sqrt{\beta})$
gives~\eqref{eqn-position-diff-discretization-error-decomp-baoab}.
 
\noindent\textbf{Momentum component.} The momentum updates are
\begin{equation}
\begin{aligned}
    p_{\mathsf{HOH}} &\coloneqq \eta\Big(p -\tfrac{h}{2}\nabla V(x) - \int_0^{h/2}\nabla^2 V(x(t))p(t)( \tfrac h 2 - t){\rm d}t \Big) + \sqrt{1-\eta^2}\xi\\
    &\quad - \tfrac{h}{2}\nabla V(\bar x) - \int_0^{h/2}\nabla^2V(\bar x(t)) \bar p(t) ( \tfrac h 2 - t){\rm d}t\,,
\end{aligned}
\end{equation}
\begin{equation}
    p_{\mathsf{BAOAB}} \coloneqq \eta\Big(p' - \tfrac h 2 \nabla V(y) \Big) + \sqrt{1-\eta^2}\xi - \tfrac h 2 \nabla V(\hat y)\,,
\end{equation}
where $\hat y = y + \tfrac h2 (\eta+1)p' - \tfrac{h^2}4(\eta+1)\nabla V(y) + \tfrac h2 \sqrt{1-\eta^2}\xi$,
so
\begin{equation}
\begin{aligned}
    \Delta_p &= \eta(p - p') - \tfrac{h}{2}\eta(\nabla V(x) - \nabla V(y)) - \eta \int_0^{h/2}\nabla^2V(x(t))p(t)( \tfrac h 2 - t){\rm d}t\\
    &\quad - \tfrac h 2 (\nabla V(\bar x) - \nabla V(\hat y)) - \int_0^{h/2}\nabla^2V(\bar x(t)) \bar p(t) ( \tfrac h 2 - t) {\rm d}t\,.
\end{aligned}
\end{equation}
Grouping the principal terms into
\begin{equation}
    \alpha_p \coloneqq \eta (p - p') - \tfrac h 2 \eta (\nabla V(x) - \nabla V(y)) - \tfrac h 2 (\nabla V(\hat x_c) - \nabla V(\hat y))\,,
\end{equation}
with $\hat x_c = x + \tfrac h2(\eta+1)p - \tfrac{h^2}4(\eta+1)\nabla V(x) + \tfrac h2\sqrt{1-\eta^2}\xi$,
\begin{equation}
    \alpha_p = (\eta I - \tfrac{h^2(\eta+1)}{4}Q_2)(p-p')+(-\tfrac h 2 \eta Q_1 - \tfrac h 2 Q_2 + \tfrac{h^3(\eta+1)}{8}Q_2 Q_1)(x - y)\,,
\end{equation}
where $Q_1 = \nabla^2 V([y, x])$, $Q_2 = \nabla^2 V([\hat x_c, \hat y])$, and
$\nabla^2 V([v_1, v_2]) \coloneqq \int_0^1 \nabla^2 V(v_1 + s(v_2 - v_1)) {\rm d}s$. The
remainder $\varepsilon_p \coloneqq \Delta_p - \alpha_p$ is $I_1 + I_2 + I_3 + I_4$,
with
\begin{equation*}
\begin{aligned}
    I_1 &\coloneqq -\eta \int_0^{h/2}\nabla^2 V(x(t))p ( \tfrac h 2 -t){\rm d}t + \eta \tfrac{h^2}{4}\nabla^2 V([\hat x_c, \bar x])p- \eta \int_0^{h/2}\nabla^2 V(\bar x(t)) \tilde p( \tfrac h 2 - t){\rm d}t\\
    & \quad + \tfrac{h^2}{4}\nabla^2 V([\hat x_c, x])\sqrt{1-\eta^2}\xi - \sqrt{1-\eta^2}\int_0^{h/2}\nabla^2 V(\bar x(t)) \xi ( \tfrac h 2 - t){\rm d}t\,,
\end{aligned}
\end{equation*}
where $\tilde p \coloneqq p - \tfrac h2\nabla V(x) - \int_0^{h/2}\nabla^2 V(x(t)) p(t) (\tfrac h2 - t){\rm d}t$;
\begin{equation*}
    I_2 \coloneqq \eta \int_0^{h/2} \nabla^2 V(x(t)) (p(t) - p) (\tfrac h 2 - t){\rm d} t\,,
\end{equation*}
\begin{equation*}
    I_3 \coloneqq \tfrac h 2 \nabla^2 V([\hat x_c, \bar x])(-\tfrac{h^2}{4}(\eta+1)\nabla V(x) + \int_0^{h/2}\nabla V(x(t))( \tfrac h 2 - t){\rm d}t)\,,
\end{equation*}
\begin{equation*}
    I_4 \coloneqq - \int_0^{h/2}\nabla^2 V(\bar x(t))(-t \nabla V(\bar x) - \int_0^t\nabla^2 V(\bar x(s))\bar p(s)(t-s){\rm d}s)( \tfrac h 2 - t){\rm d}t\,.
\end{equation*}
These terms are bounded by 
\begin{equation}
\begin{aligned}
    &|I_1|_{2,\ell^\infty} \leq (\tfrac{21}{8}h^2 C_5 + \tfrac18h^4C_6) \sqrt{\log(2d)}\,,\quad |I_2|_{2,\ell^\infty} \leq ({\tfrac14}h^2 C_5 + \tfrac18h^4 C_6)\sqrt{\log(2d)}\,,\\
    &|I_3|_{2, \ell^\infty} \leq {\tfrac{3}{8}}h^2 C_5\sqrt{\log(2d)}\,,\quad |I_4|_{2,\ell^\infty} \leq (\tfrac18h^2 C_5 + \tfrac18h^4C_6)\sqrt{\log(2d)}\,,
\end{aligned}
\end{equation}
using $h \leq 1/(2\sqrt{\beta})$, and combining them
gives~\eqref{eqn-momentum-diff-discretization-error-decomp-baoab}.
\end{proof}
The decomposition~\eqref{eqn-multi-step-baoab-discretization-error-decomposition-propagation}
requires bounding the discretization error over $l$ steps. Through the
decomposition~\eqref{eqn-weighted-propagators-primitive} of
Section~\ref{sec-l2-bounds}, the relevant quantity is the expected weighted error
$|(\Delta_q^l, \Delta_p^l)|_{2,\ell^\infty_w}$. We bound it through the surrogate
\begin{equation}
\label{hmc-eqn-weighted-linf-norm-surrogate}
    |\Delta_q^l|_{2,\ell^\infty} + \sqrt{a}|\Delta_p^l|_{2,\ell^\infty}\,,
\end{equation}
an upper bound for $|(\Delta_q^l, \Delta_p^l)|_{2,\ell^\infty_w}$ since
\begin{equation}
    |(\Delta_q^l, \Delta_p^l)|_{2,\ell^\infty_w} = \left| (\Delta_q^l + b \Delta_p^l, \sqrt{a-b^2}\Delta_p^l) \right|_{2, \ell^\infty} \leq |\Delta_q^l|_{2,\ell^\infty} + \sqrt{a}|\Delta_p^l|_{2,\ell^\infty}\,.
\end{equation}
 
To bound this surrogate, one could combine the single-step error of
Proposition~\ref{prop-discretization-error-decomp} with the joint-space
contraction of Corollary~\ref{cor-useful-estimates-bound-baoab}. This is too
crude, as it treats the position and momentum errors together. Instead, we use a
Gr\"onwall argument that tracks the two separately, mirroring the $\ell^2$
analysis of~\cite{leimkuhler2024contraction}.

\begin{proposition}[Refined statement of Proposition~\ref{prop-discretization-error-underdamped-main-text}]
\label{prop-l-step-error-gronwall}
For the $(\Delta_q^l, \Delta_p^l)$ process with shared initial state
$(q^*,p^*)\sim \pi \otimes \mathcal{N}(0,I)$, if $\pi$ satisfies
Assumption~\ref{assumption-V-log-concave} and $h \leq \frac{1-\eta}{2\sqrt\beta}$,
then the accumulated discretization error over $l$ steps satisfies
\begin{equation}
\label{eqn-prop-l-step-error-gronwall}
    |(\Delta_q^l, \Delta_p^l)|_{2,\ell^\infty_w} \leq |\Delta_q^l|_{2,\ell^\infty} + \sqrt{a}|\Delta_p^l|_{2,\ell^\infty} \leq \exp\left[(l-1)h\Big(\tfrac32\tfrac{C_5}{\sqrt\beta} + \tfrac{1}{16}\tfrac{C_6}{\beta^{3/2}} +\sqrt\beta\Big)\right] E_l\,,
\end{equation}
where $C_5, C_6$ are the constants in
Proposition~\ref{prop-discretization-error-decomp} and 
\begin{equation}
    E_l = \left( \tfrac{27}{64}C_5 h^3l + {\tfrac{27}{8}}\tfrac{h^2C_5}{\sqrt{\beta}(1-\eta)} + {\tfrac38}\tfrac{h^4C_6}{\sqrt{\beta}(1-\eta)}\right) \sqrt{\log(2d)}\,.
\end{equation}
\end{proposition}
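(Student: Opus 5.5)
We will prove the bound by a discrete Grönwall argument on the pair of scalar sequences $e_q^i \coloneqq |\Delta_q^i|_{2,\ell^\infty}$ and $e_p^i \coloneqq \sqrt{a}\,|\Delta_p^i|_{2,\ell^\infty}$, $i=0,\dots,l$, starting from $e_q^0=e_p^0=0$ since both chains share the initial state. The first inequality in \eqref{eqn-prop-l-step-error-gronwall} is the surrogate bound already recorded above, so only the second inequality needs work. The engine is Proposition~\ref{prop-discretization-error-decomp}, applied at each step $i$ with $(x,p)=(x_i,p_i)$ the HOH state and $(y,p')=(y_i,p'_i)$ the BAOAB state. This is legitimate because HOH preserves $\pi\otimes\mathcal N(0,I)$, so $(x_i,p_i)\sim\pi\otimes\mathcal N(0,I)$ for every $i$.

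\textbf{Position recursion.} From \eqref{eqn-position-diff-discretization-error-decomp-baoab} we get
\[
e_q^{i+1}\le \big(1+\tfrac14 h^2(1+\eta)C_5\big)e_q^i + \tfrac{h}{2\sqrt a}(1+\eta)e_p^i + \tfrac{27}{64}C_5h^3\sqrt{\log(2d)}.
\]
Since $\sqrt a = 1/\sqrt\beta$ and $h\le 1/(2\sqrt\beta)$, the $e_p$ coefficient is at most $h\sqrt\beta$, and the $e_q$ coefficient is at most $1+\tfrac12 h^2C_5\le 1+\tfrac14 hC_5/\sqrt\beta$.

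\textbf{Momentum recursion, the delicate step.} We use \eqref{eqn-momentum-diff-discretization-error-decomp-baoab}. The factor $\eta I-\tfrac{h^2(1+\eta)}4Q_2$ has $\ell^\infty$ norm at most $\eta+\tfrac12h^2C_5$, so the momentum error contracts by roughly $\eta$ per step. The position-to-momentum block, multiplied by $\sqrt a$, has norm at most $\sqrt a\,(hC_5+\tfrac14h^3C_6)\le hC_5/\sqrt\beta + h^3C_6/(4\sqrt\beta)$. This gives
\[
e_p^{i+1}\le \big(\eta+\tfrac12 h^2C_5\big)e_p^i + \Big(\tfrac{hC_5}{\sqrt\beta}+\tfrac{h^3C_6}{4\sqrt\beta}\Big)e_q^i + \sqrt a\,\varepsilon,
\]
where $\varepsilon = (\tfrac{27}8h^2C_5+\tfrac38h^4C_6)\sqrt{\log(2d)}$.

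The main obstacle is that the per-step momentum source $\sqrt a\,\varepsilon$ is only $O(h^2)$, not $O(h^3)$. Summed naively over $l$ steps it would give a factor $l h^2$. To avoid this we must use the $\eta$-contraction of the momentum: the geometric sum $\sum_j\eta^j\le 1/(1-\eta)$ converts the accumulated source into $\sqrt a\,\varepsilon/(1-\eta)$. These are exactly the second and third terms of $E_l$. This mirrors the high-friction treatment in \cite{leimkuhler2024contraction}.

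\textbf{Combining the recursions.} We will unroll the momentum recursion, treating $\tfrac12h^2C_5\,e_p^i$ and the $e_q^i$ coupling as perturbations. This gives
\[
e_p^{i}\le \frac{\sqrt a\,\varepsilon}{1-\eta}+\sum_{j<i}\eta^{i-1-j}\Big[\tfrac12 h^2C_5 e_p^j+\Big(\tfrac{hC_5}{\sqrt\beta}+\tfrac{h^3C_6}{4\sqrt\beta}\Big)e_q^j\Big].
\]
We then add this to the position recursion and work with $S_i\coloneqq e_q^i+e_p^i$. The additive sources total at most $\tfrac{27}{64}C_5h^3 i\sqrt{\log(2d)}+\sqrt a\,\varepsilon/(1-\eta)\le E_l$. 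The feedback terms amount to a per-step multiplicative growth of at most
\[
1+h\Big(\tfrac32\tfrac{C_5}{\sqrt\beta}+\tfrac1{16}\tfrac{C_6}{\beta^{3/2}}+\sqrt\beta\Big),
\]
where the $\sqrt\beta$ comes from the $e_p\to e_q$ coupling $h\sqrt\beta$, and the $C_5,C_6$ terms come from the $e_q\to e_p$ and diagonal perturbations, after using $h^2\le 1/(4\beta)$.

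A cleaner alternative is to prove by induction on $i$ the ansatz
\[
e_q^i\le G^{i-1}A_i,\qquad e_p^i\le G^{i-1}B,
\]
with $G=\exp[h(\cdots)]$, $A_i$ the linear-in-$i$ $h^3$ source, and $B=\sqrt a\,\varepsilon/(1-\eta)$. Here $e_p$ stays bounded by the stationary level of its contracting recursion, and $e_q$ accumulates the $h^3$ sources plus the $h\sqrt\beta\,e_p$ feed, which is absorbed into the growth factor. Using $1+x\le e^x$ then yields \eqref{eqn-prop-l-step-error-gronwall}. Checking that the constants $\tfrac32$ and $\tfrac1{16}$ close the induction is routine bookkeeping.
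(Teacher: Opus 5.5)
Your main route is the paper's proof. You unroll both the position recursion and the $\eta$-contracted momentum recursion into sum form, so the momentum source is summed once and gives $\sqrt a\,\varepsilon/(1-\eta)$. You add the two, bound every feedback coefficient by $h\kappa$ with $\kappa=\tfrac32\tfrac{C_5}{\sqrt\beta}+\tfrac1{16}\tfrac{C_6}{\beta^{3/2}}+\sqrt\beta$, and apply discrete Gr\"onwall to $S_l\le E_l+h\kappa\sum_{j<l}S_j$. Your coefficient bookkeeping for this is correct. The position recursion must also be used in unrolled form: iterating a one-step inequality for $S_i$ would re-add $\sqrt a\,\varepsilon/(1-\eta)$ at every step and lose the $1/(1-\eta)$ gain.

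The ``cleaner alternative'' ansatz, however, does not close, and you should drop it rather than assert that the constants work out. It keeps only the $h^3$ sources $A_i$ in the position bound. But the position recursion is fed by $\tfrac h2(1+\eta)\sqrt\beta\,e_p^i$, which is not proportional to $A_i$ and cannot be absorbed by $G-1=O(h\kappa)$.

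Concretely, let $a_0=\tfrac{27}{64}C_5h^3\sqrt{\log(2d)}$. Sequences satisfying your recursions with equality have $e_q^1=a_0$ and $e_p^1=\sqrt a\,\varepsilon\le B$, consistent with the ansatz at $i=1$. Since $\sqrt\beta\sqrt a=1$ and $\varepsilon\ge\tfrac{27}{8}h^2C_5\sqrt{\log(2d)}$, the next step gives $e_q^2\ge 2a_0+\tfrac{h(1+\eta)}{2}\varepsilon\ge\big(2+4(1+\eta)\big)a_0\ge 6a_0$. The ansatz demands $e_q^2\le GA_2=2Ga_0$, which fails whenever $G<3$, that is, for all sufficiently small $h$.

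The $B$-type source must therefore appear in the position bound too. Bounding $S_i=e_q^i+e_p^i$ against the common source $E_l$ does exactly this, since the feed satisfies $h\sqrt\beta\,e_p^j\le h\sqrt\beta\,S_j$.
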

\begin{proof}[Proof of Proposition \ref{prop-l-step-error-gronwall}]
By Proposition~\ref{prop-discretization-error-decomp}, unrolling the position
error recurrence to $\Delta_q^0 = 0$ gives
\begin{equation}
\begin{aligned}
    |\Delta_q^l|_{2,\ell^\infty}
    &\leq |\Delta_q^{l-1}|_{2,\ell^\infty} + h (hC_5|\Delta_q^{l-1}|_{2,\ell^\infty} + |\Delta_p^{l-1}|_{2,\ell^\infty}) + \tfrac{27}{64}C_5h^3\sqrt{\log(2d)}\\
    &\leq \sum_{i=1}^{l-1} h(hC_5|\Delta_q^i|_{2, \ell^\infty} + |\Delta_p^i|_{2,\ell^\infty}) + \tfrac{27}{64}C_5lh^3\sqrt{\log(2d)}\,.
\end{aligned}
\end{equation}
Likewise, unrolling the momentum error recurrence and bounding the geometric sum
of the constant term by $(1-\eta)^{-1}$ yields
\begin{equation}
\begin{aligned}
    |\Delta_p^l|_{2,\ell^\infty}
    &\leq (\eta + \tfrac{h^2}{2}C_5)|\Delta_p^{l-1}|_{2,\ell^\infty} + (h C_5 + \tfrac{h^3}{4}C_6)|\Delta_q^{l-1}|_{2,\ell^\infty} + |\varepsilon_p|_{2,\ell^\infty}\\
    &\leq \sum_{i=1}^{l-1}\eta^{i-1}\tfrac{h^2}{2}C_5|\Delta_p^{l-i}|_{2,\ell^\infty} + \sum_{i=1}^{l-1}\eta^{i-1}(hC_5 + \tfrac{h^3C_6}{4})|\Delta_q^{l-i}|_{2,\ell^\infty} + \tfrac{1}{1-\eta}|\varepsilon_p|_{2,\ell^\infty}\,.
\end{aligned}
\end{equation}
Combining the two, with $h \leq 1/(2\sqrt{\beta})$ and $\eta \leq 1$, we obtain 
\begin{equation}
    |\Delta_q^l|_{2,\ell^\infty} + \sqrt a |\Delta_p^l|_{2,\ell^\infty} \leq \sum_{i=1}^{l-1}h\Big(\tfrac32\tfrac{C_5}{\sqrt\beta}+ \tfrac{1}{16}\tfrac{C_6}{\beta^{3/2}} + \sqrt\beta\Big)(|\Delta_q^{l-i}|_{2,\ell^\infty}+\sqrt a |\Delta_p^{l-i}|_{2,\ell^\infty}) + E_l\,.
\end{equation}
Applying Gr\"onwall's inequality leads to~\eqref{eqn-prop-l-step-error-gronwall}.
\end{proof}
\section{Sampling Bias Bounds}
\label{appendix-sampling-bias-bounds}
Under the framework of Section~\ref{sec-sketch-techniques}, we now combine the
propagator norm estimates with the discretization error to track the sampling
error across steps, and obtain the bias bounds in the limit of infinitely many
steps.
\subsection{HMC}
We prove the iterative $W_{2,\ell^\infty}$ bounds of
Propositions~\ref{prop-sketch-weak-bound-hmc}
and~\ref{prop-sketch-sparse-bound-hmc}, from which the HMC cases of
Theorems~\ref{thm-weak-formal} and~\ref{thm-sparse-formal} follow by iterating to
the stationary limit.
We begin with the weak interaction case.
\begin{proof}[Proof of Proposition \ref{prop-sketch-weak-bound-hmc}]
Following the coupling and one-step split~\eqref{eqn-sketch-one-step} into the
contraction term $(a)$ and discretization term $(b)$, we bound each.
 
\noindent\textbf{Contraction term.} By Proposition~\ref{prop-matrix-polynomial},
\begin{equation}
    (a) = |\tilde T_m^1(\{I - \tfrac{h^2}{2}H_i\}_{i=0}^{m-1})(X_k - Y_k)|_{2,\ell^\infty}\,,
\end{equation}
with $H_i = \int_0^1 \nabla^2 V(\tau x_i + (1-\tau) y_i){\rm d}\tau$ and $x_i, y_i$
the internal positions after $i$ leap-frog steps,
$x_i \coloneqq \sfPi_1\sfU_{\hmc,h}^i(X_k, \xi_k)$,
$y_i \coloneqq \sfPi_1\sfU_{\hmc, h}^i(Y_k, \xi_k)$.
The contractive $\ell^\infty$ bound of Proposition~\ref{prop-linftybounds-hmc-weak}
gives
\begin{equation}
    (a) \leq \left(1 - \frac{\alpha}{400\beta}\right) |X_k - Y_k|_{2,\ell^\infty}\,.
\end{equation}
 
\noindent\textbf{Discretization error.} The telescoping
decomposition~\eqref{eqn-one-step-hmc-discretization-error-decomposition-propagation}
and the component split~\eqref{eqn-fine-discretization-error-hmc-decomp} express
$(b)$ through single integration-step errors. Representing each propagated factor
as a matrix polynomial via Proposition~\ref{prop-matrix-polynomial}, with
$(q^*, p^*) \sim \pi \otimes \mathcal{N}(0,I)$,
\begin{equation}
\begin{aligned}
    (b) & \leq\sum_{i=0}^{m-1} \underbrace{|\tilde T_{m-i-1}^1(\{I - \tfrac{h^2}{2}H_j\}_{j=0}^{m-i-2})(\sfQ_h^1 - \sfQ^h)(q^*, p^*)|_{2,\ell^\infty}}_{(b_1)}\\
    & \quad +\sum_{i=0}^{m-1}\underbrace{|h \tilde U_{m-i-2}^1(\{I -\tfrac{h^2}{2}H_j\}_{j=1}^{m-i-2})(\sfP_h^1-\sfP^h)(q^*, p^*)|_{2,\ell^\infty}}_{(b_2)}\,,
\end{aligned}
\end{equation}
where $H_j = \int_0^1\nabla^2 V(\tau x_j + (1-\tau) y_j) {\rm d} \tau$ and $x_j, y_j$
are the $j$-th internal leap-frog positions started at $\sfU_{\hmc,h}^1(q^*, p^*)$
and $\sfU_{\hmc}^h(q^*,p^*)$, i.e.\ we redefine
$x_j \coloneqq \sfPi_1\sfU_{\hmc,h}^j \sfU_{\hmc,h}^1(q^*, p^*)$,
$y_j \coloneqq \sfPi_1\sfU_{\hmc,h}^j\sfU_{\hmc}^h(q^*,p^*)$.
By Proposition~\ref{prop-linftybounds-hmc-weak}, almost surely
{$|\tilde T_{m-i-1}^1|_{\ell^\infty} \leq 1 + \frac{\alpha}{400\beta} \leq \frac{401}{400}$} and
$|h\tilde U_{m-i-2}^1|_{\ell^\infty} \leq \tfrac{3}{2\sqrt\beta}$.

To bound $(b_1), (b_2)$ we apply Proposition~\ref{prop-discretization-error},
instantiating its general matrix $M$ to the specific polynomials appearing here
and determining the corresponding constants $C_3, C_4$. For $(b_1)$, we take
$M = \tilde T_{m-i-1}^1$, which satisfies $|M|_{\ell^\infty} \leq {\frac{401}{400}}$. Under the
weak interaction assumption the integrated Hessian obeys
$|\int\nabla^2 V{\rm d}\nu|_{\ell^\infty} \leq \beta + \tfrac{\alpha}{50} \leq \tfrac{51}{50}\beta$,
so $|M\int\nabla^2 V{\rm d}\nu|_{\ell^\infty} \leq {\sqrt{2}\beta}$ and we may
take $C_3 = \sqrt{2}\beta$. For $(b_2)$, we take $M = h\tilde U_{m-i-2}^1$
with $|M|_{\ell^\infty} \leq \tfrac{3}{2\sqrt\beta}$, giving
$C_3 = \tfrac{3}{2\sqrt\beta}\cdot\tfrac{51}{50}\beta = \tfrac{153}{100}\sqrt\beta$
and, for a product of two Hessians,
$C_4 = \tfrac{3}{2\sqrt\beta}\cdot(\tfrac{51}{50}\beta)^2 = \tfrac{3}{2}(\tfrac{51}{50})^2\beta^{\frac32}$.
With $h \leq 1/\sqrt{20\beta}$, Proposition~\ref{prop-discretization-error} yields
\begin{equation}
    (b_1) \leq h^3\beta\sqrt{\log(2d)} \leq h^2\sqrt{\beta\log(2d)}\,, \quad (b_2) \leq 4h^2\sqrt{\beta\log(2d)}\,,
\end{equation}
and summing over the $m$ steps,
\begin{equation}
    (b) \leq 5mh^2\sqrt{\beta \log(2d)} \leq 2h\sqrt{\log(2d)}\,,
\end{equation}
using $mh = 1/\sqrt{20\beta}$ in the last step. Thus,
\begin{equation}
    |X_{k+1}-Y_{k+1}|_{2,\ell^\infty} \leq \left( 1 - \frac{\alpha}{400\beta} \right) |X_k - Y_k|_{2,\ell^\infty} + 2h\sqrt{\log(2d)}\,.
\end{equation}
Taking $X_k, Y_k$ optimally coupled so that
$W_{2,\ell^\infty}(\rho_k, \pi) = |X_k - Y_k|_{2,\ell^\infty}$, we obtain
\begin{equation}
    W_{2,\ell^\infty}(\rho_{k+1},\pi) \leq \left( 1 - \frac{\alpha}{400\beta} \right) W_{2,\ell^\infty}(\rho_k, \pi) + 2h\sqrt{\log(2d)}\,.
\end{equation}
\end{proof}
\begin{proof}[Proof of the HMC case of Theorem~\ref{thm-weak-formal}]
Under $C^{(O)} = 1/50$ and $mh = 1/\sqrt{20\beta}$, taking $k \to \infty$ in the
inequality of Proposition~\ref{prop-sketch-weak-bound-hmc} gives
\begin{equation}
    W_{2,\ell^\infty}(\pi_h, \pi) = O\left(\frac{\beta}{\alpha}h\sqrt{\log(2d)}\right)\,.
\end{equation}
\end{proof}
We turn to the sparse interaction case.
\begin{proof}[Proof of Proposition \ref{prop-sketch-sparse-bound-hmc}]
With $X_k, Y_k$ as before, the multi-step split~\eqref{eqn-sketch-multi-step}
bounds $|X_{k+N} - Y_{k+N}|_{2,\ell^\infty}$ by its contraction term $(a)$ and
discretization term $(b)$.
 
\noindent\textbf{Contraction term.} Proposition~\ref{prop-matrix-polynomial}
represents the propagators as matrix polynomials, bounded in $\ell^2$ by
Proposition~\ref{prop-l2bounds-hmc} and in $\ell^\infty$ by
Proposition~\ref{prop-linf-bounds-sparse-hmc-appendix}, which extends
Proposition~\ref{prop-linf-bounds-sparse-hmc}. The $\ell^2$ bound, converted to
$\ell^\infty$ at a cost of $\sqrt d$, gives
\begin{equation}
\label{eqn-part-a-proof-prop-sketch-sparse-bound-hmc}
    (a) \leq (1 - \tfrac{\alpha}{200\beta} )^N\sqrt{d}\, |X_k - Y_k|_{2,\ell^\infty}\,.
\end{equation}
 
\noindent\textbf{Discretization error.} The telescoping
decomposition~\eqref{eqn-multi-step-hmc-discretization-error-decomposition-propagation}
and the component split~\eqref{eqn-fine-discretization-error-hmc-decomp}, with the
matrix-polynomial representation of Proposition~\ref{prop-matrix-polynomial}, give
\begin{equation}
\begin{aligned}
    (b)
    & \leq \sum_{l=0}^{N-1}\sum_{i=0}^{m-1} \underbrace{\Big|A_{(N-l-1)m+(m-i-1)}(\sfQ_h^1-\sfQ^h)(q^*, p^*)\Big|_{\ell^\infty}}_{(b_1)}\\
    & \quad + \sum_{l=0}^{N-1}\sum_{i=0}^{m-1} \underbrace{\Big|B_{(N-l-1)m+(m-i-2)}(\sfP_h^1-\sfP^h)(q^*, p^*)\Big|_{\ell^\infty}}_{(b_2)}\,,
\end{aligned}
\end{equation}
where, writing
$P_{l,i} \coloneqq \prod_{k \in\{m-i-1+jm\}_{j=0}^{N-l-2}}^{\longleftarrow}\tilde T_m^1(\{I - \tfrac{h^2}{2}H_{k+s} \}_{s=0}^{m-1})$
for the product of full-loop propagators over the $N-l-1$ subsequent outer loops,
\begin{equation}
    A_{(N-l-1)m+(m-i-1)} = P_{l,i}\,\tilde T_{m-i-1}^1\,,\qquad B_{(N-l-1)m+(m-i-2)} = P_{l,i}\, h\tilde U_{m-i-2}^1\,.
\end{equation}
The Hessians $H_j = \int_0^1 \nabla^2 V(\tau x_j + (1-\tau) y_j){\rm d} \tau$ are
indexed by the total internal leap-frog step count:
\begin{equation*}
\begin{aligned}
    &0 \leq j \leq m-i-1: \quad x_{j} = \sfPi_1 \sfU_{\hmc,h}^j \sfU_{\hmc,h}^1(q^*, p^*)\,,\; y_j = \sfPi_1 \sfU_{\hmc,h}^j \sfU_{\hmc}^h(q^*, p^*)\,;\\
    & 0 \leq \hat l \leq N-l-1,\ 0 \leq j < m-1:\\
    &\quad x_{\hat l m+j+(m-i-1)} = \sfPhi_{h,\xi_{k+l+\hat l + 1}}^j \sfPhi_{h,\xi_{k+l+\hat l}}^m\cdots \sfPhi_{h,\xi_{k+l+1}}^m\sfPi_1\sfU_{\hmc,h}^{m-i-1}\sfU_{\hmc,h}^1(q^*, p^*)\,, \\
    &\quad y_{\hat l m+j+(m-i-1)} = \sfPhi_{h,\xi_{k+l+\hat l + 1}}^j \sfPhi_{h,\xi_{k+l+\hat l}}^m\cdots \sfPhi_{h,\xi_{k+l+1}}^m\sfPi_1\sfU_{\hmc,h}^{m-i-1}\sfU_{\hmc}^h(q^*, p^*)\,,
\end{aligned}
\end{equation*}
with the convention that for $\hat l = 0$ only $\sfPhi_{h,\xi_{k+l+\hat l + 1}}^j$
is applied. By Propositions~\ref{prop-linf-bounds-sparse-hmc-appendix}
and~\ref{prop-discretization-error}, for $r = r_{(N-l-1)m+(m-i-1)}$,
\begin{equation}
    (b_1) \leq 3h^2 \sqrt{s_{r}\beta\log(2d)}\,,\quad (b_2) \leq 9h^2\sqrt{s_{r}\beta\log(2d)}\,,
\end{equation}
so summing over the $Nm$ steps,
\begin{equation}
\label{eqn-part-b-proof-prop-sketch-sparse-bound-hmc}
    (b) \leq 12h^2\sqrt{\beta\log(2d)}\sum_{i=0}^{Nm-1}\sqrt{s_{r_{i}}}\,.
\end{equation}
Combining~\eqref{eqn-part-a-proof-prop-sketch-sparse-bound-hmc}
and~\eqref{eqn-part-b-proof-prop-sketch-sparse-bound-hmc} with $X_k, Y_k$
optimally coupled completes the proof.

\end{proof}
The following corollary follows by specifying $N$ and estimating $\sum \sqrt{s_{r_i}}$ under the polynomial growth condition.

\begin{proof}[Proof of the HMC case of Theorem~\ref{thm-sparse-formal}]
Take $N = \lceil \frac{200\beta}{\alpha}\log(2\sqrt{d})\rceil$, so that the
contraction factor $(1-\tfrac{\alpha}{200\beta})^N\sqrt{d} \leq \tfrac12$. Under
$s_k \leq C(k+1)^n$, an integral comparison gives
\begin{equation}
\label{eqn-integral-sum-sparsity-hmc}
\begin{aligned}
    &\sum_{i=0}^{Nm-1}\sqrt{s_{r_{i}}}
    \leq \sum_{i=0}^{Nm-1} \sqrt{C}\left(ih\sqrt\beta e + \tfrac{\log\sqrt d}{\log (5/3)} + 2\right)^{\frac n 2}\\
    &\quad \leq \int_0^{Nm}\sqrt{C}\left(yh\sqrt{\beta}e + \tfrac{\log\sqrt{d}}{\log(5/3)} + 2 \right)^{\frac n 2}{\rm d}y \leq \sqrt{C}\frac{\left(Nmh\sqrt{\beta}e + \tfrac{\log\sqrt{d}}{\log(5/3)} + 2 \right)^{\frac n 2+1}}{(n/2+1)h\sqrt{\beta}e}\,.
\end{aligned}
\end{equation}
With this $N$ and $mh \leq 1/\sqrt{20\beta}$,
Proposition~\ref{prop-sketch-sparse-bound-hmc} gives the recursion
\begin{equation}
    W_{2,\ell^\infty}(\rho_{k+N},\pi) \leq \tfrac12 W_{2,\ell^\infty}(\rho_k,\pi) + h \sqrt{\log(2d)}\left(O(\tfrac{\beta}{\alpha}\log(2d))\right)^{\frac n 2 + 1}\,,
\end{equation}
and taking $k\to \infty$ gives
\begin{equation}
    W_{2,\ell^\infty}(\pi_h, \pi) = h\sqrt{\log(2d)}\left( O\left( \frac\beta\alpha\log(2d)\right)\right)^{\frac n 2 + 1}\,.
\end{equation}
\end{proof}
\subsection{UL}
\begin{proof}[Proof of the UL case of Theorem~\ref{thm-weak-formal}]
We follow the coupling framework~\eqref{eqn-sketch-underdamped} with the weighted
matrix representation of Section~\ref{sec-l2-bounds}. Denote the initial distribution of $X_0$ by $\rho_0$. The BAOAB chain
$(X_k, P_k)$ from $(X_0, P_0) \sim \rho_0 \otimes \mathcal{N}(0,I)$  and the HOH
chain $(Y_k, P_k')$ from $(Y_0, P_0') \sim \pi \otimes \mathcal{N}(0,I)$ are
coupled through the same noise $\{\xi_i\}$. Introducing the auxiliary process
$(\bar Y_k, \bar P'_k) = \sfU_{\baoab,h}^{\xi_{k-1}}\cdots \sfU_{\baoab,h}^{\xi_0}(Y_0, P'_0)$,
which starts at $(Y_0, P_0')$ but runs BAOAB with the same noise, the
split~\eqref{eqn-sketch-underdamped} becomes
\begin{equation}
\label{eqn-prop-baoab-weak-bias-iteration-decomposition}
    |X_k - Y_k|_{2,\ell^\infty} \leq \underbrace{|X_k - \bar Y_k|_{2,\ell^\infty}}_{(a)} + \underbrace{|\bar Y_k - Y_k|_{2,\ell^\infty}}_{(b)}\,,
\end{equation}
the contraction between two BAOAB chains $(a)$ and the BAOAB-vs-HOH discretization
error $(b)$.
 
\noindent\textbf{Contraction term.} Using the
decomposition $\sfU_{\baoab,h}^{\xi_{k-1}}\cdots \sfU_{\baoab,h}^{\xi_0} = \sfU_{\ab,h} \sfU_{\abao,h}^{\xi_{k-1}} \cdots \allowbreak\sfU_{\abao,h}^{\xi_{1}} \sfU_{\bao,h}^{\xi_0}$
of Section~\ref{sec-sketch-underdamped} and the weighted position
propagators~\eqref{eqn-weighted-position-propagators},
\begin{equation}
\label{eqn-weighted-matrix-representation-weak-interaction}
    X_k - \bar Y_k = \begin{bmatrix} I & 0 \end{bmatrix} W^{-1} M_{\ab}^w (H_{k}) M_{\abao}^w(H_{k-1}) \cdots M_{\abao}^w(H_1)M_{\bao}^w(H_0) W\Delta Z_0\,,
\end{equation}
with $\Delta Z_0 = (X_0-Y_0, P_0 - P'_0)$ and the matrices defined through the
intermediate iterates
\begin{equation*}
\begin{aligned}
    &z_1\coloneqq \sfU_{\bao,h}^{\xi_0}(X_0, P_0)\,, \quad z_1'\coloneqq \sfU_{\bao,h}^{\xi_0}(Y_0, P_0')\,, \quad z_{k+1} \coloneqq \sfU_{\ab,h}z_{k}\,, \quad z_{k+1}' \coloneqq \sfU_{\ab,h} z_{k}'\,,\\
    &z_i \coloneqq \sfU_{\abao,h}^{\xi_{i-1}} \cdots \sfU_{\abao,h}^{\xi_1}z_1\,, \quad z_i' \coloneqq \sfU_{\abao,h}^{\xi_{i-1}} \cdots \sfU_{\abao,h}^{\xi_1}z_1'\,, \quad 2\leq i \leq k\,,
\end{aligned}
\end{equation*}
satisfying
\begin{equation*}
\begin{aligned}
    & W(z_1 - z_1') = M_{\bao}^w(H_0)W\Delta Z_0\,, \quad W(z_{k+1} - z_{k+1}') = M_{\ab}^w(H_k)W(z_{k}-z_{k}')\,,\\
    & W(z_i -z_i') = M_{\abao}^{w}(H_{i-1}) W(z_{i-1}- z_{i-1}')\,, \quad 2\leq i \leq k\,.
\end{aligned}
\end{equation*}
By Proposition~\ref{prop-propagators-weak-interactions-baoab} and
$|\begin{bmatrix} I & 0 \end{bmatrix}W^{-1}|_{\ell^\infty} = 1 + \frac{b}{\sqrt{a-b^2}}$,
\begin{equation}
    (a) \leq 6\Big(1 + \tfrac{b}{\sqrt{a-b^2}}\Big) ( 1 - \tfrac12c(h))^{k-1}|\Delta Z_0|_{2,\ell^\infty_w}\,.
\end{equation}
 
\noindent\textbf{Discretization error.} Write
$(\Delta_q^{l}, \Delta_p^{l}) = (\sfU_{\baoab,h}^{l} - \sfU_{\hoh,h}^{l})(q^*, p^*)$.
The matrix representation of
$\sfPi_1\sfU_{\baoab,h}^{k-(i+1)\tilde l}(\sfU_{\baoab,h}^{\tilde l} - \sfU_{\hoh,h}^{\tilde l})(q^*, p^*)$
parallels~\eqref{eqn-weighted-matrix-representation-weak-interaction}, with
$\Delta Z_0 = (\Delta_q^{\tilde l}, \Delta_p^{\tilde l})$ and Hessians along the
intermediate iterates. Each term of the telescoping
decomposition~\eqref{eqn-multi-step-baoab-discretization-error-decomposition-propagation} 
is thus bounded by
\begin{equation}
\label{eqn-baoab-discretization-error-individual-term-with-propagation}
\begin{aligned}
    &|\sfPi_1\sfU_{\baoab,h}^{k-(i+1)\tilde l}(\sfU_{\baoab,h}^{\tilde l} - \sfU_{\hoh,h}^{\tilde l}) (q^*, p^*)|_{2,\ell^\infty}\\
    &\qquad \leq 6\Big(1 + \tfrac{b}{\sqrt{a-b^2}}\Big) ( 1 - \tfrac12c(h))^{k-(i+1)\tilde l - 1}|(\Delta_{q}^{\tilde l}, \Delta_p^{\tilde l})|_{2,\ell^\infty_w}\,,
\end{aligned}
\end{equation}
\begin{equation}
\label{eqn-baoab-last-step-discretization-error}
    \left|\sfPi_1\left(\sfU_{\baoab,h}^{k-\lfloor k/\tilde l \rfloor \tilde l} - \sfU_{\hoh,h}^{k - \lfloor k/\tilde l \rfloor \tilde l}\right)(q^*, p^*)\right|_{2,\ell^\infty} \leq \Big(1 + \tfrac{b}{\sqrt{a-b^2}}\Big)|(\Delta_q^{k - \lfloor k/\tilde l\rfloor \tilde l}, \Delta_p^{k - \lfloor k/\tilde l\rfloor \tilde l})|_{2,\ell^\infty_w}\,.
\end{equation}
In~\eqref{eqn-baoab-discretization-error-individual-term-with-propagation}, when
$i = \lfloor k/\tilde l \rfloor -1$ there may be no internal ABAO steps and the
exponent can be negative, so we instead bound the operator by
$6(1+\tfrac{b}{\sqrt{a-b^2}})$, accounting for the $\sfB\sfA\sfO$ and $\sfA\sfB$
prefactors. Proposition~\ref{prop-l-step-error-gronwall}  gives for any $l \leq \tilde l$
\begin{equation}
    |(\Delta_q^l, \Delta_p^l)|_{2,\ell^\infty_w} \leq \exp\left[(\tilde l-1)h\Big(\tfrac32\tfrac{C_5}{\sqrt\beta} + \tfrac{1}{16}\tfrac{C_6}{\beta^{3/2}} + \sqrt\beta\Big)\right] E_{\tilde l}\,.
\end{equation}
Substituting these into~\eqref{eqn-prop-baoab-weak-bias-iteration-decomposition},
with $X_0, Y_0$ optimally coupled so
$W_{2,\ell^\infty}(\rho_0, \pi) = |X_0-Y_0|_{2,\ell^\infty}$ and $P_0' = P_0$,
\begin{equation}
\label{eqn-term-by-term-k-step-error-UL-weak}
\begin{aligned}
    &|X_k - Y_k|_{2,\ell^\infty}
    \leq 6 \Big(1 + \tfrac{b}{\sqrt{a-b^2}}\Big) \left( 1 - \tfrac12c(h)\right)^{k-1}W_{2,\ell^\infty}(\rho_0,\pi) + \Big(1 + \tfrac{b}{\sqrt{a-b^2}}\Big)\\
    &\quad \cdot\left( 7 + \sum_{i=0}^{\lfloor k / \tilde l\rfloor-2} 6(1-\tfrac12c(h))^{k-(i+1)\tilde l-1}\right) \exp\left[(\tilde l-1)h\Big(\tfrac32\tfrac{C_5}{\sqrt\beta} + \tfrac{1}{16}\tfrac{C_6}{\beta^{3/2}}+\sqrt\beta\Big)\right] E_{\tilde l}\,.
\end{aligned}
\end{equation}
Under the weak interaction assumption $|H_i^{(O)}|_{\ell^\infty} \leq \alpha/20$,
the constants of Proposition~\ref{prop-discretization-error-decomp} satisfy
$C_5 \leq \tfrac{21}{20}\beta$ and $C_6 \leq (\tfrac{21}{20})^2\beta^2$, so that
$C_5/\beta$ and $C_6/\beta^2$ are $O(1)$. Since
$(\tilde l-1)h\sqrt\beta = O(1)$ by the choice
$\tilde l = \lceil 1/(2\sqrt\beta h)\rceil$, the exponential is $O(1)$, and
substituting $E_{\tilde l}$ gives
\begin{equation*}
    \exp\left[(\tilde l-1)h\Big(\tfrac32\tfrac{C_5}{\sqrt\beta} + \tfrac{1}{16}\tfrac{C_6}{\beta^{3/2}} + \sqrt\beta\Big)\right] E_{\tilde l} = O\Big(\frac{h^2\sqrt\beta}{1-\eta}\sqrt{\log(2d)}\Big)\,.
\end{equation*}
 
As $k \to \infty$, the first term vanishes. For the geometric sum, with
$c(h) = \frac{\alpha h^2}{4(1-\eta)}$ and $\tilde l \geq 1/(2\sqrt\beta h)$, the
inequality $1-(1-x)^{\tilde l} \geq \tilde l x - \tfrac12(\tilde l x)^2$ for $x>0$ gives
\begin{equation*}
    1 - (1- \tfrac12c(h))^{\tilde l} \geq \frac{63}{1024}\frac{\alpha h}{\sqrt\beta (1-\eta)}\,,
\end{equation*}
so that $\sum_{i=0}^{\lfloor k/\tilde l\rfloor - 2} 6(1-\tfrac12 c(h))^{k-(i+1)\tilde l - 1} \leq 6/(1-(1-\tfrac12 c(h))^{\tilde l}) = O(\tfrac{\sqrt\beta(1-\eta)}{\alpha h})$.
Since $1 + \tfrac{b}{\sqrt{a-b^2}} = O(1)$, \eqref{eqn-term-by-term-k-step-error-UL-weak} becomes
\begin{equation}
    W_{2,\ell^\infty}(\pi_h, \pi) \leq O(1)\cdot O\Big(\frac{\sqrt\beta(1-\eta)}{\alpha h}\Big)\cdot O\Big(\frac{h^2\sqrt\beta}{1-\eta}\sqrt{\log(2d)}\Big) = O\Big(\frac{\beta}{\alpha}h\sqrt{\log(2d)}\Big)\,.
\end{equation}
\end{proof}
\begin{proof}[Proof of the UL case of Theorem~\ref{thm-sparse-formal}]
As in the HMC sparse case, we induce an $\ell^\infty$ contraction from the
$\ell^2$ contraction over multiple steps. We partition the $k$ steps into blocks
of $B$ steps, with $B$ specified below. We use the three
processes $(X_k, P_k), (Y_k, P'_k), (\bar Y_k, \bar P'_k)$ of the proof of the UL
case of Theorem~\ref{thm-weak-formal} with $P_0 = P_0' = \bar P'_0$, and the
decomposition
\begin{equation}
\label{eqn-prop-baoab-sparse-bias-iteration-decomposition}
    |X_k - Y_k|_{2,\ell^\infty} \leq \underbrace{|X_k - \bar Y_k|_{2,\ell^\infty}}_{(a)} + \underbrace{|\bar Y_k - Y_k|_{2,\ell^\infty}}_{(b)}\,.
\end{equation}
 
\noindent\textbf{Contraction term.} Corollary~\ref{cor-useful-estimates-bound-baoab} gives
\begin{equation}
\label{eqn-part-a-baoab-sparse}
    (a) \leq 14\sqrt{d}(1-c(h))^{\frac{k-1}{2}}|X_0 - Y_0|_{2,\ell^\infty}\,,
\end{equation}
which vanishes as $k \to \infty$.
 
\noindent\textbf{Discretization error.} The telescoping
decomposition~\eqref{eqn-multi-step-baoab-discretization-error-decomposition-propagation}
accumulates a local error every $\tilde l = \lceil \frac{1}{2\sqrt\beta h} \rceil$
steps, each propagated through the later steps. Write
$(\Delta_q^l, \Delta_p^l) = (\sfU_{\baoab,h}^{l} - \sfU_{\hoh,h}^{l})(q^*, p^*)$
for an $l$-step error, with $(q^*, p^*) \sim \pi \otimes \mathcal{N}(0,I)$, and recall that
$\sfU_{\baoab,h}^l, \sfU_{\hoh,h}^l$ are the $l$-fold compositions omitting the
coupled noise. For the sparse case the constants of
Proposition~\ref{prop-discretization-error-decomp} are $C_5 = \beta\sqrt{s_1}$ and
$C_6 = \beta^2\sqrt{s_2}$, so Proposition~\ref{prop-l-step-error-gronwall} bounds
the $\tilde l$-step error,
\begin{equation}
    |(\Delta_q^{\tilde l}, \Delta_p^{\tilde l})|_{2,\ell^\infty_w} \leq \exp\left[(\tilde l-1)h\Big(\tfrac32\tfrac{C_5}{\sqrt\beta} + \tfrac{1}{16}\tfrac{C_6}{\beta^{3/2}} + \sqrt\beta\Big)\right] E_{\tilde l} = O\Big(\frac{h^2\sqrt\beta}{1-\eta}\sqrt{\log(2d)}\Big)\,.
\end{equation}
Setting the block length $B \coloneqq K_1 \tilde l$ with $K_1$ specified later and grouping the trailing
steps in~\eqref{eqn-multi-step-baoab-discretization-error-decomposition-propagation}
into full blocks of $B$ steps followed by residual steps,
\begin{equation}
\label{eqn-multi-step-baoab-discretization-error-block-decomposition-propagation}
\begin{aligned}
    (b) & \leq \sum_{i=0}^{\lceil\frac{\lfloor k/ \tilde l \rfloor}{K_1}\rceil}\sum_{k_1=0}^{K_1-1}\left| \sfPi_1 \sfU_{\baoab,h}^{j_{k_1}}\sfU_{\baoab,h}^{iB}(\sfU_{\baoab,h}^{\tilde l} - \sfU_{\hoh,h}^{\tilde l})(q^*, p^*)\right|_{2,\ell^\infty}\\
    & \quad + \left| \sfPi_1\left(\sfU_{\baoab,h}^{k - \lfloor k/\tilde l \rfloor \tilde l} - \sfU_{\hoh,h}^{k - \lfloor k / \tilde l \rfloor \tilde l} \right)(q^*, p^*)\right|_{2,\ell^\infty}\,,
\end{aligned}
\end{equation}
where $i$ counts the full $B$-step blocks, $k_1 \in \{0, \dots, K_1-1\}$ indexes
the residual $\tilde l$-step groups, and
$j_{k_1} \coloneqq k - \lfloor k /\tilde l\rfloor \tilde l + k_1\tilde l$ is the
corresponding number of trailing steps.

Each summand in the first sum is one $\tilde l$-step error propagated through $iB$
full-block steps and $j_{k_1}$ trailing steps. For two initial states
$(x,p), (y, p')$, Proposition~\ref{prop-matrix-polynomial}
and~\eqref{eqn-weight-matrix-polynomial-baoab} give the weighted matrix-polynomial
form of the propagated position difference over the residual steps,
\begin{equation}
\label{eqn-baoab-individual-discretization-error-matrix-polynomial}
    \sfPi_1\sfU_{\baoab,h}^{j_{k_1}} (x,p) - \sfPi_1\sfU_{\baoab,h}^{j_{k_1}} (y,p')
    = \begin{bmatrix} \tilde T_{j_{k_1}}^\eta & - \tfrac{b}{\sqrt{a-b^2}} \tilde T_{j_{k_1}}^\eta + \tfrac{1}{\sqrt{a-b^2}}h\tilde U_{j_{k_1}-1}^\eta \end{bmatrix} W \begin{bmatrix} x- y \\ p - p' \end{bmatrix}\,,
\end{equation}
the polynomials evaluated at matrices of the form $I - \tfrac{h^2}{2}H$ with $H$ the integrated
Hessians along the intermediate iterates of the two trajectories. Applying this
with $(x,p) = \sfU_{\baoab,h}^{iB}\sfU_{\baoab,h}^{\tilde l} (q^*, p^*)$ and
$(y, p') = \sfU_{\baoab,h}^{iB}\sfU_{\hoh,h}^{\tilde l}(q^*, p^*)$, the $iB$
full-block steps contract the $\tilde l$-step error by
Corollary~\ref{cor-useful-estimates-bound-baoab},
\begin{equation}
\label{eqn-baoab-contraction-in-discretization-error}
    |(x-y, p-p')|_{2,\ell^\infty_w} \leq 7\sqrt{2d}(1-c(h))^{\frac{iB-1}{2}} |(\Delta_q^{\tilde l}, \Delta_p^{\tilde l})|_{2,\ell^\infty_w}\,,
\end{equation}
while Proposition~\ref{prop-linf-bounds-sparse-baoab} bounds the
$\ell^\infty$ norm of the matrix-polynomial row
in~\eqref{eqn-baoab-individual-discretization-error-matrix-polynomial}. Together
they give
\begin{equation}
\begin{aligned}
    &\left| \sfPi_1 \sfU_{\baoab,h}^{j_{k_1}}\sfU_{\baoab,h}^{iB}(\sfU_{\baoab,h}^{\tilde l} - \sfU_{\hoh,h}^{\tilde l})(q^*, p^*)\right|_{2,\ell^\infty}\\
    &\qquad \leq 7\sqrt{2d}\Big(14\sqrt{2}+1+\tfrac{2}{\sqrt{3}}\Big)\sqrt{s_{r_{j_{k_1}}}}\, (1-c(h))^{\frac{iB-1}{2}}|(\Delta_q^{\tilde l}, \Delta_p^{\tilde l})|_{2,\ell^\infty_w}\,.
\end{aligned}
\end{equation}
{
For $i=0$ where there are no intermediate full blocks we instead employ
\begin{equation}
    \left| \sfPi_1 \sfU_{\baoab,h}^{j_{k_1}}(\sfU_{\baoab,h}^{\tilde l} - \sfU_{\hoh,h}^{\tilde l})(q^*, p^*)\right|_{2,\ell^\infty} \leq \left(14\sqrt{2}+1+\tfrac{2}{\sqrt{3}}\right)\sqrt{s_{r_{j_{k_1}}}}|(\Delta_q^{\tilde l}, \Delta_p^{\tilde l})|_{2,\ell^\infty_w}\,.
\end{equation}}
We now fix $K_1 = \lceil \frac{16\sqrt\beta(1-\eta)\log(14\sqrt{2d})}{h\alpha}+1\rceil$,
so that the per-block factor contracts,
$7\sqrt{2d}(1-c(h))^{\frac{B-1}{2}} \leq \tfrac12$, making the contraction over $i$
blocks at most $2^{-i}$. Substituting the $\tilde l$-step error bound and these
per-term estimates
into~\eqref{eqn-multi-step-baoab-discretization-error-block-decomposition-propagation},
together with the last-term
bound~\eqref{eqn-baoab-last-step-discretization-error}, and summing the geometric
factor $2^{-i}$ over $i$, we obtain
\begin{equation}
\label{eqn-part-b-baoab-sparse}
    (b) \leq \left(1 + 2\left( 14\sqrt{2}+1+\tfrac{2}{\sqrt{3}}\right)\sum_{k_1=0}^{K_1-1} \sqrt{s_{r_{j_{k_1}}}} \right)O\Big(\frac{h^2\sqrt\beta}{1-\eta}\sqrt{\log(2d)}\Big)\,.
\end{equation}
By an integral comparison as in~\eqref{eqn-integral-sum-sparsity-hmc}, with
$s_{r_i}$ increasing in $i$,
\begin{equation}
\begin{aligned}
    \sum_{k_1=0}^{K_1-1} \sqrt{s_{r_{j_{k_1}}}}
    \leq \sum_{k_1=1}^{K_1}\sqrt{s_{r_{k_1\tilde l}}}
    &\leq \sqrt{C}\frac{\left(\tfrac{\tilde le^2\beta (K_1+1)}{1-\eta}h^2+\log\sqrt{d}+2\right)^{\frac n 2 + 1}}{(\frac n 2 + 1) \frac{\tilde l e^2\beta}{1-\eta}h^2}\\
    &= \frac{1-\eta}{\sqrt\beta h}\left(O(\tfrac{\beta}{\alpha}\log(2d))\right)^{\frac n 2 +1}\,.
\end{aligned}
\end{equation}
Combining with~\eqref{eqn-part-a-baoab-sparse}
and~\eqref{eqn-part-b-baoab-sparse}
in~\eqref{eqn-prop-baoab-sparse-bias-iteration-decomposition} and taking
$k \to \infty$,
\begin{equation}
    W_{2,\ell^\infty}(\pi_h, \pi) = h \sqrt{\log(2d)} \left(O(\tfrac{\beta}{\alpha}\log(2d))\right)^{\frac n 2 +1}\,.
\end{equation}
\end{proof}
\end{document}